\documentclass[a4paper,oneside,10pt]{amsart}
\usepackage{amsmath,amssymb,mathtools,bm,lipsum}
\usepackage[english]{babel}
\usepackage[T1]{fontenc}
\usepackage[utf8]{inputenc}
\usepackage{dcolumn}
\usepackage{blkarray}
\usepackage{textcomp}               
\usepackage{amstext}
\usepackage{amsthm}
\usepackage{stmaryrd}
\usepackage{mathrsfs}
\usepackage{bbm} 
\usepackage{units}
\usepackage{enumerate}

\numberwithin{equation}{section}
\newtheorem{thm}{Theorem}[section]
\newtheorem{cor}[thm]{Corollary}
\newtheorem{lem}[thm]{Lemma}
\newtheorem{prop}[thm]{Proposition}
\newtheorem{defn}[thm]{Definition}

\newtheorem{ex}[thm]{Example}
\newtheorem{hyp}{Hypothesis}

%
%
%
%
%
%
%
%
%
\newcommand{\CC}{\mathbb{C}}

\newcommand{\NN}{\mathbb{N}}
\newcommand{\PP}{\mathbb{P}}
\newcommand{\QQ}{\mathbb{Q}}
\newcommand{\RR}{\mathbb{R}}

\newcommand{\ZZ}{\mathbb{Z}}
%
%

%
%
\newcommand{\cA}{\mathcal{A}}\newcommand{\cN}{\mathcal{N}}
\newcommand{\cB}{\mathcal{B}} 
\newcommand{\cC}{\mathcal{C}}
\newcommand{\cD}{\mathcal{D}}\newcommand{\cP}{\mathcal{P}} 
\newcommand{\cE}{\mathcal{E}}
\newcommand{\cF}{\mathcal{F}}
\newcommand{\cG}{\mathcal{G}}\newcommand{\cS}{\mathcal{S}}
\newcommand{\cH}{\mathcal{H}}
\newcommand{\cI}{\mathcal{I}}\newcommand{\cU}{\mathcal{U}}
\newcommand{\cJ}{\mathcal{J}}
\newcommand{\cK}{\mathcal{K}}
\newcommand{\cL}{\mathcal{L}}\newcommand{\cX}{\mathcal{X}} 
\newcommand{\cM}{\mathcal{M}}

\title{Spin-Boson type models analysed using symmetries}
\author{Thomas Norman Dam, Jacob Schach Møller }
\begin{document}

\begin{abstract}
In this paper, we analyse a family of models for a qubit interacting with a bosonic field. This family of models is very large and contains models where higher order perturbations of field operators are added to the Hamiltonian. The Hamiltonian has a special symmetry, called spin-parity symmetry, which plays a central role in our analysis. Using this symmetry, we find the domain of selfadjointness and decompose the Hamiltonian into two fiber operators each defined on Fock space. We then prove an HVZ theorem for the fiber operators and single out a particular fiber operator which has a ground state if and only if the full Hamiltonian has a ground state. From these results we deduce a simple criterion for the existence of an exited state.
\\

\noindent 2010 \textit{ Mathematics Subject Classification}: Primary 81Q10; Secondary 81T10.
\\

\noindent Keywords. Spectral analysis, spin-boson model, Non-relativistic quantum field theory,  higher order perturbations, exited states.
\end{abstract}

\maketitle

\begin{center}
\today
\end{center}
\let\thefootnote\relax\footnote{Thomas Norman Dam: Department of mathematics, Aarhus University,  8000 Aarhus C  Denmark; tnd@math.au.dk}
\let\thefootnote\relax\footnote{Jacob Schach Møller: Department of mathematics, Aarhus University, 8000 Aarhus C  Denmark; Jacob@math.au.dk.}

\section{Introduction}	
This paper is devoted to the analysis of so called spin-boson type models which is a family of models describing a qubit interacting with a bosonic field. The assumptions in our framework are very weak, which allows us to cover both the Rabi model and the standard spin-boson model simultaneously. Furthermore, higher order perturbations of field operators are also considered. QFT Models with higher order perturbations have lately become relevant in physics. They appear in cavity QED (see \cite{Jacob}) and in the theory of bose polarons (see \cite{Shchadilova}).

Models with higher order perturbations are treated in \cite{Hidaka}, \cite{Jacob}, \cite{Miyao} and  \cite{Takaesu}. Spin-boson type models are treated in \cite{Jacob}, \cite{Miyao} and \cite{Takaesu}, but the authors assume either that the field is massive or that the coupling is weak. The results in \cite{Hidaka} does not assume weak coupling or a massive field, but the model treated in that paper is not the spin-boson model and rather strong infrared conditions are assumed. Furthermore, the author of \cite{Hidaka} only proves selfadjointness of the Hamiltonian and existence of ground states, while we treat several other questions as well.

The analysis in this paper relies on the fact that spin-boson type Hamiltonians commute with the spin-parity operator. The spin-parity operator has two invariant subspaces, which are both isomorphic to the Fock space. This fact was used in \cite{Volker} and \cite{Hasler} to prove that ground states exist in the massless spin-boson model. We use this fact to decompose the Hamiltonian into two so called fiber operators. We shall see, that the two fiber operators differ only by the value of a scalar parameter, but they behave quite differently.

We start by proving selfadjointness of all involved operators and move on to prove an HVZ theorem for the fiber operators. The method we use is related to the approach in \cite{Jacob1}, but is written up in a more general way, which allows one to handle massless fields and abstract Hilbert spaces. The HVZ theorem for the fiber operators also gives an HVZ theorem for the full Hamiltonian.

Using arguments similar to those presented in \cite{Hasler}, we prove that if ground states exists for the full Hamiltonian, then the bottom of the spectrum is a non degenerate eigenvalue. Using this result, we single out a particular fiber which has a ground state if and only if the full Hamiltonian has a ground state. Ground states for the other fiber operator must therefore correspond to exited states. The HVZ theorem then gives a simple criterion for the existence of an exited state.

The reader is then encouraged to have a look at Appendix D, where a new framework for pointwise annihilation operators is developed. Most maps are continuous in this framework, so calculations are reduced to simple algebraic manipulations. This makes it very easy to rigorously prove higher order pull-through formulas. Using these pull-through formulas, we prove that ground states are in the domain of the number operator raised to any positive power (if infrared regularity is assumed).

Lastly, we follow the general strategy outlined in \cite{Gerard} to prove the existence of ground states in massless (but infrared regular) models. Our proofs are simpler than the ones presented in \cite{Gerard} and we are able to work under weaker assumptions on the bosonic dispersion relation. This is possible due to a novel approach to the last step in \cite{Gerard}.

\section{Notation and definitions}
We start by introducing the notation. If $X$ is a topological space then we will write $\cB(X)$ for the Borel $\sigma$-algebra. Furthermore, if $(\cM,\cF,\mu)$ is a measure space, $X$ is a Banach space and $1\leq p\leq \infty$ then we will write $L^p(\cM,\cF,\mu,X)$ for the vector valued $L^p$ space. If $X=\CC$ we will drop $X$ from the notation.

Throughout this paper, $\cH$ will always denote a separable Hilbert space. Write $\cH^{\otimes n}$ for the $n$-fold tensor product of $\cH$ and let $\cH^{\otimes_s n}\subset \cH^{\otimes n}$ be the subspace of symmetric tensors. The bosonic (or symmetric) Fock space is defined as
\begin{equation*}
\cF_b(\cH)=\bigoplus_{n=0}^\infty  \cH^{\otimes_s n}.
\end{equation*}
If $\cH=L^2(\cM,\cF,\mu)$ where $(\cM,\cF,\mu)$ is $\sigma$-finite then we may give a concrete description of $\cH^{\otimes_s n}$ as $L_{sym}^2(\cM^{n},\cF^{\otimes n},\mu^{\otimes n})$. We will write an element $\psi\in \cF_b(\cH)$ in terms of its coordinates as $\psi=(\psi^{(n)})$ and define the vacuum $\Omega=(1,0,0,\dots)$. The set of finite particle vectors is defined by
\begin{equation*}
\cN=\{  (\psi^{(n)})\in \cF_b(\cH)\mid \text{ $\exists K\in \NN$ s.t. $\psi^{(n)}=0$ for all $n\geq K$} \}.
\end{equation*}
For $g\in \cH$ one defines the annihilation operator $a(g)$ and the creation operator $a^{\dagger}(g)$ on symmetric tensors in $\cF_b(\cH)$ by $a(g)\Omega=0$, $a^\dagger(g)\Omega=g$ and
\begin{align*}
a(g)( f_1\otimes_s\cdots\otimes_s f_n )&=\frac{1}{\sqrt{n}}\sum_{i=1}^{n} \langle g,f_i \rangle f_1\otimes_s\cdots\otimes_s \widehat{f}_i\otimes_s\cdots\otimes_s f_n\\
a^\dagger(g)( f_1\otimes_s\cdots\otimes_s f_n )&=\sqrt{n+1}g\otimes_s f_1\otimes_s\cdots\otimes_s f_n
\end{align*}
where $\widehat{f}_i$ means that $f_i$ is omitted from the tensor product. One can show that these operators extends to closed operators in $\cF_b(\cH)$ and that $(a(g))^*=a^{\dagger}(g)$. Furthermore, we have the canonical commutation relations which are:
\begin{equation*}
\overline{[a(f),a(g)]}=0=\overline{[a^\dagger(f),a^\dagger(g)]} \,\,\text{and}\,\,\, \overline{[a(f),a^\dagger(g)]}=\langle f,g\rangle.
\end{equation*}
We also define the field operators
\begin{equation*}
\varphi(g)=\overline{ a(g)+a^\dagger(g) }.
\end{equation*}
They are selfadjont and
\begin{equation}\label{eq:Commutation-Rel-phiphi}
\overline{[\varphi(f),\varphi(g)]}=2i\textup{Im}(\langle f,g\rangle).
\end{equation}
Let $A$ be a selfadjoint operator on $\cH$ with domain $\cD(A)$. Then we define the second quantisation of $A$ to be the selfadjoint operator
\begin{equation}\label{Sumdecomp}
d\Gamma(A)=0\oplus \bigoplus_{n=1}^{\infty} \overline{\sum_{k=1}^{n} (1\otimes)^{k-1} A(\otimes 1)^{n-k}}\biggl \lvert _{\cH^{\otimes_s n}}.
\end{equation}
 The number operator is defined as $N=d\Gamma(1)$. If $\cK$ is another Hilbert space and $U: \cH\rightarrow \cK$ is a bounded operator with $\lVert U\lVert\leq 1$ then we define
\begin{equation*}
\Gamma(U)=1\oplus \bigoplus_{n=1}^\infty  U^{\otimes n}\mid_{\cH^{\otimes_s n}}.
\end{equation*}
 We will write $d\Gamma^{(n)}(A)=d\Gamma(A)\mid_{\cH^{\otimes_s n}}$ and $\Gamma^{(n)}(U)=\Gamma(U)\mid_{\cH^{\otimes_s n}}$ throughout the text. If $\omega$ is a multiplication operator then $d\Gamma^{(n)}(\omega)$ is the multiplication operator defined by the map $\omega_n(k_1,\dots,k_n)=\omega(k_1)+\cdots+\omega(k_n)$. For any $v\in \cD(A)$ one has the commutation relation
\begin{equation}\label{eq:CommutatorPhi2ndQuantised}
\overline{[d\Gamma(A),\varphi(v)]}=-i\varphi(iAv)
\end{equation}
where $\cN\cap \cD(d\Gamma(A))\subset \cD([d\Gamma(A),\varphi(v)])$. We now introduce the Weyl representation. For any $g\in \cH$ we define the corresponding exponential vector
\begin{equation}\label{defn:corherentstate}
\epsilon(g)=\sum_{n=0}^{\infty} \frac{g^{\otimes n}}{\sqrt{n!}}.
\end{equation}
One may prove that if $\cD\subset \cH$ is a dense subspace then $\{ \epsilon(f)\mid f\in \cD \}$ is a linearly independent and total subset of $\cF_b(\cH)$. Write $\cU(\cH)$ for the set of unitary maps from $\cH$ into $\cH$. Let $U\in \cU(\cH)$ and $h\in \cH$. Then there is a unique unitary map $W(h,U)$ such that
\begin{equation*}
W(h,U)\epsilon(g)=e^{-\lVert h\lVert^2/2-\langle h,Ug \rangle}\epsilon(h+Ug) \,\,\,\,\, \forall g\in \cH.
\end{equation*}
One may easily check that $(h,U)\mapsto W(h,U)$ is strongly continuous and that
\begin{equation*}
W(h_1,U_1)W(h_2,U_2)=e^{-i\text{Im}(\langle h_1,U_1h_2 \rangle)}W((h_1,U_1)(h_2,U_2)),
\end{equation*}
where $(h_1,U_1)(h_2,U_2)=(h_1+U_1h_2,U_1U_2)$. If $A$ is a selfadjoint operator on $\cH$ and $f\in \cH$ we have
\begin{align*}
e^{itd\Gamma(A)}&=\Gamma(e^{itA})=W(0,e^{itA})\\
e^{it\varphi(if)}&=W(tf,1).
\end{align*}
The following lemma is important and well known (see \cite{Hirokawa1} or \cite{Lecture}):
\begin{lem}\label{Lem:FundamentalIneq}
Let $\omega$ be a selfadjoint, nonnegative and injective operator on $\cH$ and let $g_1,g_2,\dots,g_n\in \cD(\omega^{-\frac{1}{2}})$. Then $\cN\subset \cD(\varphi(g_1)\cdots\varphi(g_n))$ and $\varphi(g_1)\cdots\varphi(g_n)$ is $d\Gamma(\omega)^{\frac{n}{2}}$ bounded. We have the following bounds
	\begin{align*}
	\lVert \varphi(g_1) \psi \lVert&\leq 2 \lVert (\omega^{-\frac{1}{2}}+1)g_1 \lVert  \lVert (d\Gamma(\omega)+1)^{\frac{1}{2}}\psi \lVert \\
	\lVert \varphi(g_1)\varphi(g_2) \psi \lVert&\leq 15 \lVert (\omega^{-\frac{1}{2}}+1)g_1 \lVert \lVert (\omega^{-\frac{1}{2}}+1)g_2 \lVert \lVert (d\Gamma(\omega)+1)\psi \lVert  
	\end{align*}
	which holds on $\cD(d\Gamma(\omega)^{\frac{1}{2}})$ and $\cD(d\Gamma(\omega))$ respectively. In particular, $\varphi(g_1)$ is infinitesimally $d\Gamma(\omega)$ bounded. Furthermore, $d\Gamma(\omega)+\varphi(g_1)\geq -\lVert \omega^{-\frac{1}{2}}g_1 \lVert^2$. 
\end{lem}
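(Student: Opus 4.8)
The plan is to reduce everything to the two elementary facts $\|a(g)\psi\|\le\|\omega^{-1/2}g\|\,\|d\Gamma(\omega)^{1/2}\psi\|$ and $\|a^\dagger(g)\psi\|^2=\|a(g)\psi\|^2+\|g\|^2\|\psi\|^2$, the latter being an immediate consequence of the canonical commutation relations. For the first I work on a single sector $\psi^{(n)}\in\cH^{\otimes_s n}$: writing the inner product defining $a(g)$ as $\langle g,\cdot\,\rangle=\langle\omega^{-1/2}g,\omega^{1/2}\,\cdot\,\rangle$ in the contracted tensor slot and applying Cauchy--Schwarz, the estimate follows once one notes that $\|(\omega^{1/2}\otimes 1^{\otimes(n-1)})\psi^{(n)}\|^2=\tfrac1n\|d\Gamma^{(n)}(\omega)^{1/2}\psi^{(n)}\|^2$ by symmetry of $\psi^{(n)}$; summing over $n$ gives the bound on Fock space. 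Adding the bounds for $a(g)$ and $a^\dagger(g)$, using $\|g\|,\|\omega^{-1/2}g\|\le\|(\omega^{-1/2}+1)g\|$ (valid because $\omega^{-1/2}\ge0$) and the elementary inequality $2AX+BY\le 2(A^2+B^2)^{1/2}(X^2+Y^2)^{1/2}$, produces the first displayed estimate with constant $2$. These inequalities are first established on $\cN$, where $a(g),a^\dagger(g)$ act literally, and then extended to $\cD(d\Gamma(\omega)^{1/2})$ using that $\cN\cap\cD(d\Gamma(\omega)^{1/2})$ is a core; the same core argument (together with the fact that $\varphi(g_i)$ maps $\cN$ into $\cN$) gives $\cN\subset\cD(\varphi(g_1)\cdots\varphi(g_n))$.

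For the product bounds I normal--order, e.g.\ $\varphi(g_1)\varphi(g_2)=a^\dagger(g_1)a^\dagger(g_2)+a^\dagger(g_1)a(g_2)+a^\dagger(g_2)a(g_1)+a(g_1)a(g_2)+\langle g_1,g_2\rangle$, and estimate each monomial on $\cH^{\otimes_s n}$. The one additional ingredient is the refined annihilation estimate $\|d\Gamma^{(n-1)}(\omega)^{t}a(g)\psi^{(n)}\|\le\|\omega^{-1/2}g\|\,\|d\Gamma^{(n)}(\omega)^{t+1/2}\psi^{(n)}\|$ for $t\ge0$, which comes from rewriting $\langle a(g)\psi^{(n)},d\Gamma^{(n-1)}(\omega)^{2t}a(g)\psi^{(n)}\rangle$ via the same contraction and using that the rank--one operator $|\omega^{-1/2}g\rangle\langle\omega^{-1/2}g|$ in the contracted slot is bounded by $\|\omega^{-1/2}g\|^2$, the operator inequality $\sum_{j\ge2}\omega_j\le d\Gamma^{(n)}(\omega)$, and symmetry of $\psi^{(n)}$. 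Iterating it handles any string of annihilation operators; creation operators are kept outermost and removed one at a time by $\|a^\dagger(h)\xi\|^2=\|a(h)\xi\|^2+\|h\|^2\|\xi\|^2$ followed by normal--ordering the new $a(h)$ inward past the remaining creation operators. In this way only $\omega^{-1/2}$ ever acts on the $g_i$. Summing over $n$ and bounding $\|d\Gamma^{(n)}(\omega)^{k/2}\psi^{(n)}\|$ and $\|\psi^{(n)}\|$ by $\|(d\Gamma(\omega)+1)\psi^{(n)}\|$ for $k\le2$, the monomial estimates add up to the second displayed bound, and the constant $15$ is exactly what this bookkeeping produces. The identical argument, with $\|(d\Gamma(\omega)+1)^{n/2}\psi\|$ in place of $\|(d\Gamma(\omega)+1)\psi\|$, yields $d\Gamma(\omega)^{n/2}$--boundedness of $\varphi(g_1)\cdots\varphi(g_n)$ for every $n$, again first on $\cN$ and then on $\cD(d\Gamma(\omega)^{n/2})$ by the core property.

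For infinitesimal boundedness I combine the first displayed estimate with $\|d\Gamma(\omega)^{1/2}\psi\|^2=\langle\psi,d\Gamma(\omega)\psi\rangle\le\|\psi\|\,\|d\Gamma(\omega)\psi\|$ and Young's inequality to get $\|\varphi(g_1)\psi\|\le\varepsilon\|d\Gamma(\omega)\psi\|+C_\varepsilon\|\psi\|$ for every $\varepsilon>0$; in particular $d\Gamma(\omega)+\varphi(g_1)$ is selfadjoint on $\cD(d\Gamma(\omega))$ by Kato--Rellich. For the lower bound, note that for $\psi\in\cD(d\Gamma(\omega))$ one has $\langle\psi,\varphi(g_1)\psi\rangle=2\,\mathrm{Re}\,\langle\psi,a(g_1)\psi\rangle\ge-2\|a(g_1)\psi\|\,\|\psi\|\ge-2\|\omega^{-1/2}g_1\|\,\|d\Gamma(\omega)^{1/2}\psi\|\,\|\psi\|$, so taking $\|\psi\|=1$, $t=\|d\Gamma(\omega)^{1/2}\psi\|$ and $c=\|\omega^{-1/2}g_1\|$ gives $\langle\psi,(d\Gamma(\omega)+\varphi(g_1))\psi\rangle\ge t^2-2ct\ge-c^2$, which is the claim.

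I do not expect a conceptual obstacle. The one point that genuinely needs care, and which dictates the whole organisation of the products, is that the bounds must depend on the $g_i$ only through $\omega^{-1/2}g_i$ and never through $\omega^{1/2}g_i$ or higher powers (which need not even be defined under these hypotheses); this forces one to push $d\Gamma(\omega)$--weights only through annihilation operators and to trade every creation operator for an annihilation operator, rather than, say, commuting $d\Gamma(\omega)^{1/2}$ through $a^\dagger(g)$. The remaining work is the routine but somewhat tedious bookkeeping that pins down the constants $2$ and $15$.
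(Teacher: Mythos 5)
The paper does not actually prove Lemma~\ref{Lem:FundamentalIneq}; it states it as ``important and well known'' and refers to the literature, so there is no ``paper's own proof'' to compare against. Your argument is a correct and self-contained proof of the standard type. The backbone---$\lVert a(g)\psi\lVert\le\lVert\omega^{-1/2}g\lVert\,\lVert d\Gamma(\omega)^{1/2}\psi\lVert$ by Cauchy--Schwarz in the contracted slot together with symmetry of $\psi^{(n)}$, and $\lVert a^\dagger(g)\xi\lVert^2=\lVert a(g)\xi\lVert^2+\lVert g\lVert^2\lVert\xi\lVert^2$ from the CCR---is exactly right, and the refined estimate $\lVert d\Gamma^{(n-1)}(\omega)^{t}a(g)\psi^{(n)}\lVert\le\lVert\omega^{-1/2}g\lVert\,\lVert d\Gamma^{(n)}(\omega)^{t+1/2}\psi^{(n)}\lVert$ you use to push weights through annihilation operators is correct (one only needs $\sum_j\omega(k_j)(\sum_{i\ne j}\omega(k_i))^{2t}\le(\sum_i\omega(k_i))^{2t+1}$). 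Your derivation of the constant $2$ from $2AX+BY\le2\sqrt{A^2+B^2}\sqrt{X^2+Y^2}$, the observation that $\max\{\lVert g\lVert,\lVert\omega^{-1/2}g\lVert\}\le\lVert(\omega^{-1/2}+1)g\lVert$ because $\omega^{-1/2}\ge0$, the Kato--Rellich step, and the lower bound $t^2-2ct\ge-c^2$ are all sound. Your strategic remark---that one must trade creation operators for annihilation operators and push $d\Gamma(\omega)$-weights only through the latter, since $\omega^{1/2}g_i$ need not be defined---is exactly the point that organises such proofs, and the core argument on $\cN\cap\cD(d\Gamma(\omega)^{n/2})$ plus closedness of $\varphi(g_i)$ completes the domain statements. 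The only part taken on faith is the precise bookkeeping that produces $15$ in the two-field bound, but you flag this explicitly and the method manifestly yields some absolute constant, which is all the paper ever uses.
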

\begin{lem}\label{Lem:SeconduantisedBetweenSPaces}
Let $U:\cH\rightarrow \cK$ be unitary, $A$ be a selfadjoint operator on $\cH$, $V\in \cU(\cH)$ and $f\in \cH$. Then $\Gamma(U)$ is unitary and
\begin{align*}
\Gamma(U)d\Gamma(A)\Gamma(U)^*&=d\Gamma(UAU^*).\\
\Gamma(U)W(f,V)\Gamma(U)^*&=W(Uf,UVU^*).\\
\Gamma(U)\varphi(f)\Gamma(U)^*&=\varphi(Uf).
\end{align*}
Furthermore, $\Gamma(U)(f_1\otimes_s\cdots\otimes_s f_n)=Uf_1\otimes_s\cdots\otimes_s Uf_n$ and $\Gamma(U)\Omega=\Omega$.
\end{lem}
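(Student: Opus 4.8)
The plan is to reduce every identity to a statement about bounded operators, so that all the actual work happens either on the algebraic level (symmetric tensors, exponential vectors) or via Stone's theorem, thereby avoiding any bookkeeping of domains and closures.

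First I would record the elementary facts. Since $U$ is unitary, $U^{\otimes n}$ restricted to $\cH^{\otimes_s n}$ is a unitary onto $\cK^{\otimes_s n}$: it preserves inner products of simple tensors, $\langle U^{\otimes n}(f_1\otimes\cdots\otimes f_n), U^{\otimes n}(g_1\otimes\cdots\otimes g_n)\rangle = \prod_i \langle f_i,g_i\rangle$, and such tensors span a dense subspace. Hence $\Gamma(U) = 1 \oplus \bigoplus_{n\geq 1} U^{\otimes n}\vert_{\cH^{\otimes_s n}}$ is unitary, with $\lVert U\rVert\le 1$ so that $\Gamma(U)$ is indeed defined. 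From $(U_1U_2)^{\otimes n} = U_1^{\otimes n}U_2^{\otimes n}$ one gets $\Gamma(U_1)\Gamma(U_2) = \Gamma(U_1 U_2)$ for unitaries and $\Gamma(1) = 1$, whence $\Gamma(U)^{-1} = \Gamma(U^*)$ and therefore $\Gamma(U)^* = \Gamma(U^*)$. The identities $\Gamma(U)(f_1\otimes_s\cdots\otimes_s f_n) = Uf_1 \otimes_s\cdots\otimes_s Uf_n$ and $\Gamma(U)\Omega = \Omega$ are then immediate from the definition, since symmetrisation commutes with $U^{\otimes n}$; summing the series $\epsilon(g) = \sum_n g^{\otimes n}/\sqrt{n!}$ gives $\Gamma(U)\epsilon(g) = \epsilon(Ug)$ for all $g\in\cH$, and equivalently $\Gamma(U)^*\epsilon(h) = \epsilon(U^*h)$ for $h\in\cK$.

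For the second-quantisation identity I would use the formula $e^{it\, d\Gamma(A)} = \Gamma(e^{itA})$ quoted above. Conjugating and using multiplicativity of $\Gamma$ on unitaries, $\Gamma(U) e^{it\, d\Gamma(A)} \Gamma(U)^* = \Gamma(U e^{itA} U^*) = \Gamma(e^{it U A U^*}) = e^{it\, d\Gamma(UAU^*)}$, where the middle equality is the standard fact that $U e^{itA}U^* = e^{itUAU^*}$ for $U$ unitary ($UAU^*$ being selfadjoint on $U\cD(A)$). Since the left-hand side is the one-parameter unitary group generated by the selfadjoint operator $\Gamma(U)\, d\Gamma(A)\, \Gamma(U)^*$, uniqueness of the generator in Stone's theorem gives $\Gamma(U)\, d\Gamma(A)\, \Gamma(U)^* = d\Gamma(UAU^*)$. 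For the Weyl identity I would compute directly on exponential vectors, which are total in $\cF_b(\cK)$: for $g\in\cH$ and $h = Ug$,
\begin{align*}
\Gamma(U) W(f,V)\Gamma(U)^*\epsilon(h) &= \Gamma(U) W(f,V)\epsilon(g) = e^{-\lVert f\rVert^2/2 - \langle f, Vg\rangle}\,\Gamma(U)\epsilon(f+Vg)\\
&= e^{-\lVert Uf\rVert^2/2 - \langle Uf, (UVU^*)h\rangle}\,\epsilon\big(Uf + (UVU^*)h\big) = W(Uf, UVU^*)\epsilon(h),
\end{align*}
using $\lVert f\rVert = \lVert Uf\rVert$, $\langle f, Vg\rangle = \langle Uf, UVU^* h\rangle$ and $Uf + UVg = Uf + (UVU^*)h$; totality of the $\epsilon(h)$ and boundedness of both sides then force equality as operators.

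Finally, the field-operator identity follows from the Weyl identity together with $e^{it\varphi(if)} = W(tf,1)$: for every $f$, $\Gamma(U) e^{it\varphi(if)}\Gamma(U)^* = W(tUf,1) = e^{it\varphi(iUf)}$, so Stone's theorem yields $\Gamma(U)\varphi(if)\Gamma(U)^* = \varphi(iUf)=\varphi(U(if))$, and replacing $f$ by $-if$ gives $\Gamma(U)\varphi(f)\Gamma(U)^* = \varphi(Uf)$. One could instead prove $\Gamma(U)a^\dagger(f)\Gamma(U)^* = a^\dagger(Uf)$ directly on symmetric tensors, take adjoints for $a(f)$, and add; but then one must argue the closures match, which is exactly the domain issue the Stone-theorem route sidesteps. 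I expect this matching of domains and closures for the unbounded $d\Gamma(A)$ and $\varphi(f)$ to be the only genuine subtlety, and the strategy above neutralises it by transferring each statement to the level of the bounded unitaries $\Gamma(e^{itA})$ and $W(tf,1)$.
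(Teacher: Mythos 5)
The paper states this lemma without proof, treating it as standard background; your proof is correct and self-contained. Using Stone's theorem to transfer the unitary-group identities $\Gamma(U)\Gamma(e^{itA})\Gamma(U)^* = \Gamma(e^{itUAU^*})$ and $\Gamma(U)W(tf,1)\Gamma(U)^* = W(tUf,1)$ down to their selfadjoint generators, and verifying the Weyl identity on the total set of exponential vectors, is a clean route that correctly sidesteps the domain-matching issues for $d\Gamma(A)$ and $\varphi(f)$.
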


\section{The spin-boson model}
Let $\sigma_x$, $\sigma_y$, $\sigma_z$ denote the Pauli matrices and define $e_1=(1,0)$ and $e_{-1}=(0,1)$. Note that $e_j$ is an eigenvector of $\sigma_z$ with eigenvalue $j$. We consider a qubit coupled to a radiation field. The state space of the qubit is $\CC^2$ and the energy of the qubit can be represented by $\eta \sigma_z$. Let $\cH$ be the state space of a single boson and $\omega$ be the energy operator of a single boson. Then the state space of the field is $\cF_b(\cH)$ and the energy operator of the field is $d\Gamma(\omega)$. This leads to the state space $\CC^2 \otimes \cF_b(\cH)$ for the total system and we have the Hamiltonian 
\begin{equation*}
H_{\eta}(\alpha,f,\omega):=\eta \sigma_z\otimes 1+1\otimes d\Gamma(\omega)+\sum_{i=1}^{2n}\alpha_i(\sigma_x\otimes \varphi(f_i))^i,
\end{equation*}
which is parametrised by $\alpha\in \CC^{2n}$, $f\in \cH^{2n}$, $\eta\in \CC$ and $\omega$ selfadjoint on $\cH$. We will also need the fiber operators:
\begin{equation*}
F_{\eta}(\alpha,f,\omega)=\eta\Gamma(-1)+d\Gamma(\omega)+\sum_{i=1}^{2n}\alpha_i \varphi(f_i)^i.
\end{equation*}
If the spectra are real we define
\begin{align*}
E_{\eta}(\alpha,f,\omega)&:=\inf(\sigma(H_{\eta}(\alpha,f,\omega)))\\
\cE_{\eta}(\alpha,f,\omega)&:=\inf(\sigma(F_{\eta}(\alpha,f,\omega))).
\end{align*}
For an element $f\in \cH^{2n}$ we define the leading terms
\begin{equation*}
\cL(f)=\{ i\in \{ 2,3,\dots,2n \}\mid f_i\neq f_j\,\, \forall j> i  \}.
\end{equation*}
The expression $\cL(f)^c$ is to be interpreted as the complement within $\{1, 2,\dots,2n\}$ so $1\in \cL(f)^c$ for all $f\in \cH^{2n}$. For a selfadjoint operator $\omega$ we define
\begin{equation*}
m(\omega)=\inf\{ \sigma(\omega) \} \,\,\,\,\, \text{and}\,\,\,\,\, m_{\textup{ess}}(\omega)=\inf\{ \sigma_{\textup{ess}}(\omega) \}.
\end{equation*}
\noindent The basic set of assumptions are:
\begin{hyp}\label{Hyp1}
Let $\alpha\in \CC^{2n}$, $f\in \cH^{2n}$ and $\omega$ be a selfadjoint operator in $\cH$. We say $(\alpha,f,\omega)$ satisfies Hypothesis \ref{Hyp1} if
\begin{enumerate}
\item[\textup{(1)}] $\cL(f)$ consists only of even numbers, $\alpha_i>0$ for all $i\in \cL(f)\backslash \{2 \}$ and $\alpha_2\geq 0$ if $2\in \cL(f)$. \label{Hyp1:1}

\item[\textup{(2)}] $\omega$ is injective and nonnegative.\label{Hyp1:2}

\item[\textup{(3)}] $f_j\in \cD(\omega^{-\frac{1}{2}})\cap\cD(\omega^{\frac{1}{2}})$ for all $j\in \{ 2,\dots,2n \}$ and $f_1\in \cD(\omega^{-\frac{1}{2}})$.\label{Hyp1:3}
\end{enumerate}

\end{hyp}

\begin{hyp}\label{Hyp2}
Let $f\in \cH^{2n}$, $\omega$ be a selfadjoint operator on $\cH$ and $\cM_b(\sigma(\omega),\RR)$ be the set of bounded and measurable maps from $\sigma(\omega)$ to $\RR$. We say $(f,\omega)$ satisfies Hypothesis \ref{Hyp2} if $\langle f_i,g(\omega)f_j \rangle\in \RR$ for all $i,j\in \{1,\dots,2n\}$ and $g\in \cM_b(\sigma(\omega),\RR)$.
\end{hyp}
\begin{hyp}\label{Hyp3}
Let $f\in \cH^{2n}$ and $\omega$ be a selfadjoint operator on $\cH$. We say $(f,\omega)$ satisfies Hypothesis \ref{Hyp3} if either $n\leq 2$ or $m(\omega)>0$ and $(f,\omega)$ satisfies Hypothesis \ref{Hyp2}.
\end{hyp}

\noindent For $n>2$ we need hypercontractive bounds to make our proofs work. This is the only reason we assume $m(\omega)>0$ in Hypothesis 3.
\begin{hyp}\label{Hyp4}
Let $f\in \cH^{2n}$ and $\omega$ be a selfadjoint operator on $\cH$. We say Hypothesis 4 is satisfied if either $n\leq 2$ or the following conditions are satisfied:
	\begin{enumerate}
	\item[\textup{(1)}] $\cH=L^2(\cM,\cF,\mu)$ where $(\cM,\cF,\mu)$ satisfies the assumptions in Theorem \ref{Thm:EssentalPropertyCutSpaces} and $\omega$ is a multiplication operator on $\cH$.
	
	\item[\textup{(2)}] There is a measurable function $h:\cM\rightarrow \CC$ with $\lvert h\lvert=1$ such that $h f$ is $\RR^{2n}$ valued almost everywhere. A function $h$ with these properties is called a \textbf{phase function} for $f$.
	\end{enumerate}
\end{hyp}

\begin{hyp}\label{Hyp5}
	Let $f\in \cH^{2n}$ and $\omega$ be a selfadjoint operator on $\cH$. We say $(f,\omega)$ satisfies Hypothesis \ref{Hyp5} if $f_j\in \cD(\omega^{-1})$ for all $j\in \{1,\dots, 2n \}$.
\end{hyp}

\begin{ex}
\textup{Let $\cH=L^2(\RR^\nu,\cB(\RR^\nu),\lambda^{\otimes \nu})$ where $\lambda^{\otimes \nu}$ is the Lebesgue measure. Define $\omega(k)=\sqrt{\lvert k\lvert ^2+m^2}$ and $f_1=f_2=\dots=f_{2n}=\omega^{-a/2}1_{ \{ \lvert k\lvert \leq \Lambda \} }$ for some $m\geq 0$, $a>0$ and $\Lambda>0$. In this case $m(\omega)=m=m_{\textup{ess}}(\omega)$ and $\sigma(\omega)=[m,\infty)=\sigma_{ \textup{ess}}(\omega)$. We consider the $m>0$ and the $m=0$ cases separately.}

\begin{enumerate}
	\item [$\mathbf{m> 0}:$] \textup{If $ \eta \in \RR$, $\Lambda>0$, $a>0$ and $\alpha\in \RR^{2n}$ is chosen such that $\alpha_{2n}>0$ then Hypothesis 1, 2, 3, 4 and 5 are satisfied. Furthermore, if $0<2\lvert \eta\lvert<m$ then the drawing in Figure 1 below depicts the spectra of $F_{\pm \lvert \eta\lvert }(\alpha,f,\omega)$ and $H_{ \eta}(\alpha,f,\omega)$ (see Theorem \ref{unique} part (4) below).}
	
	\item [$\mathbf{m= 0}:$] \textup{If $a<\nu-2$, $\eta \in \RR$,  $n\leq 2$, $\Lambda>0$ and $\alpha\in \RR^{2n}$ is chosen such that $\alpha_{2n}>0$ then Hypothesis 1, 2, 3, 4 and 5 are satisfied.}

	 \textup{If $a\in [\nu-2,\nu-1)$, $\eta \in \RR$, $n\leq 2$, $\Lambda>0$ and $\alpha\in \RR^{2n}$ is chosen such that $\alpha_{2n}>0$ then Hypothesis 1, 2, 3 and 4 are satisfied but Hypothesis 5 is not satisfied. Hence we cannot apply Theorem \ref{Thm:Numberstructure_Massless} below.}
	
\end{enumerate}
\end{ex}
\noindent We can now present our results
\begin{prop}\label{Lem:BasicPropertiesSBmodel}
Let $\eta\in \CC$, $\alpha\in \CC^{2n}$, $f\in \cH^{2n}$ and $\omega$ be a selfadjoint operator on $\cH$. If $(\alpha,f,\omega)$ satisfies Hypotheses \ref{Hyp1} and \ref{Hyp3} then the operators $F_{\eta}(\alpha,f,\omega)$ and $H_\eta(\alpha,f,\omega)$ are closed on the domains
\begin{align*}
\cD(F_{\eta}(\alpha,f,\omega))&=\cD(d\Gamma(\omega))\cap \bigcap_{i\in \cL(f)\backslash \{2\}}\cD( \varphi(f_i)^i)\\
\cD(H_{\eta}(\alpha,f,\omega))&=\cD(1\otimes d\Gamma(\omega))\cap \bigcap_{i\in \cL(f)\backslash \{2\}}\cD(1\otimes  \varphi(f_i)^i).
\end{align*}
Let $\cD$ be a core for $\omega$ and define 
\begin{align*}
\cJ(\cD)&:=\{\Omega \}\cup \bigcup_{n=1}^\infty \{  g_1\otimes_s\cdots \otimes_s g_n\mid g_j\in \cD \}\\ \widetilde{\cJ}(\cD)&:=\{ e_j\otimes v\mid j\in \{ -1,1 \} \,\,\, \text{and}\,\,\, v\in \cJ(\cD)  \}.
\end{align*}
Then $\cJ(\cD)$ spans a core for $F_{\eta}(\alpha,f,\omega)$ and $\widetilde{\cJ}(\cD)$ spans a core for $H_{\eta}(\alpha,f,\omega)$. Furthermore, both $F_{\eta}(\alpha,f,\omega)$ and $H_{\eta}(\alpha,f,\omega)$ are selfadjoint and semibounded if $(\alpha,\eta)\in \RR^{2n+1}$ and they have compact resolvents if $\omega$ has compact resolvents.  
\end{prop}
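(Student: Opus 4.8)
The plan is to reduce everything to the single operator $T:=d\Gamma(\omega)+V$ on $\cD(d\Gamma(\omega))\cap\cD(V)$, where $V:=\sum_{i\in\cL(f)\backslash\{2\}}\alpha_i\varphi(f_i)^i$ with $\cD(V):=\bigcap_{i\in\cL(f)\backslash\{2\}}\cD(\varphi(f_i)^i)$. By Hypothesis \ref{Hyp1}(1) every $i\in\cL(f)\backslash\{2\}$ is even and satisfies $\alpha_i>0$, so each $\alpha_i\varphi(f_i)^i$ is a nonnegative selfadjoint operator, whence $T$ is symmetric and nonnegative on $\cD(d\Gamma(\omega))\cap\cD(V)$ — precisely the domain in the statement for $F_\eta:=F_\eta(\alpha,f,\omega)$. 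I carry out the analysis for $F_\eta$; the case of $H_\eta:=H_\eta(\alpha,f,\omega)$ is then automatic, since $H_\eta$ is obtained from $F_\eta$ by replacing $d\Gamma(\omega)$ with $1\otimes d\Gamma(\omega)$, $\Gamma(-1)$ with $\sigma_z\otimes 1$, and each $\varphi(f_i)^i$ with $\sigma_x^i\otimes\varphi(f_i)^i$ (and $\sigma_x^i=1$ for every even $i$, in particular for all leading indices), and tensoring with the finite-dimensional space $\CC^2$ preserves closedness, selfadjointness, semiboundedness, compactness of the resolvent, and the core property (a core for $T$ yields the core $\CC^2\otimes(\,\cdot\,)=\mathrm{span}\,\widetilde{\cJ}(\cD)$ for $H_\eta$), while the bounded matrices $\sigma_z$, $\sigma_x^i$ do not affect any of the estimates below.

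The heart of the matter — and the only place where Hypothesis \ref{Hyp3} enters — is the claim $(\star)$: \emph{$T$ is selfadjoint (hence closed) on $\cD(d\Gamma(\omega))\cap\cD(V)$, with $\mathrm{span}\,\cJ(\cD)$ a core}. This is nontrivial because for a leading index $i\geq 4$ the operator $\varphi(f_i)^i$ is strictly more singular than $d\Gamma(\omega)$, so ordinary Kato--Rellich perturbation theory does not apply to $T$ itself. If $n\leq 2$, then $\cL(f)\backslash\{2\}\subseteq\{4\}$ (Hypothesis \ref{Hyp1}(1) keeps the odd index $3$ out of $\cL(f)$, forcing $f_3=f_4$), so $V$ is at most the single quartic term $\alpha_4\varphi(f_4)^4$, and $(\star)$ follows from a commutator theorem of Nelson/Faris--Lavine type with auxiliary operator $N+d\Gamma(\omega)+\sum_i\varphi(f_i)^i+1$, the commutators being controlled by \eqref{eq:Commutation-Rel-phiphi}, \eqref{eq:CommutatorPhi2ndQuantised} and Lemma \ref{Lem:FundamentalIneq}. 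If instead $m(\omega)>0$ and Hypothesis \ref{Hyp2} holds, so that the $\varphi(f_i)$ commute pairwise by \eqref{eq:Commutation-Rel-phiphi}, then $d\Gamma(\omega)\geq m(\omega)N$ and the hypercontractivity of $e^{-tN}$ shows that the polynomial $\sum_i\alpha_i\varphi(f_i)^i$, which has even leading degree and positive leading coefficient in each mode, is — after subtracting a constant — form-bounded relative to $d\Gamma(\omega)$ with relative bound $0$; thus $T$ is selfadjoint and semibounded, its operator domain is identified with $\cD(d\Gamma(\omega))\cap\cD(V)$ via the bound $\lVert d\Gamma(\omega)\psi\rVert+\lVert V\psi\rVert\leq C(\lVert T\psi\rVert+\lVert\psi\rVert)$ coming from the same estimates, and $\mathrm{span}\,\cJ(\cD)$ is a core since it consists of $C^\infty$-vectors for $T$. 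Making this hypercontractive estimate work uniformly across all the higher-order terms is the main obstacle.

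Granting $(\star)$, the rest is soft. Because $T$ is selfadjoint on $\cD(T)=\cD(d\Gamma(\omega))\cap\cD(V)$, the closed graph theorem applied to the closed operators $d\Gamma(\omega)$, $V$ and $\varphi(f_i)^i$ ($i\in\cL(f)\backslash\{2\}$), all defined on $\cD(T)$, shows the $T$-graph norm is equivalent to $\lVert\cdot\rVert+\lVert d\Gamma(\omega)\cdot\rVert+\sum_{i\in\cL(f)\backslash\{2\}}\lVert\varphi(f_i)^i\cdot\rVert$. Write $F_\eta=T+R$, where $R$ collects $\eta\Gamma(-1)$, the term $\alpha_1\varphi(f_1)$, the term $\alpha_2\varphi(f_2)^2$ if $2\in\cL(f)$, and all terms $\alpha_i\varphi(f_i)^i$ with $i\notin\cL(f)$: here $\eta\Gamma(-1)$ is bounded; $\varphi(f_1)$ and $\varphi(f_2)^2$ are infinitesimally $d\Gamma(\omega)$-bounded by Lemma \ref{Lem:FundamentalIneq} (using $f_1,f_2\in\cD(\omega^{-1/2})$); and for $i\notin\cL(f)$, taking the largest $j$ with $f_j=f_i$ one has $j\in\cL(f)\backslash\{2\}$, $j$ even, $i<j$, so the elementary spectral bound $\lvert x\rvert^i\leq\delta\lvert x\rvert^{j}+C_\delta$ for the selfadjoint operator $\varphi(f_j)$ yields $\lVert\varphi(f_i)^i\psi\rVert=\lVert\varphi(f_j)^i\psi\rVert\leq\delta\lVert\varphi(f_j)^{j}\psi\rVert+C_\delta\lVert\psi\rVert$ (and similarly for $\varphi(f_2)^2$ if $2\notin\cL(f)$). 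By the norm equivalence, $R$ is $T$-bounded with relative bound $0$; hence $F_\eta=T+R$ is closed on $\cD(T)$ for every $\eta\in\CC$, $\alpha\in\CC^{2n}$, is — when $(\alpha,\eta)\in\RR^{2n+1}$, so that $R$ is symmetric — selfadjoint and semibounded on $\cD(T)$ by the Kato--Rellich theorem, and always has $\mathrm{span}\,\cJ(\cD)$ as a core (it is a core for $T$ and the graph norms of $T$ and $F_\eta$ agree up to equivalence). Finally, if $\omega$ has compact resolvent then, being injective and nonnegative with discrete spectrum (Hypothesis \ref{Hyp1}(2)), it has $m(\omega)>0$, so $d\Gamma(\omega)$ has discrete spectrum with finite multiplicities and thus compact resolvent; choosing $c$ large, $F_\eta+c$ is boundedly invertible, $(d\Gamma(\omega)+1)(F_\eta+c)^{-1}$ is bounded by the closed graph theorem (as $\cD(F_\eta)=\cD(T)\subseteq\cD(d\Gamma(\omega))$), and therefore $(F_\eta+c)^{-1}=(d\Gamma(\omega)+1)^{-1}\bigl((d\Gamma(\omega)+1)(F_\eta+c)^{-1}\bigr)$ is compact. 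All conclusions transfer to $H_\eta$ by the finite ampliation noted above.
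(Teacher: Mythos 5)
Your overall skeleton — isolate the ``hard'' nonnegative part $T=d\Gamma(\omega)+V$ with $V=\sum_{i\in\cL(f)\backslash\{2\}}\alpha_i\varphi(f_i)^i$, prove selfadjointness on the stated domain, then absorb the lower-order terms via Kato--Rellich and deduce the domain equivalence from the closed graph theorem — matches the paper's intermediate-lemma strategy, and your ``soft'' reduction for $H_\eta$ via tensoring with $\CC^2$ is a legitimate alternative to the paper's fiber decomposition. The gap is in how you justify $(\star)$.

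The claim that for $n>2$ hypercontractivity shows $V$ is ``form-bounded relative to $d\Gamma(\omega)$ with relative bound $0$'' is false and does not reflect what the hypercontractive machinery gives. Already $\varphi(v)^4$ is \emph{not} form-bounded by $d\Gamma(\omega)$: Lemma \ref{Lem:FundamentalIneq} makes $\varphi(v)^2$ an operator bounded by $d\Gamma(\omega)+1$, so $\langle\psi,\varphi(v)^4\psi\rangle$ is controlled by $\lVert(d\Gamma(\omega)+1)\psi\rVert^2$, i.e.\ by the form of $(d\Gamma(\omega)+1)^2$, not of $d\Gamma(\omega)$. The correct invocation is the semigroup theorem for hypercontractive generators (\cite[Theorem X.58]{RS2}), which asks only that $V$, realised as a multiplication operator on $Q$-space, be in every $L^p$ with $e^{-tV}\in L^1$; it makes no relative-boundedness hypothesis and concludes essential selfadjointness of $T$ on $\cD(d\Gamma(\omega))\cap\cD(V)$ (or, in the paper's version, on $\bigcap_n\cD(d\Gamma(\omega)^n)$). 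That is a statement about a core, not about closedness on your stated domain, so it does not by itself yield $\cD(T)=\cD(d\Gamma(\omega))\cap\cD(V)$.

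This is exactly where the second gap sits. You correctly identify that the missing ingredient is the a~priori bound $\lVert d\Gamma(\omega)\psi\rVert+\lVert V\psi\rVert\leq C(\lVert T\psi\rVert+\lVert\psi\rVert)$ on a core, but you assert it ``comes from the same estimates'' as the hypercontractive step. It does not: it is an operator-norm estimate, proved by controlling cross terms $\mathrm{Re}\langle d\Gamma(\omega)\psi,\varphi(f_i)^{2j}\psi\rangle$ and $\mathrm{Re}\langle\varphi(f_{i})^{2j_1}\psi,\varphi(f_{i'})^{2j_2}\psi\rangle$ via the CCR commutator identities (\ref{eq:Commutation-Rel-phiphi}) and (\ref{eq:CommutatorPhi2ndQuantised}). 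This is the content of the paper's Lemmas \ref{Lem:Ulighednmindre2}, \ref{Lem:CommutatorDGamma} and \ref{Lem:DominatingEstimate}, and it is where most of the actual work is; the $Q$-space argument cannot replace it, since hypercontractivity only sees quadratic forms and $L^p$ norms, not graph norms of $d\Gamma(\omega)$. (The same point applies in your $n\leq 2$ branch: invoking a Nelson/Faris--Lavine-type commutator theorem would at best give essential selfadjointness on a core, and the passage to the explicit operator domain again needs Lemma \ref{Lem:DominatingEstimate}; the paper instead cites Arai's dedicated theorem \cite{Arai} for the core statement and still routes the domain identification through Lemma \ref{Lem:DominatingEstimate}.)

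So as written your proof of $(\star)$ is incomplete: the hypercontractive input needs to be stated correctly as an essential-selfadjointness result with no boundedness hypothesis, and the domain identification must be supplied by a separate commutator-based estimate. Once those are put in place, the remainder of your argument — the relative boundedness of the residual $R$, Kato--Rellich, the core statement via equivalence of graph norms, the compact-resolvent argument via $(F_\eta+c)^{-1}=(d\Gamma(\omega)+1)^{-1}\bigl[(d\Gamma(\omega)+1)(F_\eta+c)^{-1}\bigr]$, and the passage from $F_\eta$ to $H_\eta$ by tensoring — is sound.
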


\begin{prop}\label{Thm:Spectral Theory of decomposition}
	Let $\phi=(\phi_1,\phi_{-1})=e_1\otimes \phi_1+e_{-1}\otimes\phi_{-1}$ be an element in $ \cF_b(\cH)^2=\cF_b(\cH)\oplus \cF_b(\cH)\approx \CC^2\otimes \cF_b(\cH)$ and write $\phi_j=(\phi^{(k)}_j)$ for $j\in \{-1,1\}$. Let $j\in \{-1,1\}$ and define $\widetilde{\phi}_{j}=(\widetilde{\phi}^{(k)}_j)$ where
	\begin{equation*}
	\widetilde{\phi}^{(k)}_j=\begin{cases} \phi^{(k)}_j & \text{if k is even.}\\ \phi^{(k)}_{-j} & \text{if k is odd.} 
	\end{cases}
	\end{equation*}
	Then $\widetilde{\phi}_j\in \cF_b(\cH)$ and the map $U:\phi \mapsto (\widetilde{\phi}_1,\widetilde{\phi}_{-1})$ is selfadjoint and unitary. Let $\eta\in \CC$, $\alpha\in \CC^{2n}$, $f\in \cH^{2n}$ and $\omega$ be a selfadjoint operator on $\cH$. Then
	\begin{align*}
	U H_\eta(\alpha,f,\omega) U^*=F_{\eta}(\alpha,f,\omega)\oplus F_{-\eta}(\alpha,f,\omega).
	\end{align*}
\end{prop}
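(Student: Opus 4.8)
The plan is to verify the unitarity of $U$ directly and then to check the intertwining relation on a core, namely on the spanning set $\widetilde{\cJ}(\cD)$ supplied by Proposition \ref{Lem:BasicPropertiesSBmodel}. First I would observe that $U$ acts on $\CC^2\otimes\cF_b(\cH)\approx\cF_b(\cH)^2$ by swapping the odd-degree Fock components of $\phi_1$ and $\phi_{-1}$ while leaving the even-degree components in place. Since this is nothing but a permutation of orthonormal coordinates (indexed by $j\in\{-1,1\}$ and parity of $k$), $U$ is clearly an isometry onto $\cF_b(\cH)^2$; and applying $U$ twice swaps the odd components back, so $U^2=1$, which together with $U=U^*$ being equivalent to $\langle U\phi,\psi\rangle=\langle\phi,U\psi\rangle$ (immediate from the coordinate description, as the swap is symmetric in $j$) gives that $U$ is selfadjoint and unitary with $U^*=U$.

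Next I would prove the intertwining relation $UH_\eta U^*=F_\eta\oplus F_{-\eta}$ by checking it term by term on the core $\widetilde{\cJ}(\cD)$. It suffices to treat the three types of summands in $H_\eta(\alpha,f,\omega)$ separately. For the field energy $1\otimes d\Gamma(\omega)$: since $d\Gamma(\omega)$ preserves particle number and hence preserves the parity grading, $U$ commutes with $1\otimes d\Gamma(\omega)$, and the latter is carried to $d\Gamma(\omega)\oplus d\Gamma(\omega)$. For the qubit energy $\eta\sigma_z\otimes 1$: on a vector $e_j\otimes\psi$ this acts by $\eta j$; after applying $U$, the component of degree $k$ lives in the $j$-slot if $k$ is even and in the $(-j)$-slot if $k$ is odd, so on $\widetilde\phi_j$ the operator $\eta\sigma_z\otimes 1$ becomes multiplication by $\eta$ on even degrees and $-\eta$ on odd degrees, which is exactly $\eta\Gamma(-1)$ acting on $\cF_b(\cH)$; thus $U(\eta\sigma_z\otimes 1)U^*=\eta\Gamma(-1)\oplus\cdots$ — and on the second slot the sign of $j$ is flipped, producing $-\eta\Gamma(-1)$, hence $\eta\Gamma(-1)\oplus(-\eta)\Gamma(-1)$. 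For the interaction term $\alpha_i(\sigma_x\otimes\varphi(f_i))^i=\alpha_i\,\sigma_x^i\otimes\varphi(f_i)^i$: the key point is that $\varphi(f_i)^i$ is a sum of monomials in creation/annihilation operators, each changing particle number by an amount of the same parity as $i$, so $\varphi(f_i)^i$ changes parity by $i\bmod 2$; meanwhile $\sigma_x$ swaps $e_1\leftrightarrow e_{-1}$. Tracking where a degree-$k$ component of $\varphi(f_i)^i\psi$ ends up after $U$, one finds that the combined effect of the qubit swap and the parity shift is precisely to apply $\varphi(f_i)^i$ within each fixed slot of $\cF_b(\cH)^2$; the bookkeeping is the same on both slots (the $\sigma_x$ swap erases the dependence on $j$), so $U(\alpha_i\sigma_x^i\otimes\varphi(f_i)^i)U^*=\alpha_i\varphi(f_i)^i\oplus\alpha_i\varphi(f_i)^i$. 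Summing the three contributions over $i$ yields $F_\eta(\alpha,f,\omega)\oplus F_{-\eta}(\alpha,f,\omega)$.

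The main obstacle is the careful parity bookkeeping in the interaction term, and making sure the identity holds as an operator identity and not merely formally: one must check that $U$ maps $\cD(H_\eta)$ onto $\cD(F_\eta)\oplus\cD(F_{-\eta})$ (which follows from the explicit domain descriptions in Proposition \ref{Lem:BasicPropertiesSBmodel}, since $U$ respects the parity grading and $d\Gamma(\omega)$ and each $\varphi(f_i)^i$ are parity-graded), and that the computation on the core $\widetilde{\cJ}(\cD)$ — where all operators are genuinely defined and everything is a finite sum — extends by closure. Because $UH_\eta U^*$ and $F_\eta\oplus F_{-\eta}$ are both closed (by Proposition \ref{Lem:BasicPropertiesSBmodel}) and agree on a common core, they coincide. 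I would single out at the start the elementary lemma that for $g\in\cH$, $a(g)$ and $a^\dagger(g)$ shift the parity grading by one and $d\Gamma(\omega)$ preserves it; everything else is then routine.
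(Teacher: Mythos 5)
Your proposal is correct and follows essentially the same route as the paper: unitarity is read off from the coordinate description (via $U^2=1$ and symmetry of the coordinate swap), and the intertwining is then verified term by term on a core of simple tensors. Your parity-grading language is a conceptual repackaging of the paper's explicit computation on vectors of the form $e_j\otimes g_1\otimes_s\cdots\otimes_s g_k$, and both proofs reduce the interaction term to the single identity $U(\sigma_x\otimes\varphi(v))U^*=1\otimes\varphi(v)$.
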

\noindent In the remaining part of this section we will suppress $\alpha$, $f$ and $\omega$ from the notation as they are fixed in the initial part of each Theorem. The first result we present an HVZ type theorem about the location of the essential spectrum.

\begin{thm}[HVZ]\label{HVZ}
Let $\alpha\in \RR^{2n}$, $\eta\in \RR$, $f\in \cH^{2n}$ and $\omega$ be a selfadjoint operator on $\cH$.  Assume $(\alpha,f,\omega)$ satisfies Hypothesis \ref{Hyp1}, \ref{Hyp3} and \ref{Hyp4}. Then the following holds
\begin{align*}
&\inf\{ \sigma_{\textup{ess}}(F_\eta)  \}\geq \min \{ \cE_{-\eta}+m_{\textup{ess}},\cE_{\eta}+m+m_{\textup{ess}}   \}\\
&\bigcup_{q=1}^\infty \{ \cE_{(-1)^q\eta}+\lambda_1+\cdots+\lambda_q\mid \lambda_i\in \sigma_{\textup{ess}}(\omega) \}    \subset \sigma_{\textup{ess}}(F_\eta)\\
&\inf(\sigma_{\textup{ess}}(H_\eta))=E_\eta+m_{\textup{ess}}\\
&\bigcup_{q=1}^\infty \{ E_\eta+\lambda_1+\cdots+\lambda_q\mid \lambda_i\in \sigma_{\textup{ess}}(\omega) \}    \subset \sigma_{\textup{ess}}(H_\eta).
\end{align*}
In particular, $H_\eta$ has a ground state of finite multiplicity if $m_{\textup{ess}}>0$. We also have:
\begin{enumerate}
	\item[\textup{(1)}] Assume $m=m_{\textup{ess}}$, $[m_{\textup{ess}},3m_{\textup{ess}}]\subset \sigma_{\textup{ess}}(\omega)$ and $m_{\textup{ess}}$ is not isolated in $\sigma_{\textup{ess}}(\omega)$. Then $\sigma_{\textup{ess}}(F_\eta)=[\cE_{-\eta}+m_{\textup{ess}},\infty)$.
	
	\item[\textup{(2)}] Assume $[m_{\textup{ess}},2m_{\textup{ess}}]\subset \sigma_{\textup{ess}}(\omega)$ and $m_{\textup{ess}}$ is not isolated in $\sigma_{\textup{ess}}(\omega)$. Then $\sigma_{\textup{ess}}(H_\eta)=[E_\eta+m_{\textup{ess}},\infty)$.
	
	\item[\textup{(3)}] Assume in addition that $(f,\omega)$ satisfies Hypothesis 2. Then $\cE_{-\lvert \eta\lvert}\leq \cE_{\lvert \eta\lvert}$ and $ \inf(\sigma_{\textup{ess}}(F_{\lvert \eta\lvert}))= \cE_{-\lvert \eta\lvert}+m_{\textup{ess}}$. Furthermore, $\cE_{-\lvert \eta\lvert}= \cE_{\lvert \eta\lvert}$ if and only if $\eta=0$ or $m=0$.
\end{enumerate}
\end{thm}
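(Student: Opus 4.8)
The first step is to push everything onto the two fibers. By Proposition~\ref{Thm:Spectral Theory of decomposition} the operator $H_\eta$ is unitarily equivalent to $F_\eta\oplus F_{-\eta}$, so $\sigma_{\textup{ess}}(H_\eta)=\sigma_{\textup{ess}}(F_\eta)\cup\sigma_{\textup{ess}}(F_{-\eta})$ and $E_\eta=\min\{\cE_\eta,\cE_{-\eta}\}$; the two displayed statements about $H_\eta$, together with the remark that $H_\eta$ has a ground state of finite multiplicity when $m_{\textup{ess}}>0$ (then $\inf\sigma_{\textup{ess}}(H_\eta)=E_\eta+m_{\textup{ess}}>E_\eta$, so $E_\eta$ is an isolated eigenvalue of finite multiplicity), are then immediate from the two fiber assertions: the lower bound on $\inf\sigma_{\textup{ess}}(F_\eta)$ (which already dominates $E_\eta+m_{\textup{ess}}$ after the union, since $m\ge0$) and the inclusion $\bigcup_{q\ge1}\{\cE_{(-1)^q\eta}+\lambda_1+\dots+\lambda_q\}\subset\sigma_{\textup{ess}}(F_\eta)$ for $\lambda_i\in\sigma_{\textup{ess}}(\omega)$.

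For the inclusion I would exhibit singular Weyl sequences directly. Fix $q$ and $\lambda_1,\dots,\lambda_q\in\sigma_{\textup{ess}}(\omega)$, choose a normalized sequence $\phi_k$ in the core of Proposition~\ref{Lem:BasicPropertiesSBmodel} with $\langle\phi_k,F_{(-1)^q\eta}\phi_k\rangle\to\cE_{(-1)^q\eta}$, and pick for each $i$ a weakly null Weyl sequence $g_i^{(\ell)}$ for $\omega$ at $\lambda_i$ with $g_i^{(\ell)}\in\cD(\omega^{-1/2})$; set $\psi_{k,\ell}=a^\dagger(g_1^{(\ell)})\cdots a^\dagger(g_q^{(\ell)})\phi_k$. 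Creating $q$ photons turns $\eta\Gamma(-1)$ into $(-1)^q\eta\Gamma(-1)$, shifts $d\Gamma(\omega)$ by $\lambda_1+\dots+\lambda_q$ up to errors governed by $\|(\omega-\lambda_i)g_i^{(\ell)}\|$, and the commutators of the $a^\dagger(g_i^{(\ell)})$ with the interaction monomials are, by the canonical commutation relations, lower order field monomials times the scalars $\langle f_j,g_i^{(\ell)}\rangle\to 0$; Lemma~\ref{Lem:FundamentalIneq}, together with the hypercontractive bounds granted by Hypothesis~\ref{Hyp4} when $n>2$, makes all of this quantitative. A diagonal choice $\ell=\ell(k)$ then produces a normalized, weakly null sequence on which $F_\eta-(\cE_{(-1)^q\eta}+\sum_i\lambda_i)\to 0$, and closedness of $\sigma_{\textup{ess}}$ finishes this part.

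The lower bound on $\inf\sigma_{\textup{ess}}(F_\eta)$ is the heart of the matter, and I expect it to be the main obstacle. I would run a partition of unity argument on Fock space, in the abstract form alluded to in the introduction and related to \cite{Jacob1}: pick one particle operators $j_0,j_1\ge0$ with $j_0^2+j_1^2=1$, $j_0\to1$ strongly and $j_1$ localizing onto the escaping part of $\cH$, and lift them to the Deift--Simon map $\check\Gamma(j_0,j_1)\colon\cF_b(\cH)\to\cF_b(\cH)\otimes\cF_b(\cH)$. Any Weyl sequence for $F_\eta$ is bounded in the energy form by semiboundedness and Lemma~\ref{Lem:FundamentalIneq}, so the commutator errors created when $F_\eta$ is transported through $\check\Gamma(j_0,j_1)$ vanish along it, and on the sector with $q$ photons in the escaping factor one is left with $F_{(-1)^q\eta}\otimes 1+1\otimes d\Gamma(\omega')$, $\omega'$ the restriction of $\omega$ to the escaping space, the sign $(-1)^q$ arising because $\Gamma(-1)$ factorizes over the tensor decomposition. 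Its infimum is $\cE_{(-1)^q\eta}$ plus the energy of $q$ escaping photons, at least $q\,m_{\textup{ess}}$ in the limit, while the $q=0$ sector contributes only discrete spectrum; minimizing over $q\ge1$ and using $m\le m_{\textup{ess}}$ yields $\min\{\cE_{-\eta}+m_{\textup{ess}},\cE_\eta+m+m_{\textup{ess}}\}$. Making the one particle localization and all error estimates work for an abstract $\omega$, and in particular in the massless case $m=0$ where $d\Gamma(\omega)$ no longer controls the number operator, is the delicate point.

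It remains to prove (1)--(3). For (1) and (2), combine the two fiber facts with interval arithmetic: if $[m_{\textup{ess}},3m_{\textup{ess}}]\subset\sigma_{\textup{ess}}(\omega)$ then the $q$-fold sumsets $[q\,m_{\textup{ess}},3q\,m_{\textup{ess}}]$ taken over odd $q$ already cover $[m_{\textup{ess}},\infty)$, while over all $q$ the sumsets $[q\,m_{\textup{ess}},2q\,m_{\textup{ess}}]$ do, which is why for $H_\eta$ the interval $[m_{\textup{ess}},2m_{\textup{ess}}]$ suffices; so the inclusion gives the full half line above the relevant threshold, and the reverse inclusion follows from the lower bound once $\cE_{-\eta}\le\cE_\eta+m$ is known. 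That inequality I would obtain by creating one near bottom photon: since $m=m_{\textup{ess}}$ lies in $\sigma_{\textup{ess}}(\omega)$ there is a weakly null Weyl sequence $g_k$ for $\omega$ at $m$, and for $\psi$ a near ground state of $F_\eta$ the vectors $a^\dagger(g_k)\psi$ have $F_{-\eta}$-energy tending to $\langle\psi,F_\eta\psi\rangle+m$. For (3), Hypothesis~\ref{Hyp2} provides a conjugation with respect to which the $f_i$ are real; in the associated $Q$-space representation $e^{-tF_{-|\eta|}}$ is positivity improving while $|e^{-tF_{|\eta|}}\psi|\le e^{-tF_{-|\eta|}}|\psi|$, because $e^{t|\eta|\Gamma(-1)}$ has a nonnegative kernel dominating that of $e^{-t|\eta|\Gamma(-1)}$; hence $\cE_{-|\eta|}\le\cE_{|\eta|}$, and then $\inf\sigma_{\textup{ess}}(F_{|\eta|})=\cE_{-|\eta|}+m_{\textup{ess}}$ follows from the general lower bound (the minimum being attained at the first term) together with the $q=1$ case of the inclusion. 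For the dichotomy: equality is trivial if $\eta=0$; if $m=0$, then $0\in\sigma_{\textup{ess}}(\omega)$ because $\omega$ is injective, and creating approximate zero energy photons also gives $\cE_{|\eta|}\le\cE_{-|\eta|}$, hence equality; conversely, if $\eta\ne0$ and $m>0$ both fibers have ground states, and equality would force --- through uniqueness of the positive ground state and the rigidity of the above domination --- a unitary phase $\theta$ with $\theta^{-1}F_{|\eta|}\theta=F_{-|\eta|}$, which evaluated on the vacuum makes $\theta\Omega$ a $(-|\eta|)$-eigenvector of $d\Gamma(\omega)+|\eta|\Gamma(-1)$, whence $\theta\Omega\in\ker d\Gamma(\omega)=\CC\Omega$ and $\Gamma(-1)\theta\Omega=-\theta\Omega$, a contradiction.
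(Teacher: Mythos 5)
Your overall architecture — reduce to $F_{\pm\eta}$ via Proposition \ref{Thm:Spectral Theory of decomposition}, prove the inclusion by exhibiting Weyl sequences, prove the lower bound by some form of geometric localization, then deduce (1)--(3) — matches the paper's. Two of the steps, however, diverge in substantive ways.

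\emph{The lower bound.} You correctly identify this as the heart of the matter, but the paper does not run a Deift--Simon partition of unity. It first observes that for $m=0$ there is nothing to prove: injectivity of $\omega$ gives $m_{\textup{ess}}=0$, so the claimed threshold $\min\{\cE_{-\eta},\cE_\eta\}$ already lies below all of $\sigma(F_\eta)$. (You flag the massless regime as the obstruction for your localization scheme, but for the theorem itself it is the trivial case.) For $m>0$ the paper discretizes: replace $\omega$ by a dyadic step function $\omega_k$, establish norm resolvent convergence $F_\eta(\alpha,f,\omega_k)\to F_\eta(\alpha,f,\omega)$ via Lemma \ref{Lem:ResolevntKonv}, and decompose $\cH=\cH_k\oplus\cH_k^\perp$ where $\cH_k$ is the (separable) space generated by the dyadic slices $1_{\{\omega\in(j2^{-k},(j+1)2^{-k}]\}}f_i$. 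By construction $\omega_k\mid_{\cH_k}$ has compact resolvent, so $F_{\pm\eta}(\alpha,f,\omega_k\mid_{\cH_k})$ has compact resolvent by Proposition \ref{Lem:BasicPropertiesSBmodel}, and the direct-sum decomposition of Lemma \ref{Lem:GeneralTransform} exhibits $h(F_{\eta,k})$, for $h$ supported below $\widetilde m_k$, as a finite direct sum of compact operators. This avoids all commutator and number-operator estimates that make $\check\Gamma(j_0,j_1)$ delicate for an abstract, possibly massless $\omega$. Your sketch is not obviously wrong, but it is a sketch of a harder route; the paper's is genuinely different and cleaner.

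\emph{Part (3), strictness.} Here there is a gap. You derive $\cE_{-|\eta|}\le\cE_{|\eta|}$ from the pointwise semigroup domination $|e^{-tF_{|\eta|}}\psi|\le e^{-tF_{-|\eta|}}|\psi|$ in $Q$-space, which is fine (and a legitimate alternative to the paper's route via Lemma \ref{Lem:UniquenessSB}). But for strictness when $\eta\ne0$ and $m>0$ you invoke a ``rigidity'' of that domination to produce a unimodular $\theta$ with $\theta^{-1}F_{|\eta|}\theta=F_{-|\eta|}$, and then ``evaluate on the vacuum.'' Neither step is justified: domination plus equality of bottoms only forces the ground state of $F_{|\eta|}$ to be a unimodular multiple of the positive ground state of $F_{-|\eta|}$; it does not conjugate the operators, and $\Omega$ is not a distinguished vector for $F_{\pm|\eta|}$, so the claimed evaluation has no content. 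The paper instead feeds the equality $\cE_{-|\eta|}=\cE_{|\eta|}$ into the already-established parts of the HVZ theorem: since $m>0$, both $\cE_{\pm|\eta|}$ lie strictly below $\inf\sigma_{\textup{ess}}(F_{\pm|\eta|})=\cE_{-|\eta|}+m_{\textup{ess}}$, hence both are eigenvalues, so $E_\eta$ has multiplicity $\ge2$ for $H_\eta\cong F_\eta\oplus F_{-\eta}$; this contradicts the non-degeneracy of the ground state of $H_\eta$ (Lemma \ref{Lem:UniquenessSB}, which is the actual Perron--Frobenius content and is proved independently in Section 8). You should route strictness through that non-degeneracy rather than a rigidity heuristic.

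A smaller remark: your Weyl sequences $a^\dagger(g_1^{(\ell)})\cdots a^\dagger(g_q^{(\ell)})\phi_k$ are workable but you need $g_i^{(\ell)}\in\cD(\omega)$ (not $\cD(\omega^{-1/2})$) for the $d\Gamma(\omega)$ commutator, and a diagonal choice to make the error terms, controlled only weakly, uniform. The paper avoids this entirely by choosing the test photons with supports disjoint from (an approximation of) the $f_i$, via the tensor decomposition in Lemma \ref{Lem:FundamentalTransform}, so all commutators vanish exactly.
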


\begin{figure}
	\includegraphics[width=\linewidth]{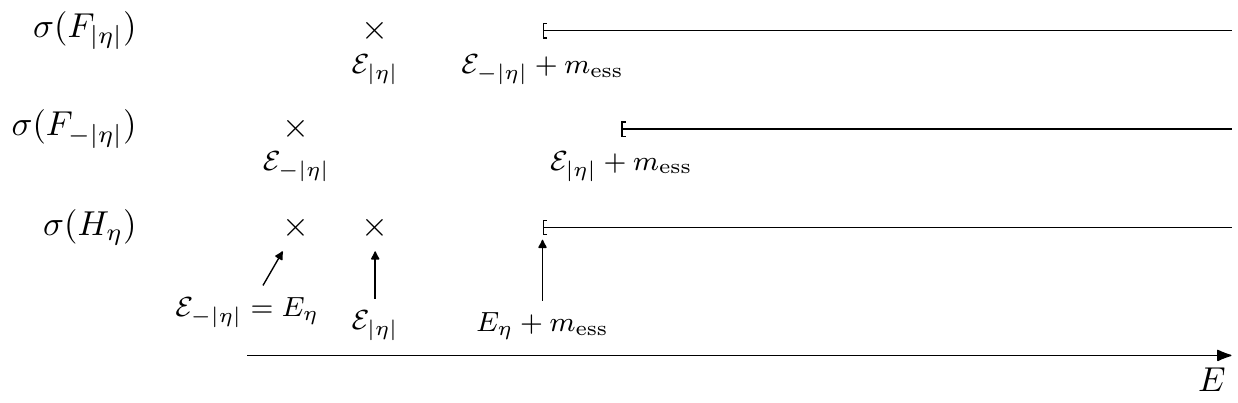}
	\caption{The picture established by Theorems \ref{HVZ} and \ref{unique} in the case $0<2\lvert \eta\lvert<m_{\textup{ess}}$, $m=m_{\textup{ess}}$ and $[m_{\textup{ess}},3m_{\textup{ess}}]\subset \sigma_{\textup{ess}}(\omega)$.}
\end{figure}

\noindent In the following result we single out which fiber operator is associated with the ground state and which fiber operator is associated with exited states.
\begin{thm}\label{unique}
Let $\alpha\in \RR^{2n}$, $\eta\in \RR$, $f\in \cH^{2n}$ and $\omega$ be a selfadjoint operator on $\cH$.  Assume $(\alpha,f,\omega)$ satisfies Hypothesis \ref{Hyp1}, \ref{Hyp2} and \ref{Hyp3}. Let $U$ be the map from Proposition \ref{Thm:Spectral Theory of decomposition}. Then the following holds:
\begin{enumerate}
		\item[\textup{(1)}] If $\eta\neq 0$ and $E_\eta$ is an eigenvalue of $H_\eta$ then $E_\eta$ is non degenerate. If $\psi$ is a ground state of $H_\eta$ then $U\psi=e_{-\textup{sign}(\eta)}\otimes \phi$ where $\phi$ is an eigenvector of $F_{-\lvert \eta\lvert}$ corresponding to the energy $E_\eta$.
		
		\item[\textup{(2)}] If $\cE_{-\lvert \eta\lvert}$ is an eigenvalue of $F_{-\lvert \eta\lvert}$ then $\cE_{-\lvert \eta\lvert}$ is non degenerate. In particular, if $E_0$ is an eigenvalue of $H_0$ then $E_0$ will have multiplicity two. Furthermore, if $\psi$ is a ground state of $H_0$ then $U\psi=e_{1}\otimes \phi_1+e_{-1}\otimes \phi_{-1}$ where $\phi_{i}$ is either 0 or an eigenvector of $F_{0}$ corresponding to the energy $E_0=\cE_0$.		
	
		\item[\textup{(3)}] Assume in addition that Hypothesis \ref{Hyp4} is satisfied. Then  $\cE_{-\lvert \eta\lvert}=E_\eta$ and $H_\eta$ has a ground state if and only if $F_{-\lvert \eta\lvert}$ has a ground state. Furthermore, if $m=0$ and $\eta\neq0$ then $F_{\lvert \eta\lvert}$ has no ground state.
		
		\item[\textup{(4)}] Assume in addition that Hypothesis \ref{Hyp4} is satisfied and $m,\eta \neq 0$. Then $H_\eta$ will have an exited state in $(E_\eta,E_\eta+m_{\textup{ess}}]$ if $F_{\lvert \eta\lvert}$ has a ground state. Furthermore, $F_{\lvert \eta\lvert}$ has a ground state if $2\lvert \eta\lvert <m_{\textup{ess}}$. 
\end{enumerate}
\end{thm}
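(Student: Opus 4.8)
The plan rests on one structural object beyond Proposition~\ref{Thm:Spectral Theory of decomposition}: the spin-parity operator $P:=\sigma_z\otimes\Gamma(-1)$. A direct Pauli-matrix computation together with $\Gamma(-1)\varphi(f)\Gamma(-1)=-\varphi(f)$ gives $P^2=1$ and $[P,H_\eta]=0$, and inspecting the definition of $U$ one checks that $U$ maps the $+1$-eigenspace of $P$ onto the first summand $e_1\otimes\cF_b(\cH)$ and its $-1$-eigenspace onto $e_{-1}\otimes\cF_b(\cH)$; since $UH_\eta U^*=F_\eta\oplus F_{-\eta}$ with $F_\eta$ on the first summand, the restriction of $H_\eta$ to the two spin-parity sectors is unitarily equivalent to $F_\eta$, resp.\ $F_{-\eta}$. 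I will also use the intertwining $(\sigma_x\otimes 1)H_\eta(\sigma_x\otimes 1)=H_{-\eta}$ (which conjugates $P$ to $-P$) to reduce to $\eta\le 0$, the identity $E_\eta=\min\{\cE_\eta,\cE_{-\eta}\}$, and the diamagnetic comparison $\cE_{-|\eta|}\le\cE_{|\eta|}$: realising $\cF_b(\cH)$ in its Schrödinger ($Q$-space) picture — in which $\varphi(f_i)$ is multiplication by a real variable (Hypothesis~\ref{Hyp2}), $\Gamma(-1)$ is the reflection $q\mapsto-q$, and $e^{-td\Gamma(\omega)}$ is positivity preserving — a Trotter product gives $|e^{-tF_{|\eta|}}\psi|\le e^{-tF_{-|\eta|}}|\psi|$ pointwise, hence $\|e^{-tF_{|\eta|}}\|\le\|e^{-tF_{-|\eta|}}\|$, $\cE_{-|\eta|}\le\cE_{|\eta|}$, and $E_\eta=\cE_{-|\eta|}$.

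For part~(1), write $H_\eta$ in the $\sigma_x$-eigenbasis of $\CC^2$ as the $2\times2$ block operator with diagonal blocks $d\Gamma(\omega)+V$ and $d\Gamma(\omega)+V'$ and off-diagonal blocks $\eta$, where $V=\sum_i\alpha_i\varphi(f_i)^i$ is real in $Q$-space and $V'=\Gamma(-1)V\Gamma(-1)$. For $\eta<0$ the off-diagonal part generates a semigroup with nonnegative kernel, so a Trotter product combined with the positivity improving property of $e^{-t(d\Gamma(\omega)+V)}$ shows $e^{-tH_\eta}$ is positivity improving for the cone of pairs of a.e.\ nonnegative $Q$-space functions (for $n>2$ one needs hypercontractivity of $e^{-td\Gamma(\omega)}$, i.e.\ Hypothesis~\ref{Hyp3}, to absorb the a priori unbounded $V$; for $n\le2$ Lemma~\ref{Lem:FundamentalIneq} and Proposition~\ref{Lem:BasicPropertiesSBmodel} suffice). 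By Perron--Frobenius, if $E_\eta$ is an eigenvalue it is simple with eigenvector $\psi=(\psi_+,\psi_-)$, $\psi_\pm>0$ a.e.; simplicity forces $P\psi=\pm\psi$, and in these coordinates $\langle\psi,P\psi\rangle=2\int\psi_+(q)\psi_-(-q)\,dq>0$, so $P\psi=\psi$ and $U\psi=e_1\otimes\phi$ with $0\ne\phi\in\ker(F_\eta-E_\eta)=\ker(F_{-|\eta|}-E_\eta)$. For $\eta>0$ conjugation by $\sigma_x\otimes1$ turns the cone into $\{\psi_+\ge0,\,\psi_-\le0\}$, flips the sign, and puts $\psi$ in the $P=-1$ sector; this is the claim since $e_{-\textup{sign}(\eta)}$ is $e_1$ for $\eta<0$ and $e_{-1}$ for $\eta>0$.

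For part~(2) with $\eta\ne0$: if $\cE_{-|\eta|}$ is an eigenvalue of $F_{-|\eta|}$ then, as $\cE_{-|\eta|}=E_\eta$, it is an eigenvalue of $H_\eta$, which by~(1) is simple; transporting along $U$ and using that the $F_{-|\eta|}$-fiber contributes at least one dimension gives $\dim\ker(F_{-|\eta|}-\cE_{-|\eta|})=1$. For $\eta=0$ there is no $\Gamma(-1)$ term, so $e^{-tF_0}$ is itself positivity improving and $\cE_0$ is simple whenever an eigenvalue; since $H_0$ is unitarily $F_0\oplus F_0$, a ground state of $H_0$ makes $\cE_0=E_0$ an eigenvalue of $F_0$ of multiplicity one, whence $E_0$ has multiplicity two, and any ground state $\psi$ of $H_0$ has $U\psi=e_1\otimes\phi_1+e_{-1}\otimes\phi_{-1}$ with each $\phi_j$ in the one-dimensional $\ker(F_0-\cE_0)$. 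For parts~(3)--(4) all of Hypotheses~\ref{Hyp1}--\ref{Hyp4} hold, so Theorem~\ref{HVZ} applies in full, giving $\cE_{-|\eta|}=E_\eta$ and $\inf\sigma_{\textup{ess}}(F_{|\eta|})=\cE_{-|\eta|}+m_{\textup{ess}}$. In~(3) the implication ``$F_{-|\eta|}$ has a ground state $\Rightarrow H_\eta$ does'' is immediate from $U$ ($\eta=0$ being the trivial $F_0\oplus F_0$); conversely a ground state of $H_\eta$ gives a nonzero kernel of $F_\eta$ or $F_{-\eta}$ at $E_\eta=\cE_{-|\eta|}$, and this cannot be the $F_{|\eta|}$-fiber alone, since that would force $\cE_{|\eta|}=\cE_{-|\eta|}$, hence (Theorem~\ref{HVZ}(3)) $\eta=0$ (absurd, both fibers coincide) or $m=0$, $\eta\ne0$, which is excluded by the last clause of~(3): a ground state of $F_{|\eta|}$ would, when $m=0$, produce an eigenvector of $H_\eta$ at $E_\eta$ in the $P=\textup{sign}(\eta)$ sector, contradicting the placement of the unique ground state in the $P=-\textup{sign}(\eta)$ sector from~(1). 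For part~(4), with $m,\eta\ne0$ Theorem~\ref{HVZ}(3) gives $\cE_{-|\eta|}<\cE_{|\eta|}$; a ground state of $F_{|\eta|}$ then yields an eigenvalue of $H_\eta$ at $\cE_{|\eta|}$ with $E_\eta<\cE_{|\eta|}\le\inf\sigma_{\textup{ess}}(F_{|\eta|})=E_\eta+m_{\textup{ess}}$, an excited state in the asserted interval; and since $F_{|\eta|}=F_{-|\eta|}+2|\eta|\Gamma(-1)\le F_{-|\eta|}+2|\eta|$, min--max gives $\cE_{|\eta|}\le\cE_{-|\eta|}+2|\eta|<\cE_{-|\eta|}+m_{\textup{ess}}=\inf\sigma_{\textup{ess}}(F_{|\eta|})$ when $2|\eta|<m_{\textup{ess}}$, so $\cE_{|\eta|}$ is then an isolated eigenvalue of $F_{|\eta|}$.

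\textbf{Main obstacle.} The substantive step is the positivity/ergodicity input for part~(1): that $e^{-tH_\eta}$ ($\eta<0$) is positivity improving on the two-component $Q$-space. For $n\le2$ this rests only on the self-adjointness and semiboundedness already available (Proposition~\ref{Lem:BasicPropertiesSBmodel}) and the bounds of Lemma~\ref{Lem:FundamentalIneq}, but for $n>2$ the coupling $V=\sum_i\alpha_i\varphi(f_i)^i$ is not bounded below a priori, and one must invoke hypercontractivity of $e^{-td\Gamma(\omega)}$ — the sole reason Hypothesis~\ref{Hyp3} imposes $m(\omega)>0$ there — to control $e^{-tV}$ and the Trotter product. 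Once the $Q$-space picture and the identification of the spin-parity sectors with the $U$-summands are in place, the sign computation and the remaining steps are routine.
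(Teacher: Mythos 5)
Your argument is substantively correct and follows the same overall plan as the paper: pass to a $Q$-space representation where the interaction is multiplication, reduce to $\eta\le 0$ by conjugating with $\sigma_x\otimes 1$, establish that the ground state is simple and strictly positive via a Perron--Frobenius argument, locate it in the correct spin-parity sector, and then derive the spectral statements from this and the HVZ theorem. Your identification $UPU^*=\sigma_z\otimes 1$ and the sign computation $\langle\psi,P\psi\rangle=2\int\psi_+(q)\psi_-(-q)\,dq>0$ correctly reproduce the paper's mechanism, which is phrased there instead as $\langle\psi,e_{-\textup{sign}(\eta)}\otimes\Omega\rangle\neq 0$.

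There are two places where you take a genuinely different route from the paper. First, you obtain $\cE_{-|\eta|}\le\cE_{|\eta|}$ directly via a Kato--Trotter domination $|e^{-tF_{|\eta|}}\psi|\le e^{-tF_{-|\eta|}}|\psi|$ (valid since $\Gamma(-1)$ is a bounded perturbation, so Trotter applies unconditionally), which gives $E_\eta=\cE_{-|\eta|}$ already under Hypotheses 1--3. The paper instead establishes this inequality only in Lemma \ref{Fiberenergy}, deriving it from the HVZ lower bound in the $m=0$ case and from ground-state existence plus Lemma \ref{Lem:UniquenessSB} when $m>0$. Your route is cleaner in this respect. Second, for part (2) with $\eta\neq 0$ you deduce non-degeneracy from part (1) and the identity $\cE_{-|\eta|}=E_\eta$; the paper instead invokes Faris's Theorem 2 (together with Theorem \ref{Thm:PropertiesOfQspace}(1)) to run the Perron--Frobenius argument intrinsically on $F_{-|\eta|}$. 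Both work, but note that your deduction depends on the diamagnetic inequality, so it should be stated explicitly that you have established $E_\eta=\cE_{-|\eta|}$ before using it there. Your min--max bound $\cE_{|\eta|}\le\cE_{-|\eta|}+2|\eta|$ is the same as the paper's variational estimate in Lemma \ref{Negative fiber has groundstates for positive mass}.

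One genuine issue: in parts (3) and (4) you cite Theorem \ref{HVZ}(3) for the strict inequality $\cE_{-|\eta|}<\cE_{|\eta|}$ when $m,\eta\neq 0$. But the paper explicitly defers the proof of HVZ(3) to Lemma \ref{Fiberenergy}, i.e.\ to the proof of Theorem \ref{unique}(3) itself, so this citation is circular. You already have the ingredients to close the gap: from the diamagnetic inequality and the first two parts of HVZ you get $\inf\sigma_{\textup{ess}}(F_{|\eta|})=\cE_{-|\eta|}+m_{\textup{ess}}$; if $\cE_{|\eta|}=\cE_{-|\eta|}$ and $m>0$ then $\cE_{|\eta|}<\inf\sigma_{\textup{ess}}(F_{|\eta|})$ would make $\cE_{|\eta|}=E_\eta$ an isolated eigenvalue of $F_{|\eta|}$, but part (1) (via the sector placement $U\psi=e_{-\textup{sign}(\eta)}\otimes\phi$) shows $E_\eta$ is not an eigenvalue of $F_{|\eta|}$, a contradiction. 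This is precisely the paper's Lemma \ref{Fiberenergy} argument; you should spell it out rather than invoking the result you are proving. A smaller remark: the step ``a Trotter product \ldots shows $e^{-tH_\eta}$ is positivity improving'' is doing a lot of work; positivity \emph{preserving} follows from Trotter, but improving requires the ergodicity argument (truncation of $V$, Faris's Theorem 3, then \cite[Theorem XIII.44]{RS4}) carried out in the paper's Lemma \ref{ffs}. You flag this as the main obstacle, which is accurate, but the claim as written is slightly stronger than what Trotter alone delivers.
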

\noindent Assuming weak infrared regularity one can prove the following theorem. Note that the assumptions imposed on $\omega$ are much weaker than in \cite{Gerard}.

\begin{thm}\label{Thm:Numberstructure_Massless}
Let $\alpha\in \RR^{2n}$, $\eta\in \RR$, $f\in \cH^{2n}$ and $\omega$ be a selfadjoint operator on $\cH$.  Assume $(\alpha,f,\omega)$ satisfies Hypothesis \ref{Hyp1}, \ref{Hyp2}, \ref{Hyp3}, \ref{Hyp4} and \ref{Hyp5}. Then the following holds:
\begin{enumerate}
	\item [\textup{(1)}] If $F_{-\lvert \eta\lvert}$ has a ground state $\psi$ and $H_\eta$ has a ground state $\phi$ then $\psi\in  \cD(N^{a})$ and $\phi \in \cD(1\otimes N^{a})$ for any $a>0$.
	
	\item [\textup{(2)}] Assume in addition that $\cH=L^2(\RR^\nu,\cB(\RR^\nu),\lambda^{\otimes \nu})$ and $\omega$ is a multiplication operator on $\cH$. Then $E_\eta$ is an eigenvalue for $F_{-\lvert \eta\lvert}$ and $H_\eta$.
\end{enumerate}
\end{thm}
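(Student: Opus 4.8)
The two statements are proved by rather different means: part~(1) is a higher-order pull-through argument, carried out rigorously in the pointwise-annihilation calculus of Appendix~D, while part~(2) is an infrared approximation argument in the spirit of \cite{Gerard}.

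\emph{Part (1).} Let $\psi$ be a ground state of $F_{-|\eta|}$; by Theorem~\ref{unique}~(3), available under Hypotheses~\ref{Hyp1}--\ref{Hyp4}, it satisfies $F_{-|\eta|}\psi=E_\eta\psi$ with $E_\eta=\cE_{-|\eta|}$. Applying a pointwise annihilation operator $a_k$ and using that $a_k$ anticommutes with $\Gamma(-1)$, that $a_k\,d\Gamma(\omega)=(d\Gamma(\omega)+\omega(k))a_k$, and that $a_k\varphi(f_i)^i=\varphi(f_i)^i a_k+i f_i(k)\varphi(f_i)^{i-1}$, one obtains for a.e.\ $k$ the first-order pull-through identity
\begin{equation*}
\bigl(F_{|\eta|}+\omega(k)-E_\eta\bigr)\,a_k\psi=-\sum_{i=1}^{2n}i\,\alpha_i\,f_i(k)\,\varphi(f_i)^{i-1}\psi .
\end{equation*}
Since $F_{|\eta|}\geq\cE_{|\eta|}\geq\cE_{-|\eta|}=E_\eta$ by Theorem~\ref{HVZ}~(3), the operator on the left is $\geq\omega(k)>0$ and hence boundedly invertible, so $a_k\psi=-(F_{|\eta|}+\omega(k)-E_\eta)^{-1}\sum_i i\alpha_i f_i(k)\varphi(f_i)^{i-1}\psi$. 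Iterating this $m$ times --- commuting each fresh $a_{k_j}$ through the resolvents and the field polynomials, which only generates further bounded resolvents, scalar factors $f_i(k_j)$ and field polynomials in the $\varphi(f_i)$ of degree at most $2n-1$ --- gives a closed expression for $a_{k_1}\cdots a_{k_m}\psi$. Bounding each resolvent by $\omega(k_j)^{-1}$, matching each integration variable to one resolvent so that the integrand factorises into the $L^2$ functions $f_i/\omega$ of the separate variables (Hypothesis~\ref{Hyp5}), and controlling the field polynomials applied to $\psi$ via Lemma~\ref{Lem:FundamentalIneq}, the hypercontractive bounds underlying Hypotheses~\ref{Hyp3}--\ref{Hyp4} (needed when $n>2$, where $m(\omega)>0$ makes $d\Gamma(\omega)$ dominate $N$) and the inclusion of $\cD(F_{-|\eta|})$ in the domain of each such polynomial (interpolating the identities of Proposition~\ref{Lem:BasicPropertiesSBmodel}), one obtains
\begin{equation*}
\langle\psi,N(N-1)\cdots(N-m+1)\psi\rangle=\int\lVert a_{k_1}\cdots a_{k_m}\psi\rVert^2\,dk_1\cdots dk_m<\infty
\end{equation*}
for every $m\in\NN$; hence $\psi\in\cD(N^{m/2})$ for all $m$, i.e.\ $\psi\in\cD(N^a)$ for all $a>0$. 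For $\phi$: the unitary $U$ of Proposition~\ref{Thm:Spectral Theory of decomposition} preserves the particle-number grading, so $U(1\otimes N^a)U^*=N^a\oplus N^a$, and by Theorem~\ref{unique} the nonzero components of $U\phi$ are ground states of $F_{-|\eta|}$, whence $\phi\in\cD(1\otimes N^a)$.

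\emph{Part (2).} Here $\cH=L^2(\RR^\nu)$ and $\omega$ is a multiplication operator; the interesting case is $m_{\textup{ess}}(\omega)=0$, where the HVZ theorem only gives $\inf\sigma_{\textup{ess}}(F_{-|\eta|})\geq\cE_{-|\eta|}$, so the sought eigenvalue lies at the bottom of the essential spectrum. Following \cite{Gerard}, I would put $\omega_\sigma:=\omega+\sigma$ for $\sigma>0$, so that $F_{-|\eta|}(\alpha,f,\omega_\sigma)=F_{-|\eta|}+\sigma N$; Hypotheses~\ref{Hyp1}--\ref{Hyp5} persist for $(\alpha,f,\omega_\sigma)$ and $m_{\textup{ess}}(\omega_\sigma)\geq\sigma>0$, so Theorems~\ref{HVZ} and~\ref{unique}~(3) give a normalised ground state $\psi_\sigma$ of $F^\sigma_{-|\eta|}:=F_{-|\eta|}(\alpha,f,\omega_\sigma)$ with energy $\cE^\sigma_{-|\eta|}$, and a finite-particle trial vector yields $\cE^\sigma_{-|\eta|}\downarrow\cE_{-|\eta|}=E_\eta$ as $\sigma\downarrow0$. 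Passing to a subsequence with $\psi_\sigma\rightharpoonup\psi$ weakly in $\cF_b(\cH)$, a standard weak-lower-semicontinuity argument for the quadratic forms shows that any nonzero weak limit $\psi$ satisfies $F_{-|\eta|}\psi=E_\eta\psi$; then, by Proposition~\ref{Thm:Spectral Theory of decomposition} and Theorem~\ref{unique}~(3), $E_\eta$ is an eigenvalue of $F_{-|\eta|}$ and of $H_\eta$, as claimed.

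The entire difficulty is thus to show $\psi\neq0$, and this is where the argument departs from \cite{Gerard}. I would combine: (i) the uniform number bounds $\sup_\sigma\langle\psi_\sigma,N^k\psi_\sigma\rangle<\infty$, coming from the pull-through formulae of part~(1), whose right-hand sides involve only $f_i/\omega_\sigma$, dominated uniformly in $\sigma$ by $f_i/\omega\in L^2$; (ii) a uniform photon-localisation bound, again extracted from the pull-through, showing that $\langle\psi_\sigma,d\Gamma(\mathbbm{1}_{\{|k|\leq\delta\}\cup\{|k|\geq R\}})\psi_\sigma\rangle$ is small uniformly in $\sigma$ once $\delta$ is small and $R$ large; and (iii) a compactness statement --- the step replacing the last argument of \cite{Gerard} --- to the effect that vectors with uniformly bounded particle number whose bosons are localised in a fixed frequency shell form, up to arbitrarily small error, a precompact set, so that weak convergence to $0$ contradicts $\lVert\psi_\sigma\rVert=1$. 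Step~(iii) is the main obstacle; the rest reduces to bookkeeping with Lemma~\ref{Lem:FundamentalIneq}, Theorems~\ref{HVZ} and~\ref{unique}, and the calculus of Appendix~D.
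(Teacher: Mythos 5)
Your part~(1) is, in substance, the paper's argument: a higher-order pull-through in the pointwise-annihilation calculus of Appendix~D (Theorem~\ref{Thm:ThepulllThrough} together with Lemma~\ref{Lem:NumberEstimates}), Hypothesis~\ref{Hyp5} supplying the $f_j/\omega\in\cH$ bound at every iterate, and the observation that the unitary $U$ of Proposition~\ref{Thm:Spectral Theory of decomposition} is number-preserving to transfer the statement from the fiber to $H_\eta$.

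For part~(2) you regularise by shifting $\omega\mapsto\omega_\sigma=\omega+\sigma$, whereas the paper cuts off the form factor, $f^\ell=1_{\{\omega\ge\ell^{-1}\}}f$, and keeps $\omega$ fixed so that Lemma~\ref{Lem:SimpleConvApplication} yields norm resolvent convergence and uniform lower bounds in one stroke. Your route is workable but not a drop-in replacement: when $m(\omega)=0$ one has $\omega_\sigma/\omega\notin L^\infty$, so Lemma~\ref{Lem:ResolevntKonv} does not apply and you would have to run the $\sigma\downarrow0$ limit through strong resolvent convergence plus monotone decrease of $\cE^\sigma$, identifying the eigenvector equation for a weak limit by pairing against finite-particle test vectors. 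That is only a change of bookkeeping.

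The genuine gap is in your step~(iii). You state the compactness input as: vectors with uniformly bounded particle number whose bosons are localised in a fixed frequency shell form, up to small error, a precompact set. This is false. Already in the one-particle sector, the unit sphere of $L^2$-functions supported in $\{\delta\le|k|\le R\}$ is infinite-dimensional and contains weakly null sequences (oscillatory phases, escape in the conjugate variable), so frequency-shell support plus a particle-number cutoff does not imply precompactness. What the paper does instead --- and this is precisely the advertised improvement over the final step of \cite{Gerard} --- is to localise in \emph{both} $k$ and the conjugate variable: the operator $\Gamma(G_R(k))\,\Gamma(G_R(-i\nabla_k))\,1_{[0,p]}(N)$ is compact, since on each $j$-particle sector with $j\le p$ it is the $j$-fold tensor power of the compact operator $G_R(k)G_R(-i\nabla_k)$ on $\cH$, and the direct sum is finite. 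Accordingly, the estimate you extract from the pull-through must control $\langle\psi_\ell,d\Gamma(1-G_R(A))\psi_\ell\rangle$ uniformly in $\ell$ for \emph{both} $A=k$ and $A=-i\nabla_k$; the $A=-i\nabla_k$ case is the nontrivial one, and the paper handles it by approximating the operator-valued kernel $k\mapsto -f_j(k)(F_{1}-\cE+\omega(k))^{-1}$ in $L^2(\RR^\nu,B(\cF_b(\cH)))$ by finite sums $g(k)C$ so that the strong operator convergence $1-G_R(-i\nabla_k)\to0$ can be applied one rank at a time. Without this conjugate-variable localisation, $\psi_\sigma\rightharpoonup0$ is not excluded and the argument does not close; you flagged this as the main obstacle, and indeed it is the step your sketch does not carry out.
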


\section{Important estimates}
In this section we prove series of estimates which will become useful later. We start with the following lemma
\begin{lem}\label{Lem:FundamentalLowerbound}
Let $\alpha\in \RR^{2n}$ and define 
\begin{equation*}
\cK(\alpha)=\{ f\in \cH^{2n}\mid  \text{$(\alpha,f)$ satisfies part (\ref{Hyp1:1}) of Hypothesis \ref{Hyp1}  }  \}.
\end{equation*}
There is a constant $C:=C(\alpha)$, such that for any collection $\{ A(v) \}_{v\in \cH}$ of selfadjoint operators and $f\in \cK(\alpha)$ we have
\begin{equation}
\sum_{j=2}^{2n} \alpha_jA(f_j)^j\geq C.
\end{equation}
\end{lem}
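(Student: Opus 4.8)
The plan is to reduce the whole sum to a bound involving only a single "leading" term plus perturbations that are controlled by Young's inequality. First I would observe that the relevant structure is entirely captured by the set $\cL(f)$: for an index $j \in \cL(f)^c$ with $j \ge 2$, there is by definition some $i > j$ with $f_i = f_j$, so $A(f_j)^j$ and $A(f_i)^i$ are powers of the \emph{same} selfadjoint operator $A(f_j)$. Since $\cL(f)$ consists only of even numbers (part (1) of Hypothesis \ref{Hyp1}), the largest index $2n$ always lies in $\cL(f)$, and $\alpha_{2n} > 0$. So the idea is to group the $2n-1$ terms of the sum by "the value of $f_j$": partition $\{2,\dots,2n\}$ into classes on which $f$ is constant, and within each class the top index is the unique element of that class lying in $\cL(f)$.

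Next I would prove the key single-variable inequality: for a selfadjoint operator $B$, a top exponent $m \in \cL(f)$ (hence even, with coefficient $\alpha_m > 0$, or $\alpha_m \ge 0$ when $m=2$), and lower exponents $j < m$ with real coefficients $\alpha_j$, one has
\begin{equation*}
\sum_{j} \alpha_j B^j \ge -C'(\alpha)
\end{equation*}
as a quadratic-form / functional-calculus inequality. This is just the elementary real-variable fact that the polynomial $p(t) = \sum_j \alpha_j t^j$ with positive leading coefficient and even leading degree is bounded below on $\RR$ by a constant depending only on the coefficients; applying the spectral theorem to $B$ gives the operator inequality with the same constant. The borderline case $m = 2$ with $\alpha_2 = 0$ forces the class to contribute only via still-lower (odd, exponent $1$) terms — but exponent $1$ never appears in a class of size $\ge 2$ in a way that matters, since $1 \notin \{2,\dots,2n\}$; I should check that a constant-$f$ class whose top index is $2$ consists of $\{2\}$ alone or has genuine lower even/odd indices, and in all cases $p(t)=\alpha_2 t^2 + (\text{lower})$ with $\alpha_2 \ge 0$ is still bounded below (if $\alpha_2 = 0$ and there is a linear term this fails — so I must argue that whenever $2 \in \cL(f)^c$ the absorbing happens at a strictly higher even index, and whenever $2 \in \cL(f)$ there is no linear term in its class). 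Summing the per-class bounds over the finitely many classes yields the claim with $C(\alpha) = -\sum_{\text{classes}} C'(\alpha)$, a constant depending only on $\alpha$ (not on $f$, since the partition structure and the coefficients involved range over a fixed finite set of possibilities).

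The main obstacle I anticipate is purely bookkeeping: making the partition-into-constancy-classes argument uniform in $f \in \cK(\alpha)$, i.e. checking that the finitely many polynomials $p(t)$ that can arise (indexed by which subsets of $\{2,\dots,2n\}$ can be constancy classes, a finite combinatorial datum) all have positive even leading coefficient, so that the infimum of each over $\RR$ is finite, and then taking $C(\alpha)$ to be the sum of these finitely many infima. The only subtlety is the $\alpha_2 = 0$ edge case, which must be handled by the observation above that a linear term can never coexist with a vanishing quadratic leading term inside a single class. Everything else is the spectral theorem applied term-by-term.
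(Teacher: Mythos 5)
Your proposal is correct and essentially reproduces the paper's proof: both arguments group the $2n-1$ summands by the constancy classes of $f$ (equivalently, by leading index in $\cL(f)$), reduce to a uniform lower bound for the finitely many real polynomials with positive even leading coefficient that can arise, and transfer this to operators via the spectral theorem, with $C(\alpha)$ bounded by the number of classes times the worst-case infimum. The $\alpha_2=0$ edge case resolves exactly as you anticipate: a class with leading index $2$ is necessarily the singleton $\{2\}$ since indices start at $2$, so its polynomial is $\alpha_2 t^2\geq 0$, and a linear term can never appear.
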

\begin{proof}
Let $K=\{ i\in \{2,4,\dots,2n\}\mid  \alpha_i>0 \}=\{ i_1,\dots,i_k \}$. For each $b\leq k$ we consider polynomials of the form
\begin{equation*}
\alpha_{i_b}X^{i_b}+\sum_{j=2}^{i_b-1}\widetilde{\alpha}_jX^j,  
\end{equation*}
where $\widetilde{\alpha}_j$ is either $0$ or $\alpha_j$. Since there are only finitely many choices of $b$ and $\widetilde{\alpha}_j$ we find a uniform lower bound $C_0<0$ of all these polynomials. Using the spectral theorem we find
\begin{equation}\label{eq:SimplePolyOperator}
\alpha_{i_b}A^{i_b}+\sum_{j=2}^{i_b-1}\widetilde{\alpha}_jA^j\geq C_0,
\end{equation}
for all selfadjoint operators $A$, $b\in \{ 1,\dots,k \}$ and choices of $\widetilde{\alpha}_j$ as either $0$ or $\alpha_j$. Let $f\in \cK(\alpha)$ and $\{ A(v) \}_{v\in \cH}$ be a collection of selfadjoint operators. Define
\begin{align*}
B:=\sum_{j=2}^{2n} \alpha_jA(f_j)^j=\sum_{j\in \cL(f)} \sum_{i=2}^{j} \widetilde{\alpha}_{i,j} A(f_j)^i
\end{align*}
where $\widetilde{\alpha}_{i,j}=\alpha_i$ if $f_i=f_j$ and $0$ otherwise. Note that either $\cL(f)\subset K$ or $\alpha_2=0$ and $\cL(f)\backslash \{2\}\subset K$. In any case we see $B$ is the sum of at most $k\leq n$ operators of the form given in equation (\ref{eq:SimplePolyOperator}). Hence $B\geq nC_0$ finishing the proof.
\end{proof}

\begin{lem}\label{Lem:Ulighednmindre2}
Let $\varepsilon>0$, $ r>0$ and $\omega$ be a selfadjoint, nonnegative and injective operator on $\cH$. There is $C:=C(r,\varepsilon)$ such that for all $v_1,v_2\in \cD(\omega^{-\frac{1}{2}})$ and $a,b\geq 0$ with $\max\{ \lVert (1+\omega^{-\frac{1}{2}}) v_1 \lVert,\lVert (1+\omega^{-\frac{1}{2}}) v_2 \lVert,a,b\}\leq r$ we have 
\begin{equation*}
2\textup{Re}(\langle a\varphi(v_1)^4  \psi, b\varphi(v_2)^2 \psi \rangle  )\geq -\varepsilon \lVert d\Gamma(\omega) \psi \lVert^2- C\lVert \psi \lVert^2
\end{equation*}
for all $\psi\in \cN\cap \cD(d\Gamma(\omega))$. 
\end{lem}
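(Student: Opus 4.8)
The plan is to exploit that, by \eqref{eq:Commutation-Rel-phiphi}, $\varphi(v_1)$ and $\varphi(v_2)$ commute up to a scalar; this lets one normal‑reorder the mixed quartic--quadratic expression into a manifestly nonnegative piece plus a correction that is only \emph{quadratic} in the field operators. Concretely, first rewrite the left‑hand side, using that $\varphi(v_1)$ and $\varphi(v_2)$ are symmetric and that every product of field operators maps $\cN$ into itself, as
\begin{equation*}
2\textup{Re}\big(\langle a\varphi(v_1)^4\psi, b\varphi(v_2)^2\psi\rangle\big) = ab\,\big\langle\psi,\big(\varphi(v_1)^4\varphi(v_2)^2 + \varphi(v_2)^2\varphi(v_1)^4\big)\psi\big\rangle .
\end{equation*}
Setting $\lambda := 2i\,\textup{Im}(\langle v_1,v_2\rangle)$, so that $[\varphi(v_1),\varphi(v_2)]=\lambda$ and hence $[\varphi(v_1)^k,\varphi(v_2)]=k\lambda\varphi(v_1)^{k-1}$ on $\cN$, one moves the two factors $\varphi(v_2)$ inward to obtain the operator identity on $\cN$
\begin{equation*}
\varphi(v_1)^4\varphi(v_2)^2 + \varphi(v_2)^2\varphi(v_1)^4 = 2\,\varphi(v_2)\varphi(v_1)^4\varphi(v_2) + 12\lambda^2\varphi(v_1)^2 = 2\,\varphi(v_2)\varphi(v_1)^4\varphi(v_2) - 48\,\textup{Im}(\langle v_1,v_2\rangle)^2\,\varphi(v_1)^2 .
\end{equation*}
Thus the left‑hand side equals $2ab\,\lVert\varphi(v_1)^2\varphi(v_2)\psi\rVert^2 - 48\,ab\,\textup{Im}(\langle v_1,v_2\rangle)^2\,\lVert\varphi(v_1)\psi\rVert^2$, and since $ab\ge 0$ the first term can simply be dropped.

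It then remains to control the single negative term. By Lemma \ref{Lem:FundamentalIneq}, $\lVert\varphi(v_1)\psi\rVert^2 \le 4\lVert(1+\omega^{-1/2})v_1\rVert^2\,\langle(d\Gamma(\omega)+1)\psi,\psi\rangle \le 4r^2\big(\langle d\Gamma(\omega)\psi,\psi\rangle+\lVert\psi\rVert^2\big)$, while $ab\le r^2$ and $\textup{Im}(\langle v_1,v_2\rangle)^2\le\lVert v_1\rVert^2\lVert v_2\rVert^2\le r^4$ by the hypotheses on $a,b,v_1,v_2$. Combining these bounds and then using $\langle d\Gamma(\omega)\psi,\psi\rangle\le\delta\lVert d\Gamma(\omega)\psi\rVert^2+(4\delta)^{-1}\lVert\psi\rVert^2$ for a small $\delta=\delta(r,\varepsilon)>0$ produces the claimed inequality with $C=C(r,\varepsilon)$.

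The main --- and essentially the only --- obstacle is that a brute‑force estimate does not close: applying Cauchy--Schwarz directly to $\langle\varphi(v_1)^4\psi,\varphi(v_2)^2\psi\rangle$ only yields a bound by (a $d\Gamma(\omega)^2$‑bounded quantity) times (a $d\Gamma(\omega)$‑bounded quantity), i.e.\ something of order $d\Gamma(\omega)^3$, which cannot be absorbed into $\varepsilon\lVert d\Gamma(\omega)\psi\rVert^2$. The commutator rearrangement above is exactly what strips two powers of the field operator off the dangerous term, leaving an expression of order $\varphi(v_1)^2$ (hence merely $d\Gamma(\omega)$‑bounded), after which one application of Young's inequality finishes the argument. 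The only point requiring care is that the reordering be performed as a genuine operator identity on the finite‑particle space $\cN$, where the canonical commutation relations hold without closure issues and each $\varphi(v_1)^j\varphi(v_2)^k$ preserves $\cN$; this is automatic since $\psi\in\cN\cap\cD(d\Gamma(\omega))$.
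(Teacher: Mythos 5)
Your proof is correct and follows essentially the same route as the paper's: both arguments use the canonical commutation relation to shift a factor of $\varphi(v_2)$ through $\varphi(v_1)^4$, extracting the nonnegative term $2ab\lVert\varphi(v_1)^2\varphi(v_2)\psi\rVert^2$ plus the remainder $-48\,ab\,\textup{Im}(\langle v_1,v_2\rangle)^2\lVert\varphi(v_1)\psi\rVert^2$, which is then absorbed via Lemma \ref{Lem:FundamentalIneq} and Young's inequality. The only cosmetic difference is that you symmetrize and state the commutator rearrangement as a single operator identity on $\cN$, while the paper performs the same bookkeeping stepwise through inner products.
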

\begin{proof}
Let $\psi\in \cN\cap \cD(d\Gamma(\omega))$.  Using equation (\ref{eq:Commutation-Rel-phiphi}) we find
\begin{equation*}
\varphi(v_2)\varphi(v_1)^4\psi=\varphi(v_1)^4\varphi(v_2)\psi+4(2i\text{Im}(\langle v_2,v_1 \rangle))\varphi(v_1)^3\psi.
\end{equation*}
This implies
\begin{align*}
2\text{Re}(\langle a\varphi(v_1)^4  \psi, b\varphi(v_2)^2 \psi \rangle) &=2ab\lVert \varphi(v_1)^2\varphi(v_2) \psi \lVert^2\\&\quad +16ab\text{Im}(\langle v_2,v_1 \rangle)\text{Im}(\langle \varphi(v_1)^3  \psi, \varphi(v_2) \psi \rangle ).
\end{align*}
Now
\begin{equation*}
\text{Im}(\langle \varphi(v_1)^3  \psi, \varphi(v_2) \psi \rangle )=\frac{1}{2i}\langle [\varphi(v_2),\varphi(v_1)^3]  \psi,  \psi \rangle=-3\text{Im}(\langle v_2,v_1 \rangle)  \lVert \varphi(v_1)\psi \lVert^2.
\end{equation*}
Hence we find 
\begin{equation*}
2\text{Re}(a\langle \varphi(v_1)^4  \psi, b\varphi(v_2)^2 \psi \rangle)\geq -48r^6\lVert \varphi(v_1)\psi \lVert^2.
\end{equation*}
Using Cauchy-Schwarz inequality and Lemma \ref{Lem:FundamentalIneq} we find
\begin{align*}
\lVert \varphi(v_1)\psi \lVert^2\leq 4\lVert (\omega^{-\frac{1}{2}}+1)v_1 \lVert^2 (\langle \psi, d\Gamma(\omega) \psi \rangle+\lVert \psi\lVert^2)\leq 4r^2(\langle \psi, d\Gamma(\omega) \psi \rangle+\lVert \psi\lVert^2),
\end{align*}
so
\begin{align*}
2\text{Re}(\langle a\varphi(v_1)^4  \psi, b\varphi(v_2)^2 \psi \rangle)&\geq -196r^8\lVert  \psi\lVert \lVert  d\Gamma(\omega) \psi \lVert-196r^8\lVert \psi\lVert^2\\&\geq -\varepsilon\lVert d\Gamma(\omega)\psi \lVert^2-196r^8\lVert \psi\lVert^2-\frac{(196r^8)^2}{4\varepsilon}\lVert \psi\lVert^2,
\end{align*}
which finishes the proof.
\end{proof}

\begin{lem}\label{Lem:CommutatorDGamma}
Let $\varepsilon>0$, $r>0$, $ n\in \mathbb{N} $ and $\omega$ be a selfadjoint, nonnegative and injective operator on $\cH$. There is $C:=C(r,\varepsilon,n)$ such that for all $v \in  \cD(\omega^{\frac{1}{2}})$ and $a\geq 0$ with $\max\{\lVert \omega^{\frac{1}{2}} v \lVert,a\}\leq r$ we have  
\begin{equation*}
2\textup{Re}(\langle a\varphi(v)^{2n}  \psi, d\Gamma(\omega) \psi \rangle  )\geq - \varepsilon \lVert  a\varphi(v)^{2n} \psi \lVert^2- C\lVert \psi \lVert^2
\end{equation*}
for all $\psi \in \cN\cap \cD(d\Gamma(\omega))$.
\end{lem}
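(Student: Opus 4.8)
The plan is to split the left hand side into a manifestly non-negative part plus an explicitly computable remainder depending on $v$ only through $\|\omega^{1/2}v\|$, and then to absorb that remainder. First I would assume $v\in\cD(\omega)$, so that (\ref{eq:CommutatorPhi2ndQuantised}) applies and every vector $\varphi(v)^k\psi$ again lies in $\cN\cap\cD(d\Gamma(\omega))$. Writing $D:=[d\Gamma(\omega),\varphi(v)]=-i\varphi(i\omega v)$, the relation (\ref{eq:Commutation-Rel-phiphi}) gives the \emph{scalar} $[D,\varphi(v)]=-2\|\omega^{1/2}v\|^2$, hence $[D,\varphi(v)^n]=-2n\|\omega^{1/2}v\|^2\varphi(v)^{n-1}$, and therefore, on $\cN\cap\cD(d\Gamma(\omega))$,
\begin{align*}
\varphi(v)^{2n}d\Gamma(\omega)+d\Gamma(\omega)\varphi(v)^{2n}
&=2\varphi(v)^n d\Gamma(\omega)\varphi(v)^n+\big[[d\Gamma(\omega),\varphi(v)^n],\varphi(v)^n\big]\\
&=2\varphi(v)^n d\Gamma(\omega)\varphi(v)^n-2n^2\|\omega^{1/2}v\|^2\varphi(v)^{2n-2}.
\end{align*}
Taking the inner product with $\psi$, using $2\textup{Re}\langle a\varphi(v)^{2n}\psi,d\Gamma(\omega)\psi\rangle=a\langle\psi,(\varphi(v)^{2n}d\Gamma(\omega)+d\Gamma(\omega)\varphi(v)^{2n})\psi\rangle$ and the identities $\langle\psi,\varphi(v)^nd\Gamma(\omega)\varphi(v)^n\psi\rangle=\|d\Gamma(\omega)^{1/2}\varphi(v)^n\psi\|^2\ge0$ and $\langle\psi,\varphi(v)^{2n-2}\psi\rangle=\|\varphi(v)^{n-1}\psi\|^2$, I get the key estimate
\begin{equation*}
2\textup{Re}\langle a\varphi(v)^{2n}\psi,d\Gamma(\omega)\psi\rangle\ \ge\ -2an^2\|\omega^{1/2}v\|^2\,\|\varphi(v)^{n-1}\psi\|^2. \tag{$\star$}
\end{equation*}

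To remove the temporary hypothesis $v\in\cD(\omega)$, I would approximate: for $v\in\cD(\omega^{1/2})$ set $v_k=(1+k^{-1}\omega)^{-1}v\in\cD(\omega)$, so that $v_k\to v$ and $\omega^{1/2}v_k\to\omega^{1/2}v$ in $\cH$ while $\|\omega^{1/2}v_k\|\le\|\omega^{1/2}v\|$. By Lemma \ref{Lem:FundamentalIneq} (applied with the single-boson operator equal to the identity on $\cH$), $\varphi(w_k)\chi\to\varphi(w)\chi$ whenever $w_k\to w$ in $\cH$ and $\chi\in\cN$; telescoping gives $\varphi(v_k)^m\psi\to\varphi(v)^m\psi$ in $\cF_b(\cH)$ for every $m$, and since $d\Gamma(\omega)\psi$ is a fixed vector, both sides of $(\star)$ written for $v_k$ converge to those for $v$. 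Hence $(\star)$ holds for all admissible $v$.

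Finally I would absorb the right hand side of $(\star)$. The case $a=0$ is trivial, and for $n=1$ the right hand side equals $-2a\|\omega^{1/2}v\|^2\|\psi\|^2\ge-2r^3\|\psi\|^2$, which already has the required form. For $n\ge2$ and $a>0$, the operator $\overline{\varphi(v)}$ is selfadjoint and $\cN$ lies in the domain of all its powers, so H\"older's inequality applied to its spectral measure at $\psi$ gives $\|\varphi(v)^{n-1}\psi\|^2=\langle\psi,\varphi(v)^{2n-2}\psi\rangle\le\|\psi\|^{2/n}\|\varphi(v)^{2n}\psi\|^{2(n-1)/n}$. Plugging this into $(\star)$, using $\|\omega^{1/2}v\|\le r$, $a\le r$ and $\|\varphi(v)^{2n}\psi\|=a^{-1}\|a\varphi(v)^{2n}\psi\|$, the right hand side of $(\star)$ is bounded below by $-2n^2 r^{2+1/n}\|\psi\|^{(n+1)/n}\|a\varphi(v)^{2n}\psi\|^{(n-1)/n}$, and Young's inequality with conjugate exponents $\tfrac{2n}{n-1}$ and $\tfrac{2n}{n+1}$ turns this into $-\varepsilon\|a\varphi(v)^{2n}\psi\|^2-C\|\psi\|^2$ with $C=C(r,\varepsilon,n)$, as desired. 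I expect this last absorption to be the main obstacle: the remainder in $(\star)$ carries only one power of $a$ while the admissible term $\varepsilon\|a\varphi(v)^{2n}\psi\|^2$ carries $a^2$, so a naive Cauchy--Schwarz bound would leave $C$ dependent on $a$; it is the H\"older interpolation that saves the day, since it produces $\|a\varphi(v)^{2n}\psi\|^{(n-1)/n}$ together with a harmless leftover $a^{1/n}\le r^{1/n}$.
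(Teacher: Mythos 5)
Your proof is correct and follows essentially the same line as the paper's: both reduce to the key estimate
$2\textup{Re}\langle a\varphi(v)^{2n}\psi, d\Gamma(\omega)\psi\rangle \geq -2an^2\lVert\omega^{1/2}v\rVert^2\lVert\varphi(v)^{n-1}\psi\rVert^2$
(the paper handles the domain issue by truncating $\omega$ to a bounded $\omega_k$ and letting $k\to\infty$, you by regularizing $v$ with $(1+k^{-1}\omega)^{-1}$; these are equivalent devices) and then absorb the lone power of $a$ by the interpolation $x^{2(n-1)}\lesssim\varepsilon' x^{4n}+A$, which the paper applies as a scalar functional-calculus inequality to $a^{1/(2n)}\varphi(v)$ and you phrase as H\"older followed by Young — the same computation in different clothing. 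One slip worth flagging: the displayed H\"older bound
$\lVert\varphi(v)^{n-1}\psi\rVert^2\leq\lVert\psi\rVert^{2/n}\lVert\varphi(v)^{2n}\psi\rVert^{2(n-1)/n}$
is false as stated (under $\varphi(v)\mapsto t\varphi(v)$ its right side scales like $t^{4(n-1)}$ while the left scales like $t^{2(n-1)}$, so it fails for small $t$); H\"older with exponents $p=\tfrac{2n}{n-1}$, $q=\tfrac{2n}{n+1}$ in fact gives
$\lVert\varphi(v)^{n-1}\psi\rVert^2\leq\lVert\psi\rVert^{(n+1)/n}\lVert\varphi(v)^{2n}\psi\rVert^{(n-1)/n}$,
which is exactly what your subsequent line uses, so the conclusion is unaffected.
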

\begin{proof}
Let $\psi \in \cN\cap \cD(d\Gamma(\omega))$ and $P_\omega$ be the spectral measure of $\omega$. Define the bounded operator $\omega_k=\int_{\RR}\max\{ k,\lambda \}dP_\omega(\lambda)$. Using equation (\ref{eq:CommutatorPhi2ndQuantised}) we find \begin{align*}
\varphi(v)^n d\Gamma(\omega_k)\psi=d\Gamma(\omega_k)\varphi(v)^{n}\psi+i\sum_{j=0}^{n-1}\varphi(v)^{n-j-1}\varphi(i\omega_kv)\varphi(v)^{j}\psi.
\end{align*}
This yields
\begin{align*}
2\text{Re}(\langle a\varphi(v)^{2n}  \psi, d\Gamma(\omega_k) \psi \rangle  )&=2a\lVert d\Gamma(\omega_k)^{\frac{1}{2}}\varphi(v)^{n} \psi \lVert^2\\&\quad-2a\sum_{j=0}^{n-1}\text{Im}(\langle\varphi(v)^n \psi, \varphi(v)^{n-j-1}\varphi(i\omega_kv)\varphi(v)^{j}  \psi \rangle ).
\end{align*}
For each $j\in \{0, \dots ,n-1 \}$ we have
\begin{equation*}
\text{Im}(\langle  \varphi(v)^n \psi,\varphi(v)^{n-j-1}\varphi(i\omega_kv)\varphi(v)^{j}  \psi \rangle )=\frac{1}{2i}\langle [\varphi(v)^{n-j-1}\varphi(i\omega_kv)\varphi(v)^{j},\varphi(v)^n ]  \psi,  \psi \rangle.
\end{equation*}
Using equation (\ref{eq:Commutation-Rel-phiphi}) we may calculate
\begin{align*}
[\varphi(v)^{n-j-1}\varphi(i\omega_kv)\varphi(v)^{j},\varphi(v)^n ]\psi &=\varphi(v)^{n-j-1} [\varphi(i\omega_kv),\varphi(v)^n] \varphi(v)^{j}\psi \\&=n2i\text{Im}(-i\langle \omega_kv,v \rangle)\varphi(v)^{2(n-1)}\psi.
\end{align*}
Using the above equalities we find
\begin{align*}
2\text{Re}(\langle a\varphi(v)^{2n}  \psi, d\Gamma(\omega) \psi \rangle)&\geq -2an^2\lVert \omega^{\frac{1}{2}}_kv\lVert^2\lVert \varphi(v)^{n-1}\psi \lVert ^2\\&= -2a^{1/n}n^2\lVert  \omega^{\frac{1}{2}}_kv\lVert^2\lVert (a^{\frac{1}{2n}}\varphi(v))^{n-1}\psi \lVert ^2.
\end{align*}
 For any $\varepsilon'>0$ there is a constant $A$ depending only on $\varepsilon'$ and $n$ such that $x^{2(n-1)}\leq \varepsilon' x^{4n}+A$ for all $x\in \RR$. Pick $A$ corresponding to $\varepsilon'=2^{-1}n^{-2}r^{-2-1/n}\varepsilon$ and note that
\begin{equation*}
2\text{Re}(\langle a\varphi(v)^{2n}  \psi, d\Gamma(\omega_k) \psi \rangle)\geq -\varepsilon \lVert a\varphi(v)^{2n}\psi \lVert^2 - 2n^2Ar^{2+1/n}\lVert \psi\lVert^2 
\end{equation*}
since $\lVert \omega_k^{\frac{1}{2}}v\lVert^2\leq \lVert \omega^{\frac{1}{2}}v\lVert^2\leq r^2$ for all $k\in \NN$. Taking $k$ to $\infty$ finishes the proof.
\end{proof}
\begin{lem}\label{Lem:DominatingEstimate}
Let $r>0$, $\varepsilon\in (0,1)$, $n\in \mathbb{N}$ and $\omega$ be a selfadjoint, nonnegative and injective operator on $\cH$. Define
\begin{align*}
\cK&=\{ (\alpha,v)\in [0,\infty)\times (\cD(\omega^{1/2})\cap \cD(\omega^{1/2})  )\mid \max\{\alpha,\lVert (1+\omega^{-\frac{1}{2}} + \omega^{\frac{1}{2}}) v\lVert\}<r \}\\
\cA&=\begin{cases}
\cH^n  &  n\leq 2.\\
\{ v\in \cH^n \mid  \langle v_i,v_j \rangle\in \RR \,\,\, \forall \, i,j\in \{1,...,n\} \} & n>2.
\end{cases}
\end{align*}
There is a constant $C:=C(\varepsilon,r,n)$ such that for all $(\alpha_1,v_1),\dots,(\alpha_n,v_n)\in \cK$ with $v=(v_1,\dots,v_n)\in \cA$ we have 
\begin{equation*}
\lVert d\Gamma(\omega)\psi  \lVert^2+\sum_{j=1}^n \lVert  \alpha_j \varphi(v_j)^{2j}\psi \lVert^2 \leq  \frac{1}{1-\varepsilon}\biggl \lVert d\Gamma(\omega)\psi+\sum_{j=1}^n \alpha_j \varphi(v_j)^{2j}\psi \biggl \lVert^2+C \lVert \psi\lVert^2
\end{equation*}
for all $\psi \in \cN\cap \cD(d\Gamma(\omega))$.
\end{lem}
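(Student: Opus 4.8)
The plan is to expand the square on the right-hand side and absorb all cross terms into the left-hand side using the estimates proved in Lemmas \ref{Lem:Ulighednmindre2} and \ref{Lem:CommutatorDGamma}, together with an elementary polynomial estimate for cross terms between distinct field powers $\varphi(v_i)^{2i}$ and $\varphi(v_j)^{2j}$. Writing $B = d\Gamma(\omega)\psi + \sum_{j=1}^n \alpha_j\varphi(v_j)^{2j}\psi$, we have
\begin{align*}
\lVert B \lVert^2 = \lVert d\Gamma(\omega)\psi\lVert^2 + \sum_{j=1}^n \lVert \alpha_j\varphi(v_j)^{2j}\psi\lVert^2 + 2\sum_{j=1}^n \textup{Re}\langle \alpha_j\varphi(v_j)^{2j}\psi, d\Gamma(\omega)\psi\rangle + 2\sum_{1\leq i<j\leq n}\textup{Re}\langle \alpha_i\varphi(v_i)^{2i}\psi,\alpha_j\varphi(v_j)^{2j}\psi\rangle,
\end{align*}
and the goal is to show the sum of the last two groups of cross terms is bounded below by $-\varepsilon\lVert d\Gamma(\omega)\psi\lVert^2 - \varepsilon\sum_j\lVert\alpha_j\varphi(v_j)^{2j}\psi\lVert^2 - C\lVert\psi\lVert^2$; rearranging then gives the claim after relabelling $\varepsilon$.

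First I would dispatch the $d\Gamma(\omega)$-versus-field cross terms: each $2\textup{Re}\langle \alpha_j\varphi(v_j)^{2j}\psi, d\Gamma(\omega)\psi\rangle$ is handled directly by Lemma \ref{Lem:CommutatorDGamma} (with $v=v_j$, $a=\alpha_j$, and $\varepsilon$ replaced by $\varepsilon/n$), giving a lower bound $-(\varepsilon/n)\lVert\alpha_j\varphi(v_j)^{2j}\psi\lVert^2 - C\lVert\psi\lVert^2$; summing over $j$ costs only a factor $n$ in the constant. Next, for the field-field cross terms when $n\le 2$ there is only the single pair $(i,j)=(1,2)$, which is exactly the content of Lemma \ref{Lem:Ulighednmindre2} (note $\alpha_1\varphi(v_1)^2$, $\alpha_2\varphi(v_2)^4$ — matching after identifying the roles), yielding $-\varepsilon\lVert d\Gamma(\omega)\psi\lVert^2 - C\lVert\psi\lVert^2$. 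The hypotheses $\lVert(1+\omega^{-1/2}+\omega^{1/2})v_j\lVert < r$ and $\alpha_j<r$ encoded in $\cK$ are precisely what is needed to apply both lemmas with a uniform constant.

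The remaining, and main, obstacle is the field-field cross terms for $n>2$, i.e. $2\textup{Re}\langle\alpha_i\varphi(v_i)^{2i}\psi,\alpha_j\varphi(v_j)^{2j}\psi\rangle$ with $2\le i<j\le n$. Here the assumption $v\in\cA$ — that $\langle v_i,v_j\rangle\in\RR$ for all $i,j$ — is crucial: it forces $\textup{Im}\langle v_i,v_j\rangle = 0$, so by \eqref{eq:Commutation-Rel-phiphi} the field operators $\varphi(v_i)$ and $\varphi(v_j)$ commute (on $\cN$). Consequently $\varphi(v_i)^{2i}\varphi(v_j)^{2j}$ is, on $\cN$, a symmetric (indeed nonnegative if we think of it as $(\varphi(v_i)^i\varphi(v_j)^j)^*(\varphi(v_i)^i\varphi(v_j)^j)$ — but the relevant point is reality) operator, and one computes $\langle\varphi(v_i)^{2i}\psi,\varphi(v_j)^{2j}\psi\rangle = \langle\psi,\varphi(v_i)^{2i}\varphi(v_j)^{2j}\psi\rangle = \lVert\varphi(v_i)^i\varphi(v_j)^j\psi\lVert^2 \geq 0$. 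Thus for $n>2$ the problematic cross terms are in fact \emph{nonnegative} and can simply be dropped from the lower bound. So the strategy splits cleanly: all the analytic work is already in the two preceding lemmas, and the role of Hypothesis-type assumption $v\in\cA$ is to make the dangerous cross terms have a favourable sign. I would then collect constants, noting that all bounds are uniform over $\cK$ and $\cA$ by construction, and conclude. One technical point to verify is that all the manipulations (commuting field operators past each other and past $d\Gamma(\omega_k)$, taking adjoints) are legitimate on $\cN\cap\cD(d\Gamma(\omega))$, which follows from Lemma \ref{Lem:FundamentalIneq} exactly as in the proofs of the preceding lemmas.
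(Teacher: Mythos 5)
Your proof is correct and follows exactly the same route as the paper: expand the square, dispatch the $d\Gamma(\omega)$--field cross terms via Lemma \ref{Lem:CommutatorDGamma}, treat the single field--field cross term for $n\le 2$ with Lemma \ref{Lem:Ulighednmindre2}, and for $n>2$ use the reality condition $v\in\cA$ to commute the fields and drop the (nonnegative) cross terms. The final bookkeeping of constants and the relabelling of $\varepsilon$ matches the paper's computation giving $C=\frac{C_1+C_2}{1-\varepsilon}$.
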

\begin{proof}
Let $\psi \in \cN\cap \cD(d\Gamma(\omega))$. First we note that	
\begin{align*}
\lVert d\Gamma(\omega)\psi  \lVert^2+\sum_{j=1}^n \lVert  \alpha_j \varphi(v_j)^{2j}\psi \lVert^2 &=  \biggl \lVert d\Gamma(\omega)\psi+\sum_{j=1}^n \alpha_j \varphi(v_j)^{2j}\psi \biggl \lVert^2\\&-\sum_{j=1}^{n}2\text{Re}(\langle \alpha_j\varphi(v_j)^{2j}  \psi, d\Gamma(\omega) \psi \rangle  )\\&-\sum_{j_1=1}^n\sum_{j_2=j_1+1}^n2\alpha_{j_2}\alpha_{j_1}\text{Re}(\langle \varphi(v_{j_1})^{2j_1}  \psi, \varphi(v_{j_2})^{2j_2} \psi \rangle.
\end{align*}
For $q\in \NN$ we let $\widetilde{C}(r,\varepsilon,q)$ be the constant from Lemma \ref{Lem:CommutatorDGamma}. Define $C_1=\widetilde{C}(r,\varepsilon,1)+\cdots+\widetilde{C}(r,\varepsilon,n)$ which depends only on $n,r$ and $\varepsilon$. Then we find
\begin{equation*}
-\sum_{j=1}^{n}2\text{Re}(\langle \alpha_j\varphi(v_j)^{2j}  \psi, d\Gamma(\omega) \psi \rangle  )\leq \sum_{j=1}^n \varepsilon\lVert  \alpha_j \varphi(v_j)^{2j}\psi \lVert^2+C_1\lVert \psi \lVert^2.
\end{equation*}
We now estimate the double sum. If $n\leq 2$ we have at most one term which can be estimated using Lemma \ref{Lem:Ulighednmindre2}. Therefore we find a constant $C_2>0$ such that
\begin{equation*}
-\sum_{j_1=1}^n\sum_{j_2=j_1+1}^n2\alpha_{j_2}\alpha_{j_1}\text{Re}(\langle \varphi(v_{j_1})^{2j_1}  \psi, \varphi(v_{j_2})^{2j_2} \psi \rangle)\leq  \varepsilon \lVert d\Gamma(\omega)\psi\lVert^2+ C_2\lVert \psi \lVert^2.
\end{equation*}
If $n>2$ then $\varphi(v_j)$ and $\varphi(v_i)$ commute on $\cN$ for all $i,j\in \{1, \dots, n \}$ so 
\begin{align*}
&-\sum_{j_1=1}^n\sum_{j_2=j_1+1}^n2\alpha_{j_2}\alpha_{j_1}\text{Re}(\langle \varphi(v_{j_1})^{2j_1}  \psi, \varphi(v_{j_2})^{2j_2} \psi \rangle)\\&=-\sum_{j_1=1}^n\sum_{j_2=j_1+1}^n2\alpha_{j_2}\alpha_{j_1}\lVert  \varphi(v_{j_1})^{j_1}  \varphi(v_{j_2})^{j_2} \psi \lVert^2\leq 0 \leq \varepsilon \lVert d\Gamma(\omega)\psi\lVert^2+ C_2\lVert \psi \lVert^2.
\end{align*}	
Combining the above inequalities we find the desired result with $C=\frac{C_1+C_2}{1-\varepsilon}$.
\end{proof}

\begin{lem}\label{Lem:TheRelativeBounds}
Let $r>0$, $\varepsilon\in (0,1)$, $n\in \mathbb{N}$ and $\omega$ be a selfadjoint operator on $\cH$. There is a constant $C:=C(r,\varepsilon,n)>0$ such that for all $f\in \cH^{2n}$, $\alpha\in \CC^{2n}$ and $\eta \in \CC$ such that $(\alpha,f,\omega)$ satisfies Hypothesis \ref{Hyp1}, Hypothesis \ref{Hyp3} and
\begin{equation*}
\max_{j\in \cL(f)\backslash \{2\}  }\{ \alpha_j^{-1}   \}+\lvert \eta \lvert + \lvert \alpha \lvert+\lVert (\omega^{-\frac{1}{2}}+1)f_1 \lVert+ \sum_{j=2}^{2n}\lVert (\omega^{-\frac{1}{2}}+1+\omega^{\frac{1}{2}})f_j \lVert<r
\end{equation*}
we have
\begin{align*}
\lVert \eta \Gamma(-1)\psi \lVert +\sum_{j\in \cL(f)^c} \lVert  \alpha_j \varphi(f_j)^{j}\psi \lVert&\leq \varepsilon \biggl \lVert  d\Gamma(\omega)\psi+\sum_{j\in \cL(f)} \alpha_j \varphi(f_j)^j\psi  \biggl \lVert+ C\lVert \psi \lVert \\
\max\{\lVert  \alpha_j \varphi(f_j)^{j}\psi \lVert,\lVert  d\Gamma(\omega)\psi \lVert \}&\leq (1-\varepsilon)^{-2} \left \lVert  F_\eta(\alpha,f,\omega) \psi \right \lVert+ C\lVert \psi \lVert. 
	\end{align*}
 for all $\psi\in \cN\cap \cD(d\Gamma(\omega))$ and $j\in \cL(f)$.
\end{lem}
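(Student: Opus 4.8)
Write $F:=F_\eta(\alpha,f,\omega)$, and decompose it along the leading terms: put
\[ D\psi:=d\Gamma(\omega)\psi+\sum_{j\in\cL(f)}\alpha_j\varphi(f_j)^j\psi,\qquad R\psi:=\eta\Gamma(-1)\psi+\sum_{j\in\cL(f)^c}\alpha_j\varphi(f_j)^j\psi, \]
so that $F\psi=D\psi+R\psi$ for every $\psi\in\cN\cap\cD(d\Gamma(\omega))$ (each vector is well defined by Lemma \ref{Lem:FundamentalIneq} and Hypothesis \ref{Hyp1}(3), and $\Gamma(-1)$ is unitary). The core of the argument is the estimate
\[ \lVert\eta\Gamma(-1)\psi\lVert+\sum_{j\in\cL(f)^c}\lVert\alpha_j\varphi(f_j)^j\psi\lVert\le\varepsilon\lVert D\psi\lVert+C\lVert\psi\lVert , \]
which is exactly the first asserted inequality and dominates $\lVert R\psi\lVert$. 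Granting it, $\lVert D\psi\lVert\le\lVert F\psi\lVert+\lVert R\psi\lVert\le\lVert F\psi\lVert+\varepsilon\lVert D\psi\lVert+C\lVert\psi\lVert$ rearranges (legitimately, since $\lVert D\psi\lVert<\infty$) to $\lVert D\psi\lVert\le(1-\varepsilon)^{-1}(\lVert F\psi\lVert+C\lVert\psi\lVert)$, and feeding this into the bounds on $\lVert d\Gamma(\omega)\psi\lVert$ and $\alpha_j\lVert\varphi(f_j)^j\psi\lVert$ obtained below gives the second inequality with constant $(1-\varepsilon)^{-3/2}\le(1-\varepsilon)^{-2}$.

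For the bounds on $d\Gamma(\omega)\psi$ and the leading terms I would invoke Lemma \ref{Lem:DominatingEstimate}. By Hypothesis \ref{Hyp1}(1) the set $\cL(f)$ consists of even numbers in $\{2,\dots,2n\}$; for $j\in\{1,\dots,n\}$ set $(\beta_j,v_j):=(\alpha_{2j},f_{2j})$ if $2j\in\cL(f)$ and $(\beta_j,v_j):=(0,0)$ otherwise, so that $\sum_{j=1}^n\beta_j\varphi(v_j)^{2j}=\sum_{j\in\cL(f)}\alpha_j\varphi(f_j)^j$. Hypothesis \ref{Hyp1}(2)--(3) and the norm bound of the statement give $(\beta_j,v_j)\in\cK$, and for $n>2$ Hypothesis \ref{Hyp3} furnishes Hypothesis \ref{Hyp2}, which applied with $g\equiv 1$ yields $\langle v_i,v_j\rangle\in\RR$ and hence $(v_1,\dots,v_n)\in\cA$. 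Lemma \ref{Lem:DominatingEstimate}, with the same $\varepsilon$, then gives (after taking square roots and using $\sqrt{a+b}\le\sqrt a+\sqrt b$)
\[ \lVert d\Gamma(\omega)\psi\lVert\le(1-\varepsilon)^{-1/2}\lVert D\psi\lVert+C_0\lVert\psi\lVert\quad\text{and}\quad\alpha_j\lVert\varphi(f_j)^j\psi\lVert\le(1-\varepsilon)^{-1/2}\lVert D\psi\lVert+C_0\lVert\psi\lVert\ \ (j\in\cL(f)), \]
with $C_0$ depending only on $\varepsilon,r,n$. These are precisely the building blocks needed above.

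It remains to bound the non-leading terms. The index $1\in\cL(f)^c$ is handled at once: by Lemma \ref{Lem:FundamentalIneq}, with $f_1\in\cD(\omega^{-\frac12})$ (uniformly for $\lVert(\omega^{-\frac12}+1)f_1\lVert<r$), the operator $\varphi(f_1)$ is infinitesimally $d\Gamma(\omega)$-bounded, so $\alpha_1\lVert\varphi(f_1)\psi\lVert\le\theta\lVert d\Gamma(\omega)\psi\lVert+C_\theta\lVert\psi\lVert$ for any $\theta>0$. For $j\in\cL(f)^c$ with $j\ge 2$ the decisive remark is that $\ell:=\max\{i:f_i=f_j\}$ lies in $\cL(f)$, satisfies $\ell>j$, and, being even and $\ge 3$, satisfies $\ell\ge4$; hence $\ell\in\cL(f)\setminus\{2\}$ and $\alpha_\ell^{-1}<r$. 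Since $\varphi(f_j)=\varphi(f_\ell)$ is selfadjoint and $j<\ell$, the spectral theorem (together with $\cN\subset\cD(\varphi(f_\ell)^\ell)$ from Lemma \ref{Lem:FundamentalIneq}) gives $\lVert\varphi(f_\ell)^j\psi\lVert\le\delta\lVert\varphi(f_\ell)^\ell\psi\lVert+C_\delta\lVert\psi\lVert$ for any $\delta>0$, whence $\alpha_j\lVert\varphi(f_j)^j\psi\lVert\le r^2\delta\,\alpha_\ell\lVert\varphi(f_\ell)^\ell\psi\lVert+rC_\delta\lVert\psi\lVert$. Inserting the previous paragraph's bounds on $\alpha_\ell\lVert\varphi(f_\ell)^\ell\psi\lVert$ and $\lVert d\Gamma(\omega)\psi\lVert$, adding $\lVert\eta\Gamma(-1)\psi\lVert=|\eta|\,\lVert\psi\lVert\le r\lVert\psi\lVert$, summing over the at most $2n$ non-leading indices, and choosing $\theta,\delta$ small enough (depending only on $\varepsilon,r,n$) that the total $\lVert D\psi\lVert$-coefficient is $\le\varepsilon$ produces the displayed estimate; taking $C$ to be the largest constant that appears (still a function of $\varepsilon,r,n$ only) then yields both inequalities. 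I expect this last step to be the only real obstacle: the leading/non-leading correspondence $j\mapsto\ell$ must be set up carefully, and the genuinely degenerate index $j=1$ (an odd, non-leading power, to which the "divide by $\alpha_\ell$" argument does not apply) has to be treated separately via infinitesimal $d\Gamma(\omega)$-boundedness as above; everything else is routine interpolation together with Lemmas \ref{Lem:FundamentalIneq} and \ref{Lem:DominatingEstimate}.
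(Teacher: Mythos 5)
Your proof is correct and follows essentially the same route as the paper's: dominate each non-leading term $j\in\cL(f)^c\setminus\{1\}$ by the corresponding leading power $\varphi(f_\ell)^\ell$ via a polynomial estimate and the spectral theorem (the paper writes the analogous inequality with explicit constants $\varepsilon/(4n^2r)$ instead of a generic $\delta$), treat the genuinely odd index $j=1$ through the infinitesimal $d\Gamma(\omega)$-bound of Lemma \ref{Lem:FundamentalIneq}, then feed everything into Lemma \ref{Lem:DominatingEstimate}. The only difference is organizational: you name the leading and remainder parts $D$ and $R$ up front and prove the second displayed estimate from the first by rearrangement before returning to establish the first, whereas the paper proves the first estimate directly and then deduces the second; both yield the $(1-\varepsilon)^{-2}$ constant (in fact $(1-\varepsilon)^{-3/2}$, as you note).
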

\begin{proof}
Let $\psi\in \cN\cap \cD(d\Gamma(\omega))$. Pick $C_1>0$ depending only on $r$, $\varepsilon$ and $n$ such that
\begin{equation*}
r^2\sum_{j=1}^{2\ell-1} x^{2j}\leq \frac{\varepsilon^2}{16n^4r^2}  x^{4\ell}+C_1,
\end{equation*}
for all $\ell \in \{ 1,\dots,n \}$ and $x\in \RR$. For each $j\in \cL(f)^c\backslash \{1\}$ we find  $q\in \cL(f)\backslash\{2\}$ such that $f_j=f_{q}$ and $j< q$. Noting that $\alpha_q^{-1}\leq r$ and $\alpha_q\leq r$ we find 
\begin{equation*}
 \lVert \alpha_j \varphi(f_j)^{j}\psi \lVert\leq \frac{\varepsilon}{4n^2r} \lVert  \varphi(f_q)^q\psi \lVert+ \sqrt{C_1}\lVert \psi\lVert \leq  \frac{\varepsilon}{4n^2} \lVert \alpha_q \varphi(f_q)^q\psi \lVert+ \sqrt{C_1}\lVert \psi\lVert.
\end{equation*}
We know from Lemma \ref{Lem:FundamentalIneq} that
\begin{align*}
\lVert \alpha_1 \varphi(f_1)\psi \lVert&\leq 2r\lVert (\omega^{-\frac{1}{2}}+1)f_1 \lVert (\langle \psi, d\Gamma(\omega) \psi \rangle+\lVert \psi\lVert^2)^{\frac{1}{2}}\\&\leq \frac{\varepsilon}{4n} \lVert d\Gamma(\omega)\psi \lVert +  2r^2\lVert \psi \lVert+\frac{4r^4n}{\varepsilon}\lVert \psi \lVert.
\end{align*}
Combining the two inequalities above, we find $C_2>0$ depending only on $r$, $\varepsilon$ and $n$ such that
\begin{align*}
\sum_{j\in \cL(f)^c} \lVert  \alpha_j \varphi(f_j)^{j}\psi \lVert&\leq \frac{\varepsilon}{4n} \biggl(  \lVert d\Gamma(\omega)\psi \lVert+\sum_{j\in \cL(f)}   \lVert  \alpha_j \varphi(f_j)^j\psi \lVert   \biggl) + C_2\lVert \psi \lVert\\&\leq \frac{\varepsilon}{2} \biggl(  \lVert d\Gamma(\omega)\psi \lVert^2+\sum_{j\in \cL(f)}   \lVert  \alpha_j \varphi(f_j)^j\psi \lVert^2   \biggl) + C_2\lVert \psi \lVert.
\end{align*}
Applying Lemma \ref{Lem:DominatingEstimate} (with $\varepsilon=\frac{1}{2}$) there is a constant $C_3$ (depending only on $r$, $\varepsilon$ and $n$) such that
\begin{equation}\label{eq:FirstImpartantineq}
\sum_{j\in \cL(f)^c} \lVert  \alpha_j \varphi(f_j)^{j}\psi \lVert\leq \varepsilon \biggl \lVert  d\Gamma(\omega)\psi+\sum_{j\in \cL(f)} \alpha_j \varphi(f_j)^j\psi  \biggl \lVert+ C_3\lVert \psi \lVert.
\end{equation}
This proves the first inequality. To prove the next inequality we note that 
\begin{equation*}
\biggl \lVert  d\Gamma(\omega)\psi+\sum_{j\in \cL(f)} \alpha_j \varphi(f_j)^j\psi  \biggl \lVert\leq \lVert F_\eta(\alpha,f,\omega)\psi\lVert+\sum_{j\in \cL(f)^c} \lVert  \alpha_j \varphi(f_j)^{j}\psi \lVert+ \lvert \eta \lvert \lVert \psi\lVert.
\end{equation*}
Using equation (\ref{eq:FirstImpartantineq}) we obtain
\begin{equation*}
\biggl \lVert  d\Gamma(\omega)\psi+\sum_{j\in \cL(f)} \alpha_j \varphi(f_j)^j\psi  \biggl \lVert\leq \frac{1}{1-\varepsilon}\lVert F_\eta(\alpha,f,\omega)\psi\lVert+\frac{C_3+r}{1-\varepsilon}\lVert \psi\lVert. 
\end{equation*}
Combining this and Lemma \ref{Lem:DominatingEstimate} we find a constant $C_4$ such that
\begin{equation*}
\max\{\lVert \alpha_q\varphi(f_q)^q\psi \lVert,\lVert d\Gamma(\omega)\psi \lVert\}\leq  (1-\varepsilon)^{-2}\lVert F_\eta (\alpha,f,\omega)\psi\lVert+C_4\lVert \psi\lVert
\end{equation*} 
for all $q\in \cL(f)$. Taking $C=\max\{ C_3+r,C_4 \}$ finishes the proof.
\end{proof}

\section{Proof of Proposition \ref{Lem:BasicPropertiesSBmodel} and Proposition \ref{Thm:Spectral Theory of decomposition}}
\begin{lem}\label{Lem:TrasformationCanTrans}
The map $U$ defined in Proposition \ref{Thm:Spectral Theory of decomposition} is unitary with inverse $U^*=U$. Furthermore, for any $v\in \cH$ and selfadjoint operator $A$ on $\cH$ we have
\begin{align}\label{eq:111}
U (\sigma_x\otimes \varphi(v) )U^*&=\varphi(v)\oplus \varphi(v)=1\otimes \varphi(v)\\
U( 1\otimes d\Gamma(A)) U^*&=d\Gamma(A)\oplus d\Gamma(A)=1\otimes d\Gamma(A)\label{eq:112}\\
U( \sigma_z\otimes 1 )U^*&=\Gamma(-1)\oplus -\Gamma(-1)=\sigma_z\otimes \Gamma(-1).\label{eq:113}
\end{align}
Let $\alpha\in \CC^{2n}$, $f\in \cH^{2n}$, $\eta \in \CC$ and $\omega$ be a selfadjoint operator on $\cH$. Then
\begin{equation}\label{eq:Fuldtrans}
UH_\eta(\alpha,f,\omega)U^*=F_{\eta}(\alpha,f,\omega)\oplus F_{-\eta}(\alpha,f,\omega).
\end{equation}
\end{lem}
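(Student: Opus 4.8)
The plan is to represent $U$ as a $2\times 2$ operator matrix over $\cF_b(\cH)$ and reduce every assertion to block algebra. Let $P_e$ and $P_o$ be the orthogonal projections of $\cF_b(\cH)$ onto the even- and odd-particle subspaces; thus $P_e+P_o=1$, $P_eP_o=P_oP_e=0$, $P_e^*=P_e$, $P_o^*=P_o$ and $\Gamma(-1)=P_e-P_o$. Reading the definition of $\widetilde\phi_j$ componentwise shows that, under the identification $\CC^2\otimes\cF_b(\cH)\approx\cF_b(\cH)\oplus\cF_b(\cH)$,
\[
U=\begin{pmatrix}P_e & P_o\\ P_o & P_e\end{pmatrix}.
\]
Since this matrix is symmetric with selfadjoint entries we get $U^*=U$, and $U^2=1$ follows by multiplying it out using $P_e^2=P_e$, $P_o^2=P_o$, $P_eP_o=0$. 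Hence $U$ is a selfadjoint unitary with $U^{-1}=U^*=U$; this re-proves the unitarity asserted in Proposition \ref{Thm:Spectral Theory of decomposition}.

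For the conjugation formulas I would first record two structural facts. The number-preserving operators $d\Gamma(A)$ and $\Gamma(-1)$ commute with $P_e$ and $P_o$; and since $a(v),a^\dagger(v)$ change the particle number by $\mp 1$, the field operator flips parity, i.e.\ $\varphi(v)P_e=P_o\varphi(v)$ and $\varphi(v)P_o=P_e\varphi(v)$ on $\cN$, equivalently $\Gamma(-1)\varphi(v)=-\varphi(v)\Gamma(-1)$ (also the $U=-1$ instance of Lemma \ref{Lem:SeconduantisedBetweenSPaces}). Under the identification above, $\sigma_z\otimes1=\mathrm{diag}(1,-1)$, $1\otimes d\Gamma(A)=\mathrm{diag}(d\Gamma(A),d\Gamma(A))$, and $\sigma_x\otimes\varphi(v)$ is the off-diagonal matrix with both entries $\varphi(v)$. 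Then \eqref{eq:112} holds because $\mathrm{diag}(d\Gamma(A),d\Gamma(A))$ commutes with $U$; \eqref{eq:113} is the block product $U\,\mathrm{diag}(1,-1)\,U^*=\mathrm{diag}(P_e-P_o,\,P_o-P_e)$, the off-diagonal entries vanishing by $P_eP_o=0$; and for \eqref{eq:111} one pushes all the projections in $U(\sigma_x\otimes\varphi(v))U^*$ to one side via $P_o\varphi(v)=\varphi(v)P_e$, $P_e\varphi(v)=\varphi(v)P_o$, whereupon the off-diagonal entries collapse by $P_eP_o=0$ and each diagonal entry becomes $\varphi(v)(P_e+P_o)=\varphi(v)$. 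The bounded pieces need no domain discussion; for $\varphi(v)$ one verifies \eqref{eq:111} first on the core $\cN\oplus\cN$, which $U$ maps onto itself, and then extends by closedness, using that $\cN$ is a core for $\varphi(v)$.

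For \eqref{eq:Fuldtrans} I would combine these. Writing $(\sigma_x\otimes\varphi(f_i))^i=\sigma_x^{\,i}\otimes\varphi(f_i)^i$ and using that conjugation by the unitary $U$ commutes with taking powers, \eqref{eq:111} gives $U(\sigma_x\otimes\varphi(f_i))^iU^*=(1\otimes\varphi(f_i))^i=\varphi(f_i)^i\oplus\varphi(f_i)^i$, irrespective of the parity of $i$. Summing over $i$ with weights $\alpha_i$, adding $\eta$ times \eqref{eq:113} and \eqref{eq:112} with $A=\omega$, and regrouping the two summands of each direct sum yields
\[
UH_\eta(\alpha,f,\omega)U^*=\Bigl(\eta\Gamma(-1)+d\Gamma(\omega)+\sum_{i=1}^{2n}\alpha_i\varphi(f_i)^i\Bigr)\oplus\Bigl(-\eta\Gamma(-1)+d\Gamma(\omega)+\sum_{i=1}^{2n}\alpha_i\varphi(f_i)^i\Bigr),
\]
which is $F_\eta(\alpha,f,\omega)\oplus F_{-\eta}(\alpha,f,\omega)$ by definition. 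I expect the only genuine obstacle to be the domain bookkeeping at this last step: the identity is transparent on a dense invariant core on which every summand of $H_\eta$ is simultaneously defined — the finite-particle vectors, or the span of $\widetilde{\cJ}(\cD)$ — so one should state and check it there before passing to closures, the identification of these closures with the domains displayed in Proposition \ref{Lem:BasicPropertiesSBmodel} being settled separately via the estimates of Section 4 (in particular Lemma \ref{Lem:TheRelativeBounds}).
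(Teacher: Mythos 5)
Your proof is correct, and it takes a genuinely different route from the paper. You represent $U$ as the $2\times2$ block matrix $\bigl(\begin{smallmatrix}P_e & P_o\\ P_o & P_e\end{smallmatrix}\bigr)$ in the even/odd parity projections and deduce unitarity, selfadjointness and the three conjugation formulas by block algebra, using that $d\Gamma(A)$ and $\Gamma(-1)$ preserve parity while $\varphi(v)$ flips it ($\varphi(v)P_e=P_o\varphi(v)$ on $\cN$). The paper instead proves isometry and $U^2=1$ by a direct coordinate computation, then verifies \eqref{eq:111}--\eqref{eq:113} by applying both sides to the spanning vectors $e_j\otimes\Omega$ and $e_j\otimes g_1\otimes_s\cdots\otimes_s g_k$ with $g_\ell\in\cD(A)$, using the explicit action of $U^*$ on these vectors and the definitions of $a, a^\dagger, d\Gamma$. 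Both approaches rest on the same structural facts about parity; yours isolates them as an operator identity and then proceeds algebraically, which is arguably more transparent and shorter, while the paper's hands-on check on simple tensors sidesteps any discussion of projection intertwining at the cost of a longer computation. Your closing remark on domain bookkeeping is exactly right: the block identities are first established on the common core $\cN\oplus\cN$ (or on $\widetilde{\cJ}(\cD)$), which $U$ maps onto itself, and the extension to the closed operators is precisely what the core statement in Proposition \ref{Lem:BasicPropertiesSBmodel} supplies. One small stylistic note: the identity $(\sigma_x\otimes\varphi(f_i))^i=\sigma_x^{\,i}\otimes\varphi(f_i)^i$ you record for \eqref{eq:Fuldtrans} is not actually used; what does the work is, as you then say, that conjugation by the unitary $U$ commutes with taking $i$-th powers (on the core), together with \eqref{eq:111}.
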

\begin{proof}
First we note that
\begin{align*}
\sum_{k=0}^{\infty} \lVert \widetilde{\psi}^{(k)}_1 \lVert^2+\sum_{k=0}^{\infty} \lVert \widetilde{\psi}^{(k)}_{-1} \lVert^2&=\sum_{k \,\, \textup{even}}\lVert \psi^{(k)}_1 \lVert^2+\sum_{k \,\, \textup{odd}}\lVert \psi^{(k)}_{-1} \lVert^2+\sum_{k \,\, \textup{even}}\lVert \psi^{(k)}_{-1} \lVert^2+\sum_{k \,\, \textup{odd}}\lVert \psi^{(k)}_1 \lVert^2\\&=\lVert \psi_1 \lVert^2+\lVert \psi_{-1} \lVert^2=\lVert e_1\otimes \psi_1+e_{-1}\otimes \psi_{-1} \lVert^2
\end{align*}
which shows that the $\widetilde{\psi}_i$ are elements in Fock space and $U$ gives rise to an isometric map from $\cF_b(\cH)^2$ to $\cF_b(\cH)^2$. Let $(\psi_1,\psi_{-1})\in \cF_b(\cH)^2$ and write $U^2(\psi_1,\psi_{-1})=U(\widetilde{\psi}_1,\widetilde{\psi}_{-1})=(\phi_1,\phi_{-1})$. For $j\in \{1,-1\}$ and $k\in \NN_0$ we have
\begin{align*}
\phi_j^{(k)}=\begin{cases}
\widetilde{\psi}_j^{(k)} & \text{k even}\\
\widetilde{\psi}_{-j}^{(k)} & \text{k odd}
\end{cases}=\begin{cases}
\psi_j^{(k)} & \text{k even}\\
\psi_j^{(k)} & \text{k odd}
\end{cases}=\psi_j^{(k)}.
\end{align*}
So $U$ is bijective with inverse $U^{-1}=U$. It is clear from the definition of $\widetilde{\psi}_j$ that the map $(\psi_1,\psi_{-1})\mapsto \widetilde{\psi}_j$ is linear and hence $U$ is also linear. We have thus proven that $U$ is unitary with $U=U^{-1}=U^*$.

It remains to prove equations (\ref{eq:111}), (\ref{eq:112}) (\ref{eq:113}) and (\ref{eq:Fuldtrans}). Equation (\ref{eq:Fuldtrans}) follows from the others so we are left with proving equations (\ref{eq:111}), (\ref{eq:112}) and (\ref{eq:113}). Define $\widetilde{\cJ}(\cD(A))$ as in Proposition \ref{Lem:BasicPropertiesSBmodel} with $\cD(A)$ instead of $\cD$. Then $\widetilde{\cJ}(\cD(A))$ spans a core for the operators on the left hand side of equations (\ref{eq:111}), (\ref{eq:112}) and (\ref{eq:113}). Hence it is enough to check equations (\ref{eq:111}), (\ref{eq:112}) and (\ref{eq:113}) hold on elements of the form $e_j\otimes \Omega$ and $e_j\otimes g_1\otimes_s\cdots\otimes_s g_k$ with $j\in \{ \pm 1 \}$ and $g_\ell\in \cD(A)$. Now
\begin{align*}
U^*(e_j\otimes \Omega)&=e_j\otimes \Omega\\
U^*(e_j\otimes (g_1\otimes_s\dots\otimes_s g_k))&=e_{(-1)^kj}\otimes (g_1\otimes_s\cdots\otimes_s g_k),
\end{align*}
which is in the domain of $\sigma_x\otimes \varphi(v),1\otimes d\Gamma(A)$ and $\sigma_z\otimes 1$. Using $\sigma_xe_j=e_{-j}$ and $\sigma_ze_j=je_{j}$ we find
\begin{align*}
\sigma_x\otimes \varphi(v)(e_j\otimes \Omega)&=e_{-j}\otimes v=U^*(1\otimes \varphi(v))( e_j\otimes \Omega)\\
\sigma_x\otimes \varphi(v)(e_{(-1)^kj}\otimes (g_1\otimes_s\cdots\otimes_s g_k))&=e_{(-1)^{k+1}j}\otimes a^{\dagger}(v)g_1\otimes_s\cdots\otimes_s g_k\\&\quad +e_{(-1)^{k-1}j}\otimes a(v)g_1\otimes_s\cdots\otimes_s g_k\\&=U^*(1\otimes \varphi(v))(e_j\otimes (g_1\otimes_s\cdots\otimes_s g_k))\\
1\otimes d\Gamma(A)(e_j\otimes \Omega)&=0=U^*(1\otimes d\Gamma(A))( e_j\otimes \Omega)\\
1\otimes d\Gamma(A)(e_{(-1)^kj}\otimes (g_1\otimes_s\cdots\otimes_s g_k))&=e_{(-1)^{k}j}\otimes d\Gamma(A)g_1\otimes_s\cdots\otimes_s g_k\\&=U^*(1\otimes d\Gamma(A))(e_j\otimes (g_1\otimes_s\cdots\otimes_s g_k))\\
\sigma_z\otimes 1(e_j\otimes \Omega)&=je_{j}\otimes \Omega=U^*(\sigma_z\otimes \Gamma(-1))
(e_j\otimes \Omega)  \\
\sigma_z\otimes 1(e_{(-1)^kj}\otimes (g_1\otimes_s\cdots\otimes_s g_k))&=(-1)^kj e_{(-1)^{k}j}\otimes (g_1\otimes_s\cdots\otimes_s g_k)\\&=je_{(-1)^kj}\otimes \Gamma(-1)g_1\otimes_s\cdots\otimes_s g_k   \\&=U^* (\sigma_z\otimes \Gamma(-1)) (e_j\otimes (g_1\otimes_s\cdots\otimes_s g_k)).
\end{align*}
This finishes the proof.
\end{proof}
\noindent Proposition \ref{Lem:BasicPropertiesSBmodel}  will follow as soon as we prove the statements for $F_{\eta}(\alpha,f,\omega)$. We start by proving the following lemma
\begin{lem}
The conclusions of Proposition \ref{Lem:BasicPropertiesSBmodel} hold under Hypothesis \ref{Hyp1}, Hypothesis \ref{Hyp3} and the assumption  
\begin{equation*}
L:=d\Gamma(\omega)+\sum_{j\in \cL(f)} \alpha_j\varphi(f_j)^j
\end{equation*}
is essentially selfadjoint on $\cN\cap \cD(d\Gamma(\omega))$.
\end{lem}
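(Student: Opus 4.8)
\emph{Overall strategy.} The plan is to reduce every assertion to the fiber operator $F_\eta:=F_\eta(\alpha,f,\omega)$ and then realise it as a relatively bounded perturbation of the closure $\overline L$ of $L$, which is selfadjoint by hypothesis. For the reduction: by Proposition \ref{Thm:Spectral Theory of decomposition} the unitary involution $U$ satisfies $UH_\eta U^*=F_\eta\oplus F_{-\eta}$, and reading off the formula for $U$ one sees it carries $\mathrm{span}\,\widetilde{\cJ}(\cD)$ onto $\mathrm{span}\,\cJ(\cD)\oplus\mathrm{span}\,\cJ(\cD)$. Since Hypotheses \ref{Hyp1} and \ref{Hyp3} and the essential selfadjointness assumption never mention $\eta$ (the operator $L$ is literally the same for $F_\eta$ and $F_{-\eta}$), and since closedness, the precise domain, selfadjointness, semiboundedness, compactness of the resolvent and the core property all pass through a unitary equivalence and through a finite orthogonal direct sum, it suffices to establish everything for $F_\eta$. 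Fix $r>0$ large enough that $(\alpha,f,\omega,\eta)$ meets the hypotheses of Lemmas \ref{Lem:DominatingEstimate} and \ref{Lem:TheRelativeBounds}; this is legitimate because Hypothesis \ref{Hyp1}(1) bounds $\alpha_j^{-1}$ for $j\in\cL(f)\backslash\{2\}$, Hypothesis \ref{Hyp1}(3) bounds the relevant norms of the $f_j$, and, when $n>2$, Hypothesis \ref{Hyp2} (part of Hypothesis \ref{Hyp3}) gives the reality $\langle f_i,f_j\rangle\in\RR$. Write $L_0$ for $L$ with domain $\cN\cap\cD(d\Gamma(\omega))$, $\overline L$ for its selfadjoint closure, and split $F_\eta=L_0+B_0$ on $\cN\cap\cD(d\Gamma(\omega))$, where $B_0=\eta\Gamma(-1)+\sum_{j\in\cL(f)^c}\alpha_j\varphi(f_j)^j$.

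\emph{Identifying the domain.} The heart of the proof is to show $\cD(\overline L)=D_{\max}$, where $D_{\max}:=\cD(d\Gamma(\omega))\cap\bigcap_{i\in\cL(f)\backslash\{2\}}\cD(\varphi(f_i)^i)$ is the domain asserted in Proposition \ref{Lem:BasicPropertiesSBmodel}. For ``$\subseteq$'' I would use Lemma \ref{Lem:DominatingEstimate} (with $\varepsilon=\tfrac12$, after reindexing the even set $\cL(f)$ and padding the sum there with zero vectors, which the reality condition permits): it shows $d\Gamma(\omega)$ and each $\varphi(f_j)^j$, $j\in\cL(f)$, are $L_0$-bounded, so, being closed operators, their domains contain $\cD(\overline L)$; since $\varphi(f_2)^2$ is moreover $d\Gamma(\omega)$-bounded by Lemma \ref{Lem:FundamentalIneq}, the index $2$ drops out. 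For ``$\supseteq$'' I would first note that every $\varphi(f_j)^j$ is defined on $D_{\max}$: for $j\in\cL(f)\backslash\{2\}$ by definition; for $j\in\{1,2\}$ from $D_{\max}\subseteq\cD(d\Gamma(\omega))$ and Lemma \ref{Lem:FundamentalIneq}; and for $j\in\cL(f)^c$ with $j\geq3$ because $f_j=f_q$ for some $q\in\cL(f)\backslash\{2\}$ with $q>j$, whence $\cD(\varphi(f_q)^q)\subseteq\cD(\varphi(f_j)^j)$. So $L$ is a well-defined operator on $D_{\max}$, symmetric there because $d\Gamma(\omega)$ and the $\varphi(f_j)^j$ are selfadjoint and $\alpha_j\geq0$ for $j\in\cL(f)$ (Hypothesis \ref{Hyp1}(1)), and extending $L_0$; as $\overline L$ is selfadjoint, hence maximally symmetric, this forces $L|_{D_{\max}}\subseteq\overline L$, i.e. $D_{\max}\subseteq\cD(\overline L)$. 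Hence $\overline L=L|_{D_{\max}}$.

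\emph{Perturbing and reading off the conclusions.} Next, by the first inequality of Lemma \ref{Lem:TheRelativeBounds} (again with $\varepsilon=\tfrac12$) one gets $\lVert B_0\psi\rVert\leq\tfrac12\lVert L_0\psi\rVert+C\lVert\psi\rVert$ on $\cN\cap\cD(d\Gamma(\omega))$, which is a core for $\overline L$, so $B_0$ extends to an $\overline L$-bounded operator $B$ of relative bound $<1$, and the operator on $D_{\max}$ defined by summing all original terms is seen to equal $\overline L+B$. The standard theorem on relatively bounded perturbations then gives that $F_\eta=\overline L+B$ is closed on $\cD(\overline L)=D_{\max}$, and the Kato--Rellich theorem gives selfadjointness when $(\alpha,\eta)\in\RR^{2n+1}$, $B$ being then symmetric; semiboundedness in that case follows on $\cN\cap\cD(d\Gamma(\omega))$ (and then on $\cD(F_\eta)$) from Lemma \ref{Lem:FundamentalLowerbound} with $A(v)=\varphi(v)$, combined with $d\Gamma(\omega)+\varphi(\alpha_1f_1)\geq-\lVert\omega^{-1/2}\alpha_1f_1\rVert^2$ and $\Gamma(-1)\geq-1$. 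For the core statement, $\mathrm{span}\,\cJ(\cD)$ is a core for $d\Gamma(\omega)$ because $\cD$ is a core for $\omega$, and since $\varphi(f_j)^j$ acts boundedly on each subspace of vectors of bounded total particle number, any sector-preserving $d\Gamma(\omega)$-graph approximation inside $\cN\cap\cD(d\Gamma(\omega))$ also converges in the $L$-graph norm; hence $\mathrm{span}\,\cJ(\cD)$ is a core for $\overline L$ and, the graph norms being equivalent, for $F_\eta$. Finally, if $\omega$ has compact resolvent then $m(\omega)>0$ (as $\omega$ is injective), so $d\Gamma(\omega)$ has compact resolvent; since $\varphi(f_j)^j\geq0$ and $\alpha_j\geq0$ for $j\in\cL(f)$, one has $\overline L\geq d\Gamma(\omega)\geq0$ as quadratic forms, hence $0\leq(\overline L+1)^{-1}\leq(d\Gamma(\omega)+1)^{-1}$ is compact, and adding the relatively bounded perturbation $B$ keeps the resolvent compact.

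\emph{Main obstacle.} I expect the only genuinely delicate point to be the identification $\cD(\overline L)=D_{\max}$ together with the core statement. The inclusion $D_{\max}\subseteq\cD(\overline L)$ I would obtain from the maximal-symmetry argument above rather than from an explicit approximation, and the quantitative fact that $\varphi(f_j)^j$ acts boundedly on each fixed total particle number sector is what lets the naive sector-by-sector approximation go through even though $\cD(d\Gamma(\omega))\not\subseteq\cD(d\Gamma(\omega)^{j/2})$ for $j\geq3$. Everything else is a routine application of the estimates from the previous section and of the Kato--Rellich machinery.
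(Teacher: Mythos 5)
Your proposal is correct and follows essentially the same path as the paper's proof: use Lemma \ref{Lem:DominatingEstimate} to show that $\overline{L}$ is the selfadjoint restriction of $L$ to $D_{\max}$, then treat $F_\eta-\overline{L}$ as a Kato--Rellich perturbation via Lemma \ref{Lem:TheRelativeBounds}, and transport the conclusions to $H_\eta$ through the unitary $U$ of Proposition \ref{Thm:Spectral Theory of decomposition}. Your core argument and the paper's are the same truncation idea; the paper makes it concrete by applying $P=1_{(-\infty,c)}(N)$ to an arbitrary $d\Gamma(\omega)$-graph approximation inside $\mathrm{span}\,\cJ(\cD)$, which produces the ``sector-preserving'' sequence that you invoke somewhat loosely. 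Also, having already used Kato--Rellich, your separate appeal to Lemma \ref{Lem:FundamentalLowerbound} for semiboundedness is superfluous, since semiboundedness is part of the Kato--Rellich conclusion.

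The one step where you genuinely diverge from the paper is the compact-resolvent assertion. The paper takes the direct route: the resolvent identity
\begin{align*}
(F_{\eta}+i)^{-1}&=(d\Gamma(\omega)+i)^{-1}\\
&\quad+(d\Gamma(\omega)+i)^{-1}(F_{\eta}-d\Gamma(\omega))(F_{\eta}+i)^{-1},
\end{align*}
combined with the bound on $(F_\eta-d\Gamma(\omega))(F_\eta+i)^{-1}$ from Lemma \ref{Lem:TheRelativeBounds}, immediately shows $(F_\eta+i)^{-1}$ is compact once $(d\Gamma(\omega)+i)^{-1}$ is. You instead argue $\overline{L}\geq d\Gamma(\omega)\geq 0$ as forms, deduce $0\leq(\overline{L}+1)^{-1}\leq(d\Gamma(\omega)+1)^{-1}$ by operator monotonicity, conclude compactness of $(\overline{L}+1)^{-1}$, and then pass to $F_\eta$ by perturbation. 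This is also valid, but it requires extending the form inequality from $\cN\cap\cD(d\Gamma(\omega))$ to the form domain of $\overline{L}$ and invoking the sandwiching principle for compact positive operators; the paper's identity avoids that and works in one step directly for $F_\eta$.
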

\begin{proof}
Combining the assumptions with Lemma \ref{Lem:DominatingEstimate} we see that $L$ is selfadjoint on
\begin{align*}
\cC=\cD(d\Gamma(\omega))\cap \bigcap_{j\in \cL(f)\backslash\{2\} }\cD( \varphi(f_j)^j ).
\end{align*}
Simple perturbation theory along with Lemma \ref{Lem:TheRelativeBounds} now shows that $F_{\eta}(\alpha,f,\omega)$ is closed on $\cC$ and any core for $L$ is a core for $F_{\eta}(\alpha,f,\omega)$. If $\alpha\in \RR^{2n}$ and $\eta\in \RR$, the Kato-Rellich Theorem shows that $F_{\eta}(\alpha,f,\omega)$ is selfadjoint and bounded below.

We now prove that $\cJ(\cD)$ spans a core for $L$. Any element $\psi\in \cN\cap \cD(\omega)$ can be approximated in $d\Gamma(\omega)$-norm by a sequence $\{\psi_j  \}_{j=1}^\infty\subset \textup{Span}(\cJ(\cD))$. Pick $c>0$ such that $1_{(-\infty,c)}(N)\psi=\psi$ and write $P=1_{(-\infty,c)}(N)$. Then $\{P\psi_j\}_{j=1}^\infty$ converges to $\psi$ in $\cD(d\Gamma(\omega))$-norm and in $N^{n}$ norm. It follows from Lemma \ref{Lem:FundamentalIneq} that $\{P\psi_j\}_{j=1}^\infty$ converges to $\psi$ in $L$-norm. Using that $\cN\cap \cD(d\Gamma(\omega))$ is a core for $L$ by assumption, we see $\cJ(\cD)$ spans a core for $L$.

If $\omega$ has compact resolvents then so does $d\Gamma(\omega)$ by Lemma \ref{Lem:SecondQuantisedProp}. That $F_{\eta}(\alpha,f,\omega)$ has compact resolvents will now follow from the equality
\begin{align*}
(F_{\eta}(\alpha,f,\omega)+i)^{-1}&=(d\Gamma(\omega)+i)^{-1}\\&+(d\Gamma(\omega)+i)^{-1}(F_{\eta}(\alpha,f,\omega)-d\Gamma(\omega))(F_{\eta}(\alpha,f,\omega)+i)^{-1}.
\end{align*}
This finishes the proof.
\end{proof}
 
\begin{proof}[Proof of Proposition \ref{Lem:BasicPropertiesSBmodel}]
It remains to prove that \begin{equation*}
L:=d\Gamma(\omega)+\sum_{j\in \cL(f)} \alpha_j\varphi(f_j)^j
\end{equation*}
is essentially selfadjoint on $\cN\cap \cD(d\Gamma(\omega))$ under Hypothesis \ref{Hyp1} and \ref{Hyp3}. The case $n\leq 2$ is simply done by appealing to \cite{Arai}. If $n>2$ one appeals to the theory of hypercontractive semigroups and obtains $L$ is essentially selfadjoint on $\cap_{n\in \mathbb{N}} \cD(d\Gamma(\omega)^n)$ (See Lemma \ref{Lem:ExsistenceOfNiceRealSubspace}, Theorem \ref{Thm:PropertiesOfQspace} and \cite[Theorem X.58]{RS2}). 

Recall that a vector $g\in \cH$ is said to be bounded for $\omega$ if $g\in \cap_{k\in \mathbb{N}} \cD(\omega^k)$ and there is $C>0$ such that $\lVert \omega^k g\lVert\leq C^k\lVert g\lVert $ for all $k\in \NN$. The set of vectors which are bounded for $\omega$ is dense in $\cH$ since
\begin{equation*}
g=\lim\limits_{\ell\rightarrow \infty}1_{[-\ell,\ell]}(\omega)g
\end{equation*}
for any $g\in \cH$. Let $g_1,\dots,g_q$ be bounded for $\omega$ and note that $g_1\otimes_s\cdots\otimes_s g_q\in\cap_{k\in \mathbb{N}} \cD(d\Gamma(\omega)^k)$ and
\begin{align*}
\lVert d\Gamma(\omega)^kg_1\otimes_s\cdots\otimes_s g_q\lVert &=\biggl\lVert \sum_{\alpha\in \NN_0^q, \lvert \alpha\lvert= k} \begin{pmatrix}
k \\ \alpha
\end{pmatrix} \omega^{\alpha_1}g_1\otimes_s\cdots\otimes_s \omega^{\alpha_q}g_q\biggl\lVert\\& \leq \sum_{\alpha\in \NN_0^q, \lvert \alpha\lvert= k} \begin{pmatrix}
k \\ \alpha
\end{pmatrix} C_1^{\alpha_1}\cdots C_q^{\alpha_n}\lVert g_1\lVert\cdots\lVert g_q\lVert\\&\leq (C_1+\cdots+C_q)^k \lVert g_1\lVert\cdots\lVert g_q\lVert.
\end{align*}
Hence $g_1\otimes_s\cdots\otimes_s g_q$ is an analytic vector for $d\Gamma(\omega)^{n}$ which implies
\begin{align*}
\{ \Omega \}\cup \bigcup_{q=1}^{\infty}\{ g_1\otimes_s\cdots\otimes_s g_q\mid g _i \,\, \text{is bounded for} \,\ \omega  \}\subset \cN\cap \cD(d\Gamma(\omega))
\end{align*}
will span a core for $d\Gamma(\omega)^{n}$ by Nelsons analytic vector theorem. Since $L$ is $d\Gamma(\omega)^{n}$ bounded by Lemma \ref{Lem:FundamentalIneq}, we find that elements from $\cN\cap \cD(d\Gamma(\omega))$ can approximate every element in $\cD(d\Gamma(\omega)^n)$ with respect to the graph norm of $L$. This finishes the proof because $L$ is essentially selfadjoint on $\cD(d\Gamma(\omega)^{n})$.
\end{proof}

\section{Lemmas for the HVZ theorem}
In this chapter we discuss some of the technical machinery needed to prove the HVZ theorem.
\begin{lem}\label{Lem:GeneralTransform}
Let $f\in \cH^{2n}$, $\alpha\in \RR^{2n}$, $\eta\in \RR$ and $\omega$ be a selfadjoint operator on $\cH$. Assume $(\alpha,f,\omega)$ satisfies Hypothesis \ref{Hyp1} and \ref{Hyp3}. If there is a unitary map $V:\cH\rightarrow \cH_1\oplus \cH_2$ such that $Vf_i=(\widetilde{f}_i,0)$ for all $i\in \{1,\dots,2n\}$ and $V\omega V^*=\omega_1\oplus \omega_2$ then $(\alpha,\widetilde{f},\omega_1)$ satisfies Hypothesis 1 and 3. Furthermore, there is a unitary map 
\begin{equation*}
U:\cF_b(\cH)\rightarrow \cF_b(\cH_1)\oplus \bigoplus_{k=1}^{\infty} \left (\cF_b(\cH_1)\otimes \cH_2^{\otimes_s k}\right)
\end{equation*}
such that
\begin{equation*}
UF_{\eta}(\alpha,f,\omega)U^*=F_{\eta}(\alpha,\widetilde{f},\omega_1)\oplus \bigoplus_{k=1}^\infty \left (F_{(-1)^k\eta}(\alpha,\widetilde{f},\omega_1)\otimes 1+1\otimes d\Gamma^{(k)}(\omega_2)\right).
\end{equation*}
In fact, $U=U_2U_1\Gamma(V)$ where $U_1$ is the unitary map from Theorem \ref{Thm:ISOTHM1} and $U_2$ is the unitary map from Theorem \ref{Thm:ISOTHM2}.
\end{lem}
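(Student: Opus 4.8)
The plan is to build the unitary $U$ as the announced composite $U_2U_1\Gamma(V)$ and to follow how $F_\eta(\alpha,f,\omega)$ transforms under each factor; the only genuinely new input, beyond the transformation rules recorded in Lemma \ref{Lem:SeconduantisedBetweenSPaces} and Theorems \ref{Thm:ISOTHM1}--\ref{Thm:ISOTHM2}, is the verification that the split data $(\alpha,\widetilde f,\omega_1)$ still satisfies Hypotheses \ref{Hyp1} and \ref{Hyp3}.

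\emph{Step 1: the hypotheses transfer.} Since $V$ is injective and $Vf_i=(\widetilde f_i,0)$, we have $f_i=f_j$ iff $\widetilde f_i=\widetilde f_j$, so $\cL(\widetilde f)=\cL(f)$; with the unchanged $\alpha$ this gives part (1) of Hypothesis \ref{Hyp1}. From $V\omega V^*=\omega_1\oplus\omega_2$ we get $\sigma(\omega)=\sigma(\omega_1)\cup\sigma(\omega_2)$ and $Vg(\omega)V^*=g(\omega_1)\oplus g(\omega_2)$ for every Borel $g$; hence $\omega_1$ is nonnegative and injective, $m(\omega_1)\ge m(\omega)$, and applying the intertwining to $g(t)=t^{\pm 1/2}$ together with the vanishing of the second component of $Vf_i$ gives $\widetilde f_i\in\cD(\omega_1^{-1/2})\cap\cD(\omega_1^{1/2})$ for $i\ge2$ and $\widetilde f_1\in\cD(\omega_1^{-1/2})$. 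For Hypothesis \ref{Hyp3} with $n>2$ one further needs Hypothesis \ref{Hyp2}, which survives the splitting since for bounded measurable $g$
\begin{equation*}
\langle\widetilde f_i,g(\omega_1)\widetilde f_j\rangle=\langle Vf_i,(g(\omega_1)\oplus g(\omega_2))Vf_j\rangle=\langle f_i,g(\omega)f_j\rangle\in\RR .
\end{equation*}
Thus $(\alpha,\widetilde f,\omega_1)$ obeys Hypotheses \ref{Hyp1} and \ref{Hyp3}, so Proposition \ref{Lem:BasicPropertiesSBmodel} makes each $F_{(-1)^k\eta}(\alpha,\widetilde f,\omega_1)$ a well-defined closed (indeed selfadjoint, semibounded) operator, and adding the nonnegative selfadjoint operator $1\otimes d\Gamma^{(k)}(\omega_2)$ on the finite tensor power keeps every summand on the right-hand side meaningful.

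\emph{Step 2: the chain of unitaries.} By Lemma \ref{Lem:SeconduantisedBetweenSPaces}, $\Gamma(V)\colon\cF_b(\cH)\to\cF_b(\cH_1\oplus\cH_2)$ is unitary with $\Gamma(V)d\Gamma(\omega)\Gamma(V)^*=d\Gamma(\omega_1\oplus\omega_2)$, $\Gamma(V)\varphi(f_i)\Gamma(V)^*=\varphi((\widetilde f_i,0))$ and $\Gamma(V)\Gamma(-1)\Gamma(V)^*=\Gamma(-1)$, so conjugation sends $F_\eta(\alpha,f,\omega)$ to the operator built from these transformed pieces on $\cF_b(\cH_1\oplus\cH_2)$. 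Then $U_2U_1$ implements the canonical exponential-law isomorphism $\cF_b(\cH_1\oplus\cH_2)\cong\bigoplus_{k\ge0}\cF_b(\cH_1)\otimes\cH_2^{\otimes_sk}$, the $k=0$ summand being $\cF_b(\cH_1)\otimes\CC\cong\cF_b(\cH_1)$, under which Theorems \ref{Thm:ISOTHM1}--\ref{Thm:ISOTHM2} give, on the $k$-th summand,
\begin{align*}
d\Gamma(\omega_1\oplus\omega_2)&\ \longmapsto\ d\Gamma(\omega_1)\otimes 1+1\otimes d\Gamma^{(k)}(\omega_2),\\
\varphi((g,0))&\ \longmapsto\ \varphi(g)\otimes 1\qquad(g\in\cH_1),\\
\Gamma(-1)&\ \longmapsto\ (-1)^k\,\Gamma(-1)\otimes 1 .
\end{align*}
Taking $g=\widetilde f_i$, raising $\varphi(\widetilde f_i)\otimes 1$ to the $i$-th power (which is $\varphi(\widetilde f_i)^i\otimes 1$), scaling by $\alpha_i$, summing over $i$, and adding the $\eta\Gamma(-1)$ and $d\Gamma(\omega_1\oplus\omega_2)$ terms, the $k$-th summand becomes
\begin{equation*}
\Big(\eta(-1)^k\Gamma(-1)+d\Gamma(\omega_1)+\sum_{i=1}^{2n}\alpha_i\varphi(\widetilde f_i)^i\Big)\otimes 1+1\otimes d\Gamma^{(k)}(\omega_2)=F_{(-1)^k\eta}(\alpha,\widetilde f,\omega_1)\otimes 1+1\otimes d\Gamma^{(k)}(\omega_2),
\end{equation*}
and for $k=0$ this reads $F_\eta(\alpha,\widetilde f,\omega_1)$. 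Setting $U=U_2U_1\Gamma(V)$ and reading off the direct sum over $k\ge0$ yields the claimed identity.

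\emph{Expected obstacle.} The representation-theoretic content is already packaged in Lemma \ref{Lem:SeconduantisedBetweenSPaces} and Theorems \ref{Thm:ISOTHM1}--\ref{Thm:ISOTHM2}, so the real care is domain bookkeeping: one must check that conjugating the \emph{closed} operator $F_\eta(\alpha,f,\omega)$ by each of the three unitaries lands it exactly on the natural domain of the corresponding transformed operator, and not merely on a core. This is handled by the same device as in Proposition \ref{Lem:BasicPropertiesSBmodel}: each unitary maps one of the explicit cores (spanned by sets of the type $\cJ(\cD)$, $\widetilde{\cJ}(\cD)$, or their tensor-power analogues) onto a core of the target, both operators are closed, and the relative bounds of Lemmas \ref{Lem:DominatingEstimate}--\ref{Lem:TheRelativeBounds} identify the domains; agreement on a common core then forces equality. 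The only other subtle point is the survival of the reality condition of Hypothesis \ref{Hyp2} under the splitting, which is precisely the displayed computation in Step 1.
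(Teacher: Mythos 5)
Your proof is correct and follows essentially the same route as the paper: conjugate by $\Gamma(V)$ using Lemma \ref{Lem:SeconduantisedBetweenSPaces}, apply the exponential-law isomorphism $U_2U_1$, and identify the $k$-th summand via $\Gamma^{(k)}(-1)=(-1)^k$. The paper is terser on hypothesis preservation ("easy to see") and handles the domain/combination step more explicitly by citing Theorem \ref{Thm:SpectralPropTensor} part (6), which is the mechanism you sketch informally in your final paragraph.
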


\begin{proof}
It is easy to see that Hypothesis 1 and 3 are preserved under the isomorphism. Using Lemma \ref{Lem:SeconduantisedBetweenSPaces} one calculates
\begin{equation*}
\Gamma(V)F_{\eta}(\alpha,f,\omega)\Gamma(V)^*=\eta\Gamma(-1\oplus -1)+d\Gamma(\omega_1\oplus \omega_2 )+\sum_{i=1}^{2n} \alpha_i  \varphi((\widetilde{f}_i,0))^i.
\end{equation*}
	Let $U_1$ be the isomorphism from Theorem \ref{Thm:ISOTHM1}. Using Theorem \ref{Thm:SpectralPropTensor} part (6) and Theorem \ref{Thm:ISOTHM1} we see
	\begin{align*}
	U_1\Gamma(V)F_{\eta}(\alpha,f,\omega)\Gamma(V)^*U_1^*&=\eta\Gamma(-1)\otimes \Gamma(-1)+F_{0}(\alpha,\widetilde{f},\omega_1)\otimes 1+1\otimes d\Gamma(\omega_2).
	\end{align*}
 Let $U_2$ be the unitary transform from Theorem \ref{Thm:ISOTHM2}. Defining $U=U_2U_1\Gamma(V)$ we calculate
	\begin{align*}
	UF_{\eta}(\alpha,f,\omega)U^*&= \eta U_2\Gamma(-1)\otimes \Gamma(-1)U_2^*+U_2F_{0}(\alpha,\widetilde{f},\omega_1)\otimes 1U_2^*+U_21\otimes d\Gamma(\omega_2)U_2^*\\&=\left(\eta\Gamma^{(0)}(-1)\Gamma(-1)+F_{0}(\alpha,\widetilde{f},\omega_1)\right)\\&\oplus \bigoplus_{k=1}^\infty\left( \eta\Gamma(-1)\otimes \Gamma^{(k)}(-1)+F_{0}(\alpha,\widetilde{f},\omega_1)\otimes 1+1\otimes d\Gamma^{(k)}(\omega_2)\right).
	\end{align*}
	The fact that $\Gamma^{(k)}(-1)=(-1)^k$ finishes the proof.
\end{proof}	

\begin{lem}\label{Lem:FundamentalTransform}
Let $f\in \cH^{2n}$, $\alpha\in \RR^{2n}$, $\eta\in \RR$ and $\omega$ be a selfadjoint operator on $\cH$. Assume $(\alpha,f,\omega)$ satisfies Hypotheses \ref{Hyp1} and \ref{Hyp3}. Let $\cH_1,\cH_2\subset \cH$ be closed subspaces such that $\cH_1^\perp=\cH_2$ and let $P_i$ denote the orthogonal projection onto $\cH_i$. If $f\in \cH_1^{2n}$ and $\omega$ is reduced by $\cH_1$, then we may take $\omega_i=\omega\mid_{\cH_i}$ and $Vf=(P_1f,P_2f)$ in Lemma \ref{Lem:GeneralTransform}. Let $U$ be the corresponding map. For $g_1,\dots,g_q\in \cH_2$ we define
\begin{align*}
B&=\{ \Omega \} \cup \bigcup_{b=1}^\infty \{ h_1\otimes_s\cdots\otimes_s h_b\mid h_i\in \cH_1\cap \cD(\omega) \}\\
C&=\{ g_1\otimes_s \cdots \otimes_s g_q \} \cup \bigcup_{b=1}^\infty \{ h_1\otimes_s\cdots \otimes_s h_b\otimes_s g_1\otimes_s \cdots\otimes_s g_q\mid h_i\in \cH_1\cap \cD(\omega) \}.
\end{align*}
If $\psi\in \textup{Span}(B)$ then we may interpret $\psi$ as an element in both $\cF_b(\cH)$ and $\cF_b(\cH_1)$. Using this identification for $\psi$ we find that
\begin{align}\label{eq:SimpelTrans1}
U^*(\psi\otimes (g_1\otimes_s \cdots \otimes_s g_q))&\in \textup{Span}(C).\\
U^*(\psi)&=\psi.\label{eq:SimpelTrans2}\\
\lVert (F_\eta(\alpha,f,\omega)-\lambda)\psi\lVert&=\lVert (F_\eta(\alpha,f,\omega_1)-\lambda)\psi\lVert.\label{eq:SimpelTrans3}
\end{align}
for all $\lambda\in \CC$.
\end{lem}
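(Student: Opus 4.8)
The plan is to unfold the composite unitary $U=U_2U_1\Gamma(V)$ from Lemma \ref{Lem:GeneralTransform} and track its action on the pure tensors occurring in $B$ and $C$; here $Vh=(P_1h,P_2h)$, and since $f\in\cH_1^{2n}$ we have $\widetilde f=P_1f=f$, so Lemma \ref{Lem:GeneralTransform} applies with $\widetilde f=f$ and $\omega_i=\omega|_{\cH_i}$. First I would record the action of each factor on a tensor $h_1\otimes_s\cdots\otimes_s h_b$ with all $h_i\in\cH_1$: the map $\Gamma(V)$ sends it to the same symmetric tensor in $\cF_b(\cH_1\oplus\cH_2)$ with every component in the first summand; the exponential-law isomorphism $U_1$ (Theorem \ref{Thm:ISOTHM1}) maps such a ``purely first-summand'' tensor to $(h_1\otimes_s\cdots\otimes_s h_b)\otimes\Omega$ in $\cF_b(\cH_1)\otimes\cF_b(\cH_2)$; and $U_2$ (Theorem \ref{Thm:ISOTHM2}) maps this to $h_1\otimes_s\cdots\otimes_s h_b$ sitting in the $k=0$ summand $\cF_b(\cH_1)\otimes\cH_2^{\otimes_s 0}\cong\cF_b(\cH_1)$. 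Hence, on the isometrically embedded subspace $\cF_b(\cH_1)\subset\cF_b(\cH)$, $U$ is just the canonical identification with the first (the $k=0$) summand of the target space. Since $\textup{Span}(B)\subset\cF_b(\cH_1)$, this immediately gives $U\psi=\psi$, hence $U^*\psi=\psi$, for $\psi\in\textup{Span}(B)$, which is \eqref{eq:SimpelTrans2}.

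For \eqref{eq:SimpelTrans1} I would run the same computation backwards. For a pure tensor $\psi\otimes(g_1\otimes_s\cdots\otimes_s g_q)$ with $g_j\in\cH_2$ and either $\psi=\Omega$ or $\psi=h_1\otimes_s\cdots\otimes_s h_b$ with $h_i\in\cH_1\cap\cD(\omega)$, the map $U_2^*$ regroups it as $\psi\otimes(g_1\otimes_s\cdots\otimes_s g_q)\in\cF_b(\cH_1)\otimes\cF_b(\cH_2)$ with the second factor a $q$-particle vector, $U_1^*$ turns this into a nonzero scalar multiple of $h_1\otimes_s\cdots\otimes_s h_b\otimes_s g_1\otimes_s\cdots\otimes_s g_q\in\cF_b(\cH_1\oplus\cH_2)$ (the symmetric tensor product now taken across both summands), and $\Gamma(V)^*=\Gamma(V^*)$ carries this to the same scalar multiple of $h_1\otimes_s\cdots\otimes_s h_b\otimes_s g_1\otimes_s\cdots\otimes_s g_q\in\cF_b(\cH)$, which is an element of $C$ (its distinguished element $g_1\otimes_s\cdots\otimes_s g_q$ in the case $\psi=\Omega$). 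Extending by linearity over $\textup{Span}(B)$ gives \eqref{eq:SimpelTrans1}.

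For \eqref{eq:SimpelTrans3} I would first observe that every $\psi\in\textup{Span}(B)$ is a finite-particle vector whose one-particle components lie in $\cH_1\cap\cD(\omega)=\cD(\omega_1)$, so by Proposition \ref{Lem:BasicPropertiesSBmodel} it lies in $\cD(F_\eta(\alpha,f,\omega))$ and, read inside $\cF_b(\cH_1)$, in $\cD(F_\eta(\alpha,f,\omega_1))$. Then I would invoke the fiber decomposition of Lemma \ref{Lem:GeneralTransform}, which with $\widetilde f=f$ reads
\[
UF_\eta(\alpha,f,\omega)U^*=F_\eta(\alpha,f,\omega_1)\oplus\bigoplus_{k=1}^\infty\bigl(F_{(-1)^k\eta}(\alpha,f,\omega_1)\otimes 1+1\otimes d\Gamma^{(k)}(\omega_2)\bigr).
\]
Since $U\psi=\psi$ lies in the $k=0$ summand, on which this operator acts as $F_\eta(\alpha,f,\omega_1)$ (and which it preserves), unitarity of $U$ gives $\lVert(F_\eta(\alpha,f,\omega)-\lambda)\psi\rVert=\lVert(UF_\eta(\alpha,f,\omega)U^*-\lambda)U\psi\rVert=\lVert(F_\eta(\alpha,f,\omega_1)-\lambda)\psi\rVert$ for every $\lambda\in\CC$, which is \eqref{eq:SimpelTrans3}.

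The only delicate point is the bookkeeping of the canonical identifications involved: the embedding $\cF_b(\cH_1)\subset\cF_b(\cH)$, the zero-particle sector of $\cF_b(\cH_2)$, and the combinatorial normalisation constants appearing in the exponential law $U_1$. I do not expect this to be a genuine obstacle here, because \eqref{eq:SimpelTrans1} asserts only membership in $\textup{Span}(C)$, while \eqref{eq:SimpelTrans2} and \eqref{eq:SimpelTrans3} involve only the $k=0$ sector, where all of these constants are trivial; so their explicit values are never needed, only their non-vanishing.
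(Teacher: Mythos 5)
Your proposal is correct and follows essentially the same route as the paper: you unfold $U=U_2U_1\Gamma(V)$ on the pure tensors spanning $B$ to obtain $U^*\psi=\psi$, reverse the same computation to land in $\textup{Span}(C)$, and then combine $U\psi=\psi$ with the fiber decomposition of Lemma \ref{Lem:GeneralTransform} and unitarity for the norm identity. The only cosmetic difference is that the paper delegates the first two steps to Lemma \ref{Thm:ISOTHM3}, which records precisely the inline unwinding you carried out (including the observation that $U^*|_{\cF_b(\cH_1)}=\Gamma(Q_1)$ is the canonical inclusion, and that the combinatorial prefactors from the exponential law in Theorem \ref{Thm:ISOTHM1} are irrelevant since only membership in $\textup{Span}(C)$ is claimed).
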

\begin{proof}
$V$ is clearly unitary and satisfies the properties needed in Lemma \ref{Lem:GeneralTransform}. Let $j:\cH_1\rightarrow \cH_1\oplus \cH_2$ be the embedding $j(f)=(f,0)$ and define $Q=V^*j$. Then $Q$ is the inclusion map from $\cH_1$ into $\cH$. Lemma \ref{Thm:ISOTHM3} immediately yields equation $(\ref{eq:SimpelTrans1})$ and
\begin{align*}
\Gamma(Q)=U^*\mid_{\mathcal{F}_b(\mathcal{H}_1)}.
\end{align*}
This map acts as the identity on the set spanning $B$ proving equation $(\ref{eq:SimpelTrans2})$. To prove equation (\ref{eq:SimpelTrans3}) we note $\psi=UU^*\psi=U\psi$ and so
	\begin{equation*}
	\lVert (F_\eta(\alpha,f,\omega)-\lambda)\psi\lVert=\lVert U(F_\eta(\alpha,f,\omega)-\lambda)U^*U\psi\lVert=\lVert (F_\eta(\alpha,f,\omega_1)-\lambda)\psi\lVert.
	\end{equation*}
	This finishes the proof.
\end{proof}

\begin{lem}\label{Lem:UniformUpperBounds}
Let $\{ f^{k} \}_{k=1}^\infty \subset \cH^{2n}$, $\alpha\in \RR^{2n}$, $\eta\in \RR$ and $\omega$ be a selfadjoint operator on $\cH$. Assume $\cL(f^{k})=\cL(f^{1})$ for all $k\in \NN$ and $(\alpha,f^k,\omega)$ satisfies Hypothesis \ref{Hyp1} and \ref{Hyp3}. Assume furthermore that
\begin{equation*}
C:=\sup_{ k\in \mathbb{N},q\in \{2,\dots,2n\} } \{ \lVert f_q^k\lVert,\lVert \omega^{\pm \frac{1}{2}} f_q^k\lVert,\lVert \omega^{- \frac{1}{2}} f_1^k\lVert ,\lVert f_1^k\lVert \}<\infty.
\end{equation*}
For each $\lambda\in \RR$ there is $K<\infty$ such that
\begin{align*}
 \max\{\lVert \varphi(f^k_q)^j(F_\eta(\alpha,f^k,\omega)+\lambda\pm i)^{-1} \lVert,\lVert d\Gamma(\omega)(F_\eta(\alpha,f^k,\omega)+\lambda\pm i)^{-1}  \lVert \} \leq K
\end{align*}
for all $k\in \NN$, $q\in \{1,\dots,2n\}$ and $j\in \{ 1,\dots, q  \}$.
\end{lem}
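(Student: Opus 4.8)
The plan is to deduce every bound from the relative estimates of Lemma~\ref{Lem:TheRelativeBounds}, applied with a single radius $r$ that is uniform in $k$. First I would record that $\alpha$ and $\eta$ do not depend on $k$ and that, by hypothesis, $\cL(f^k)=\cL(f^1)$ for all $k$; together with Hypothesis~\ref{Hyp1} (which gives $\alpha_j>0$ for $j\in\cL(f^1)\backslash\{2\}$) and the uniform norm bound $C$ in the statement, this shows that
\begin{equation*}
\max_{j\in\cL(f^k)\backslash\{2\}}\alpha_j^{-1}+\lvert\eta\rvert+\lvert\alpha\rvert+\lVert(\omega^{-\frac12}+1)f_1^k\rVert+\sum_{j=2}^{2n}\lVert(\omega^{-\frac12}+1+\omega^{\frac12})f_j^k\rVert
\end{equation*}
is bounded above by a constant independent of $k$. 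Fixing $r$ strictly larger than this bound and $\varepsilon=\tfrac12$, Lemma~\ref{Lem:TheRelativeBounds} then produces a $k$-independent constant $\widetilde C$ with
\begin{equation*}
\max\bigl\{\lVert d\Gamma(\omega)\psi\rVert,\ \lVert\alpha_q\varphi(f_q^k)^q\psi\rVert\bigr\}\le 4\lVert F_\eta(\alpha,f^k,\omega)\psi\rVert+\widetilde C\lVert\psi\rVert
\end{equation*}
for all $q\in\cL(f^k)$ and all $\psi$ in the core $\cN\cap\cD(d\Gamma(\omega))$. Since $d\Gamma(\omega)$ and the $\varphi(f_q^k)^q$ are closed and $F_\eta(\alpha,f^k,\omega)$-bounded, and this set is a core by Proposition~\ref{Lem:BasicPropertiesSBmodel}, a routine density argument extends the inequality to all of $\cD(F_\eta(\alpha,f^k,\omega))$.

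Next I would insert $\psi=(F_\eta(\alpha,f^k,\omega)+\lambda\pm i)^{-1}\chi$ with $\lVert\chi\rVert\le1$. As $F_\eta(\alpha,f^k,\omega)$ is selfadjoint (Proposition~\ref{Lem:BasicPropertiesSBmodel}) and $\lambda\in\RR$, the resolvent exists with norm $\le1$, so $\lVert\psi\rVert\le1$ and $\lVert F_\eta(\alpha,f^k,\omega)\psi\rVert=\lVert\chi-(\lambda\pm i)\psi\rVert\le\lvert\lambda\rvert+2$. The inequality above then bounds both $\lVert d\Gamma(\omega)(F_\eta(\alpha,f^k,\omega)+\lambda\pm i)^{-1}\rVert$ and $\lVert\alpha_q\varphi(f_q^k)^q(F_\eta(\alpha,f^k,\omega)+\lambda\pm i)^{-1}\rVert$ by $K_1:=4(\lvert\lambda\rvert+2)+\widetilde C$; dividing by $\alpha_q\ge r^{-1}$ gives $\lVert\varphi(f_q^k)^q(F_\eta(\alpha,f^k,\omega)+\lambda\pm i)^{-1}\rVert\le rK_1$ for every $q\in\cL(f^k)\backslash\{2\}$.

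It remains to pass from these leading terms to an arbitrary $\varphi(f_q^k)^j$ with $1\le j\le q\le 2n$. For $q\in\{1,2\}$ I would simply invoke Lemma~\ref{Lem:FundamentalIneq}: $\varphi(f_q^k)$ and $\varphi(f_q^k)^2$ are $d\Gamma(\omega)$-bounded with constants controlled by $\lVert(\omega^{-\frac12}+1)f_q^k\rVert\le 2C$, so applying them to the resolvent yields a bound in terms of $C$ and $K_1$. For $q\ge3$ I would use the structure of $\cL(f)$: there is $q'\in\cL(f^k)$ with $q'\ge q$ and $f_{q'}^k=f_q^k$ — take $q'=q$ if $q\in\cL(f^k)$, and $q'=\max\{i: f_i^k=f_q^k\}$ otherwise — and $q'\ge q\ge 3$ forces $q'\in\cL(f^k)\backslash\{2\}$, so $\psi:=(F_\eta(\alpha,f^k,\omega)+\lambda\pm i)^{-1}\chi$ lies in $\cD(\varphi(f_{q'}^k)^{q'})$ by Proposition~\ref{Lem:BasicPropertiesSBmodel}. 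Since $\varphi(f_q^k)=\varphi(f_{q'}^k)$, the elementary pointwise inequality $t^{2j}\le 1+t^{2q'}$ for $j\le q'$ together with the spectral theorem gives $\lVert\varphi(f_q^k)^j\psi\rVert^2\le\lVert\psi\rVert^2+\lVert\varphi(f_{q'}^k)^{q'}\psi\rVert^2\le 1+r^2K_1^2$. Taking $K$ to be the maximum of the finitely many constants obtained above would complete the proof.

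I expect the only genuinely delicate points to be the bookkeeping that makes $r$ uniform in $k$ — which rests on $\alpha$ and $\eta$ being fixed and on $\cL(f^k)$ being independent of $k$ — and the density/closedness step that lifts Lemma~\ref{Lem:TheRelativeBounds} from its core to the full domain of $F_\eta(\alpha,f^k,\omega)$; the combinatorial reduction to leading terms and the resolvent estimates are routine.
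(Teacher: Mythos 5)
Your proof is correct and follows essentially the same route as the paper: fix a $k$-independent radius $r$ so Lemma~\ref{Lem:TheRelativeBounds} applies uniformly, extend the resulting bound from the core to $\cD(F_\eta(\alpha,f^k,\omega))$, apply it to the resolvent, and then reduce general $\varphi(f_q^k)^j$ either to the $d\Gamma(\omega)$ bound via Lemma~\ref{Lem:FundamentalIneq} or to a leading term in $\cL(f^k)\backslash\{2\}$ via the elementary inequality $t^{2j}\le 1+t^{2p}$. The only cosmetic difference is that the paper splits cases on $j\le 2$ versus $j\ge 3$ while you split on $q\le 2$ versus $q\ge 3$; both decompositions cover all pairs $(q,j)$ with $j\le q$, and the resulting constants are of the same flavour.
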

\begin{proof}
Define
\begin{equation*}
r=  2n3C+\lvert \alpha\lvert+\lvert \eta \lvert+\left (\max_{q\in \cL(f^1)\backslash\{2\}}\alpha_q^{-1}\right )+1
\end{equation*}
and $\varepsilon=\frac{1}{2}$. By Lemma \ref{Lem:TheRelativeBounds} there is a $k$ independent constant $\widetilde{C}>0$ such that
\begin{equation*}
\max\{\lVert \alpha_q \varphi(f_q^k)^q\psi \lVert, \lVert d\Gamma(\omega)\psi \lVert\} \leq 4\lVert F_{\eta}(\alpha,f^k,\omega) \psi\lVert +\widetilde{C}\lVert \psi\lVert
\end{equation*}
for all $\psi \in \cN\cap \cD( d\Gamma(\omega) )$, $q\in \cL(f^1)\backslash \{2\}$ and $k\in \NN$. Now $\cD(d\Gamma(\omega))\cap \cN$ is a core for $F_{\eta}(\alpha,f^k,\omega)$ and so the inequality extends to all $\psi\in \cD(F_{\eta}(\alpha,f^k,\omega))$. Using 
\begin{align*}
\lVert F_{\eta}(\alpha,f^k,\omega)(F_{\eta}(\alpha,f^k,\omega)\pm i+\lambda)^{-1}\lVert\leq 2+\lvert \lambda\lvert 
\end{align*}
and $ \alpha_q^{-1}\leq r$ for all $q\in \cL(f^1)\backslash \{2\}$ we obtain the following bounds
\begin{align}\label{eq:fundamentalresest1}
\lVert \varphi(f_q^k)^q(F_{\eta}(\alpha,f^k,\omega)\pm i+\lambda)^{-1}\psi \lVert&\leq r(8+4\lvert \lambda\lvert +\widetilde{C})\lVert  \psi\lVert\\\label{eq:fundamentalresest2} \lVert d\Gamma(\omega)(F_{\eta}(\alpha,f^k,\omega)\pm i+\lambda)^{-1}\psi \lVert&\leq (8+4\lvert \lambda\lvert +\widetilde{C})\lVert  \psi\lVert
\end{align}
for all $\psi\in \cD(F_{\eta}(\alpha,f^k,\omega))$, $q\in \cL(f^1)\backslash \{2\}$ and $k\in \NN$. Let $q\in \{1,\dots,2n\}$, $j\in \{1,\dots, q  \}$ and $k\in \NN$. If $j\leq 2$ and $\psi\in \cD(F_{\eta}(\alpha,f^k,\omega))\subset \cD(d\Gamma(\omega))$ we apply Lemma \ref{Lem:FundamentalIneq} and obtain
\begin{equation*}
\lVert \varphi(f_q^k)^j\psi \lVert\leq 15(2r)^j \lVert (d\Gamma(\omega)+1)^{j/2}\psi \lVert\leq 60r^2 \lVert (d\Gamma(\omega)+1)\psi \lVert. 
\end{equation*}
Equation (\ref{eq:fundamentalresest2}) now gives $\lVert \varphi(f_q^k)^j(F_{\eta}(\alpha,f^k,\omega)\pm i+\lambda)^{-1} \lVert\leq 60r^2(8+4\lvert \lambda\lvert +\widetilde{C}+1)$ which is a uniform upper bound. If $j\geq 3$ we may find $p\in \cL(f^1)\backslash \{2\}$ such that $j\leq q\leq p$ and $f^k_p=f^k_q$. For $\psi\in \cD(F_{\eta}(\alpha,f^k,\omega))\subset \cD(\varphi(f^k_p)^p)$ we have
\begin{equation*}
\lVert \varphi(f_q^k)^j\psi \lVert\leq \lVert \varphi(f_p^k)^p\psi \lVert+\lVert \psi\lVert. 
\end{equation*}
Using equation (\ref{eq:fundamentalresest1}) we find $\lVert \varphi(f_q^k)^j(F_{\eta}(\alpha,f^k,\omega)\pm i+\lambda)^{-1} \lVert\leq r(8+4\lvert \lambda\lvert +\widetilde{C})+1$ which is a uniform upper bound.
\end{proof}

\noindent Next is a crucial result regarding convergence of operators.
\begin{lem}\label{Lem:ResolevntKonv}
Assume $\cH=L^2(\cM,\cF,\mu)$ where $(\cM,\cF,\mu)$ is $\sigma$-finite. Let  $\alpha\in \RR^{2n}$, $\eta\in \RR$, $\omega,\omega_1,\omega_2,\dots$ be a collection of multiplication operators on $L^2(\cM,\cF,\mu)$ and $f,f^1,f^2,\dots$ be a collection of elements from $\cH^{2n}$ such that $\cL(f)=\cL(f^{k})$ for all $k\in \NN$. Assume that $(\alpha,f,\omega)$ and $(\alpha,f^k,\omega_k)$ satisfies Hypothesis \ref{Hyp1} and \ref{Hyp3} for all $k\in \NN$. Assume also 
\begin{equation*}
\lim_{k\rightarrow \infty }\frac{\omega_k}{\omega}=1=\lim_{k\rightarrow \infty }\frac{\omega}{\omega_k}
\end{equation*}
in $L^{\infty }(\cM,\cF,\mu)$ and that
\begin{align}\label{eq:Konv1}
\lim_{k\rightarrow \infty }f^{k}_1=f_1 && \lim_{k\rightarrow \infty }\omega_k^{-\frac{1}{2}}f^{k}_1=\omega^{-\frac{1}{2}}f_1\\\label{eq:Konv2}
\lim_{k\rightarrow \infty }f^{k}_j=f_j && \lim_{k\rightarrow \infty }\omega_k^{\pm \frac{1}{2}}f^{k}_j=\omega^{\pm \frac{1}{2}}f_j
\end{align}
in $\cH$ for all $j\in \{2,3,\dots, 2n  \}$. If $n> 2$ we assume in addition that there is a function $h:\cM\rightarrow S^1\subset \CC$ such that $hf$ and $hf_k$ are almost surely $\RR^{2n}$-valued for all $k\in \NN$. Then $F_\eta(\alpha,f^{k},\omega_k)-\lambda_k$ converges to $F_\eta(\alpha,f,\omega)-\lambda$ in norm resolvent sense whenever $\{ \lambda_k \}_{k=1}^\infty\subset \RR$ converges to $\lambda$. 
\end{lem}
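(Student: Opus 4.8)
The plan is to obtain norm resolvent convergence from two elementary reductions followed by a single application of the second resolvent identity; the genuine work is a uniform bound on commutators of field operators.

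I would first dispose of the shift parameters and of the variable dispersion relation. If one knows $F_\eta(\alpha,f^k,\omega_k)\to F_\eta(\alpha,f,\omega)$ in norm resolvent sense, then for $\lambda_k\to\lambda$ one considers the resolvents at $w_k:=\lambda_k+i$ and $w:=\lambda+i$; the first resolvent identity together with the self-adjointness bound $\|(F_\eta(\alpha,f^k,\omega_k)-w_k)^{-1}\|\le 1$ shows that $(F_\eta(\alpha,f^k,\omega_k)-w_k)^{-1}-(F_\eta(\alpha,f,\omega)-w)^{-1}\to 0$ in norm, so it suffices to take $\lambda_k=\lambda=0$. Next, since $\omega_k/\omega\to 1$ and $\omega/\omega_k\to 1$ in $L^\infty$, for large $k$ one has $\tfrac12\omega\le\omega_k\le 2\omega$, hence $\cD(d\Gamma(\omega_k))=\cD(d\Gamma(\omega))$, $\cD(F_\eta(\alpha,f^k,\omega_k))=\cD(F_\eta(\alpha,f^k,\omega))$, and on the common core $\cN\cap\cD(d\Gamma(\omega))$ one has $F_\eta(\alpha,f^k,\omega_k)=F_\eta(\alpha,f^k,\omega)+d\Gamma(\omega_k-\omega)$. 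On each $n$-particle sector $\bigl|\sum_j(\omega_k-\omega)(x_j)\bigr|\le\|\omega_k/\omega-1\|_\infty\sum_j\omega(x_j)$ (using $\omega\ge 0$), so $d\Gamma(\omega_k-\omega)$ is $d\Gamma(\omega)$-bounded with bound $\|\omega_k/\omega-1\|_\infty\to 0$; combined with the uniform bound $\|d\Gamma(\omega)(F_\eta(\alpha,f^k,\omega)+i)^{-1}\|\le K$ from Lemma~\ref{Lem:UniformUpperBounds} and the second resolvent identity, this shows $F_\eta(\alpha,f^k,\omega_k)$ and $F_\eta(\alpha,f^k,\omega)$ share the same norm resolvent limit. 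It thus remains to prove $B_k:=F_\eta(\alpha,f^k,\omega)\to A:=F_\eta(\alpha,f,\omega)$ in norm resolvent sense.

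For this I would use $(B_k+i)^{-1}-(A+i)^{-1}=\sum_{i=1}^{2n}\alpha_i(B_k+i)^{-1}\bigl(\varphi(f_i)^i-\varphi(f_i^k)^i\bigr)(A+i)^{-1}$ (valid first on a common core, then extended since $A-B_k$ is $A$-bounded), so it is enough that each summand tends to $0$ in norm. Expanding $\varphi(f_i)^i-\varphi(f_i^k)^i=\sum_{l=0}^{i-1}\varphi(f_i)^l\varphi(f_i-f_i^k)\varphi(f_i^k)^{i-1-l}$ by linearity of $\varphi$, the terms with $i\le 2$ involve at most two field operators and follow directly from Lemma~\ref{Lem:FundamentalIneq}, since $(d\Gamma(\omega)+1)(A+i)^{-1}$ is bounded and $\|(\omega^{-1/2}+1)(f_i-f_i^k)\|\to 0$ by \eqref{eq:Konv1}--\eqref{eq:Konv2}. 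For $i\ge 3$ I would use that $f_i=f_p$ and $f_i^k=f_p^k$ for a leading index $p=p(i)\in\cL(f)=\cL(f^k)$ with $p\ge i$ even and $\alpha_p>0$, so every field argument occurring lies in $\mathrm{span}\{f_p,f_p^k\}$ and $\varphi(f_p)^p$, $\varphi(f_p^k)^p$ are the leading terms of $A$, $B_k$. One then moves the small factor $\varphi(f_i-f_i^k)$ next to $(A+i)^{-1}$ at the cost of scalar commutators of size $O(\textup{Im}\langle f_i,f_i^k\rangle)\to 0$ (these vanish identically when $n>2$, since the common phase function forces $\langle f_i,f_j^k\rangle\in\RR$), estimates $\varphi(f_i-f_i^k)(A+i)^{-1}$ by $2\|(\omega^{-1/2}+1)(f_i-f_i^k)\|\,\|(d\Gamma(\omega)+1)^{1/2}(A+i)^{-1}\|\to 0$, and bounds the remaining products uniformly in $k$: these are monomials of degree $<p$ in the almost commuting operators $\varphi(f_p),\varphi(f_p^k)$, hence dominated by $\varphi(f_p)^p+\varphi(f_p^k)^p+C$, and Lemma~\ref{Lem:UniformUpperBounds}, applied both to the constant sequence $f$ and to $f^k$ (legitimate since $\cL(f^k)$ is constant and the relevant norms of $f^k_j$ are bounded by \eqref{eq:Konv1}--\eqref{eq:Konv2}), gives $\|\varphi(f_p)^j(A+i)^{-1}\|,\|\varphi(f_p^k)^j(B_k+i)^{-1}\|\le K$ for $j\le p$.

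The step I expect to be the main obstacle is exactly this last estimate: a product of $m\ge 3$ field operators is not controlled by a single power of $d\Gamma(\omega)$, so it cannot be bounded by $\|(d\Gamma(\omega)+1)^{m/2}(A+i)^{-1}\|$, which is infinite in general once $n\ge 2$. The remedy is to exploit that the (at most $2n$) field arguments collapse onto the two-dimensional span of $f_p$ and a ``new'' direction coming from $f_i^k-f_i$: writing $f_i^k=f_i+(f_i^k-f_i)$ and decomposing $f_i^k-f_i$ along $f_p$ and a direction $w_k$ complex-orthogonal to $f_p$, every term becomes a monomial $\varphi(f_p)^a\varphi(w_k)^b$ with $a+b\le p$ and a coefficient that is a null sequence once $\|(\omega^{-1/2}+1)w_k\|$ is absorbed. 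Since $\varphi(f_p)$ and $\varphi(w_k)$ commute and $\cD(A)$ carries $\varphi(f_p)^p$-regularity, these monomials are meaningful on $\cD(A)$, the $\varphi(w_k)$-powers being controlled by Lemma~\ref{Lem:FundamentalIneq} and the $\varphi(f_p)$-powers by the leading term; the ordering of the factors (the small, regular operator acting first on $\cD(A)$) and the requisite mixed-regularity inclusions are supplied by the estimates of Section~4 and the pull-through machinery of Appendix~D. For $n>2$ the common phase function does double duty: it supplies Hypothesis~\ref{Hyp2}, needed to invoke Lemmas~\ref{Lem:DominatingEstimate}--\ref{Lem:UniformUpperBounds}, and it makes all field operators commute, removing the commutator corrections entirely.
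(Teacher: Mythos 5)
Your two reduction steps (shifting out $\lambda_k\to\lambda$ by the first resolvent identity, and replacing $\omega_k$ by $\omega$ via $\|\omega_k/\omega-1\|_\infty\to 0$ together with the uniform bound $\|d\Gamma(\omega)(F_\eta(\alpha,f^k,\omega)+i)^{-1}\|\le K$ from Lemma~\ref{Lem:UniformUpperBounds}) mirror the paper's and are fine. For $n\le 2$ your direct second-resolvent argument is essentially the paper's: telescope $\varphi(f_i)^i-\varphi(f_i^k)^i$ and split the at most four field factors between $(B_k-i)^{-1}$ and $(A+i)^{-1}$ so that the side carrying the small factor $\varphi(f_i-f_i^k)$ has at most two field operators, which Lemma~\ref{Lem:FundamentalIneq} controls; the other side is a pure power of $\varphi(f_j^k)$ or $\varphi(f_j)$ hitting its own resolvent, covered by Lemma~\ref{Lem:UniformUpperBounds}.

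The gap is the $n>2$ case, and it is exactly where you flag ``the main obstacle'', but your proposed remedy does not close it. After commuting $\varphi(f_i-f_i^k)$ next to $(A+i)^{-1}$ you are left with a mixed monomial $\varphi(f_p)^a\varphi(f_p^k)^c$ of total degree up to $2n-1\ge 5$ that must be bounded uniformly in $k$ on the range of $\varphi(f_i-f_i^k)(A+i)^{-1}$. None of the cited lemmas supplies this: Lemma~\ref{Lem:FundamentalIneq} handles at most two field factors; Lemma~\ref{Lem:UniformUpperBounds} controls only pure powers $\varphi(f_p)^j(A+i)^{-1}$ and $\varphi(f_p^k)^j(B_k+i)^{-1}$, never a mixed monomial, and never $\varphi(f_p^k)^j$ against $(A+i)^{-1}$. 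The ``dominated by $\varphi(f_p)^p+\varphi(f_p^k)^p+C$'' step is a quadratic-form inequality between commuting self-adjoint operators; it does not produce the operator-norm estimate you need, and even as a form bound it assigns one leading term to $A$ and the other to $B_k$, whose resolvents sit on opposite sides of the product. Decomposing $f_p^k$ as $(1+c_k)f_p+w_k$ with $w_k\perp f_p$ does not help either: $\varphi(w_k)^b$ is small in the sense of its coefficient, but it is still only $(d\Gamma(\omega)+1)^{b/2}$-bounded, and $(A+i)^{-1}$ does not map into $\cD(d\Gamma(\omega)^{b/2})$ for $b>2$ once $n\ge 2$. This is precisely why the paper abandons the direct estimate for $n>2$ and switches method entirely: under Hypothesis~\ref{Hyp4} it transfers to Q-space where the interactions $VI(f^k)V^*$ become real multiplication operators, uses the hypercontractive semigroup bounds of Theorem~\ref{Thm:PropertiesOfQspace} (this is what forces $m(\omega)>0$ in Hypothesis~\ref{Hyp3}) together with \cite[Theorem X.60]{RS2} to get norm resolvent convergence of $F_0(\alpha,f^k,\omega)$, and then reinstates the bounded perturbation $\eta\Gamma(-1)$ by Lemma~\ref{Lem:KonvOfBoundedPer}. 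You invoke the phase function only to make the field operators commute; commutativity alone is not enough, and the hypercontractivity is the decisive missing ingredient.
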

\begin{proof}
For convenience we will sometimes write $\omega=\omega_\infty$ or $f=f^{\infty}$ throughout this proof. We check convergence at the point $i$ in the resolvent set. Since $\omega_k/\omega$ and $\omega/\omega_k$ are essentially bounded functions we see $(\alpha,f^k,\omega)$ fulfils Hypothesis \ref{Hyp1} and \ref{Hyp3}. Furthermore, the limits in equations (\ref{eq:Konv1}) and (\ref{eq:Konv2}) also exist if we write $\omega$ instead of $\omega_k$ since $\omega_k/\omega$ and $\omega/\omega_k$ converges to 1 in $L^{\infty }(\cM,\cF,\mu)$. We now prove
\begin{equation}\label{eq:Resdiff}
(F_\eta(\alpha,f^k,\omega_k)+\lambda_k-i)^{-1}-(F_\eta(\alpha,f^k,\omega)+\lambda-i)^{-1}
\end{equation}
converges to 0 as $k$ tends to $\infty$ since this will reduce the problem to the case where $\omega_k=\omega$ and $\lambda_k=\lambda=0$ for all $k\in \NN$. For any $\psi \in \cF_b(\cH)$ and $k,k'\in \NN\cup \{ \infty \}$ have
\begin{align*}
\sum_{\ell=1}^{\infty}&\int_{\cM^\ell}(\omega_k(k_1)+\cdots+\omega_k(k_\ell))^2\lvert \psi^{(\ell)}(k_1,\dots,k_\ell)\lvert^2  d\mu^{\otimes \ell}(k_1,\dots,k_\ell)\\& \leq \left \lVert \frac{\omega_{k}}{\omega_{k'}}\right \lVert_{\infty}^2\sum_{\ell=1}^{\infty}\int_{\cM^\ell}(\omega_{k'}(k_1)+\cdots+\omega_{k'}(k_\ell))^2\lvert \psi^{(\ell)}(k_1,\dots,k_\ell)\lvert^2 d\mu^{\otimes \ell}(k_1,\dots,k_\ell).
\end{align*}
so $\cD(d\Gamma(\omega_k))=\cD(d\Gamma(\omega))$ for all $k\in \mathbb{N}$. On this set $\lVert (d\Gamma(\omega_k)-d\Gamma(\omega)) \psi \lVert^2$ is now estimated by
\begin{align*}
\sum_{\ell=1}^{\infty}&\int_{\cM^\ell}(\omega_k(k_1)-\omega(k_1)+\cdots+\omega_k(k_\ell)-\omega(k_\ell))^2\lvert \psi^{(\ell)}(k_1,\dots,k_\ell)\lvert^2  d\mu^{\otimes \ell}(k_1,\dots,k_\ell)\\& \leq \left \lVert \frac{\omega_{k}-\omega}{\omega}\right \lVert_{\infty}^2\sum_{\ell=1}^{\infty}\int_{\cM^\ell}(\omega(k_1)+\cdots+\omega(k_\ell))^2\lvert \psi^{(\ell)}(k_1,\dots,k_\ell)\lvert^2 d\mu^{\otimes \ell}(k_1,\dots,k_\ell).
\end{align*}
Defining $C_k=\lVert \frac{\omega_{k}-\omega}{\omega} \lVert_{\infty}$ we find 
\begin{align*}
\lVert (F_\eta(\alpha,f_k,\omega_k)+\lambda_n-i)^{-1}&-(F_\eta(\alpha,f_k,\omega)+\lambda-i)^{-1}\lVert\\& \leq \lvert \lambda_k-\lambda\lvert+C_k \lVert d\Gamma(\omega)(F_\eta(\alpha,f_k,\omega)+\lambda-i)^{-1}\lVert.
\end{align*}
$\lVert d\Gamma(\omega)(F_\eta(\alpha,f_k,\omega)+\lambda-i)^{-1}\lVert$ is uniformly bounded by Lemma \ref{Lem:UniformUpperBounds} and $C_k$ converges to 0. Thus the operator in equation (\ref{eq:Resdiff}) converges to 0 and so we have reduced to the case where $\omega_k=\omega$ and $\lambda_k=\lambda=0$ for all $k\in \NN$.

Assume $n>2$ and let $\cH_\RR$ be the real Hilbert space from Lemma \ref{Lem:ExsistenceOfNiceRealSubspace} corresponding to the elements $f^k_j$ for $k\in \NN\cup \{ \infty \}$ and $j\in \{1,\dots,2n\}$. Let $L^2(X,\cX,\QQ)$ be a Q-space corresponding to $\cH_\RR$ and $V$ be the unitary map from Theorem \ref{Thm:PropertiesOfQspace}. Define
\begin{equation}\label{eq:defnIK}
I(f^k)=\alpha_1\varphi(f_1^k)+\sum_{j=2}^{2n}\alpha_j\varphi(f_j^k)
\end{equation}
for all $k\in \NN\cup \{\infty \}$. By Theorem \ref{Thm:PropertiesOfQspace} we know that $Ve^{-t d\Gamma(\omega)}V^*$ is hypercontractive and the interaction terms $VI(f^k)V^*$ are a multiplication operators. Convergence in norm resolvent sense now follows from Theorem \ref{Thm:PropertiesOfQspace} and \cite[Theorem X.60]{RS2} if $\eta=0$. For $\eta \neq 0$ we apply Lemma \ref{Lem:KonvOfBoundedPer}.

Assume now $n\leq 2$ and define $I(f^k)$ as in equation (\ref{eq:defnIK}) for all $k\in \NN\cup \{\infty \}$. Write $F(f):=F_\eta(\alpha,f,\omega)$ and $F(f^k):=F_\eta(\alpha,f^k,\omega)$ for all $k\in \NN$. Define
\begin{align*}
C_k&=  \max_{b\in \{ 0,1 \}  }  \{ \lVert  \varphi(f-f^k)\varphi(f_k)^b(d\Gamma(\omega)+1)^{-1}\lVert, \lVert  \varphi(f-f^k)\varphi(f)^b(d\Gamma(\omega)+1)^{-1}\lVert  \}   \\
D&=\sup_{a\in \{ 0, \dots ,3 \},  k\in \NN\cup \{\infty \}}  \{ \lVert  \varphi(f_k)^a (F(f^k)\pm i)^{-1} \lVert, \lVert (d\Gamma(\omega)+1)(F(f^k)\pm i)^{-1}  \lVert  \}<\infty
\end{align*}
where $D<\infty$ follows from Lemma \ref{Lem:UniformUpperBounds}. On $\cN$ we may calculate
\begin{align*}
I(f^{k})-I(f)&=\alpha_1\varphi(f^k_1-f)+\alpha_2(\varphi(f^k_2)\varphi(f^{k}_2-f_2)+\varphi(f^{k}_2-f_2)\varphi(f_2))\\&+\alpha_3(\varphi(f^k_3)^2\varphi(f^{k}_3-f_3)+\varphi(f^k_3)\varphi(f^{k}_3-f_3)\varphi(f_3)+\varphi(f^{k}_3-f_3)\varphi(f_3)^2)\\&+\alpha_4\varphi(f^k_4)^3\varphi(f^{k}_4-f_4)+\alpha_4\varphi(f^k_4)^2\varphi(f^{k}_4-f_4)\varphi(f_4)\\&+\alpha_4\varphi(f^k_4)\varphi(f^{k}_4-f_4)\varphi(f_4)^2+\alpha_4\varphi(f^{k}_4-f_4)\varphi(f_4)^3.
\end{align*}
Let $k\in \mathbb{N}$ and define $\cA=(F(f)-i)(\cD(d\Gamma(\omega))\cap \cN) $. Then $\cA$ is a dense subspace of $\cF_b(\cH)$ since $\cD(d\Gamma(\omega))\cap \cN$ is a core for $F(f)$. Let $\phi\in \cF_b(\cH)$ and $\psi\in \cA$. Then $(F(f)-i)^{-1} \psi\in \cD(d\Gamma(\omega))\cap \cN\subset \cD(F(f^{k}))$ and so
\begin{align*}
\langle \phi, ((F(f^k)-i)^{-1}&-(F(f)-i)^{-1}) \psi \rangle\\&=\langle (F(f^k)+i)^{-1}\phi ,  (I(f)-I(f^{k}))(F(f)-i)^{-1} \psi \rangle.
\end{align*}
This is a sum of 10 terms of the form
\begin{align*}
&-\alpha_j \langle \varphi(f^k_j)^a (F(f^k)+i)^{-1}\phi,\varphi(f^k_j-f_j) \varphi(f_j)^b (F(f)-i)^{-1}\psi \rangle\\ &-\alpha_j\langle \varphi(f_j-f_j^k)\varphi(f^k_j)^b (F(f^k)+i)^{-1}\phi, \varphi(f_j)^a (F(f)-i)^{-1}\psi \rangle.
\end{align*}
with $a\in \{0,1,2,3\}$ and $b\in \{0,1 \}$. Hence we see that
\begin{equation*}
\lvert \langle \phi, ((F(f^k)-i)^{-1}-(F(f)-i)^{-1}) \psi \rangle \lvert\leq 10 \max\{ \lvert \alpha_1\lvert,\lvert \alpha_2\lvert,\lvert \alpha_3\lvert,\lvert \alpha_4\lvert \}D^2C_k\lVert \psi \lVert \lVert \phi \lVert. 
\end{equation*}
Using that $\cA$ is dense we find
\begin{align*}
\lVert (F(f^k)-i)^{-1}-(F(f)-i)^{-1} \lVert\leq 10 \max\{ \lvert \alpha_1\lvert,\lvert \alpha_2\lvert,\lvert \alpha_3\lvert,\lvert \alpha_4\lvert \}D^2C_k.
\end{align*}
$C_k$ converges to $0$ by Lemma \ref{Lem:FundamentalIneq} which finishes the proof.
\end{proof}

\begin{lem}\label{Lem:SimpleConvApplication}
Let $\cH=L^2(\cM,\cF,\mu)$ where $(\cM,\cF,\mu)$ is $\sigma$-finite, $\alpha\in \RR^{2n}$, $\eta\in \RR$, $f\in \cH^{2n}$ and $\omega:\cM\rightarrow \RR$ be measurable. Assume $(\alpha,f,\omega)$ satisfies Hypothesis \ref{Hyp1}, \ref{Hyp3} and \ref{Hyp4}. Let $\{ A_n \}_{n=1}^\infty$ be an increasing sequence of sets covering $\cM$ up to a zeroset and define $f^k=1_{A_k}f$. Then $(\alpha,f^k,\omega)$ satisfies Hypothesis \ref{Hyp1}, \ref{Hyp3} and \ref{Hyp4} for all $k\in \NN$. Furthermore, $F_\eta(\alpha,f^k,\omega)$ is uniformly bounded from below and converges to $F_\eta(\alpha,f,\omega)$ in norm resolvent sense. In particular
\begin{equation*}
\lim_{k\rightarrow \infty} \cE_\eta(\alpha,f^k,\omega)=\cE_\eta(\alpha,f,\omega)
\end{equation*}
and if $\{ \lambda_k \}_{k=1}^\infty\subset \RR$ converges to $\lambda$ where $\lambda_k\in \sigma_{\textup{ess}}(  F_\eta(\alpha,f^k,\omega)  )$ for all $k\in \NN$ then $\lambda\in  \sigma_{\textup{ess}}(  F_\eta(\alpha,f,\omega)  )$.
\end{lem}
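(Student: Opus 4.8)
The plan is to put the cut-off couplings into the framework of Lemma \ref{Lem:ResolevntKonv} after checking the hypotheses, to establish a $k$-independent lower bound by hand, and then to deduce the two consequences from norm resolvent convergence.

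First I would verify that $(\alpha,f^k,\omega)$ satisfies Hypotheses \ref{Hyp1}, \ref{Hyp3} and \ref{Hyp4}. The key point is that $\cL(f^k)\subseteq\cL(f)$ for every $k$: if $f_i=f_j$ then $f^k_i=1_{A_k}f_i=1_{A_k}f_j=f^k_j$, so a non-leading index for $f$ remains non-leading for $f^k$. Hence the parity and positivity conditions in part (1) of Hypothesis \ref{Hyp1} are inherited, and parts (2)--(3) are immediate because $\omega$ is unchanged and $|1_{A_k}|\le1$ pointwise. For $n\le2$ Hypotheses \ref{Hyp3} and \ref{Hyp4} are vacuous. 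For $n>2$, a phase function $h$ for $f$ (given by Hypothesis \ref{Hyp4}) is also a phase function for $f^k$ since $hf^k=1_{A_k}(hf)$ is $\RR^{2n}$-valued a.e., which gives Hypothesis \ref{Hyp4} for $(f^k,\omega)$; and the same $h$ yields Hypothesis \ref{Hyp2} for $(f^k,\omega)$, because writing $f_i=\bar h\,(hf_i)$ with $hf_i$ real shows $\overline{f_i}f_j=(hf_i)(hf_j)$ is pointwise real, so $\langle f^k_i,g(\omega)f^k_j\rangle=\int_{A_k}(hf_i)(hf_j)\,(g\circ\omega)\,d\mu\in\RR$; together with $m(\omega)>0$ this gives Hypothesis \ref{Hyp3} for $n>2$.

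Next, for the uniform lower bound I would work on $\cN\cap\cD(d\Gamma(\omega))$, which is a core for $F_\eta(\alpha,f^k,\omega)$ by Proposition \ref{Lem:BasicPropertiesSBmodel}, and split $F_\eta(\alpha,f^k,\omega)=\eta\Gamma(-1)+\bigl(d\Gamma(\omega)+\alpha_1\varphi(f^k_1)\bigr)+\sum_{i=2}^{2n}\alpha_i\varphi(f^k_i)^i$. Here $\eta\Gamma(-1)\ge-|\eta|$; the last inequality in Lemma \ref{Lem:FundamentalIneq} gives $d\Gamma(\omega)+\alpha_1\varphi(f^k_1)\ge-\alpha_1^2\|\omega^{-1/2}f^k_1\|^2\ge-\alpha_1^2\|\omega^{-1/2}f_1\|^2$; and since $f^k\in\cK(\alpha)$, Lemma \ref{Lem:FundamentalLowerbound} applied with $A(v)=\varphi(v)$ gives $\sum_{i=2}^{2n}\alpha_i\varphi(f^k_i)^i\ge C(\alpha)$. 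Adding these form bounds on the core and extending by selfadjointness produces a lower bound for $F_\eta(\alpha,f^k,\omega)$ independent of $k$.

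For norm resolvent convergence I would apply Lemma \ref{Lem:ResolevntKonv} with $\omega_k=\omega$ and $\lambda_k=\lambda=0$: since $A_k\uparrow\cM$ modulo a nullset, $1_{A_k}\to1$ a.e., so by dominated convergence $f^k_1\to f_1$, $\omega^{-1/2}f^k_1\to\omega^{-1/2}f_1$, $f^k_j\to f_j$ and $\omega^{\pm1/2}f^k_j\to\omega^{\pm1/2}f_j$ in $\cH$, while $\omega_k/\omega\equiv1$ and the phase-function requirement for $n>2$ is met by $h$. The one subtlety is that Lemma \ref{Lem:ResolevntKonv} requires $\cL(f^k)=\cL(f)$ for all indices, whereas $\cL(f^k)$ could be strictly smaller for small $k$; but $\|1_{A_k}(f_i-f_j)\|\to\|f_i-f_j\|$, so $\cL(f^k)=\cL(f)$ for $k\ge K_0$ for some $K_0$, and applying Lemma \ref{Lem:ResolevntKonv} to the tail $(f^k)_{k\ge K_0}$ suffices. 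Finally, the convergence $\cE_\eta(\alpha,f^k,\omega)\to\cE_\eta(\alpha,f,\omega)$ follows from norm resolvent convergence combined with the uniform lower bound (the latter keeps the infima from running off to $-\infty$, so every subsequential limit lies in $\sigma(F_\eta(\alpha,f,\omega))$ hence is $\ge\cE_\eta(\alpha,f,\omega)$, while norm resolvent convergence also gives $\limsup_k\cE_\eta(\alpha,f^k,\omega)\le\cE_\eta(\alpha,f,\omega)$), and the statement about $\sigma_{\textup{ess}}$ is the standard stability of the essential spectrum under norm resolvent convergence. I expect the bookkeeping of the first step --- making sure Hypotheses \ref{Hyp2} and \ref{Hyp3} survive the cut-off (this is exactly where the phase function is needed) and dealing with the possible shrinking of $\cL(f^k)$ --- to be the main obstacle.
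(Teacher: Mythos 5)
Your proof is correct and follows the same route as the paper: verify the hypotheses for the cut-off couplings, cite Lemma \ref{Lem:ResolevntKonv} for norm resolvent convergence, combine Lemmas \ref{Lem:FundamentalIneq} and \ref{Lem:FundamentalLowerbound} for the uniform lower bound, and deduce the two consequences from the stability of spectrum and essential spectrum under norm resolvent convergence. You also correctly flag and repair a point the paper leaves implicit: $\cL(f^k)$ may be a proper subset of $\cL(f)$ for small $k$ (when some nonzero $f_i-f_j$ is supported outside $A_k$), so Lemma \ref{Lem:ResolevntKonv} as stated applies only after passing to the tail $k\ge K_0$, which is exactly the fix you supply.
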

\begin{proof}
$(\alpha,f^k,\omega)$ satisfies Hypothesis \ref{Hyp1} obviously. If $n\leq 2$ then Hypothesis \ref{Hyp3} and \ref{Hyp4} are automatically fulfilled. If $n>2$ the phase function for $f$ is also a phase function for $f^k$. Since Hypothesis \ref{Hyp4} parts (2) and (3) implies Hypothesis \ref{Hyp2}, we have proven Hypothesis 3 holds as well. In conclusion, we have proven that $(\alpha,f^k,\omega)$ satisfies Hypothesis \ref{Hyp1}, \ref{Hyp3} and \ref{Hyp4}.

Norm resolvent convergence follows directly from Lemma \ref{Lem:ResolevntKonv}. Write
\begin{equation*}
F_\eta(\alpha,f^k,\omega)=\eta\Gamma(-1)+d\Gamma(\omega)+\alpha_1\varphi(f_1^{k})+\sum_{j=2}^{2n}\alpha_j\varphi(f^{k}_j)^j.
\end{equation*}
Using Lemmas \ref{Lem:FundamentalIneq} and \ref{Lem:FundamentalLowerbound} we find a uniform lower bound of $F_\eta(\alpha,f^k,\omega)$. The remaining claims follows from standard spectral theory.
\end{proof}

\section{The HVZ theorem}
In this section we prove Theorem \ref{HVZ} except for part (3). Let $\eta \in \RR$, $\alpha\in \RR^{2n}$, $f\in \cH^{2n}$ and $\omega$ be a selfadjoint operator on $\cH$. Assume $(\alpha,f,\omega)$ satisfies Hypothesis \ref{Hyp1}, \ref{Hyp3} and \ref{Hyp4}. We introduce the notation $F_{(-1)^k}:=F_{(-1)^k \eta }(\alpha,f,\omega)$, $\cE_{(-1)^k}:=\cE_{(-1)^k \eta}(\alpha,f,\omega)$, $m:=m(\omega)$ and $m_{\textup{ess}}:=m_{\textup{ess}}(\omega)$.

Since spectral properties are conserved under unitary transformations we may  (using Lemmas \ref{Lem:SeconduantisedBetweenSPaces} and \ref{Lem:CannonicalSpace}) assume that $\cH=L^2(\cM,\cF,\mu)$ where $(\cM,\cF,\mu)$ satisfies the assumptions in Theorems \ref{Thm:EssentalPropertyCutSpaces} and $\omega$ is a multiplication operator on $\cH$ with $\omega>0$ almost everywhere.
\begin{lem}\label{Lem:PointsInTheEssentialSpectrum}
$ \{ \cE_{(-1)^q}+\lambda_1+\cdots+\lambda_q\mid \lambda_i\in \sigma_{\textup{ess}}(\omega)   \}\subset \sigma_{\textup{ess}}(F_1)$ for all $q\in \NN$.
\end{lem}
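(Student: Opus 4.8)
Fix $q\in\NN$ and $\lambda_1,\dots,\lambda_q\in\sigma_{\textup{ess}}(\omega)$; the goal is $E:=\cE_{(-1)^q}+\lambda_1+\cdots+\lambda_q\in\sigma_{\textup{ess}}(F_1)$, and the plan is to realise $E$ as a limit of essential spectrum points of infrared cut--offs of $F_1$ and then to invoke Lemma \ref{Lem:SimpleConvApplication}. Since $\omega$ is, as arranged above, a multiplication operator on a measure space of the type in Theorem \ref{Thm:EssentalPropertyCutSpaces}, that theorem lets me pick a decreasing sequence of measurable sets $S_1\supseteq S_2\supseteq\cdots$ with $\mu\bigl(\bigcap_{k}S_k\bigr)=0$, each a disjoint union $S_k=S_{1,k}\cup\cdots\cup S_{q,k}$ with $|\omega-\lambda_i|\le 1/k$ on $S_{i,k}$ and $L^2(S_{i,k})$ infinite dimensional. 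Put $A_k=\cM\setminus S_k$, $f^{k}=1_{A_k}f$, $\cH_{2,k}=L^2(S_k)$, $\cH_{1,k}=\cH_{2,k}^{\perp}$ and $\omega_{i,k}=\omega\mid_{\cH_{i,k}}$. Because $A_k\uparrow\cM$ up to a null set, Lemma \ref{Lem:SimpleConvApplication} applies (with $\eta$ and with $(-1)^q\eta$): $F_{\eta}(\alpha,f^{k},\omega)\to F_1$ in norm resolvent sense, $\cE_{(-1)^q\eta}(\alpha,f^{k},\omega)\to\cE_{(-1)^q}$, and any convergent sequence $\mu_k\in\sigma_{\textup{ess}}(F_{\eta}(\alpha,f^{k},\omega))$ has $\lim_k\mu_k\in\sigma_{\textup{ess}}(F_1)$. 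So it suffices to produce such $\mu_k$ with $\mu_k\to E$.

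Fix $k$. As $f^{k}\in\cH_{1,k}^{2n}$ and $\omega$ is reduced by $\cH_{1,k}$, Lemmas \ref{Lem:GeneralTransform} and \ref{Lem:FundamentalTransform} give a unitary $U_k$ with
\begin{equation*}
U_k\,F_\eta(\alpha,f^{k},\omega)\,U_k^{*}=\bigoplus_{l=0}^{\infty}\left(F_{(-1)^{l}\eta}(\alpha,f^{k},\omega_{1,k})\otimes 1+1\otimes d\Gamma^{(l)}(\omega_{2,k})\right),
\end{equation*}
the $l=0$ term being $F_\eta(\alpha,f^{k},\omega_{1,k})$. Write $R_{q,k}$ for the $l=q$ summand, so that $\sigma_{\textup{ess}}(R_{q,k})\subseteq\sigma_{\textup{ess}}(F_\eta(\alpha,f^{k},\omega))$. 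Since every $L^2(S_{i,k})$ is infinite dimensional and $|\omega-\lambda_i|\le 1/k$ on $S_{i,k}$, the multiplication operator $d\Gamma^{(q)}(\omega_{2,k})$ has a point $\nu_k\in\sigma_{\textup{ess}}(d\Gamma^{(q)}(\omega_{2,k}))$ with $|\nu_k-(\lambda_1+\cdots+\lambda_q)|\le q/k$ (consider its symbol on $S_{1,k}\times\cdots\times S_{q,k}$). Setting $\cE_{1,k}:=\inf\sigma\bigl(F_{(-1)^{q}\eta}(\alpha,f^{k},\omega_{1,k})\bigr)$, the number $\mu_k:=\cE_{1,k}+\nu_k$ lies in $\sigma_{\textup{ess}}(R_{q,k})$, hence in $\sigma_{\textup{ess}}(F_\eta(\alpha,f^{k},\omega))$.

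It remains to see $\cE_{1,k}\to\cE_{(-1)^q}$. By Lemma \ref{Lem:FundamentalTransform}, $F_{(-1)^{q}\eta}(\alpha,f^{k},\omega_{1,k})$ is unitarily equivalent, via the inclusion $\cH_{1,k}\hookrightarrow\cH$, to the part of $F_{(-1)^{q}\eta}(\alpha,f^{k},\omega)$ in the reducing subspace $\cF_b(\cH_{1,k})$, so $\cE_{1,k}\ge\cE_{(-1)^{q}\eta}(\alpha,f^{k},\omega)\to\cE_{(-1)^{q}}$, giving $\liminf_k\cE_{1,k}\ge\cE_{(-1)^q}$. For the opposite inequality fix $\delta>0$ and a unit vector $\psi\in\textup{Span}(\cJ(\cD(\omega)))$ with $\langle\psi,F_{(-1)^{q}}\psi\rangle\le\cE_{(-1)^{q}}+\delta$, available since $\textup{Span}(\cJ(\cD(\omega)))$ is a core for $F_{(-1)^q}$ by Proposition \ref{Lem:BasicPropertiesSBmodel}. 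Then $\psi_k:=\Gamma(1_{A_k})\psi$ lies in the core $\textup{Span}(B_k)$ associated with $\cH_{1,k},\cH_{2,k}$ in Lemma \ref{Lem:FundamentalTransform} (its particles belong to $\cH_{1,k}\cap\cD(\omega)$), and $\psi_k\to\psi$ while $F_{(-1)^{q}\eta}(\alpha,f^{k},\omega)\psi_k\to F_{(-1)^{q}}\psi$, because $\Gamma(1_{A_k})\to 1$ strongly, $1_{A_k}$ commutes with $\omega$, $f^{k}_i\to f_i$, and $g\mapsto\varphi(g)^i$ is norm continuous on each finite particle sector. By equation (\ref{eq:SimpelTrans3}) (equivalently, the intertwining relation established in the proof of Lemma \ref{Lem:FundamentalTransform}), $\langle\psi_k,F_{(-1)^{q}\eta}(\alpha,f^{k},\omega_{1,k})\psi_k\rangle=\langle\psi_k,F_{(-1)^{q}\eta}(\alpha,f^{k},\omega)\psi_k\rangle\to\langle\psi,F_{(-1)^{q}}\psi\rangle$, and $\|\psi_k\|\to 1$, so $\limsup_k\cE_{1,k}\le\cE_{(-1)^{q}}+\delta$; letting $\delta\downarrow 0$ proves $\cE_{1,k}\to\cE_{(-1)^q}$. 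Hence $\mu_k\to\cE_{(-1)^q}+\lambda_1+\cdots+\lambda_q=E$ and Lemma \ref{Lem:SimpleConvApplication} yields $E\in\sigma_{\textup{ess}}(F_1)$.

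The crux is the two--sided control of $\cE_{1,k}$: the bottom of the reduced fiber $F_{(-1)^{q}\eta}(\alpha,f^{k},\omega_{1,k})$ need not equal that of $F_{(-1)^{q}\eta}(\alpha,f^{k},\omega)$ — a one--particle summand in the displayed decomposition can sit strictly below it when $|\eta|$ is large relative to $m(\omega_{2,k})$ — so the trial vector estimate cannot simply be replaced by the identity $\cE_{1,k}=\cE_{(-1)^q\eta}(\alpha,f^{k},\omega)$. The other point needing care is the measure--theoretic extraction of the sets $S_{i,k}$ meeting all three competing demands (nested with null intersection, infinite dimensional $L^2$, and $\omega$ nearly constant), which is precisely what the hypotheses on the underlying measure space (Theorem \ref{Thm:EssentalPropertyCutSpaces}) are there to supply.
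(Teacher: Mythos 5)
Your argument is correct but takes a genuinely different route from the paper's. Both invoke Theorem \ref{Thm:EssentalPropertyCutSpaces} to pick disjoint cells $A^i_n$ on which $\omega$ is near $\lambda_i$, and both ultimately appeal to Lemma \ref{Lem:SimpleConvApplication} to pass from infrared-cut-off form factors to the full ones, but the treatment of the cut-off case diverges. The paper builds an explicit Weyl sequence $\phi_p=\sqrt{q!}\,U_p^*(\nu_p\otimes g_1^p\otimes_s\cdots\otimes_s g_q^p)$ where $\nu_p$ is a near-ground-state of the \emph{full} operator $F_{(-1)^q}$ that happens to have its particles supported in $\cH_{u(p)}$, and $g_i^p$ are normalised indicators of the cells; equation (\ref{eq:SimpelTrans3}) then identifies $\|(F_{(-1)^q}(\alpha,f,\omega_1)-\cE_{(-1)^q})\nu_p\|$ with $\|(F_{(-1)^q}-\cE_{(-1)^q})\nu_p\|$, so the paper never has to compare $\cE_{(-1)^q}$ with the bottom of the reduced fiber — only to verify orthogonality and the error estimate in its four-point check. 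You instead cut off $f$ at the outset, decompose $F_\eta(\alpha,f^k,\omega)$ via Lemmas \ref{Lem:GeneralTransform}/\ref{Lem:FundamentalTransform}, and use the abstract inclusions $\sigma_{\textup{ess}}(T_l)\subset\sigma_{\textup{ess}}(\oplus_l T_l)$ and $\inf\sigma(A)+\sigma_{\textup{ess}}(B)\subset\sigma_{\textup{ess}}(A\otimes 1+1\otimes B)$ to locate $\mu_k=\cE_{1,k}+\nu_k$. That shifts the burden onto the two-sided control of $\cE_{1,k}$ — the very issue the paper's choice of $\nu_p$ sidesteps — and your argument (reducing-subspace inequality for $\liminf$, localisation of a trial vector by $\Gamma(1_{A_k})$ plus the cut-off convergence for $\limsup$) handles it; you are also right that the one-sided inequality is not an equality in general, since a positive-particle summand can dip below the zero-particle one when $|\eta|$ is comparable to $m(\omega_{2,k})$. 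What your route buys is that you never have to verify orthogonality of a Weyl sequence by hand; what the paper's route buys is that $\cE_{1,k}$ never appears. To be airtight you should make explicit (i) the construction $S_{i,k}=\bigcup_{n\ge k}A^i_n$ yielding nestedness, disjointness, and $L^2$ infinite-dimensionality simultaneously; (ii) the compactness argument showing that an orthonormal sequence $\psi_n$ with $\|(d\Gamma^{(q)}(\omega_{2,k})-\lambda)\psi_n\|\le q/k$ forces a point of $\sigma_{\textup{ess}}$ in $[\lambda-q/k,\lambda+q/k]$; and (iii) the estimate via Lemma \ref{Lem:FundamentalIneq} for the convergence $F_{(-1)^q\eta}(\alpha,f^k,\omega)\psi_k\to F_{(-1)^q}\psi$, which mixes cut-offs in the form factor and in the test vector.
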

\begin{proof}
	Let $q\in \mathbb{N}$ and $\lambda_1,\dots,\lambda_q\in \sigma_{\textup{ess}}(\omega)$. By Theorem \ref{Thm:EssentalPropertyCutSpaces} we may if each $i\in \{1,\dots,q\}$ pick a collection of sets $\{A^i_k\}_{k=1}^\infty$ such that  $0<\mu(A^i_k)<\infty$, $\lvert \omega-\lambda_i\lvert\leq \frac{1}{k}$ on $A^{i}_k$, $A^{i}_k\cap A^{j}_\ell=\emptyset$ if $i\neq j$ or $k\neq \ell$ and
	\begin{equation*}
	\sum\limits_{k=1}^{\infty}\mu(A^i_k)<\infty.
	\end{equation*}
	Define for each and $k\in \mathbb{N}$ the set 
	\begin{equation*}
	B_k=\bigcup_{i=1}^q\bigcup_{j=k}^\infty A_{j}^i \,\, \Rightarrow \,\, \mu(B_k)= \sum_{i=1}^{q}\sum_{j=k}^{\infty}\mu(A_j^i)<\infty,
	\end{equation*}
	and note that $\mu(B_k)\downarrow 0$. Since $\{B_k\}_{k=1}^\infty$ is a decreasing collection of sets we find
	\begin{equation*}
	B=\bigcap_{k=1}^\infty B_k
	\end{equation*}  
	has measure 0. Define for each $\ell \in \mathbb{N}$ the subspace
	\begin{equation*}
	\cH_\ell=\{ f\in \cH \mid 1_{B_\ell^c}f=f\,\, \mu-a.e. \}=1_{B^c_\ell}\cH.
	\end{equation*}
	Assume first that $f \in \cH^{2n}_K$ for some $K$ and hence that $f \in \cH^{2n}_\ell$ for all $\ell \geq K$. Define the following subspaces for $\ell\geq K$
	\begin{equation*}
	\cA_\ell=\bigcup_{k=K}^\ell\cH_k\cap \cD(\omega)=\cH_\ell\cap \cD(\omega) \,\,\,\,\,\,\,\,\,\,\,\,\,	\cA_\infty=\bigcup_{k=K}^\infty\cH_k\cap \cD(\omega)
	\end{equation*}
	We now claim that $\cA_\infty$ is a core for $\omega$. If $\phi\in \cD(\omega)$ then $\phi_k=\phi 1_{B_k^c}\in \cA_\infty$ for all $k\geq K$ and using dominated convergence we find
	\begin{align*}
	\lim_{k\rightarrow \infty }\lVert \phi-\phi_k \lVert^2=0=\lim_{k\rightarrow \infty } \lVert \omega(\phi-\phi_k) \lVert^2&
	\end{align*}
	so $\cA_\infty$ is a core for $\omega$. Defining
	\begin{align*}
	\cJ(\cA_\infty)&=\{ \Omega \}\cup \bigcup_{k=1}^\infty \{ g_1\otimes_s\cdots\otimes_s g_k\mid g_i\in \cA_\infty  \}\\
	\cJ(\cA_\ell )&=\{ \Omega \}\cup \bigcup_{k=1}^\infty \{ g_1\otimes_s\cdots\otimes_s g_k\mid g_i\in \cA_\ell  \}
	\end{align*}
	we find that $\cJ(\cA_\infty)$ spans a core for $F_{\pm 1}$ by Proposition \ref{Lem:BasicPropertiesSBmodel}. 
	
	Let $g\in \textup{Span}(\cJ(\cA_\infty))$. Then $g=a\Omega+\sum_{j=1}^{k}\alpha_j f_j$ with $f_j\in \cJ(\cA_\infty)\backslash \{ \Omega \}$ for all $j\in \{  1, \dots, k \}$. Let $j\in \{  1, \dots, k \}$ and note that $f_j\in \cJ(\cA_{\ell(j)})$ for some $\ell(j)\in \NN$ by definition of $\cA_\infty$. Defining $u=\max_{j\in \{1,\dots,k  \}}\{ \ell(j) \}$ we see that $g\in \textup{Span}(\cJ(\cA_\ell))$ for any $\ell\geq u$. Hence we have now proven the following statements
	\begin{itemize}
	\item For any $g\in \textup{Span}(\cJ(\cA_\infty))$ there is $u\in \mathbb{N}$ with $u\geq K$ such that $g\in \textup{Span}(\cJ(\cA_\ell))$ for any $\ell\geq u$.
	\item $\textup{Span}(\cJ(\cA_\infty))$ is a core for $F_{\pm1}$.
	\end{itemize}
     For each $p\in \mathbb{N}$ we pick $\nu_p\in \textup{Span}(\cJ(\cA_\infty))$ such that $\lVert (F_{(-1)^q}-\cE_{(-1)^q})\nu_p \lVert\leq \frac{1}{p}$ and $\lVert \nu_p\lVert= 1$. Pick $u(p)\geq K$ such that $\nu_p\in \textup{Span}(\cJ(\cA_\ell))$ for any $\ell\geq u(p)$ and $u(p+1)>u(p)$ for all $p\in \NN$. For each $p\in \mathbb{N}$ and $i\in \{1,\dots,q\}$ we define
	\begin{equation*}
	g^p_i=\mu(A^i_{u(p)})^{-\frac{1}{2}}1_{A^i_{u(p)}}.
	\end{equation*}
	 Note $g^p_i\in \cD(\omega)$ since $\omega$ is bounded by $\lambda_i+\frac{1}{u(p)}$ on  $A^i_{u(p)}$. Note also, that $g^p_i\in \cH_{u(p)}^{\perp}$ since $A^i_{u(p)}\subset B_{u(p)}$ so $g^p_i$ and any element in $\cH_{u(p)}$ have disjoint support. Furthermore, the collection $\{g^p_i\}_{p\in \mathbb{N},i\in \{1,\dots,q\}}$ is orthogonal since the elements have disjoint support. Let $U_p$ be the unitary map from Lemma \ref{Lem:FundamentalTransform} corresponding to $\cH_{u(p)}$ which exists since $f\in \cH^{2n}_{K}\subset \cH^{2n}_{u(p)}$. Define
	\begin{equation*}
	\phi_p=\sqrt{q!}U^*_p (\nu_p\otimes g_1^p\otimes_s\cdots\otimes_s g_q^p).
	\end{equation*}
	We are done in the case $f\in \cH^{2n}_K$ for some $K$ if we can prove that $\{\phi_p\}_{p=1}^\infty$ is a Weyl sequence for $F_{1}$ corresponding to the value $\cE_{(-1)^q}+\lambda_1+\cdots+\lambda_q$. We check:
	\begin{enumerate}
	\item $\phi_p\in \cD(F_{1})$.
	
	\item $\lVert\phi_p\lVert=1$ for all $p\in \mathbb{N}$.
	
	\item $\phi_p$ is orthogonal to $\phi_r$ for $p\neq r$.
	
	\item  $\lVert (F_{(-1)^q}-(\cE_{(-1)^q}+\lambda_1+\cdots+\lambda_q) )\phi_p \lVert $ converges to 0.
	\end{enumerate}

	(1): Lemma \ref{Lem:FundamentalTransform} shows $\phi_p\in \cN\cap \cD(d\Gamma(\omega)) \subset  \cD(F_{1})$ for all $p\in \mathbb{N}$.

	(2): Note that $\{g_i^p \}_{i=1}^q$ is orthonormal for each $p\in \NN$. Let $\cS_q$ be the permutations of $\{ 1,\dots,q \}$. Then we find
	\begin{align*}
	\lVert g^p_1\otimes_s\cdots\otimes_s g^p_q \lVert^2&=\frac{1}{q!}\sum_{\sigma\in \cS_q} \langle g_{1}^p\otimes\cdots\otimes g_q^p,g_{\sigma(1)}^p\otimes\cdots\otimes g_{\sigma(q)}^p \rangle\\&=\frac{1}{q!}\sum_{\sigma\in \cS_q} \langle g_{1}^p,g_{\sigma(1)}^p\rangle\cdots\langle g_{q}^p,g_{\sigma(q)}^p\rangle\\& = \frac{1}{q!} \langle g_{1}^p,g_{1}^p\rangle\cdots\langle g_{q}^p,g_{q}^p\rangle=\frac{1}{q!}.
	\end{align*}

	(3): Define for all $p\in \mathbb{N}$ the set 
	\begin{equation*}
	C_p=\{g^p_1\otimes_s \cdots\otimes_s g^p_q \} \cup \bigcup_{\ell=1}^\infty \{ h_1\otimes_s\cdots\otimes_s h_\ell \otimes_s g^p_1\otimes_s \cdots\otimes_s g^p_q\mid h_i\in \cH_{u(p)}\cap \cD(\omega) \}
   \end{equation*}
	and let $r<p$. Then $\phi_r\in \textup{Span}(C_r)$ and $\phi_p\in \textup{Span}(C_p)$ by Lemma \ref{Lem:FundamentalTransform}, so we just need to see that every element in $C_p$ is orthogonal to every element in $C_r$. Let $\psi_1\in C_p$ and $\psi_2\in C_r$. Note $\psi_1$ has a factor $g_1^p$ and that this factor is orthogonal to $g_i^r$ for all $i\in \{1,\dots,q\}$ by construction. Furthermore, for any $h\in \cH_{u(r)}$ we see that $h$ is supported in $B_{u(r)}^c\subset B_{u(p)}^c\subset (A_{u(p)}^1)^c$ so $g_1^p$ is orthogonal to all elements in $\cH_{u(r)}$ by construction. Hence $\psi_1$ contains a factor which is orthogonal to all factors in $\psi_2$ so they are orthogonal. This proves (3).

	(4): Using Lemma \ref{Lem:FundamentalTransform} we find 
	\begin{align*}
	\lVert (F_{1}-\cE_{(-1)^q}&-\lambda_1-\cdots-\lambda_q)\phi_p \lVert\\&=\sqrt{q!}\lVert U_p(F_{1}-\cE_{(-1)^q}-\lambda_1-\cdots-\lambda_q)U_p^*\nu_p\otimes g^p_1\otimes_s\cdots\otimes_s g_q^p \lVert\\&\leq \sqrt{q!}\lVert (F_{(-1)^q}(\alpha,f,\omega_1)-\cE_{(-1)^q})\nu_p\otimes g^p_1\otimes_s\cdots\otimes_s g_q^p\lVert \\&+\sqrt{q!}\sum_{i=1}^{q}\lVert \nu_p\otimes g_1^p\otimes_s\cdots\otimes_s (\omega_2 g^p_i-\lambda_ig^p_i) \otimes_s\cdots\otimes_s g_q^p   \lVert
	\\&\leq \lVert (F_{(-1)^q}-\cE_{(-1)^q})\nu_p\lVert +\sqrt{q!}\sum_{i=1}^{q} \lVert (\omega-\lambda_i)g^p_i   \lVert\\&\leq \frac{1}{p}+\sqrt{q!}\sum_{i=1}^{q}\frac{1}{u(p)}
	\end{align*}
	which converges to 0. This finishes the case where $f \in \cH_K^{2n}$ for some $K$. To prove the general case let $f^k=1_{B_k^c}f$ and note that $\cE_{(-1)^q\eta}(\alpha,f^k,\omega)+\lambda_1+\cdots+\lambda_q\in \sigma_{\textup{ess}}(F_{\eta}(\alpha,f^k,\omega))$ for all $k\in \NN$. Applying Lemma \ref{Lem:SimpleConvApplication} finishes the proof.
\end{proof}
\begin{lem}\label{Lem:OutsideEssentialSPectrum} Define $\widetilde{m}=\min\{m_{\textup{ess}}+\cE_{-1},\cE_1+m_{\textup{ess}}+m \}$. 
	Then $(-\infty,\widetilde{m})\cap \sigma_{\textup{ess}}(F_1)=\emptyset$. 
\end{lem}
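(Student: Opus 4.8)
The plan is to combine the parity‑twisted direct–sum decomposition of Lemma~\ref{Lem:GeneralTransform} with a weakly–null Weyl–sequence argument, after first cutting the interaction down to a compactly supported one. First I would reduce to the case where $\supp(f)$ has finite measure: writing $f^k=1_{A_k}f$ for an exhaustion $A_k\uparrow\cM$ by finite–measure sets, Lemma~\ref{Lem:SimpleConvApplication} gives $F_\eta(\alpha,f^k,\omega)\to F_\eta(\alpha,f,\omega)$ in norm resolvent sense and $\cE_{\pm\eta}(\alpha,f^k,\omega)\to\cE_{\pm\eta}$, while $m,m_{\textup{ess}}$ depend only on $\omega$; since $\inf\sigma_{\textup{ess}}$ is continuous under norm resolvent convergence this reduction is harmless.

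Next I would split off the escaping bosons. Pick a finite–measure set $S\supseteq\supp(f)$, put $\cH_1=1_S\cH$, $\cH_2=1_{S^c}\cH$, $\omega_i=\omega|_{\cH_i}$. Since $1_S$ commutes with $\omega$ and fixes each $f_i$, Lemma~\ref{Lem:FundamentalTransform} produces a unitary $U$ with
\begin{equation*}
UF_\eta(\alpha,f,\omega)U^*=F_\eta(\alpha,f,\omega_1)\ \oplus\ \bigoplus_{q=1}^{\infty}\Bigl(F_{(-1)^q\eta}(\alpha,f,\omega_1)\otimes 1+1\otimes d\Gamma^{(q)}(\omega_2)\Bigr).
\end{equation*}
Because $\cF_b(\cH_1)$ is an invariant subspace on which $F_\eta(\alpha,f,\omega)$ restricts to $F_\eta(\alpha,f,\omega_1)$, one gets $\cE_{\pm\eta}(\alpha,f,\omega_1)\geq\cE_{\pm\eta}$, and $\sigma_{\textup{ess}}(\omega_2)\subseteq\sigma_{\textup{ess}}(\omega)$, $\sigma(\omega_2)\subseteq\sigma(\omega)$ give $m_{\textup{ess}}(\omega_2)\geq m_{\textup{ess}}$, $m(\omega_2)\geq m$. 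By the standard formulas for the essential spectrum of a tensor sum, the infimum of $\sigma_{\textup{ess}}$ of the $q$‑th summand ($q\geq1$) equals $\min\{\inf\sigma_{\textup{ess}}(F_{(-1)^q\eta}(\alpha,f,\omega_1))+q\,m(\omega_2),\ \cE_{(-1)^q\eta}(\alpha,f,\omega_1)+m_{\textup{ess}}(\omega_2)+(q-1)m(\omega_2)\}$, whose second entry is at least $\cE_{(-1)^q\eta}+m_{\textup{ess}}+(q-1)m$; a one‑line case distinction on the parity of $q$ (odd channels bounded by $\cE_{-\eta}+m_{\textup{ess}}$, even ones by $\cE_\eta+m_{\textup{ess}}+m$, using $m\geq0$) shows this is $\geq\widetilde m$. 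Accumulation points of the union of the spectra of these summands are $\geq\widetilde m$ when $m(\omega_2)>0$, and when $m(\omega_2)=0$ one necessarily has $m=m_{\textup{ess}}=0$, so $\widetilde m=\min\{\cE_\eta,\cE_{-\eta}\}$ and every summand (including the $q=0$ one) is bounded below by $\min\{\cE_\eta(\alpha,f,\omega_1),\cE_{-\eta}(\alpha,f,\omega_1)\}\geq\widetilde m$. Thus all contributions to $\sigma_{\textup{ess}}(F_\eta(\alpha,f,\omega))$ from the summands with $q\geq1$, and also from the $q=0$ summand when $m_{\textup{ess}}=0$, lie in $[\widetilde m,\infty)$.

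It would then only remain, in the case $m_{\textup{ess}}>0$, to treat the $q=0$ channel, i.e.\ to show $\sigma_{\textup{ess}}(F_\eta(\alpha,f,\omega_1))\cap(-\infty,\widetilde m)=\emptyset$: equivalently, that a singular (weakly null) Weyl sequence for $F_\eta(\alpha,f,\omega)$ lying in $\cF_b(\cH_1)$ cannot converge to an energy below $\widetilde m$. Since $\cF_b(L^2(\,\cdot\,))$ can be cut further only down to $\supp(f)$, this step cannot be reached by simply iterating the decomposition, and it is the real content of the lemma. The hard part will be exactly this: using the cut–space machinery of Theorem~\ref{Thm:EssentalPropertyCutSpaces} one slices $\sigma_{\textup{ess}}(\omega)$ into arbitrarily thin pieces carried by finite–measure sets and argues that a weakly null Weyl sequence must, relative to such a slicing, carry a genuine ``escaping'' one‑boson excitation whose energy lies within $\varepsilon$ of $\sigma_{\textup{ess}}(\omega)$, hence is $\geq m_{\textup{ess}}-\varepsilon$; peeling this excitation off flips the parity and leaves a vector of energy $\geq\cE_{-\eta}$, forcing the limiting energy to be $\geq\cE_{-\eta}+m_{\textup{ess}}\geq\widetilde m$, a contradiction, once $\varepsilon\downarrow0$. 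Everything outside this peeling step is bookkeeping with Lemmas~\ref{Lem:GeneralTransform}, \ref{Lem:FundamentalTransform}, \ref{Lem:TheRelativeBounds} and elementary properties of essential spectra of direct and tensor sums.
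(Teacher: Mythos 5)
Your bookkeeping for the channels with $q\geq1$ is correct (and agrees with the paper's computation giving $\widetilde m$), and your reduction steps via norm resolvent convergence are sound. However, there is a genuine gap exactly where you flag the ``real content'': the $q=0$ channel. Your cutting scheme takes $\cH_1=1_S\cH$ with $S\supseteq\supp(f)$ of finite measure, but finite measure of $S$ gives no control over the spectral multiplicity of $\omega_1=\omega|_{\cH_1}$. In general $\omega_1$ still has plenty of essential spectrum, so $F_\eta(\alpha,f,\omega_1)$ can perfectly well have essential spectrum below $\widetilde m$, and your decomposition has not isolated it away. The proposed ``peeling off an escaping one-boson excitation'' from a weakly null Weyl sequence is only a sketch; a weakly null sequence in $\cF_b(\cH_1)$ need not exhibit any single boson that detaches with energy near $\sigma_{\textup{ess}}(\omega)$, since all its bosons are already localised in the finite-measure set $S$ yet can diverge spectrally inside $\omega_1$'s continuous spectrum. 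Iterating Lemma~\ref{Lem:FundamentalTransform} cannot fix this, as you note, so the proof as written does not close.

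The paper's resolution uses a different, and crucial, cutting: one does not cut by $\supp(f)$ at all. Instead, one first discretises the dispersion to $\omega_k$ with $\omega/\omega_k,\omega_k/\omega\to1$ in $L^\infty$ (Lemma~\ref{Lem:SpecialSequenceOfDispersions}), so that $F_{\pm1,k}\to F_{\pm1}$ in norm resolvent sense and $\widetilde m_k\to\widetilde m$. For fixed $k$, one then takes $\cH_k$ to be the closed span of $\{P_{k,j}f_i\}$, where $P_{k,j}$ are the discrete spectral projections of $\omega_k$. This subspace reduces $\omega_k$, contains all $f_i$, and --- this is the point --- each level $\cH_{k,j}$ has dimension at most $2n$, so $\omega_k|_{\cH_k}$ has compact resolvent. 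Proposition~\ref{Lem:BasicPropertiesSBmodel} then says the $q=0$ channel $F_\eta(\alpha,f,\omega_{k,1})$ has compact resolvent and contributes nothing to the essential spectrum; the $q\geq1$ channels are controlled exactly as you indicate, with only finitely many nonzero after applying $h\in C_c^\infty((-\infty,\widetilde m_k))$ because $m_k>0$. Passing $k\to\infty$ via norm resolvent convergence finishes the proof. The mechanism you are missing is precisely the replacement of ``cut to a set of finite measure'' by ``cut to the cyclic subspace of a discretised dispersion,'' which turns the $q=0$ channel into a compact-resolvent operator rather than leaving it as an unresolved essential-spectrum problem.
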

\begin{proof}
If $m=0$ then $m_{\textup{ess}}=0$ by injectivity of $\omega$ so the statement is trivial since $(-\infty,\widetilde{m})\cap \sigma(F_1)=\emptyset$. Hence we may assume $m>0$ so $\omega \geq m>0$ almost everywhere. If $m_{\textup{ess}}=\infty$ the conclusion will follow from Proposition \ref{Lem:BasicPropertiesSBmodel}. Hence we may assume $m_{\textup{ess}}<\infty$. Define 
\begin{align*}
\omega_k(x)=\sum_{j=0}^{\infty}2^{-k}(j+1)1_{\{ \omega\in (j2^{-k},(j+1)2^{-k}] \}}(x)
\end{align*}
From Lemma \ref{Lem:SpecialSequenceOfDispersions} one obtains $\omega/\omega_k$ and $\omega_k/\omega$ converges to 1 in $L^\infty(\cM,\cF,\mu)$. This implies $(\alpha,f,\omega_k)$ satisfies Hypothesis 1, 3 and 4. Furthermore,
\begin{align}
\lim_{k\rightarrow \infty }f_1=f_1 && \lim_{k\rightarrow \infty }\omega_k^{-\frac{1}{2}}f_1=\omega^{-\frac{1}{2}}f_1 \nonumber \\
\lim_{k\rightarrow \infty }f_j=f_j && \lim_{k\rightarrow \infty }\omega_k^{\pm \frac{1}{2}}f_j=\omega^{\pm \frac{1}{2}}f_j \nonumber
\end{align}
in $\cH$ for $j\in \{2, \dots , 2n \}$. So defining
\begin{align*}
F_{\pm 1,k}=F_\eta(\alpha,f,\omega_k)
\end{align*}
we see that $F_{\pm 1,k}$ converges to $F_{\pm 1}$ in norm resolvent sense as $k$ tends to infinity by Lemma \ref{Lem:ResolevntKonv}. Applying the bounds in Lemmas \ref{Lem:FundamentalIneq} and \ref{Lem:FundamentalLowerbound} along with the bound $\eta\Gamma(-1)\geq -\lvert \eta\lvert$ we see 
\begin{equation*}
F_{\pm 1,k}\geq -\lvert \eta\lvert- \lVert \omega_k^{-\frac{1}{2}}f_1 \lVert^2+C,
\end{equation*}
which is uniformly bounded below in $k$. Hence $\cE_{\pm 1,k}=\inf(\sigma(F_{\pm 1,k}))$ converges to $\cE_{\pm 1}$. Defining the masses $ m_k:=m(\omega_k)$ and $m_{\textup{ess},k}:=m_{\textup{ess}}(\omega_k)$ we see $\{m_k\}_{k=1}^\infty$ converges to $m$ and $\{m_{\textup{ess},k}\}_{k=1}^\infty$ converges to $m_{\textup{ess}}$ by Lemma \ref{Lem:SpecialSequenceOfDispersions}. Defining
\begin{equation*}
\widetilde{m}_k=\min\{m_{\textup{ess},k}+\cE_{-1,k},\cE_{1,k}+m_{\textup{ess},k}+m_k \}.
\end{equation*}
we see that $\{\widetilde{m}_k\}_{k=1}^\infty$ converges to $\widetilde{m}$. Assume we have proven that $h(F_{ 1,k})$ is compact for any $h\in C^{\infty}_c((-\infty,\widetilde{m}_k))$. Let $h\in C^{\infty}_c((-\infty,\widetilde{m}))$ and note that $h\in C^{\infty}_c((-\infty,\widetilde{m}_k))$ for $k$ large enough. Norm resolvent convergence implies that
\begin{equation*}
h(F_{ 1})=\lim_{k\rightarrow \infty}h(F_{1,k})
\end{equation*}
in norm so $h(F_{ 1})$ is compact. This would finish the proof, so it only remains to prove that $h(F_{ 1,k})$ is compact for any $h\in C^{\infty}_c((-\infty,\widetilde{m}_k))$.

Define the projection $P_{k,j}=1_{\{ \omega\in (j2^{-k},(j+1)2^{-k}] \}}$ for $j\in \NN_0$ and $k\in \NN$. Then $P_{k,j_1}P_{k,j_2}=0$ for $j_1\neq j_2$ so we may define
\begin{align*}
\cH_{k,j}&=\{ P_{k,j}f_i \mid i\in \{1,\dots, 2n \} \}\\
\cH_{k}&=\bigoplus_{j=0}^\infty \cH_{k,j}= \left \{  \sum_{j=0}^{\infty} g_j\biggl \lvert \,\, g_j\in \cH_{k,j} \,\,\, \text{and}\,\,\,  \sum_{j=0}^{\infty}\lVert g_j\lVert^2<\infty      \right  \} \subset \cH
\end{align*}
For $i\in \{1, \dots, 2n \}$ we note that
\begin{equation*}
f_i=\sum_{j=0}^{\infty} P_{k,j}f_i\in \cH_{k}
\end{equation*}
so $f\in \cH_k^{2n}$. Fix $\psi\in \cH_k$ and write
\begin{equation*}
\psi=\sum_{j=0}^{\infty} g_j
\end{equation*}
with $g_j\in \cH_{k,j}$. By definition we see $g_j\in \cD(\omega_k)$ and $\omega_kg_j=(j+1)2^{-k}g_j$ for all $j\in \NN_0$ so
\begin{equation*}
e^{it\omega_k}\psi=\sum_{j=0}^{\infty} e^{it(j+1)2^{-k}}g_j\in \cH_k.
\end{equation*}
This implies $\cH_k$ reduces $\omega_k$ by \cite[Theorem Theorem 7.39]{Weidmann}. Let $\omega_{k,1}$ denote the restriction of $\omega_k$ to $\cH_k$. We claim $\omega_{k,1}$ has compact resolvents. Let $Q_{k,j}:\cH_k\rightarrow \cH_{k,j}$ be the projection. Then $Q_{k,j}$ has finite dimensional range and $Q_{k,j_1}Q_{k,j_2}=0$ if $j_1\neq j_2$. Let $\xi\in \CC\backslash \RR$ and $K,q_1, q_2\in \NN$ with $K\leq q_1< q_2$. Then
\begin{align*}
\left \lVert \sum_{j=q_1}^{q_2}((j+1)2^{-k}+\xi)^{-1}Q_{k,j} \right \lVert&=\max_{j\in \{q_1,q_1+1,\dots , q_2\} }\lvert ((j+1)2^{-k}+\xi)^{-1}\lvert \\&\leq \sup_{j\in \NN,\, j\geq K  }\lvert ((j+1)2^{-k}+\xi)^{-1}\lvert
\end{align*}
which goes to 0 as $K$ tends to $\infty$. This shows
\begin{align*}
A=\lim_{K\rightarrow \infty}\sum_{j=0}^{K}((j+1)2^{-k}+\xi)^{-1}Q_{k,j}=\sum_{j=0}^{\infty}((j+1)2^{-k}+\xi)^{-1}Q_{k,j}.
\end{align*}
exists since the partial sums are Cauchy. Note $A$ is compact since the $Q_{k,j}$ are compact and the sum converges in norm. We claim $A$ is the inverse of $\omega_{k,1}+\xi$. By selfadjointness of $\omega_{k,1}$ we know $\omega_{k,1}+\xi$ is bijective from $\cD(\omega_{k,1})\cap \cH_k$ into $\cH_k$ so it is enough to see $A$ maps $\cH_k$ into $\cD(\omega_{k,1})\cap \cH_k$ and $(\omega_{k,1}+\xi)A=1$. Let $\psi\in \cH_k$ and write
\begin{align*}
\psi&=\sum_{j=0}^{\infty} g_j \,\,\,\, \Rightarrow \,\,\,\, A\psi=\sum_{j=0}^{\infty} ((j+1)2^{-k}+\xi)^{-1}g_j\\
\psi_K&=\sum_{j=0}^{K} g_j \,\,\,\, \Rightarrow \,\,\,\, A\psi_K=\sum_{j=0}^{K} ((j+1)2^{-k}+\xi)^{-1}g_j
\end{align*}
where $K\in \NN$. Note $A\psi_K\in \cD(\omega_{k,1})=\cH_k\cap \cD(\omega)$ and $(\omega_{k,1}+\xi)A\psi_K=\psi_K$. Taking $K$ to infinity we see $A\psi\in \cD(\omega_{k,1})$ and $(\omega_{k,1}+\xi)A\psi=\psi$.

Let $h\in C^{\infty}_c(  (-\infty,\widetilde{m}_k) )$. As previously noted we have $f \in \cH_k^{2n}$ and $\omega_k$ is reduced by $\cH_k$. Define $\omega_{k,2}=\omega_k\mid_{\cH_k^\perp}$. Applying Lemma \ref{Lem:GeneralTransform} we obtain a unitary map
\begin{equation*}
U_k:\cF_b(\cH)\rightarrow \cF_b(\cH_k)\oplus \bigoplus_{j=1}^{\infty}\left( \cF_b(\cH_k)\otimes (\cH_k^\perp)^{\otimes_s j}\right),
\end{equation*}
such that
\begin{equation*}
U_kF_{\pm 1,k}U_k^*=F_{\pm \eta}(\alpha,f,\omega_{k,1})\oplus \bigoplus_{j=1}^\infty \left (F_{\pm(-1)^j\eta}(\alpha,f,\omega_{k,1})\otimes 1+1\otimes d\Gamma^{(j)}(\omega_{k,2})\right ).
\end{equation*}
Thus we find $\cE_{\pm \eta}(\alpha,f,\omega_{k,1})\geq \cE_{\pm 1,k}$ and
\begin{equation*}
U_kh(F_{ 1,k})U_k^*=h(F_{\eta }(\alpha,f,\omega_{k,1}))\oplus \bigoplus_{j=1}^\infty h\left (F_{(-1)^j\eta}(\alpha,f,\omega_{k,1})\otimes 1+1\otimes d\Gamma^{(j)}(\omega_{k,2})\right).
\end{equation*}
$h(F_{\eta}(\alpha,f,\omega_{k,1}))$ is compact by Proposition \ref{Lem:BasicPropertiesSBmodel} since $\omega_{k,1}$ has compact resolvent. Write $C_j=F_{(-1)^j\eta}(\alpha,f,\omega_{k,1})\otimes 1+1\otimes d\Gamma^{(j)}(\omega_{k,2})$ for $j\in \NN$. Using Theorem \ref{Thm:SpectralPropTensor}, Proposition \ref{Lem:BasicPropertiesSBmodel} and Lemma \ref{Lem:SecondQuantisedProp} we find for $j\in \NN$ that
\begin{align*}
\inf(\sigma_{\textup{ess}}(C_j))&\geq \cE_{ (-1)^j\eta}(\alpha,f,\omega_{k,1})+(j-1)\inf(\sigma(\omega_{k,2}))+\inf(\sigma_{\textup{ess}}(\omega_{k,2}))\\&\geq \cE_{ (-1)^j,k} + (j-1)m_k+m_{\textup{ess},k}\geq \widetilde{m}_k\\
\inf(\sigma(C_j))&\geq \cE_{ (-1)^j\eta}(\alpha,f,\omega_{k,1})+j\inf(\sigma(\omega_{k,2}))\geq \cE_{ (-1)^j,k} + jm_k.
\end{align*}
This implies $h(C_j)$ is compact for all $j\in \NN$ and since $ m_k>0$ we find $h(C_j)=0$ for $j$ large enough. Hence $U_kh(F_{ 1,k})U_k^*$ is a direct sum of compact operators where only finitely many are nonzero. This shows $U_kh(F_{ 1,k})U_k^*$ is compact as desired.
\end{proof}

\noindent Combining Lemmas \ref{Lem:PointsInTheEssentialSpectrum} and \ref{Lem:OutsideEssentialSPectrum} with Proposition \ref{Thm:Spectral Theory of decomposition} proves the first part of Theorem \ref{HVZ}. Statements (1) and (2) will follow from the corollaries below.
\begin{cor}\label{Cor:HVZ1}
	Assume $m=m_{\textup{ess}}$, $[m,3m]\subset \sigma_{\textup{ess}}(\omega)$ and $m$ is not isolated in $\sigma_{\textup{ess}}(\omega)$. Then $\sigma_{\textup{ess}}(F_{1})=[ \cE_{-1}+m,\infty )$.
\end{cor}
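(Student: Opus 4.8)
The plan is to prove the two inclusions $\sigma_{\textup{ess}}(F_1)\subset[\cE_{-1}+m,\infty)$ and $[\cE_{-1}+m,\infty)\subset\sigma_{\textup{ess}}(F_1)$ separately, using only Lemmas \ref{Lem:PointsInTheEssentialSpectrum} and \ref{Lem:OutsideEssentialSPectrum} together with the fact that $m=m_{\textup{ess}}\in\sigma_{\textup{ess}}(\omega)$ (immediate from $[m,3m]\subset\sigma_{\textup{ess}}(\omega)$, and in the degenerate case $m=0$ also from injectivity of $\omega$). For the inclusion ``$\subset$'' Lemma \ref{Lem:OutsideEssentialSPectrum} already gives $\sigma_{\textup{ess}}(F_1)\subset[\widetilde m,\infty)$ with $\widetilde m=\min\{m_{\textup{ess}}+\cE_{-1},\ \cE_1+m_{\textup{ess}}+m\}$, so with $m_{\textup{ess}}=m$ it only remains to verify $\widetilde m=\cE_{-1}+m$, i.e. $\cE_{-1}\le\cE_1+m$. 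For this I would invoke Lemma \ref{Lem:PointsInTheEssentialSpectrum} with $-\eta$ in place of $\eta$ (the hypotheses assumed in this section do not involve $\eta$, so the substitution is harmless and merely turns ``$\cE_{(-1)^q}$'' into $\cE_{(-1)^{q+1}\eta}$): the case $q=1$ yields $\cE_1+m\in\sigma_{\textup{ess}}(F_{-1})\subset\sigma(F_{-1})$, hence $\cE_{-1}=\inf\sigma(F_{-1})\le\cE_1+m$, as needed.

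For the inclusion ``$\supset$'' I would split into the massive and massless cases. If $m>0$, then for every odd $q$ Lemma \ref{Lem:PointsInTheEssentialSpectrum} applied with $\lambda_1,\dots,\lambda_q$ ranging over $[m,3m]\subset\sigma_{\textup{ess}}(\omega)$ gives $\cE_{-1}+[qm,3qm]\subset\sigma_{\textup{ess}}(F_1)$, since $(\lambda_1,\dots,\lambda_q)\mapsto\lambda_1+\cdots+\lambda_q$ maps $[m,3m]^q$ onto $[qm,3qm]$. The intervals $[qm,3qm]$ for $q=1,3,5,\dots$ overlap consecutively, because $(2k+1)m\le 3(2k-1)m$ for all $k\ge1$, so their union is $[m,\infty)$ and thus $[\cE_{-1}+m,\infty)\subset\sigma_{\textup{ess}}(F_1)$. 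If $m=0$, then $[\cE_{-1}+m,\infty)=[\cE_{-1},\infty)$; using that $0$ is not isolated in $\sigma_{\textup{ess}}(\omega)\subset[0,\infty)$ I would fix $\mu_j\in\sigma_{\textup{ess}}(\omega)$ with $0<\mu_j\to 0$, and for a given $t\ge0$ and $\varepsilon>0$ choose $j$ with $\mu_j<\varepsilon$ and an odd $q$ with $|q\mu_j-t|\le2\mu_j$ (consecutive odd multiples of $\mu_j$ differ by $2\mu_j$); Lemma \ref{Lem:PointsInTheEssentialSpectrum} with $\lambda_1=\cdots=\lambda_q=\mu_j$ then places $\cE_{-1}+q\mu_j\in\sigma_{\textup{ess}}(F_1)$ within $2\varepsilon$ of $\cE_{-1}+t$, and closedness of $\sigma_{\textup{ess}}(F_1)$ finishes the argument.

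The step I expect to require an actual idea rather than bookkeeping is pinning down $\widetilde m=\cE_{-1}+m$: the estimate of Lemma \ref{Lem:OutsideEssentialSPectrum} is a priori only $\inf\sigma_{\textup{ess}}(F_1)\ge\min\{\cE_{-1}+m,\cE_1+2m\}$, and ruling out the second term from being strictly smaller seems to genuinely need the comparison $\cE_{-1}\le\cE_1+m$ obtained from the essential spectrum of the opposite fiber $F_{-1}$. The massless subcase of the reverse inclusion is the other mildly delicate point, since there Lemma \ref{Lem:PointsInTheEssentialSpectrum} only produces a dense subset of $[\cE_{-1},\infty)$ and one must pass to the closure.
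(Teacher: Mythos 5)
Your proof is correct and takes essentially the same route as the paper's: Lemma~\ref{Lem:OutsideEssentialSPectrum} plus the comparison of $\cE_{\pm 1}$ for the inclusion ``$\subset$'', and Lemma~\ref{Lem:PointsInTheEssentialSpectrum} with an interval-overlap (resp.\ closure) argument for ``$\supset$''. One small remark: you rightly observe that the inequality needed to pin down $\widetilde m=\cE_{-1}+m$ is $\cE_{-1}\le\cE_1+m$, obtained by running Lemma~\ref{Lem:PointsInTheEssentialSpectrum} for the opposite fiber; the paper's proof instead writes ``$\cE_1\le\cE_{-1}+m$'', which is also true (it is the $q=1$ case for $F_1$ itself) but is not by itself the inequality that forces the minimum to be $\cE_{-1}+m$ — so your version is the more precise one.
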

\begin{proof}
If $m=m_{\textup{ess}}$ then $\cE_{1}\leq \cE_{-1}+m$ by Lemma \ref{Lem:PointsInTheEssentialSpectrum}. Hence the minimum in Lemma \ref{Lem:OutsideEssentialSPectrum} is $\cE_{-1}+m_{\textup{ess}}=\cE_{-1}+m$. If $m\neq 0$ then the result follows directly from Lemma \ref{Lem:PointsInTheEssentialSpectrum}, so we may assume $m=0$. Let $\varepsilon>0$ and $x\in [ \cE_{-1}+m,\infty )$. Since $0$ is not isolated in $\sigma_{\textup{ess}}(\omega)$ we find $\lambda\in \sigma_{\textup{ess}}(\omega)$ with $ \lambda\leq \varepsilon$. Note that $x-\cE_{-1}\geq 0$ so we may pick $q\in \NN_0$ such that
\begin{equation*}
\lvert x-\cE_{(-1)^{2q+1}}-(2q+1)\lambda\lvert\leq \varepsilon.
\end{equation*}
$\cE_{(-1)^{2q+1}}+(2q+1)\lambda \in \sigma_{\textup{ess}}(F_{1})$ by Lemma \ref{Lem:PointsInTheEssentialSpectrum} so $x\in \overline{\sigma_{\textup{ess}}(F_{1})}=\sigma_{\textup{ess}}(F_{1})$.
\end{proof}

\begin{cor}
Assume $[m_{\textup{ess}},2m_{\textup{ess}}]\subset \sigma_{\textup{ess}}(\omega)$ and $m_{\textup{ess}}$ is not isolated in $\sigma_{\textup{ess}}(\omega)$. Then $\sigma_{\textup{ess}}(H_{\eta}(\alpha,f,\omega))=[m_{\textup{ess}}+E_\eta(\alpha,f,\omega),\infty)$.
\end{cor}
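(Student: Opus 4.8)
The plan is to reduce everything to Theorem \ref{HVZ} together with the unitary equivalence in Proposition \ref{Thm:Spectral Theory of decomposition}, exactly in the spirit of Corollary \ref{Cor:HVZ1}. Abbreviate $H_\eta=H_\eta(\alpha,f,\omega)$, $F_{\pm\eta}=F_{\pm\eta}(\alpha,f,\omega)$, $E_\eta=E_\eta(\alpha,f,\omega)$ and $m_{\textup{ess}}=m_{\textup{ess}}(\omega)$. Since $UH_\eta U^*=F_\eta\oplus F_{-\eta}$ by Proposition \ref{Thm:Spectral Theory of decomposition}, the spectral quantities of $H_\eta$ are governed by the two fiber operators, and Theorem \ref{HVZ} already supplies the identity $\inf(\sigma_{\textup{ess}}(H_\eta))=E_\eta+m_{\textup{ess}}$. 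Because $\sigma_{\textup{ess}}(H_\eta)$ is a closed subset of $\RR$ bounded below by its own infimum, this identity immediately gives the inclusion $\sigma_{\textup{ess}}(H_\eta)\subseteq[E_\eta+m_{\textup{ess}},\infty)$, so the whole content of the statement is the reverse inclusion. (Note that the hypothesis that $m_{\textup{ess}}$ is not isolated in $\sigma_{\textup{ess}}(\omega)$ forces $m_{\textup{ess}}\in\sigma_{\textup{ess}}(\omega)$, in particular $m_{\textup{ess}}<\infty$.)

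For the reverse inclusion I would invoke the other part of Theorem \ref{HVZ}, namely $\bigcup_{q\geq 1}\{E_\eta+\lambda_1+\cdots+\lambda_q\mid\lambda_i\in\sigma_{\textup{ess}}(\omega)\}\subseteq\sigma_{\textup{ess}}(H_\eta)$, and split into the cases $m_{\textup{ess}}>0$ and $m_{\textup{ess}}=0$. If $m_{\textup{ess}}>0$, given $x\geq E_\eta+m_{\textup{ess}}$ I put $q=\lfloor(x-E_\eta)/m_{\textup{ess}}\rfloor$, which is $\geq 1$; then $qm_{\textup{ess}}\leq x-E_\eta<(q+1)m_{\textup{ess}}\leq 2qm_{\textup{ess}}$, so $\mu:=(x-E_\eta)/q$ lies in $[m_{\textup{ess}},2m_{\textup{ess}})$, which by hypothesis is contained in $\sigma_{\textup{ess}}(\omega)$. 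Hence $x=E_\eta+\underbrace{\mu+\cdots+\mu}_{q}$ is of the form appearing in the union above, so $x\in\sigma_{\textup{ess}}(H_\eta)$; here no limiting argument is needed and the interval hypothesis $[m_{\textup{ess}},2m_{\textup{ess}}]\subseteq\sigma_{\textup{ess}}(\omega)$ is precisely what makes the choice of $q$ work.

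If $m_{\textup{ess}}=0$ the target set is $[E_\eta,\infty)$. Since $\inf(\sigma_{\textup{ess}}(H_\eta))=E_\eta$ is finite, $\sigma_{\textup{ess}}(H_\eta)$ is nonempty and closed, hence contains its infimum $E_\eta$. For $x>E_\eta$ fix $\varepsilon\in(0,x-E_\eta)$; using that $0=m_{\textup{ess}}$ is not isolated in $\sigma_{\textup{ess}}(\omega)$, choose $\lambda\in\sigma_{\textup{ess}}(\omega)$ with $0<\lambda\leq\varepsilon$, and set $q=\lfloor(x-E_\eta)/\lambda\rfloor\geq 1$. Then $0\leq x-E_\eta-q\lambda<\lambda\leq\varepsilon$, and $E_\eta+q\lambda\in\sigma_{\textup{ess}}(H_\eta)$ by the union above, so $x$ lies within $\varepsilon$ of $\sigma_{\textup{ess}}(H_\eta)$; letting $\varepsilon\downarrow0$ and using closedness gives $x\in\sigma_{\textup{ess}}(H_\eta)$. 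Combining the two cases yields $[E_\eta+m_{\textup{ess}},\infty)\subseteq\sigma_{\textup{ess}}(H_\eta)$, hence equality. The only place where anything beyond bookkeeping happens is the $m_{\textup{ess}}=0$ case, where one cannot represent a general $x$ exactly and must instead approximate and appeal to closedness of the essential spectrum — the same mild obstacle as in the proof of Corollary \ref{Cor:HVZ1}.
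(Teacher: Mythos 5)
Your proposal is correct and follows the same route as the paper, namely: use the inclusion $\bigcup_{q\geq 1}\{E_\eta+\lambda_1+\cdots+\lambda_q\mid\lambda_i\in\sigma_{\textup{ess}}(\omega)\}\subset\sigma_{\textup{ess}}(H_\eta)$ (obtained from Lemma \ref{Lem:PointsInTheEssentialSpectrum} via Proposition \ref{Thm:Spectral Theory of decomposition}), then represent a generic $x\geq E_\eta+m_{\textup{ess}}$ exactly when $m_{\textup{ess}}>0$ via the interval hypothesis, and approximate it and appeal to closedness of the essential spectrum when $m_{\textup{ess}}=0$. The paper compresses this into ``the proof is now the same as for Corollary \ref{Cor:HVZ1}''; you have simply spelled out the arithmetic (and, correctly, the argument is a bit simpler than in Corollary \ref{Cor:HVZ1} because the coefficient in front of the $\lambda$-sum is always $E_\eta$ rather than the alternating $\cE_{(-1)^q}$, so one needs only $[m_{\textup{ess}},2m_{\textup{ess}}]$ and all $q\in\NN$ instead of $[m_{\textup{ess}},3m_{\textup{ess}}]$ and odd $q$).
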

\begin{proof}
Combining Proposition \ref{Thm:Spectral Theory of decomposition} and Lemma \ref{Lem:PointsInTheEssentialSpectrum} we see
\begin{equation*}
\{ E_\eta(\alpha,f,\omega)+\lambda_1+\cdots+\lambda_q\mid \lambda_i\in \sigma_{\textup{ess}}(\omega)  \}\subset \sigma_{\textup{ess}}(H_\eta(\alpha,f,\omega))
\end{equation*}	
for all $q\in \mathbb{N}$. The proof is now the same as for Corollary \ref{Cor:HVZ1}.
\end{proof}

\section{Proof of Theorem \ref{unique}}
In this section we prove Theorem \ref{unique}. Let $\alpha\in \RR^{2n}$, $f\in \cH^{2n}$ and $\omega$ be a selfadjoint operator on $\cH$. We assume $(\alpha,f,\omega)$ satisfies Hypothesis \ref{Hyp1}, \ref{Hyp2} and \ref{Hyp3}. For $\eta\in \RR$ we will throughout this section use the notation $F_{\eta}:=F_{ \eta}(\alpha,f,\omega)$, $\cE_{\eta}:=\cE_{ \eta}(\alpha,f,\omega)$, $H_\eta:=H_{ \eta}(\alpha,f,\omega)$, $m_{\textup{ess}}=m_{\textup{ess}}(\omega)$, $m=m(\omega)$ and $E_\eta:=E_{\eta}(\alpha,f,\omega)$. Let $\cH_{\RR}$ be the real Hilbert space from Lemma \ref{Lem:ExsistenceOfNiceRealSubspace} corresponding to $\{ f_i \}_{i=1}^{2n}$ and $L^2(Q,\cG,\PP)$ be the corresponding $Q$-space. 
\begin{lem}\label{ffs}
Define the unitary matrix
\begin{equation*}
A=\frac{1}{\sqrt{2}}\begin{pmatrix} 1& -1 \\ 1& 1
\end{pmatrix}.
\end{equation*}
Let $V$ be the Q-space isomorphism and define $U=A\otimes V$.  Then $U$ is a unitary map from $\CC^2\otimes \cF_b(\cH)$ to
\begin{equation*}
\CC^2 \otimes L^2(Q,\cG,\PP)=L^2(\{ \pm 1 \}\times Q , \cB( \{ \pm 1 \} )\otimes \cG,\tau\otimes \mathbb{P}):=L^2(X,\cX,\nu),
\end{equation*}
where $\tau$ is the counting measure. Here we use the tensor product
\begin{equation*}
((v_1,v_{-1})\otimes f)(a,x)=\delta_{1,a}v_1f(x)+\delta_{-1,a}v_{-1}f(x),
\end{equation*}
where $\delta_{i,j}$ is the Kronecker delta. For $v\in \cH_{\RR}$ we have
	\begin{align}
	U \sigma_x\otimes \varphi(v) U^*&=\Phi(v)\label{eq:211} \\
	U\sigma_z \otimes 1 U&=-\sigma_x\otimes 1 \label{eq:212} \\
	U1 \otimes d\Gamma(\omega) U^*&=1\otimes Vd\Gamma(\omega)V^*\label{eq:213},
	\end{align}
	where $\Phi(v)$ is a multiplication operator. Furthermore, $UH_\eta U^*$ generates a positivity improving semigroup if $\eta <0$.
\end{lem}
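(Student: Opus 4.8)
The first assertions are routine. Unitarity of $A$ is the identity $A^{*}A=1$ (a $2\times2$ matrix product); $V$ is unitary by the construction of the $Q$-space (Theorem~\ref{Thm:PropertiesOfQspace}), so $U=A\otimes V$ is unitary, and the identification of $\CC^{2}\otimes L^{2}(Q,\cG,\PP)$ with $L^{2}(X,\cX,\nu)$ is the canonical one. For the conjugation identities one writes $U(S\otimes T)U^{*}=(ASA^{*})\otimes(VTV^{*})$; the matrices $A\sigma_{x}A^{*}$ and $A\sigma_{z}A^{*}$ are evaluated by hand, while Theorem~\ref{Thm:PropertiesOfQspace} gives that $V\varphi(v)V^{*}$ is multiplication by a real-valued function on $Q$ whenever $v\in\cH_{\RR}$ — and $f_{1},\dots,f_{2n}\in\cH_{\RR}$ by Hypothesis~\ref{Hyp2} together with Lemma~\ref{Lem:ExsistenceOfNiceRealSubspace}. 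Hence $\Phi(v):=U(\sigma_{x}\otimes\varphi(v))U^{*}$ is multiplication by a function on $\{\pm1\}\times Q$, which is \eqref{eq:211}, and \eqref{eq:212} and \eqref{eq:213} follow immediately.

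For the semigroup statement, combine \eqref{eq:211}--\eqref{eq:213} to write $UH_{\eta}U^{*}=K_{\eta}+W$ with $K_{\eta}=1\otimes Vd\Gamma(\omega)V^{*}+\eta\,\sigma_{x}\otimes1$ and $W=\sum_{i=1}^{2n}\alpha_{i}\Phi(f_{i})^{i}$. Since $\Phi(f_{i})$ is, up to a sign, $\sigma_{z}\otimes(\text{a multiplication operator})$ and $\sigma_{z}^{i}$ equals $1$ or $\sigma_{z}$ according to the parity of $i$, each $\Phi(f_{i})^{i}$ — and hence $W$ — is a real self-adjoint multiplication operator on $L^{2}(X,\cX,\nu)$; it is in general unbounded, and, through the odd powers of the field, not bounded below. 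Still, $UH_{\eta}U^{*}$ is self-adjoint and bounded below by Proposition~\ref{Lem:BasicPropertiesSBmodel}, so $e^{-tUH_{\eta}U^{*}}$ is a genuine bounded self-adjoint operator. Now $\sigma_{x}\otimes1$ commutes with $1\otimes Vd\Gamma(\omega)V^{*}$, so $e^{-tK_{\eta}}=(e^{-t\eta\sigma_{x}}\otimes1)(1\otimes e^{-tVd\Gamma(\omega)V^{*}})$; by Theorem~\ref{Thm:PropertiesOfQspace} and the injectivity of $\omega$ (Hypothesis~\ref{Hyp1}(2)) the factor $e^{-tVd\Gamma(\omega)V^{*}}$ is positivity improving on $L^{2}(Q,\cG,\PP)$ for $t>0$, and for $\eta<0$ the matrix $e^{-t\eta\sigma_{x}}=\cosh(t|\eta|)\,1+\sinh(t|\eta|)\,\sigma_{x}$ has strictly positive entries. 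Applying $e^{-t\eta\sigma_{x}}\otimes1$ to a nonzero $\psi=(\psi_{1},\psi_{-1})\ge0$ produces a vector both of whose components are nonnegative and not a.e.\ zero, and $1\otimes e^{-tVd\Gamma(\omega)V^{*}}$ then makes both strictly positive a.e.; hence $e^{-tK_{\eta}}$ is positivity improving for $\eta<0$ and $t>0$.

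It remains to transfer this property across the multiplication perturbation $W$, using the standard principle that if $e^{-tK}$ is positivity improving for all $t>0$ and $K+W$ is self-adjoint and bounded below with $W$ a real multiplication operator, then $e^{-t(K+W)}$ is positivity improving for all $t>0$. For bounded $W$ this is the Dyson expansion of $e^{-t(K+W)}$ in powers of the nonnegative multiplication operator $(\sup W)\,1-W$: the zeroth-order term is the positivity improving operator $e^{t\sup W}e^{-tK}$ and every higher term is positivity preserving. The general case is obtained by truncating $W$ to a bounded $W_{R}$, passing to the limit $K_{\eta}+W_{R}\to UH_{\eta}U^{*}$ in the strong resolvent sense (legitimate by the relative bounds of Section~4, notably Lemma~\ref{Lem:TheRelativeBounds}, and the convergence arguments behind Lemma~\ref{Lem:ResolevntKonv}), which yields positivity preservation in the limit, and then upgrading this to positivity improvement via the Perron--Frobenius ergodicity criterion — a positivity preserving self-adjoint semigroup is positivity improving iff it has no nontrivial invariant closed ideal $L^{2}(M)$ — the required ergodicity of $UH_{\eta}U^{*}$ being inherited from that of $K_{\eta}$ because the multiplication operator $W$ leaves every such ideal invariant. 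The delicate point, and the main obstacle, is precisely this last step: reconciling the two-sided unboundedness of $W$ with the cutoff-and-limit procedure; everything else reduces to bookkeeping with \eqref{eq:211}--\eqref{eq:213} and the commutation relation $[\,\sigma_{x}\otimes1\,,\,1\otimes Vd\Gamma(\omega)V^{*}\,]=0$.
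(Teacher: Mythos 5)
Your proposal is correct and follows essentially the same route as the paper: establish the conjugation identities~\eqref{eq:211}--\eqref{eq:213} by computing $A\sigma_zA^*$ and $A\sigma_xA^*$ and invoking the $Q$-space theorem, show the free semigroup is ergodic, then transfer ergodicity across the unbounded multiplication perturbation via truncation and Faris-type arguments, and finally upgrade to positivity improvement via \cite[Theorem XIII.44]{RS4}. There is one pleasant difference: you prove directly that $e^{-tK_\eta}$ is positivity \emph{improving} for $\eta<0$ (all entries of $e^{-t\eta\sigma_x}$ strictly positive, then $1\otimes Ve^{-td\Gamma(\omega)}V^*$ improves each component), which bypasses the paper's detour through the existence of a strictly positive ground state of $UH_\eta(0,0,\omega)U^*$ and \cite[Theorem XIII.43]{RS4}. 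Where you are more informal than the paper is in the very step you flag as ``the delicate point'': the paper is careful to set up \emph{two-sided} uniform lower bounds and strong resolvent convergence — both $K_\eta+B_k\to UH_\eta U^*$ and $UH_\eta U^*-B_k\to K_\eta$ — before appealing to the proof of \cite[Theorem 3]{Farris}, and it is precisely this second direction that lets one pull an invariant ideal of $e^{-tUH_\eta U^*}$ back to an invariant ideal of $e^{-tK_\eta}$; your statement that ``ergodicity is inherited because $W$ leaves every such ideal invariant'' is the right intuition but does not on its own constitute a proof for unbounded $W$. Spelling out that the cutoff argument must run in both directions, with the uniform lower bound coming from Lemma~\ref{Lem:FundamentalLowerbound} together with the infinitesimal $d\Gamma(\omega)$-bound on $\alpha_1\Phi(f_1)$, would close the gap. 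As a small side remark, your claim that $W$ is ``not bounded below'' is not automatic (the dominant even power can control the linear term), but this does not affect the argument since you treat $W$ as a general unbounded real multiplication operator anyway.
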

\begin{proof}
Recall that $V\varphi(v)V^*:=\widetilde{\varphi}(v)$ is a multiplication operator for all $v\in \cH_\RR$. We now prove equations (\ref{eq:211}), (\ref{eq:212}) and (\ref{eq:213}). Equation (\ref{eq:213}) is trivial. To prove the other two one calculates
\begin{equation*}
A\sigma_zA^*=\sigma_x \,\,\,\,\,\text{and}\,\,\,\,\, A\sigma_xA^*=-\sigma_z
\end{equation*}
so $U\sigma_x\otimes \varphi(v)U=-\sigma_z\otimes \widetilde{\varphi}(v)$ and $U\sigma_z\otimes \varphi(v)U=\sigma_x\otimes 1$. Now $-\sigma_z\otimes \widetilde{\varphi}(v)$ obviously acts like multiplication by the map $(\Phi(v))(a,x)=-a(\widetilde{\varphi}(v))(x)$ so we are done proving equations (\ref{eq:211}) and (\ref{eq:212}).

Any element $\psi\in L^2(X,\cX,\nu)$ is of the form
\begin{equation*}
\psi=e_1\otimes \psi_1+e_{-1}\otimes \psi_{-1}.
\end{equation*}
Hence $\psi$ is (strictly) positive if and only if $\psi_1$ and $\psi_{-1}$ are (strictly) positive. Using Theorem \ref{Thm:PropertiesOfQspace} we find that $1\otimes \exp(-tVd\Gamma(\omega)V^*)$ is positivity preserving for all $t\geq 0$. Furthermore, the map $\sigma_x\otimes 1$ is positivity preserving since it maps $e_1\otimes \psi_1+e_{-1}\otimes \psi_{-1}$ to $e_{-1}\otimes \psi_1+e_{1}\otimes \psi_{-1}$ and so $\exp(t1\otimes \sigma_x)$ is positivity preserving for all $t\geq 0$. It follows that
\begin{align*}
\exp(-tUH_\eta(0,0,\omega)U^*)=\exp(-t\eta 1\otimes \sigma_x) (1\otimes \exp(-tVd\Gamma(\omega)V^*))
\end{align*}
is positivity preserving for all $\eta <0$ and $t\geq 0$. We will need the following observation:
\begin{equation}\label{eq:positivGS}
U(e_1\otimes \Omega)=Ae_1\otimes V\Omega=\frac{1}{\sqrt{2}}(e_1\otimes 1+e_{-1}\otimes 1 )=\frac{1}{\sqrt{2}}.
\end{equation}
Let $\eta <0$ and note that $e_1\otimes \Omega$ spans the ground state eigenspace of  $H_\eta(0,0,\omega)=\eta\sigma_z\otimes 1+1\otimes d\Gamma(\omega)$ by Theorem \ref{Thm:SpectralPropTensor}, so equation (\ref{eq:positivGS}) shows that $\frac{1}{\sqrt{2}}$ spans the ground state eigenspace of $UH_\eta(0,0,\omega)U^*$. Hence $UH_\eta(0,0,\omega)U^*$ generates a positivity preserving semi group and the ground state eigenspace is spanned by a strictly positive vector. This implies that the semi group generated by $UH_\eta(0,0,\omega)U^*$ is ergodic by \cite[Theorem XIII.43]{RS4}. Write
\begin{equation*}
UH_\eta U^*=UH_\eta (0,0,\omega)U^*+\sum_{j=1}^{2n}\alpha_j \Phi(f_j)^j:=UH_\eta (0,0,\omega)U^*+B
\end{equation*}
and define
\begin{equation*}
B_k=\sum_{j=1}^{2n}\alpha_j \Phi(f_j)^j1_{\{\lvert \Phi(f_j) \lvert\leq k\}},
\end{equation*}
which is a bounded multiplication operator. Assume now that we have proven the following statements
\begin{enumerate}
\item If $u,v\geq 0$ and $\langle u,\exp(-tB_k)v \rangle=0$ then $\langle u,v \rangle=0$

\item $UH_\eta (0,0,\omega)U^*+B_k $ and $UH_\eta U^*-B_k$ are uniformly bounded below in $k$ and converge in strong resolvent sense to $UH_\eta U^*$ and $UH_\eta (0,0,\omega)U^*$ respectively. 
\end{enumerate}
Then we may appeal to the proof of \cite[Theorem 3]{Farris} to see that $UH_\eta (\alpha,f,\omega)U^*$ generates an ergodic semigroup, which by \cite[Theorem XIII.44]{RS4} will be positivity improving.

Statement (1) is trivial since $B_k$ is a multiplication operator and $\lvert B_k\lvert <\infty$ almost everywhere. To prove statement (2) we let $\psi\in \textup{Span}(\widetilde{\cJ}(\cD(\omega))) $ and note that $U\psi\in \cD(\Phi(f_j)^j)$ for all $j\in \{ 1,\dots, 2n  \}$ by Proposition \ref{Lem:BasicPropertiesSBmodel}, so dominated convergence implies
\begin{equation*}
\lim_{k\rightarrow \infty }B_kU\psi = BU\psi.
\end{equation*}
Since $U\widetilde{\cJ}(\cD(\omega))$ spans a core for $UH_\eta (0,0,\omega)U^*$ and $UH_\eta U^*$ by Proposition \ref{Lem:BasicPropertiesSBmodel} we find that $UH_\eta(0,0,\omega)U^*+B_k$ and $UH_\eta U^*-B_k$ converges to $UH_\eta U^*$ and $UH_\eta(0,0,\omega)U^*$ respectively in strong resolvent sense (see \cite[Theorem VIII.25]{RS1}).

It remains only to find a uniform lower bound. We calculate
\begin{align*}
UH_\eta U^*-B_k=&-\eta \sigma_x\otimes 1+1\otimes Vd\Gamma(\omega)V^*+\alpha_1 \Phi(f_1)1_{\{\lvert \Phi(f_1) \lvert> k\}}\\&+\sum_{j=2}^{2n}\alpha_j \Phi(f_j)^j1_{\{\lvert \Phi(f_j) \lvert> k\}}.\\
UH_\eta (0,0,\omega)U^*+B_k=&-\eta \sigma_x\otimes 1+1\otimes Vd\Gamma(\omega)V^*+\alpha_1 \Phi(f_1)1_{\{\lvert \Phi(f_1) \lvert\leq k\}}\\&+\sum_{j=2}^{2n}\alpha_j \Phi(f_j)^j1_{\{\lvert \Phi(f_j) \lvert\leq k\}}.
\end{align*}
In both expressions, the first term on the right hand side is bounded below by $-\lvert \eta\lvert $ and the sum is bounded below uniformly in $k$ by Lemma \ref{Lem:FundamentalLowerbound}. Now $\alpha_1\Phi(f_1)$ is infinitesimally $1\otimes Vd\Gamma(\omega)V^*$ bounded by Lemmas \ref{Lem:FundamentalIneq} and \ref{Lem:SmallObsTensor}. Hence there are $a\in (0,1)$ and $b\geq 1$ such that
\begin{equation*}
\lVert\alpha_1 \Phi(f_1) \psi \lVert \leq a\lVert (1\otimes Vd\Gamma(\omega)V^*)\psi \lVert+b\lVert \psi\lVert
\end{equation*}
 for all $\psi \in \cD(1\otimes Vd\Gamma(\omega)V^*)$. $1_{C }\alpha_1\Phi(f_1)$ will satisfies the same inequality for any $C\in \cX$ so \cite[Theorem 9.1]{Weidmann} provides a lower bound of $1_{C}\alpha_1\Phi(f_1)+1\otimes Vd\Gamma(\omega)V^*$ independent of $C$. This finishes the proof.
\end{proof}

\begin{lem}\label{Lem:UniquenessSB}
If $\eta\neq 0$ and $E_\eta$ is an eigenvalue of $H_\eta$ then $E_\eta$ is non degenerate. If $\psi$ is any ground state of $H_\eta$ then $U\psi=e_{-\textup{sign}(\eta)}\otimes \psi$ where $\psi$ is an eigenvector of $F_{-\lvert \eta \lvert}$ corresponding to the energy $E_\eta$. In particular, we can conclude that $E_\eta$ is not an eigenvalue of $F_{\lvert \eta \lvert}$.
\end{lem}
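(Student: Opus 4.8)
\noindent\textit{Proof strategy.}
The plan is to play off the two unitary pictures already at our disposal. On one side, the permutation map $U$ of Proposition \ref{Thm:Spectral Theory of decomposition} block‑diagonalises $H_\eta$ as $F_\eta\oplus F_{-\eta}$; on the other, the map $A\otimes V$ of Lemma \ref{ffs} conjugates $H_\eta$, for $\eta<0$, into the generator of a positivity improving semigroup on $L^2(X,\cX,\nu)$. First I would reduce to the case $\eta<0$: since $(\sigma_x\otimes1)H_\eta(\sigma_x\otimes1)=H_{-\eta}$ and a direct computation with the explicit form of $U$ gives $(\sigma_x\otimes1)U=U(\sigma_x\otimes1)$, the case $\eta>0$ follows from the case $\eta<0$ by conjugating with $\sigma_x\otimes1$, which exchanges the fibers $e_{1}\otimes\cF_b(\cH)$ and $e_{-1}\otimes\cF_b(\cH)$ and turns $e_1\otimes\phi$ into $e_{-1}\otimes\phi=e_{-\textup{sign}(\eta)}\otimes\phi$. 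So from now on $\eta<0$, whence $\textup{sign}(\eta)=-1$ and $F_\eta=F_{-|\eta|}$; recall that $H_\eta$ is selfadjoint and bounded below by Proposition \ref{Lem:BasicPropertiesSBmodel}.

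For the non‑degeneracy I would invoke Lemma \ref{ffs}: $(A\otimes V)H_\eta(A\otimes V)^*$ generates a positivity improving semigroup, so by the Perron--Frobenius theorem (\cite[Theorem XIII.44]{RS4}) its lowest spectral value, if it is an eigenvalue, is simple with an a.e.\ strictly positive eigenvector. As this value equals $E_\eta$ and $A\otimes V$ is unitary, $E_\eta$ is a non‑degenerate eigenvalue of $H_\eta$ whenever it is an eigenvalue, and any ground state $\psi$ is, up to a scalar, mapped by $A\otimes V$ to an a.e.\ strictly positive function on $L^2(X,\cX,\nu)$.

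To identify which fiber $\psi$ lives in, I would bring in the spin‑parity operator $P:=\sigma_z\otimes\Gamma(-1)$. By \eqref{eq:113} and $U=U^*$ one has $U(\sigma_z\otimes1)U^*=P$, hence $UPU^*=\sigma_z\otimes1$; since $UH_\eta U^*=F_\eta\oplus F_{-\eta}$ is block diagonal with respect to the eigenspaces of $\sigma_z\otimes1$, the bounded selfadjoint $P$ commutes with $H_\eta$, so it preserves the one‑dimensional eigenspace at $E_\eta$ and $\psi$ is a $P$‑eigenvector with eigenvalue $\pm1$. The reference vector $\Omega_0:=e_1\otimes\Omega$ satisfies $P\Omega_0=\Omega_0$, and by \eqref{eq:positivGS} its image $(A\otimes V)\Omega_0=\tfrac{1}{\sqrt2}(e_1\otimes1+e_{-1}\otimes1)$ is a.e.\ strictly positive; pairing this with the positive ground state and using unitarity of $A\otimes V$ yields $\langle\Omega_0,\psi\rangle\neq0$. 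Since the $\pm1$‑eigenspaces of the selfadjoint $P$ are orthogonal, $\psi$ must sit in the $+1$‑eigenspace of $P$, so $U\psi$ lies in the $+1$‑eigenspace of $\sigma_z\otimes1$, i.e.\ $U\psi=e_1\otimes\phi$, and reading off the first block of $UH_\eta U^*=F_{-|\eta|}\oplus F_{|\eta|}$ gives $F_{-|\eta|}\phi=E_\eta\phi$ with $\phi\neq0$. Finally, an eigenvector of $F_{|\eta|}$ at $E_\eta$ would, placed in the second block and transported back by $U^*$, produce an eigenvector of $H_\eta$ at $E_\eta$ linearly independent of $\psi$, contradicting non‑degeneracy; hence $E_\eta$ is not an eigenvalue of $F_{|\eta|}$.

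I expect the only genuinely non‑routine ingredient to be the localisation of $\psi$ in the $+1$‑eigenspace of $P$, which is precisely where the parity structure of the model enters: it rests on testing the Perron--Frobenius positive ground state against the $P$‑invariant vector $e_1\otimes\Omega$, whose $A\otimes V$‑image is positive by \eqref{eq:positivGS}. Everything else is bookkeeping with Proposition \ref{Thm:Spectral Theory of decomposition}, Lemma \ref{Lem:TrasformationCanTrans} and standard positivity theory.
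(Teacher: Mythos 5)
Your proof is correct and rests on the same two pillars as the paper's argument (non-degeneracy from the positivity-improving semigroup of Lemma \ref{ffs} via Perron--Frobenius, together with the nonvanishing overlap $\langle e_{-\textup{sign}(\eta)}\otimes\Omega,\psi\rangle\neq 0$ read off from the explicit positive vector $\frac{1}{\sqrt{2}}$ of \eqref{eq:positivGS}). Where you diverge is the mechanism for locating the ground state in a single fiber: the paper simply computes $U(e_j\otimes\Omega)=e_j\otimes\Omega$ from Lemma \ref{Lem:TrasformationCanTrans}, deduces $\langle\psi_{-\textup{sign}(\eta)},\Omega\rangle\neq0$, and then lets non-degeneracy force $\psi_{\textup{sign}(\eta)}=0$; you instead introduce the parity operator $P=\sigma_z\otimes\Gamma(-1)$, observe via \eqref{eq:113} that $P$ commutes with $H_\eta$, and argue that the simple ground state must be a $P$-eigenvector which, having nonzero overlap with the $+1$-eigenvector $e_1\otimes\Omega$, lies in the $+1$-eigenspace. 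Both routes invoke non-degeneracy at the same place and with the same force; your parity-operator formulation makes the spin-parity symmetry explicit and is perhaps more conceptual, while the paper's direct computation with $U$ is shorter. One small remark: the commutation $(\sigma_x\otimes1)U=U(\sigma_x\otimes1)$ you use for the $\eta>0$ reduction does hold (a two-line check on the coordinate description of $U$), but it is worth stating it as a separate display since it is not recorded in the paper; the paper instead avoids it by directly noting that $\psi$ has nonzero overlap with $\sigma_x\otimes1(e_1\otimes\Omega)=e_{-1}\otimes\Omega$ when $\eta>0$.
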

\begin{proof}
Let $\eta\neq 0$ and assume $E_\eta$ is an eigenvalue of $H_\eta$ with corresponding eigenvector $\psi$.	If $\eta <0$ then non degeneracy $E_\eta$ follows from Lemma \ref{ffs} and \cite[Theorem XIII.43]{RS4}. Furthermore, $\langle \psi, e_{1}\otimes \Omega\rangle\neq 0 $ since $e_{1}\otimes \Omega$ is mapped the positive element $\frac{1}{\sqrt{2}}$ under the map from Lemma \ref{ffs} (see equation (\ref{eq:positivGS})). If $\eta>0$ then $\sigma_x\otimes 1$ transforms $H_\eta$ into $H_{-\eta}$ showing that non degeneracy of $E_\eta$ holds in this case as well, but now $\psi$ will have nonzero inner product with $\sigma_x\otimes 1(e_1\otimes \Omega)=e_{-1}\otimes \Omega$. So all in all we can conclude that $E_{\eta}$ is a non degenerate eigenvalue of $H_\eta$ and $0 \neq \langle \psi, e_{-\text{sign}(\eta)}\otimes \Omega \rangle$.

Let $U$ be the unitary map from Proposition \ref{Thm:Spectral Theory of decomposition}. Then $U\psi=(\psi_{1},\psi_{-1})=e_1\otimes \psi_{1}+e_{-1}\otimes \psi_{-1}$ is a ground state of $F_{\eta}\oplus F_{-\eta}$ corresponding to the eigenvalue $E_\eta$. We see 
\begin{align*}
0 &\neq \langle \psi, e_{-\text{sign}(\eta)}\otimes \Omega \rangle= \langle (\psi_1,\psi_{-1}), Ue_{-\text{sign}(\eta)}\otimes \Omega \rangle\\&=\langle (\psi_1,\psi_{-1}), e_{-\text{sign}(\eta)}\otimes \Omega \rangle=\langle \psi_{-\text{sign}(\eta)},\Omega \rangle.
\end{align*}
This implies $\psi_{-\text{sign}(\eta)}\neq 0$. Hence $\psi_{-\text{sign}(\eta)}$ is an eigenvector of $F_{-\lvert \eta\lvert }$ corresponding to the eigenvalue $E_\eta$. If $E_\eta$ was an eigenvalue of $F_{\lvert \eta\lvert}$ then it would be an eigenvalue of $H_\eta$ and $F_{-\lvert \eta \lvert}$ as well. In particular, $E_\eta$ would be an eigenvalue of $H_\eta$ and the multiplicity would be $2$ or more which is a contradiction. 
\end{proof}
\begin{lem}\label{Uniquee=0}
	If $\cE_{-\lvert \eta\lvert}$ is an eigenvalue of $F_{-\lvert \eta\lvert}$ then $\cE_{-\lvert \eta\lvert}$ is non degenerate and every eigenvector will have nonzero inner product with $\Omega$. 
\end{lem}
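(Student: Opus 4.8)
The plan is to show that, after transporting to the $Q$-space, the semigroup generated by the fibre operator $F_{-\lvert\eta\rvert}$ is ergodic --- mimicking the proof of Lemma \ref{ffs}, but applied to $F_{-\lvert\eta\rvert}$ itself rather than to $H_\eta$ --- and then to read off simplicity and positivity of the ground state from Perron--Frobenius, \cite[Theorem XIII.43]{RS4}. Working with the fibre directly has the advantage of being uniform in $\eta\in\RR$; this matters for $\eta=0$, where the $\CC^2$-doubling of Lemma \ref{ffs} is unavailable ($H_0$ has a two-dimensional ground eigenspace), and it also covers $\eta\neq0$ without recourse to Lemma \ref{Lem:UniquenessSB}. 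Concretely, with $V\colon\cF_b(\cH)\to L^2(Q,\cG,\PP)$ the $Q$-space isomorphism, $\Theta:=V\Gamma(-1)V^*$ (a self-adjoint involution on $L^2(Q)$ which is positivity preserving by the description of the $Q$-space in Theorem \ref{Thm:PropertiesOfQspace}) and $\widetilde\varphi(f_j):=V\varphi(f_j)V^*$ (a multiplication operator, since $f_j\in\cH_\RR$ by Hypothesis \ref{Hyp2}), one has
\begin{equation*}
VF_{-\lvert\eta\rvert}V^*=\bigl(Vd\Gamma(\omega)V^*-\lvert\eta\rvert\Theta\bigr)+\sum_{j=1}^{2n}\alpha_j\,\widetilde\varphi(f_j)^j .
\end{equation*}

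The crucial observation is the sign of the coefficient $-\lvert\eta\rvert\le0$ multiplying $\Gamma(-1)$: it makes $e^{u\Theta}=\cosh(u)+\sinh(u)\Theta$ positivity preserving for $u\ge0$, so the free operator $Vd\Gamma(\omega)V^*-\lvert\eta\rvert\Theta$ generates the positivity preserving semigroup $e^{t\lvert\eta\rvert\Theta}e^{-tVd\Gamma(\omega)V^*}$ (the two factors commuting, and $e^{-tVd\Gamma(\omega)V^*}$ being positivity preserving by Theorem \ref{Thm:PropertiesOfQspace}). For $F_{\lvert\eta\rvert}$ the coefficient is $+\lvert\eta\rvert\ge0$ and this breaks down, which is precisely why $F_{-\lvert\eta\rvert}$ is the distinguished fibre. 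Moreover $-\lvert\eta\rvert$ is the bottom of the spectrum of $Vd\Gamma(\omega)V^*-\lvert\eta\rvert\Theta$ and is a \emph{simple} eigenvalue: the eigenvalue equation forces $d\Gamma(\omega)\psi_{\textup{even}}=0$ and $(d\Gamma(\omega)+2\lvert\eta\rvert)\psi_{\textup{odd}}=0$, hence $\psi\in\CC\Omega$ by injectivity of $\omega$, and $\Omega$ maps to the strictly positive constant $V\Omega=\mathbf 1$; so the free semigroup is ergodic by \cite[Theorem XIII.43]{RS4}. I would then add back the interaction exactly as in the proof of Lemma \ref{ffs}: $\sum_j\alpha_j\widetilde\varphi(f_j)^j$ is a multiplication operator, a.e.\ finite, whose truncations $B_k=\sum_j\alpha_j\widetilde\varphi(f_j)^j\mathbf 1_{\{\lvert\widetilde\varphi(f_j)\rvert\le k\}}$ are bounded, uniformly bounded below (Lemmas \ref{Lem:FundamentalIneq} and \ref{Lem:FundamentalLowerbound}, the term $\alpha_1\widetilde\varphi(f_1)$ being infinitesimally $Vd\Gamma(\omega)V^*$-bounded), and converge to the interaction on $V\,\textup{Span}(\cJ(\cD(\omega)))$, which spans a core both for $VF_{-\lvert\eta\rvert}V^*$ and for $Vd\Gamma(\omega)V^*$ by Proposition \ref{Lem:BasicPropertiesSBmodel}; the argument of \cite[Theorem 3]{Farris} then shows that $e^{-tVF_{-\lvert\eta\rvert}V^*}$ is ergodic, indeed positivity improving by \cite[Theorem XIII.44]{RS4}.

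Finally I would conclude: if $\cE_{-\lvert\eta\rvert}=\inf\sigma(F_{-\lvert\eta\rvert})$ is an eigenvalue, then by ergodicity of $e^{-tVF_{-\lvert\eta\rvert}V^*}$ and \cite[Theorem XIII.43]{RS4} it is simple and its eigenvector $\psi\in L^2(Q)$ is strictly positive a.e., so every eigenvector of $F_{-\lvert\eta\rvert}$ at $\cE_{-\lvert\eta\rvert}$ is of the form $\phi=c\,V^{*}\psi$ with $c\neq0$ and
\begin{equation*}
\langle\phi,\Omega\rangle=c\,\langle\psi,V\Omega\rangle=c\int_Q\psi\,d\PP\neq0 .
\end{equation*}
I expect the main obstacle to be the middle step: transcribing the Farris/Perron--Frobenius machinery and the uniform lower bounds for the cut-off interactions from the proof of Lemma \ref{ffs} to the present setting. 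The positivity preservation itself is essentially just the remark that $-\lvert\eta\rvert\le0$ makes $e^{u\Gamma(-1)}$ positivity preserving for $u\ge0$.
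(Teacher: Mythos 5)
Your proof is correct, and it takes a mildly different organizational route from the paper's. The paper treats $\eta=0$ first, establishing ergodicity of $e^{-tVF_0V^*}$ via the truncations $A_k$ and Faris's Theorem~3, and then bolts on the bounded positivity-preserving perturbation $-\lvert\eta\rvert V\Gamma(-1)V^*$ via Theorem~\ref{Thm:PropertiesOfQspace}(1) together with Faris's Theorem~2 to reach the $\eta\neq 0$ case. You instead fold $-\lvert\eta\rvert\Gamma(-1)$ into the \emph{free} operator from the outset, verify directly that $d\Gamma(\omega)-\lvert\eta\rvert\Gamma(-1)$ has $-\lvert\eta\rvert$ as a simple ground-state eigenvalue with eigenvector $\Omega$ (using injectivity of $\omega$) and that its semigroup $e^{t\lvert\eta\rvert\Theta}e^{-tVd\Gamma(\omega)V^*}$ is positivity preserving, and then run a single Faris Theorem~3 argument for all $\eta$. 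The checks you would need --- $\Theta=V\Gamma(-1)V^*$ is positivity preserving by Theorem~\ref{Thm:PropertiesOfQspace}(1) since $-1$ is a contraction preserving $\cH_\RR$; $e^{u\Theta}=\cosh(u)+\sinh(u)\Theta$ is positivity preserving for $u\geq 0$; and the cut-off conditions (1) and (2) for $B_k$ go through verbatim with the extra bounded term $-\lvert\eta\rvert\Theta\geq-\lvert\eta\rvert$ in the lower bounds --- are all sound. What your version buys is a uniform argument that avoids invoking a second Faris theorem and that makes the role of the sign $-\lvert\eta\rvert$ (as opposed to $+\lvert\eta\rvert$) transparent at the level of the free semigroup itself; what the paper's version buys is not having to re-derive simplicity and ergodicity for the modified free operator, since it simply reuses the $\eta=0$ result as a black box and appeals to a bounded-perturbation theorem. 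Both are valid.
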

\begin{proof}
We start with the case $\eta=0$. Let $V$ be the Q-space isomorphism from Theorem \ref{Thm:PropertiesOfQspace}. From Theorem \ref{Thm:PropertiesOfQspace} we know that $Vd\Gamma(\omega)V^*$ generates a positivity improving semigroup and $V\Omega=1$. We now prove that the semigroup of $VF_0V^*$ is positivity improving. Note
\begin{equation*}
VF_0V^*=VF_0(0,0,\omega)V^*+\sum_{j=1}^{2n}\alpha_j \widetilde{\varphi}(v_j)^j:=VH_\eta(0,0,\omega)V^*+A,
\end{equation*}
and define
\begin{equation*}
A_k=\sum_{j=1}^{2n}\alpha_j \widetilde{\varphi}(v_j)^j1_{\{\lvert \widetilde{\varphi}(v_j) \lvert\leq k\}},
\end{equation*}
which is now a bounded multiplication operator. With the exact same proof as in Lemma \ref{ffs} we check
\begin{enumerate}
\item If $u,v\geq 0$ and $\langle u,\exp(-tA_k)v \rangle=0$ then $\langle u,v \rangle=0$.

\item  $VF_0(0,0,\omega)V^*+A_k$ and $VF_0V^*-A_k$ are uniformly bounded below in $k$ and converge in strong resolvent sense to $VF_0V^*$ and $VF_0(0,0,\omega)V^*$ respectively.
\end{enumerate}
An appeal to the proof of \cite[Theorem 3]{Farris} along with \cite[Theorem XIII.43]{RS1} finishes the proof when $\eta=0$. For $\eta\neq 0$ one may combine Theorem \ref{Thm:PropertiesOfQspace} part (1) with \cite[Theorem 2]{Farris} to obtain the conclusion. 
\end{proof}

\noindent We can now prove some spectral properties of the fiber operators. In the remaining part of this section we will also assume Hypothesis \ref{Hyp4} is satisfied so we may use Theorem \ref{HVZ} except for part (3). However part (3) of Theorem \ref{HVZ} is proven in the next lemma
\begin{lem}\label{Fiberenergy}
$\cE_{-\lvert \eta\lvert}=E_\eta$ and $\cE_{-\lvert \eta\lvert}\leq \cE_{\lvert \eta\lvert}$. Furthermore, $\cE_{-\lvert \eta\lvert}< \cE_{\lvert \eta\lvert}$ if and only if $m> 0$ and $\eta\neq 0$. In particular, if $\eta\neq 0$ and $m=0$ then $F_{\lvert \eta\lvert}$ has no ground state.
\end{lem}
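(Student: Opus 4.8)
The plan is to deduce the lemma from two results already available: the bound $\inf\sigma_{\textup{ess}}(F_{|\eta|})\ge\min\{\cE_{-|\eta|}+m_{\textup{ess}},\,\cE_{|\eta|}+m+m_{\textup{ess}}\}$ from the first part of Theorem \ref{HVZ} (available since Hypothesis \ref{Hyp4} holds), and the assertion of Lemma \ref{Lem:UniquenessSB} that, for $\eta\neq0$, $E_\eta$ is never an eigenvalue of $F_{|\eta|}$ once it is an eigenvalue of $H_\eta$. By Proposition \ref{Thm:Spectral Theory of decomposition}, $H_\eta$ is unitarily equivalent to $F_\eta\oplus F_{-\eta}$, so $E_\eta=\min\{\cE_{|\eta|},\cE_{-|\eta|}\}$; hence it suffices to show that $\cE_{-|\eta|}=E_\eta$ always and that, for $\eta\neq0$, $\cE_{|\eta|}>E_\eta$ exactly when $m>0$, since then $\cE_{-|\eta|}=E_\eta\le\cE_{|\eta|}$ and $\cE_{-|\eta|}<\cE_{|\eta|}$ is equivalent to $\cE_{|\eta|}>E_\eta$. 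The case $\eta=0$ is trivial ($F_0=F_{-0}$), so assume $\eta\neq0$ from now on.

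Step 1: the case $m_{\textup{ess}}>0$ (which covers $m>0$, as $m\le m_{\textup{ess}}$). I claim $\cE_{|\eta|}>E_\eta$. If not, then $\cE_{|\eta|}=E_\eta$ (one always has $\cE_{|\eta|}\ge E_\eta$), so $\cE_{-|\eta|}\ge\cE_{|\eta|}$, and Theorem \ref{HVZ} gives
\[
\inf\sigma_{\textup{ess}}(F_{|\eta|})\ \ge\ \min\{\cE_{-|\eta|}+m_{\textup{ess}},\ \cE_{|\eta|}+m+m_{\textup{ess}}\}\ \ge\ \cE_{|\eta|}+m_{\textup{ess}}\ >\ \cE_{|\eta|}.
\]
Thus $\cE_{|\eta|}=\inf\sigma(F_{|\eta|})$ lies strictly below $\sigma_{\textup{ess}}(F_{|\eta|})$ and is an isolated eigenvalue of $F_{|\eta|}$; by Proposition \ref{Thm:Spectral Theory of decomposition} it is then an eigenvalue of $H_\eta$ as well, so $E_\eta=\cE_{|\eta|}$ is simultaneously an eigenvalue of $H_\eta$ and of $F_{|\eta|}$, contradicting Lemma \ref{Lem:UniquenessSB}. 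Hence $\cE_{|\eta|}>E_\eta$, and therefore $\cE_{-|\eta|}=E_\eta<\cE_{|\eta|}$.

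Step 2: the case $m_{\textup{ess}}=0$, which by injectivity of $\omega$ is the same as $m=0$. Here I would use a mass shift. For $\delta>0$ the triple $(\alpha,f,\omega+\delta I)$ again satisfies Hypotheses \ref{Hyp1}--\ref{Hyp4} (the shift leaves Hypotheses \ref{Hyp1}, \ref{Hyp2}, \ref{Hyp4} untouched and makes $m(\omega+\delta I)=m(\omega)+\delta>0$, which secures Hypothesis \ref{Hyp3} also for $n>2$), while $m_{\textup{ess}}(\omega+\delta I)=\delta>0$; so Step 1 applies and $\cE_{-|\eta|}(\alpha,f,\omega+\delta I)=E_\eta(\alpha,f,\omega+\delta I)$. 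Since $F_{\pm|\eta|}(\alpha,f,\omega+\delta I)=F_{\pm|\eta|}(\alpha,f,\omega)+\delta N$ and $H_\eta(\alpha,f,\omega+\delta I)=H_\eta(\alpha,f,\omega)+\delta(1\otimes N)$, these families decrease, as $\delta\downarrow0$, to the operators at $\delta=0$; comparing Rayleigh quotients on a core of finite-particle vectors in $\cN\cap\cD(d\Gamma(\omega))$ (on which $\delta\langle\psi,N\psi\rangle$ is finite and vanishes with $\delta$) shows $\cE_{-|\eta|}(\alpha,f,\omega+\delta I)\downarrow\cE_{-|\eta|}$ and $E_\eta(\alpha,f,\omega+\delta I)\downarrow E_\eta$, so $\cE_{-|\eta|}=E_\eta$ here too. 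Moreover $m=0$ forces $0\in\sigma_{\textup{ess}}(\omega)$, so Lemma \ref{Lem:PointsInTheEssentialSpectrum}, applied with spectral parameter $-|\eta|$, $q=1$ and $\lambda_1=0$, gives $\cE_{|\eta|}\in\sigma_{\textup{ess}}(F_{-|\eta|})\subset\sigma(F_{-|\eta|})$, hence $\cE_{|\eta|}\ge\cE_{-|\eta|}=E_\eta$; combined with $\cE_{-|\eta|}=E_\eta\le\cE_{|\eta|}$ this gives $\cE_{-|\eta|}=\cE_{|\eta|}$.

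Collecting the pieces: $\cE_{-|\eta|}=E_\eta$, hence $\cE_{-|\eta|}\le\cE_{|\eta|}$, in every case; $\cE_{-|\eta|}<\cE_{|\eta|}$ when $\eta\neq0$ and $m>0$ (Step 1); and $\cE_{-|\eta|}=\cE_{|\eta|}$ when $\eta=0$ (trivially) or when $m=0$ (Step 2) --- which is exactly the stated equivalence. Finally, if $\eta\neq0$ and $m=0$ then $\cE_{|\eta|}=\cE_{-|\eta|}=E_\eta$, and a ground state of $F_{|\eta|}$ would make $E_\eta$ an eigenvalue of $F_{|\eta|}$ and hence of $H_\eta$, contradicting Lemma \ref{Lem:UniquenessSB}; so $F_{|\eta|}$ has no ground state. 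The step I expect to be the main obstacle is Step 2: apart from checking that Hypotheses \ref{Hyp1}--\ref{Hyp4} are stable under $\omega\mapsto\omega+\delta I$, one should make sure no circularity arises, which it does not, since Lemmas \ref{Lem:UniquenessSB} and \ref{Lem:PointsInTheEssentialSpectrum} are proved without using the present lemma.
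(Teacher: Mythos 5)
There is a genuine gap in Step 2. After you establish $\cE_{-\lvert\eta\lvert}=E_\eta$ by the mass-shift argument, you try to obtain $\cE_{-\lvert\eta\lvert}=\cE_{\lvert\eta\lvert}$. You invoke Lemma~\ref{Lem:PointsInTheEssentialSpectrum} with spectral parameter $-\lvert\eta\lvert$, $q=1$, $\lambda_1=0$, which yields $\cE_{\lvert\eta\lvert}\in\sigma_{\textup{ess}}(F_{-\lvert\eta\lvert})$ and hence $\cE_{\lvert\eta\lvert}\geq\cE_{-\lvert\eta\lvert}$. But this is the \emph{same} inequality you already have from $\cE_{-\lvert\eta\lvert}=E_\eta=\min\{\cE_{\lvert\eta\lvert},\cE_{-\lvert\eta\lvert}\}\leq\cE_{\lvert\eta\lvert}$. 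Combining two instances of $\cE_{-\lvert\eta\lvert}\leq\cE_{\lvert\eta\lvert}$ does not yield equality. What you need is the \emph{reverse} direction $\cE_{\lvert\eta\lvert}\leq\cE_{-\lvert\eta\lvert}$, which requires applying Lemma~\ref{Lem:PointsInTheEssentialSpectrum} (equivalently, the second inclusion of Theorem~\ref{HVZ}) with spectral parameter $\lvert\eta\lvert$: then $\cE_{(-1)^1\lvert\eta\lvert}+0=\cE_{-\lvert\eta\lvert}\in\sigma_{\textup{ess}}(F_{\lvert\eta\lvert})$, so $\cE_{\lvert\eta\lvert}\leq\cE_{-\lvert\eta\lvert}$. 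With both directions in hand you get $\cE_{\lvert\eta\lvert}=\cE_{-\lvert\eta\lvert}$, and then $E_\eta=\min\{\cE_{\lvert\eta\lvert},\cE_{-\lvert\eta\lvert}\}=\cE_{-\lvert\eta\lvert}$ follows immediately.

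Once the missing half of the HVZ application is added, your mass-shift argument becomes entirely superfluous for the $m=0$ case, and Step~2 collapses to exactly the paper's argument (apply the HVZ inclusion for both signs of $\eta$ to obtain $\cE_{\lvert\eta\lvert}=\cE_{-\lvert\eta\lvert}$, then read off $E_\eta$ from Proposition~\ref{Thm:Spectral Theory of decomposition}). Your Step~1 is also essentially the paper's argument reorganised as a proof by contradiction: both hinge on the incompatibility, forced by Lemma~\ref{Lem:UniquenessSB}, of having $E_\eta$ be simultaneously an eigenvalue of $H_\eta$ and of $F_{\lvert\eta\lvert}$, with the HVZ lower bound on $\inf\sigma_{\textup{ess}}(F_{\lvert\eta\lvert})$ supplying the needed eigenvalue. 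So apart from the Step~2 gap, the route is the same.
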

\begin{proof}
If $\eta=0$ then $\cE_{ \lvert \eta\lvert}= \cE_{- \lvert \eta\lvert}$ is trivial. If $m=0$ then $m_{\textup{ess}}=0$ by injectivity of $\omega$. Using Theorem \ref{HVZ} we obtain $\cE_{\pm \lvert \eta\lvert}\leq \cE_{\mp \lvert \eta\lvert}$ since $\cE_{\mp \lvert \eta\lvert}\in \sigma_{\textup{ess}}(F_{\pm \lvert \eta\lvert})$ so $\cE_{ -\lvert \eta\lvert}= \cE_{ \lvert \eta\lvert}$. The statement regarding absence of ground states for $F_{\lvert \eta\lvert}$ now follows from $\cE_{-\lvert \eta\lvert}= \cE_{\lvert \eta\lvert}$ and Lemma $\ref{Lem:UniquenessSB}$.

 Assume $m> 0$ and $\eta\neq 0$. $m>0$ implies that $E_\eta$ is an eigenvalue of $H_{\eta}$ by Theorem \ref{HVZ} and so $E_\eta=\cE_{-\lvert \eta\lvert}$ is an eigenvalue of $F_{-\lvert \eta\lvert}$ by Lemma \ref{Lem:UniquenessSB}. Since $\cE_{-\lvert \eta\lvert}=E_\eta\leq \cE_{\lvert \eta\lvert}$ we just have to prove that $\cE_{-\lvert \eta\lvert}=\cE_{\lvert \eta\lvert}$ is impossible. Assume $\cE_{-\lvert \eta\lvert}=  \cE_{\lvert \eta\lvert}$. Then Theorem \ref{HVZ} implies that
	\begin{equation*}
	\inf (\sigma_{\textup{ess}}(F_{\lvert \eta\lvert}))=\cE_{-\lvert \eta\lvert}+m_{\textup{ess}}>\cE_{\lvert \eta\lvert},
	\end{equation*}
	and so $\cE_{-\lvert \eta\lvert}=\cE_{\lvert \eta\lvert}=E_\eta$ would be an eigenvalue of $F_{\lvert \eta\lvert}$. However, this gives a contradiction with Lemma \ref{Lem:UniquenessSB}.
\end{proof}
\noindent Regarding exited states we deduce the following
\begin{lem}\label{Negative fiber has groundstates for positive mass}
If $\eta\neq 0$ and $\cE_{\lvert \eta\lvert}$ is an eigenvalue of $F_{\lvert \eta\lvert}$ then it is an eigenvalue of $H_{\eta}$ contained in $(E,E+m_{\textup{ess}}]$. Furthermore, if $2\lvert \eta\lvert <m_{\textup{ess}}$ then $\cE_{\lvert \eta\lvert}$ is an eigenvalue of $F_{\lvert \eta\lvert}$.
\end{lem}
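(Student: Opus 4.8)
The plan is to derive both assertions from the unitary equivalence $UH_\eta U^*=F_\eta\oplus F_{-\eta}$ of Proposition \ref{Thm:Spectral Theory of decomposition} --- so that, after reordering the two summands, $H_\eta$ is unitarily equivalent to $F_{\lvert\eta\rvert}\oplus F_{-\lvert\eta\rvert}$ --- together with the identity $\cE_{-\lvert\eta\rvert}=E_\eta$, the comparison $\cE_{-\lvert\eta\rvert}\le\cE_{\lvert\eta\rvert}$, and the essential-spectrum formula $\inf\sigma_{\textup{ess}}(F_{\lvert\eta\rvert})=\cE_{-\lvert\eta\rvert}+m_{\textup{ess}}$, all of which are now available from Lemma \ref{Fiberenergy} and Theorem \ref{HVZ}.

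For the first assertion, suppose $\cE_{\lvert\eta\rvert}$ is an eigenvalue of $F_{\lvert\eta\rvert}$. Then it is an eigenvalue of $F_{\lvert\eta\rvert}\oplus F_{-\lvert\eta\rvert}$, hence of $H_\eta$. To locate it, I would first observe that $m>0$: if $m=0$ then Lemma \ref{Fiberenergy} asserts that $F_{\lvert\eta\rvert}$ has no ground state, contradicting the hypothesis. By Lemma \ref{Fiberenergy} again, $E_\eta=\cE_{-\lvert\eta\rvert}<\cE_{\lvert\eta\rvert}$, which is the strict lower bound. For the upper bound, $\cE_{\lvert\eta\rvert}=\inf\sigma(F_{\lvert\eta\rvert})\le\inf\sigma_{\textup{ess}}(F_{\lvert\eta\rvert})=\cE_{-\lvert\eta\rvert}+m_{\textup{ess}}=E_\eta+m_{\textup{ess}}$ by Theorem \ref{HVZ}. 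Thus $\cE_{\lvert\eta\rvert}\in(E_\eta,E_\eta+m_{\textup{ess}}]$ is an eigenvalue of $H_\eta$.

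For the second assertion, the decisive point is that $F_{\lvert\eta\rvert}$ and $F_{-\lvert\eta\rvert}$ share the common domain furnished by Proposition \ref{Lem:BasicPropertiesSBmodel} and differ there only by the bounded selfadjoint operator $F_{\lvert\eta\rvert}-F_{-\lvert\eta\rvert}=2\lvert\eta\rvert\,\Gamma(-1)$, whose norm equals $2\lvert\eta\rvert$ since $\Gamma(-1)$ is unitary. The min--max characterization of the bottom of the spectrum of a semibounded selfadjoint operator then gives $\cE_{\lvert\eta\rvert}\le\cE_{-\lvert\eta\rvert}+2\lvert\eta\rvert$. Combining this with $\inf\sigma_{\textup{ess}}(F_{\lvert\eta\rvert})=\cE_{-\lvert\eta\rvert}+m_{\textup{ess}}$ and the hypothesis $2\lvert\eta\rvert<m_{\textup{ess}}$ yields $\cE_{\lvert\eta\rvert}<\inf\sigma_{\textup{ess}}(F_{\lvert\eta\rvert})$; since $F_{\lvert\eta\rvert}$ is selfadjoint and semibounded, $\cE_{\lvert\eta\rvert}=\inf\sigma(F_{\lvert\eta\rvert})$ is then an isolated point of the spectrum below the essential spectrum, hence an eigenvalue of finite multiplicity. (If $m_{\textup{ess}}=\infty$ the hypothesis is automatic, but then $\omega$, and hence $F_{\lvert\eta\rvert}$, has compact resolvent by Proposition \ref{Lem:BasicPropertiesSBmodel} and the conclusion is immediate.)

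I do not anticipate a genuine obstacle; the proof is essentially bookkeeping on top of Lemma \ref{Fiberenergy} and Theorem \ref{HVZ}. The only step deserving care is the perturbation estimate $\cE_{\lvert\eta\rvert}-\cE_{-\lvert\eta\rvert}\le2\lvert\eta\rvert$, where one must first record that the two fiber operators genuinely share an operator domain before adding $2\lvert\eta\rvert\Gamma(-1)$ as a bounded perturbation and invoking the spectral theorem; this is immediate from the domain description in Proposition \ref{Lem:BasicPropertiesSBmodel}.
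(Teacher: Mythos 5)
Your proof is correct and follows essentially the same route as the paper: both parts rest on the HVZ relation $\inf\sigma_{\textup{ess}}(F_{\lvert\eta\rvert})=\cE_{-\lvert\eta\rvert}+m_{\textup{ess}}$ together with Lemma \ref{Fiberenergy}, and the key estimate $\cE_{\lvert\eta\rvert}-\cE_{-\lvert\eta\rvert}\le 2\lvert\eta\rvert$ comes in both arguments from treating $F_{\lvert\eta\rvert}-F_{-\lvert\eta\rvert}=2\lvert\eta\rvert\,\Gamma(-1)$ as a bounded selfadjoint perturbation of norm $2\lvert\eta\rvert$. You phrase this via the min--max principle while the paper exhibits an explicit $\varepsilon$-approximate minimizer, but these are the same argument in different clothing.
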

\begin{proof}
Assume $\cE_{\lvert \eta\lvert}$ is an eigenvalue of $F_{\lvert \eta\lvert}$. Then $m_{\textup{ess}}\geq m>0$ by Lemma \ref{Fiberenergy}. Using Theorem \ref{HVZ} and Lemma \ref{Fiberenergy} we calculate
\begin{equation*}
E_\eta=\cE_{-\lvert \eta\lvert}< \cE_{\lvert \eta\lvert}\leq \cE_{-\lvert \eta\lvert}+m_{\textup{ess}}=E_\eta+m_{\textup{ess}}.
\end{equation*}	
Assume now $2\lvert \eta\lvert <m_{\textup{ess}}$. By Theorem $\ref{HVZ}$ it is enough to prove the inequality $\cE_{\lvert \eta\lvert}<\cE_{-\lvert \eta\lvert}+m_{\textup{ess}}$. For any $\varepsilon>0$ we may pick normalised $\psi\in \cD(F_{\lvert \eta\lvert})=\cD(F_{-\lvert \eta\lvert})$ such that
\begin{equation*}
\varepsilon+\cE_{\lvert \eta\lvert}-\cE_{-\lvert \eta\lvert}\leq \langle \psi, (F_{\lvert \eta\lvert}-F_{-\lvert \eta\lvert})\psi \rangle=2\lvert \eta\lvert \langle \psi, \Gamma(-1)\psi \rangle\leq 2\lvert \eta\lvert.
\end{equation*} 
This proves the desired inequality. 
\end{proof}
\noindent Theorem \ref{unique} is now a combination of all lemmas in this section.

\section{Proof of Theorem \ref{Thm:Numberstructure_Massless} part \textup{(2)} }
In this chapter we prove the last half of Theorem \ref{Thm:Numberstructure_Massless}. A proof of the first half can be found in Appendix D. First we shall need the following lemma.

\begin{lem}\label{Lem:SupportProp}
Assume $\cH=L^2(\cM,\cF,\mu)$ with $(\cM,\cF,\mu)$ $\sigma$-finite. Let $\eta \leq 0$, $\alpha\in \RR^{2n}$, $f\in \cH^{2n}$ and $\omega$ be a selfadjoint multiplication operator on $\cH$. Assume $(\alpha,f,\omega)$ satisfies Hypothesis \ref{Hyp1}, \ref{Hyp2}, \ref{Hyp3} and \ref{Hyp4}. Let $A=\bigcup_{i\leq 2n}\{ f_i\neq 0 \}$, $\cH_1=L^2(X,\cF,1_A\mu)$, $\cH_2=L^2(X,\cF,1_{A^c}\mu)$, $\omega_i$ be multiplication with $\omega$ on the space $\cH_i$ and define $\widetilde{f}_i\in \cH_1$ by $\widetilde{f}_i=f_i$ $1_A\mu$-almost everywhere. Then $(\alpha,\widetilde{f},\omega_1)$ satisfies Hypothesis \ref{Hyp1}, \ref{Hyp2}, \ref{Hyp3} and  4. We also have
\begin{enumerate}
	\item[\textup{(1)}] $\cE_{\eta }(\alpha,f,\omega)=\cE_{\eta}(\alpha,\widetilde{f},\omega_1)$ and $\cE_{\eta}(\alpha,f,\omega)$ is an eigenvalue of $F_{\eta}(\alpha,f,\omega)$ if and only if $\cE_{\eta}(\alpha,f,\omega)$ is an eigenvalue of $F_{\eta}(\alpha,\widetilde{f},\omega_1)$. In particular, if $\inf_{k\in A}\omega(k)>0$ then $\cE_{\eta}(\alpha,f,\omega)$ is an eigenvalue of $F_{\eta}(\alpha,\widetilde{f},\omega_1)$ and $F_{\eta}(\alpha,f,\omega)$.
	
	\item[\textup{(2)}] If $\psi=(\psi^{(k)})$ is a ground state of $F_{\eta}(\alpha,f,\omega_1)$ then $\psi=(1_{A^k}\psi^{(k)})$ is a ground state of $F_{\eta}(\alpha,f,\omega)$.
\end{enumerate}
\end{lem}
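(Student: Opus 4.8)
The plan is to deduce the whole statement from the direct-sum decomposition of Lemma~\ref{Lem:FundamentalTransform}. First I would identify $\cH_1$ and $\cH_2$ with the complementary closed subspaces $1_A\cH$ and $1_{A^c}\cH$ of $\cH$. Since every $f_i$ is supported in $A$ one has $f\in\cH_1^{2n}$ with $\widetilde f_i=f_i$, and since $\omega$ is a multiplication operator it is reduced by $\cH_1$, with $\omega_i=\omega|_{\cH_i}$; so Lemma~\ref{Lem:FundamentalTransform} (through Lemma~\ref{Lem:GeneralTransform}) applies and produces a unitary $U$ with
\[
UF_\eta(\alpha,f,\omega)U^*=F_\eta(\alpha,\widetilde f,\omega_1)\oplus\bigoplus_{k=1}^{\infty}\bigl(F_{(-1)^k\eta}(\alpha,\widetilde f,\omega_1)\otimes 1+1\otimes d\Gamma^{(k)}(\omega_2)\bigr),
\]
together with $U^*\Omega=\Omega$ and $U^*|_{\cF_b(\cH_1)}=\Gamma(Q)$, where $Q:\cH_1\hookrightarrow\cH$ is the inclusion. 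That $(\alpha,\widetilde f,\omega_1)$ inherits Hypotheses \ref{Hyp1}, \ref{Hyp2}, \ref{Hyp3} and \ref{Hyp4} I would check at the outset: restricting to $A$ changes neither $\cL(f)$, nor the norms $\lVert(\omega^{-1/2}+1+\omega^{1/2})f_j\rVert$, nor the inner products $\langle f_i,g(\omega)f_j\rangle$ (extend $g$ by $0$ off $\sigma(\omega_1)$), nor a phase function; the space $\cH_1=L^2(\cM,\cF,1_A\mu)$ still satisfies the assumptions of Theorem~\ref{Thm:EssentalPropertyCutSpaces}; and $\sigma(\omega_1)\subseteq\sigma(\omega)$ gives $m(\omega_1)\geq m(\omega)$, which with the inherited Hypothesis~\ref{Hyp2} yields Hypothesis~\ref{Hyp3}.

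Next I would prove the energy identity $\cE_\eta(\alpha,f,\omega)=\cE_\eta(\alpha,\widetilde f,\omega_1)$. The inequality ``$\geq$'' is immediate, since $F_\eta(\alpha,\widetilde f,\omega_1)$ is the restriction of $UF_\eta(\alpha,f,\omega)U^*$ to a reducing subspace. For ``$\leq$'' I would use Theorem~\ref{Thm:SpectralPropTensor} together with $\inf\sigma(d\Gamma^{(k)}(\omega_2))=k\,m(\omega_2)$ (the $k$-boson dispersion is multiplication by $\omega_2(k_1)+\cdots+\omega_2(k_k)$) to identify the spectral bottom of the $k$-th summand as $\cE_{(-1)^k\eta}(\alpha,\widetilde f,\omega_1)+k\,m(\omega_2)$. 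Because $\eta\leq0$, Lemma~\ref{Fiberenergy} applied to $(\alpha,\widetilde f,\omega_1)$ gives $\cE_\eta(\alpha,\widetilde f,\omega_1)=\cE_{-|\eta|}(\alpha,\widetilde f,\omega_1)\leq\cE_{|\eta|}(\alpha,\widetilde f,\omega_1)=\cE_{-\eta}(\alpha,\widetilde f,\omega_1)$; combined with $m(\omega_2)\geq0$ this shows every summand's bottom is at least $\cE_\eta(\alpha,\widetilde f,\omega_1)$, and ``$\leq$'' follows.

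For the eigenvalue statement in (1), the implication ``$\cE_\eta(\alpha,\widetilde f,\omega_1)$ an eigenvalue $\Rightarrow$ $\cE_\eta(\alpha,f,\omega)$ an eigenvalue'' is trivial from the direct summand; for the converse I would take a ground state $\phi$ of $F_\eta(\alpha,f,\omega)$ and invoke Lemma~\ref{Uniquee=0} (applicable because $\eta\leq0$) to get $\langle\phi,\Omega\rangle\neq0$. Since $U^*\Omega=\Omega$, the vacuum lies in $\cF_0:=U^*(\cF_b(\cH_1)\oplus\{0\}\oplus\cdots)$, which reduces $F_\eta(\alpha,f,\omega)$ with $F_\eta(\alpha,f,\omega)|_{\cF_0}$ unitarily equivalent to $F_\eta(\alpha,\widetilde f,\omega_1)$; hence the orthogonal projection $\phi_0$ of $\phi$ onto $\cF_0$ is nonzero, satisfies $F_\eta(\alpha,f,\omega)\phi_0=\cE_\eta(\alpha,f,\omega)\phi_0$, and transports under $U$ to an eigenvector of $F_\eta(\alpha,\widetilde f,\omega_1)$ at $\cE_\eta(\alpha,\widetilde f,\omega_1)$. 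For the ``in particular'' clause, if $\inf_{k\in A}\omega(k)>0$, i.e.\ $m(\omega_1)>0$, then $m_{\textup{ess}}(\omega_1)\geq m(\omega_1)>0$, and Theorem~\ref{HVZ} together with $\cE_{-\eta}(\alpha,\widetilde f,\omega_1)\geq\cE_\eta(\alpha,\widetilde f,\omega_1)$ forces $\inf\sigma_{\textup{ess}}(F_\eta(\alpha,\widetilde f,\omega_1))>\cE_\eta(\alpha,\widetilde f,\omega_1)$, so the ground-state energy is an isolated eigenvalue and, by the equivalence just proved, so is $\cE_\eta(\alpha,f,\omega)$ for $F_\eta(\alpha,f,\omega)$.

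Part (2) is then short: if $\psi=(\psi^{(k)})$ is a ground state of $F_\eta(\alpha,\widetilde f,\omega_1)$, then $U^*\psi=\Gamma(Q)\psi=(1_{A^k}\psi^{(k)})$ lies in $\cF_0$ and $F_\eta(\alpha,f,\omega)U^*\psi=U^*F_\eta(\alpha,\widetilde f,\omega_1)\psi=\cE_\eta(\alpha,f,\omega)\,U^*\psi$, so $(1_{A^k}\psi^{(k)})$ is a ground state of $F_\eta(\alpha,f,\omega)$. The main obstacle, I expect, is not the hypothesis bookkeeping but controlling the ``higher'' fibres $F_{(-1)^k\eta}(\alpha,\widetilde f,\omega_1)\otimes 1+1\otimes d\Gamma^{(k)}(\omega_2)$: one must ensure that none of them sits at the bottom of the spectrum or carries the ground state, and this is exactly where the sign hypothesis $\eta\leq0$ is needed --- through Lemma~\ref{Fiberenergy} ($\cE_{-|\eta|}\leq\cE_{|\eta|}$) for the energy equality, and through Lemma~\ref{Uniquee=0} (nonvanishing overlap with $\Omega$) for the eigenvalue transfer.
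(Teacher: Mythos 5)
Your proof is correct and, on most points, follows the same route as the paper: the decomposition via Lemmas \ref{Lem:GeneralTransform} and \ref{Lem:FundamentalTransform}, the hypothesis bookkeeping, the identification of $\inf\sigma(C_j)=\cE_{(-1)^j\eta}(\alpha,\widetilde f,\omega_1)+j\inf\sigma(\omega_2)$, the use of $\cE_\eta(\alpha,\widetilde f,\omega_1)\leq\cE_{-\eta}(\alpha,\widetilde f,\omega_1)$ to obtain the energy identity, the ``in particular'' clause via $m(\omega_1)>0$ and Theorem~\ref{HVZ}, and part (2) via $U^*|_{\cF_b(\cH_1)}=\Gamma(Q)$ all match. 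The genuine divergence is the converse eigenvalue transfer in part (1). The paper argues by contradiction: if $\cE_\eta(\alpha,f,\omega)$ were an eigenvalue of $F_\eta(\alpha,f,\omega)$ but not of $F_\eta(\alpha,\widetilde f,\omega_1)$, it would be an eigenvalue of some $C_j$, whence comparing with $\inf\sigma(C_j)$ forces $\inf\sigma(\omega_2)=0$ and $\cE_{(-1)^j\eta}(\alpha,\widetilde f,\omega_1)=\cE_\eta(\alpha,f,\omega)$; then injectivity of $\omega_2$ and Lemma~\ref{Lem:SecondQuantisedProp} show $0$ is not an eigenvalue of $d\Gamma^{(j)}(\omega_2)$, so Theorem~\ref{Thm:SpectralPropTensor}~(4) rules $\cE_\eta(\alpha,f,\omega)$ out as an eigenvalue of $C_j$, a contradiction. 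You instead invoke Lemma~\ref{Uniquee=0} (applicable since $\eta\leq 0$ gives $F_\eta=F_{-\lvert\eta\lvert}$) to get $\langle\phi,\Omega\rangle\neq 0$ for a ground state $\phi$, note $\Omega\in\cF_0:=U^*\bigl(\cF_b(\cH_1)\oplus\{0\}\oplus\cdots\bigr)$, and project $\phi$ onto the reducing subspace $\cF_0$ to manufacture the eigenvector directly. Both arguments are sound; yours is shorter and more structural, but leans on Lemma~\ref{Uniquee=0}, which in turn rests on the Q-space and positivity-improving-semigroup machinery of Appendix~E, whereas the paper's argument is more elementary here, needing only the injectivity of $\omega_2$ and the tensor-product spectral theory already at hand.
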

\begin{proof}
Define $P_i:\cH\rightarrow \cH_i$ by $P_1(f)=f$ $(1_A\mu)$-almost everywhere and  $P_2(f)=f$ $1_{A^c}\mu$-almost everywhere. Let $V:\cH\rightarrow \cH_1\oplus \cH_2$ be $V(f)=(P_1(f),P_2(f))$. Then we see $V$ is unitary with $V^*(f,g)=1_Af+1_{A^c}g$ $\mu$-almost everywhere. Clearly we have $Vf_i=(\widetilde{f}_i,0)$ along with $V\omega V^*=(\omega_1,\omega_2)$. The properties in Hypothesis \ref{Hyp1}, \ref{Hyp2}, \ref{Hyp3} and \ref{Hyp4} are easily checked. Using Lemma \ref{Lem:GeneralTransform} we find a unitary map
\begin{equation*}
U:\cF_b(\cH)\rightarrow \cF_b(\cH_1)\oplus \bigoplus_{j=1}^\infty \left( \cF_b(\cH_1)\otimes \cH_2^{\otimes_s j}\right)
\end{equation*}
such that
\begin{equation*}
U F_{\eta}(\alpha,f,\omega)U^*=F_{\eta}(\alpha,\widetilde{f},\omega_1) \oplus \bigoplus_{j=1}^\infty \left(F_{(-1)^j\eta}(\alpha,\widetilde{f},\omega_1)\otimes 1+1\otimes d\Gamma^{(j)}(\omega_2)\right).  
\end{equation*}
Define $C_j=F_{(-1)^j\eta}(\alpha,\widetilde{f},\omega_1)\otimes 1+1\otimes d\Gamma^{(j)}(\omega_2)$ for $j\in \NN$. Using Theorem \ref{Thm:SpectralPropTensor} and Lemma \ref{Lem:SecondQuantisedProp} we find   $\inf(\sigma(C_j))=\cE_{(-1)^j\eta}(\alpha,\widetilde{f},\omega_1)+j\inf(\sigma(\omega_2))$. Theorem \ref{HVZ} implies $\cE_{\eta}(\alpha,\widetilde{f},\omega_1)\leq \cE_{-\eta}(\alpha,\widetilde{f},\omega_1)$ so
\begin{equation*}
\cE_{\eta}(\alpha,f,\omega)=\min\left \{  \cE_{\eta}(\alpha,\widetilde{f},\omega_1),   \inf_{j\in \NN} \sigma(C_j)   \right \}=\cE_{\eta}(\alpha,\widetilde{f},\omega_1).
\end{equation*}
Assume $\cE_{\eta}(\alpha,f,\omega)$ is an eigenvalue of $F_{\eta}(\alpha,\widetilde{f},\omega_1)$. Then $\cE_{\eta}(\alpha,f,\omega)$ is obviously an eigenvalue of $F_{\eta}(\alpha,f,\omega)$. Assume now $\cE_{\eta}(\alpha,f,\omega)$ is an eigenvalue of $F_{\eta}(\alpha,f,\omega)$ and that $\cE_{\eta}(\alpha,f,\omega)$ is not an eigenvalue of $F_{\eta}(\alpha,\widetilde{f},\omega_1)$. Then there is $j\in \mathbb{N}$ such that $\cE_{\eta}(\alpha,f,\omega)$ is an eigenvalue of $C_j$. This implies
\begin{align*}
\cE_{\eta}(\alpha,\widetilde{f},\omega_1)=\cE_{\eta}(\alpha,f,\omega) \geq \inf(\sigma(C_j))= \cE_{(-1)^j\eta}(\alpha,\widetilde{f},\omega_1)+j\inf(\sigma(\omega_2))
\end{align*}
Using $\cE_{\eta}(\alpha,\widetilde{f},\omega_1)\leq \cE_{(-1)^j\eta}(\alpha,\widetilde{f},\omega_1)$ we find $\inf(\sigma(\omega_2))=0$ and  $\cE_{(-1)^j\eta}(\alpha,\widetilde{f},\omega_1)=\cE_{\eta}(\alpha,f,\omega)$. Injectivity of $\omega_2$ and Lemma \ref{Lem:SecondQuantisedProp} shows 0 is not an eigenvalue of $d\Gamma^{(j)}(\omega_2)$. By Theorem \ref{Thm:SpectralPropTensor}  we find that $\cE_{\eta}(\alpha,f,\omega)=\cE_{(-1)^j\eta}(\alpha,\widetilde{f},\omega_1)+0$ is not an eigenvalue of $C_j=F_{(-1)^j\eta}(\alpha,f,\omega_1)\otimes 1+1\otimes d\Gamma^{(j)}(\omega_2)$  which is a contradiction. Hence $\cE_{\eta}(\alpha,f,\omega)$ is an eigenvalue of $F_{\eta}(\alpha,\widetilde{f},\omega_1)$. The last part of statement (1) follows from $m(\omega_1)>0$ and Theorem \ref{HVZ}.

To prove statement (2) we let $j:\cH_1\rightarrow \cH_1\oplus \cH_2$ be the embedding $j(f)=(f,0)$ and define $Q=V^*j$. Now $U^*\psi=\Gamma(Q)\psi$ by Lemma \ref{Thm:ISOTHM3} and $U^*\psi$ is the desired eigenvector of $F_{\eta}(\alpha,f,\omega)$. Noting $Q(f):=V^*j(f)=1_{A}f$ we see $U^*\psi=\Gamma(Q)\psi=(1_{A^k}\psi^{(k)})$ as desired.
\end{proof}

We now prove part \textup{(2)} of Theorem \ref{Thm:Numberstructure_Massless}. Let $\eta \in \RR$, $\alpha\in \RR^{2n}$, $f\in \cH^{2n}$ and $\omega$ be a selfadjoint operator on $\cH$. Assume $\cH=L^2(\RR^\nu,\cB(\RR^\nu),\lambda^{\otimes \nu})$, $\omega$ is a multiplication operator on $\cH$ and $(\alpha,f,\omega)$ satisfies Hypothesis \ref{Hyp1}, \ref{Hyp2}, \ref{Hyp3}, \ref{Hyp4} and \ref{Hyp5}.

Define $B_\ell=\{ \omega\geq \ell^{-1} \}$ and $f^\ell=1_{B_\ell}f$. Then $F_{\pm 1,\ell}:=F_{\pm \lvert \eta \lvert }(\alpha,f^\ell,\omega)$ converges in norm resolvent sense to $F_{\pm 1}:=F_{\pm \lvert \eta \lvert }(\alpha,f,\omega)$ by Lemma \ref{Lem:SimpleConvApplication} and $\cE_{\ell}=\cE_{ -\lvert \eta \lvert }(\alpha,f^\ell,\omega)$ converges to $\cE:=\cE_{- \lvert \eta \lvert }(\alpha,f,\omega)$. Furthermore, $F_{ -1,\ell}$ has a normalised ground state $\psi_\ell$ for all $\ell\in \NN$ by Lemma \ref{Lem:SupportProp}. After taking a subsequence we may assume that $\psi_\ell$ converges weakly to a vector $\psi$. The last half of Theorem $\ref{Thm:Numberstructure_Massless}$ will be proven by \cite[Lemma 4.9]{HirokawaArai} if we can prove that $\lVert \psi \lVert=1$. First a few observations which we will summarise in Lemma \ref{Lem:Basicstuff} below. In order to formulate Lemma \ref{Lem:Basicstuff} we need to use pointwise annihilation operators which are defined in the discussion after Lemma \ref{Lem:LiftToIntegralOperators}. 
\begin{lem}\label{Lem:Basicstuff}
The following holds:
\begin{enumerate}
	\item[\textup{(1)}] Let $A_1$ be the pointwise annihilation operator of order 1. We have
	\begin{equation*}
	(A_1\psi_\ell)(k)=-\sum_{j=1}^{2n}  f^{\ell}_j(k) (F_{1,\ell}-\cE_\ell+\omega(k))^{-1} j\alpha_j\varphi(f_j^{\ell})^{j-1}\psi_\ell .
	\end{equation*}
	\item[\textup{(2)}] There is a constant $C$ independent of $\ell$ and $j$ such that $\lVert \alpha_j\varphi(f_j^{\ell})^{j-1}\psi_\ell\lVert\leq C$
	
	\item[\textup{(3)}] $\psi_\ell\in \cD(N)$ and $\langle \psi_\ell,N\psi_\ell \rangle$ is uniformly bounded in $\ell$. Hence $A_1\psi_\ell\in L^2(\RR^\nu,\cB(\RR^\nu),\lambda^{\otimes \nu},\cF_b(\cH))$ for all $\ell\in \NN$ and the sequence $\{A_1\psi_\ell\}_{\ell=1}^\infty$ is bounded in this space.
	
	\item[\textup{(4)}] We have
	\begin{equation}\label{eq:diff}
	A_1\psi_\ell+\sum_{j=1}^{2n}  f_j(k) (F_{1}-\cE+\omega(k))^{-1} j\alpha_j\varphi(f_j^{\ell})^{j-1}\psi_\ell
	\end{equation}
	converges to 0 in $L^2(\RR^\nu,\cB(\RR^\nu),\lambda^{\otimes \nu},\cF_b(\cH))$. 
\end{enumerate}
\end{lem}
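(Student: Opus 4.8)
The plan is to establish the four statements in order, since each builds on the previous ones. For part (1), I would apply the first-order pull-through formula from Appendix D (the framework built after Lemma \ref{Lem:LiftToIntegralOperators}) to the eigenvalue equation $F_{1,\ell}\psi_\ell=\cE_\ell\psi_\ell$. Writing $F_{1,\ell}=\eta\Gamma(-1)+d\Gamma(\omega)+\sum_j\alpha_j\varphi(f_j^\ell)^j$ and using $[A_1,d\Gamma(\omega)]=\omega(k)A_1$ together with $[A_1,\varphi(f_j^\ell)^j]=j f_j^\ell(k)\varphi(f_j^\ell)^{j-1}$ and $A_1\Gamma(-1)=-\Gamma(-1)A_1$ (so the sign flip $\cE_\ell\to$ fiber of opposite sign is absorbed because $\psi_\ell$ is an eigenvector at $\cE_\ell$), one applies $A_1$ to both sides, commutes it through, and solves for $A_1\psi_\ell(k)$. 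The factor $(F_{1,\ell}-\cE_\ell+\omega(k))^{-1}$ is bounded since $F_{1,\ell}-\cE_\ell\geq 0$ and $\omega(k)>0$ a.e.; this is exactly the higher-order pull-through formula advertised in the introduction.

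For part (2), the uniform bound on $\lVert\alpha_j\varphi(f_j^\ell)^{j-1}\psi_\ell\rVert$ follows from Lemma \ref{Lem:TheRelativeBounds} (or Lemma \ref{Lem:UniformUpperBounds}): since $f^\ell=1_{B_\ell}f$ and $\lVert(\omega^{-1/2}+1+\omega^{1/2})f_j^\ell\rVert\leq\lVert(\omega^{-1/2}+1+\omega^{1/2})f_j\rVert$, all the relevant norms of $f^\ell$ are bounded uniformly in $\ell$, so there is a uniform $r$ as in those lemmas. One then uses $\lVert\alpha_j\varphi(f_j^\ell)^{j-1}\psi_\ell\rVert\leq\lVert\alpha_j\varphi(f_j^\ell)^j\psi_\ell\rVert+\text{const}\cdot\lVert\psi_\ell\rVert$-type estimates (via $x^{j-1}\leq\varepsilon x^{2j}+C$ and Lemma \ref{Lem:DominatingEstimate}), controlling $\lVert\alpha_j\varphi(f_j^\ell)^j\psi_\ell\rVert$ and $\lVert d\Gamma(\omega)\psi_\ell\rVert$ by $\lVert F_{1,\ell}\psi_\ell\rVert+\lVert\psi_\ell\rVert=(|\cE_\ell|+1)\lVert\psi_\ell\rVert$, which is uniformly bounded because $\cE_\ell\to\cE$. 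For part (3), $\psi_\ell\in\cD(N)$ and the uniform bound on $\langle\psi_\ell,N\psi_\ell\rangle$ come from Hypothesis \ref{Hyp5} ($f_j\in\cD(\omega^{-1})$): the standard argument is to test the eigenvalue equation against $N\psi_\ell$ (or use the $j=1$ pull-through formula and integrate $\lVert A_1\psi_\ell(k)\rVert^2$ over $k$, which equals $\langle\psi_\ell,N\psi_\ell\rangle$), and the integrability of $\int\lVert f_j(k)\rVert^2\omega(k)^{-2}dk=\lVert\omega^{-1}f_j\rVert^2<\infty$ together with part (2) gives the uniform bound. The membership $A_1\psi_\ell\in L^2(\RR^\nu,\cB(\RR^\nu),\lambda^{\otimes\nu},\cF_b(\cH))$ with uniformly bounded norm is then immediate from $\lVert A_1\psi_\ell\rVert^2_{L^2(\cF_b)}=\langle\psi_\ell,N\psi_\ell\rangle$.

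For part (4), I would compare the formula in (1) with the corresponding expression built from the limiting operator $F_1$ and show the difference
\begin{equation*}
\sum_{j=1}^{2n}f_j^\ell(k)(F_{1,\ell}-\cE_\ell+\omega(k))^{-1}j\alpha_j\varphi(f_j^\ell)^{j-1}\psi_\ell-\sum_{j=1}^{2n}f_j(k)(F_1-\cE+\omega(k))^{-1}j\alpha_j\varphi(f_j^\ell)^{j-1}\psi_\ell
\end{equation*}
tends to $0$ in $L^2(\cF_b)$. This splits into: (a) replacing $f_j^\ell$ by $f_j$, controlled by $\lVert f_j^\ell-f_j\rVert\to 0$ and the uniform bounds from (2)–(3); (b) replacing $\cE_\ell$ by $\cE$, controlled by $|\cE_\ell-\cE|\to 0$ and a resolvent identity, with the $\omega(k)^{-1}$ weight and $\lVert\omega^{-1}f_j\rVert<\infty$ ensuring $L^2$-integrability; and (c) replacing $F_{1,\ell}$ by $F_1$, using norm resolvent convergence $F_{1,\ell}\to F_1$ (from Lemma \ref{Lem:SimpleConvApplication}) upgraded to convergence of $(F_{1,\ell}-\cE_\ell+t)^{-1}\to(F_1-\cE+t)^{-1}$ uniformly for $t=\omega(k)\geq 0$, again weighted by $\omega(k)^{-1}$. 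The main obstacle I anticipate is item (c): norm resolvent convergence at a single spectral point must be promoted to a bound on $\lVert(F_{1,\ell}-\cE_\ell+t)^{-1}-(F_1-\cE+t)^{-1}\rVert$ that decays in $\ell$ \emph{and} is integrable against $|f_j(k)|^2$ over $k$ (i.e.\ against the measure with density controlled by $\omega^{-2}$); this requires a resolvent-identity manipulation of the form $(F_{1,\ell}-\cE_\ell+t)^{-1}-(F_1-\cE+t)^{-1}=(F_{1,\ell}-\cE_\ell+t)^{-1}\big[(F_1-F_{1,\ell})+(\cE_\ell-\cE)\big](F_1-\cE+t)^{-1}$, together with the uniform resolvent bounds of Lemma \ref{Lem:UniformUpperBounds} to handle the unbounded interaction difference $F_1-F_{1,\ell}$ relative to the resolvents, and dominated convergence in $k$ using $\omega^{-1}f_j\in L^2$.
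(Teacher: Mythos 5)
Your plan matches the paper's proof essentially step for step: (1) is Theorem \ref{Thm:ThepulllThrough} applied to the ground state $\psi_\ell$ of $F_{-1,\ell}$; (2) is a direct consequence of Lemma \ref{Lem:UniformUpperBounds}; (3) combines (1), (2), $\cE_\ell\le\cE_{|\eta|}(\alpha,f^\ell,\omega)$ (Theorem \ref{HVZ}) and Theorem \ref{Thm:CalculatingSecondQuantisedUsingAnihilation}; and (4) is pointwise resolvent convergence plus dominated convergence.

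Two corrections to your write-up. In (1), the eigenvalue equation you start from should be $F_{-1,\ell}\psi_\ell=\cE_\ell\psi_\ell$, not $F_{1,\ell}\psi_\ell=\cE_\ell\psi_\ell$; the $F_{1,\ell}$ in the resolvent is produced by the sign flip $A_1\Gamma(-1)=-\Gamma(-1)A_1$, as your parenthetical indicates, so this is a slip of the pen rather than of the argument. In (4), the obstacle you anticipate in item (c) is not one: you need neither an $\ell$-decay that is uniform in $k$ nor one integrable against $|f_j(k)|^2$. For each fixed $k$, $(F_{1,\ell}-\cE_\ell+\omega(k))^{-1}\to(F_1-\cE+\omega(k))^{-1}$ in operator norm by Lemma \ref{Lem:ResolevntKonv}, since norm resolvent convergence of self-adjoint operators propagates to every point of the common resolvent set and $-\omega(k)<0\le\inf\sigma(F_{1,\ell}-\cE_\ell)$ because $\cE_\ell\le\cE_{|\eta|}(\alpha,f^\ell,\omega)$. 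Combined with the $\ell$-uniform bound of (2), the integrand in (4) tends to $0$ pointwise in $k$ and is dominated, uniformly in $\ell$, by $2nC^2\sum_j|f_j(k)|^2\omega(k)^{-2}$, which is integrable by Hypothesis \ref{Hyp5}; dominated convergence then finishes the proof without the resolvent-identity bookkeeping you sketch. Finally, in (2) the detour via $x^{j-1}\le\varepsilon x^{2j}+C$ and Lemma \ref{Lem:DominatingEstimate} is superfluous: Lemma \ref{Lem:UniformUpperBounds} already gives a uniform bound on $\varphi(f_q^\ell)^{j}(F_{-1,\ell}\pm i)^{-1}$ for all $j\in\{1,\dots,q\}$, which covers the power $j-1$ directly, and then $\lVert j\alpha_j\varphi(f_j^\ell)^{j-1}\psi_\ell\rVert\le\widetilde C\,\lvert\cE_\ell-i\rvert$ with $\cE_\ell$ convergent.
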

\begin{proof}
Statement (1) follows directly from Theorem \ref{Thm:ThepulllThrough}. To prove statement (2), we note that $j\alpha_j \varphi(f^\ell_j)^{j-1} (F_{-1,\ell}-i)^{-1}$ is uniformly bounded for $\ell\in \NN$ and $j\in \{1, \dots, 2n \}$ by Lemma \ref{Lem:UniformUpperBounds}. Let $\widetilde{C}$ be the bounding constant. Then
\begin{equation*}
\lVert j\alpha_j\varphi(f_j^{\ell})^{j-1}\psi_\ell\lVert \leq \widetilde{C} \lVert (\cE_\ell-i)\psi_\ell \lVert.
\end{equation*}
This proves statement (2) since $\{\cE_\ell\}_{\ell=1}^\infty$ is convergent and $\lVert \psi_\ell \lVert=1$ for all $\ell\in \NN$.

To prove statement (3) we note that $\psi_\ell\in \cD(N)$ by part (1) of Theorem \ref{Thm:Numberstructure_Massless}. Using (1), (2) and $\cE_{\ell}\leq \cE_{\lvert \eta\lvert}(\alpha,f^\ell,\omega)$ by Theorem \ref{HVZ} we estimate
\begin{equation}\label{eq:Dominator}
\lVert (A_1\psi_\ell)(k)\lVert^2\leq C^2\left(\sum_{j=1}^{2n} \frac{\lvert f^\ell_j(k)\lvert }{\omega(k)}\right)^2\leq 2nC^2\sum_{j=1}^{2n} \frac{\lvert f_j(k)\lvert^2 }{\omega(k)^2}.
\end{equation}
Integrating and appealing to Theorem \ref{Thm:CalculatingSecondQuantisedUsingAnihilation} yields the result.

To prove statement (4), note that $(F_{1,\ell}-\cE_\ell+\omega(k))^{-1}-(F_{1}-\cE+\omega(k))^{-1}$ converges to 0 in norm by Lemma \ref{Lem:ResolevntKonv}. Since $j\alpha_j\varphi(f_j^{\ell})^{j-1}\psi_\ell$ is uniformly bounded, we see that the function in equation (\ref{eq:diff}) converges to 0 pointwise. The conclusion now follows by dominated convergence and estimates similar to those in equation (\ref{eq:Dominator}).
\end{proof}

In the proof of the next lemma, we will write $B(\cF_b(\cH))$ for the set bounded linear maps from $\cF_b(\cH)$ into $\cF_b(\cH)$.
\begin{lem}
Let $G\in C_0^\infty(\RR^\nu)$ such that $G(0)=1$ and $0\leq G\leq 1$. Define $G_R=G(x/R)$ and let $A$ be either $x=-i\nabla_k$ or $k$. For any $\varepsilon>0$ there is $\ell'\in \NN$ and $R'>0$ such that $\lVert (1-\Gamma(G_R(A)))\psi_\ell\lVert\leq \varepsilon$ for all $R>R'$, $\ell>\ell'$ and $A\in \{ -i\nabla_k,k  \}$.
\end{lem}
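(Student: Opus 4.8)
The statement says that the ground states $\psi_\ell$ of the infrared-cutoff fiber operators $F_{-1,\ell}$ are, uniformly in $\ell$, almost entirely supported on the part of Fock space built from momenta (or positions) in a bounded region. The natural strategy is to control the quantity $\langle\psi_\ell,\,\mathrm{d}\Gamma(1-G_R(A))\,\psi_\ell\rangle$ and to bound $\lVert(1-\Gamma(G_R(A)))\psi_\ell\rVert^2$ by it. Indeed, since $0\le G_R\le 1$ one has $0\le 1-\Gamma(G_R(A))\le \mathrm{d}\Gamma(1-G_R(A))$ in the form sense on $\mathcal N$ (this is the standard pull-out-one-factor estimate $1-\bigotimes_{i=1}^k G_R(A)\le\sum_{i=1}^k(1-G_R(A))$ acting on the $i$-th leg), so it suffices to show
\begin{equation*}
\langle\psi_\ell,\,\mathrm{d}\Gamma(1-G_R(A))\,\psi_\ell\rangle\le \varepsilon^2
\end{equation*}
for all $R>R'$ and $\ell>\ell'$.

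**Key steps.** First I would use the identity from Theorem \ref{Thm:CalculatingSecondQuantisedUsingAnihilation} (pointwise annihilation) to write
\begin{equation*}
\langle\psi_\ell,\,\mathrm{d}\Gamma(1-G_R(A))\,\psi_\ell\rangle=\int_{\RR^\nu}\big\langle (A_1\psi_\ell)(k),\,(1-G_R(A))(A_1\psi_\ell)(k)\big\rangle\,d\lambda^{\otimes\nu}(k)
\end{equation*}
when $A=k$ (then $1-G_R(k)$ is just multiplication), and similarly for $A=-i\nabla_k$ one expresses the quadratic form using the pointwise operators and the action of $G_R(-i\nabla_k)$ on the variable $k$. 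Next, by Lemma \ref{Lem:Basicstuff}(4), $A_1\psi_\ell$ is well-approximated in $L^2(\RR^\nu,\cF_b(\cH))$ by the explicit expression $-\sum_j f_j(k)(F_1-\cE+\omega(k))^{-1}j\alpha_j\varphi(f_j^\ell)^{j-1}\psi_\ell$, with the vectors $j\alpha_j\varphi(f_j^\ell)^{j-1}\psi_\ell$ uniformly bounded by Lemma \ref{Lem:Basicstuff}(2). Hence the pointwise norm $\lVert(A_1\psi_\ell)(k)\rVert$ is dominated, up to an $\ell$-independent constant and an $\ell$-dependent $L^2$-error tending to $0$, by the fixed $L^2$ function $k\mapsto \sum_j|f_j(k)|/\omega(k)$, which lies in $L^2$ precisely because Hypothesis \ref{Hyp5} gives $f_j\in\cD(\omega^{-1})$. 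Then I would split: choose $\ell'$ so that for $\ell>\ell'$ the error term from Lemma \ref{Lem:Basicstuff}(4) contributes at most $\varepsilon^2/2$; for the remaining fixed dominating function $h(k):=C\sum_j|f_j(k)|/\omega(k)$, which does not depend on $\ell$, dominated/monotone convergence gives $\int h(k)^2(1-G_R(k))\,dk\to 0$ as $R\to\infty$ (since $G_R\to 1$ pointwise), so pick $R'$ accordingly. For $A=-i\nabla_k$ the same works after noting $1-G_R(-i\nabla_k)$ is a bounded operator acting in the $k$-variable with norm $\le 1$, and using that the dominating family $f_j(\cdot)(F_1-\cE+\omega(\cdot))^{-1}\xi$ lies in a fixed compact-in-$k$-regularity class; more carefully, one writes $A_1\psi_\ell(k)=\sum_j f_j(k)(F_1-\cE+\omega(k))^{-1}\xi_j^\ell + o_\ell(1)$ and checks $(1-G_R(-i\nabla_k))$ applied to $k\mapsto f_j(k)(F_1-\cE+\omega(k))^{-1}\xi$ tends to $0$ in $L^2(\RR^\nu,\cF_b(\cH))$ as $R\to\infty$, uniformly over $\xi$ in the unit ball, by the fact that this map is a fixed (compactly-approximable) $\cF_b(\cH)$-valued $L^2$ function composed with a bounded operator on $\cH$.

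**Main obstacle.** The routine part is the $A=k$ case. The delicate point is the $A=-i\nabla_k$ case: here $1-G_R(-i\nabla_k)$ is not a multiplier but a Fourier multiplier in $k$, so I cannot just pull it under the pointwise-annihilation integral as a scalar. The remedy is that $G_R(-i\nabla_k)\to 1$ strongly on $L^2(\RR^\nu)$ as $R\to\infty$, hence strongly on $L^2(\RR^\nu,\cF_b(\cH))$, so $(1-G_R(-i\nabla_k))A_1\psi_\ell\to 0$ for each fixed $\ell$; the issue is uniformity in $\ell$, which is recovered by the uniform $L^2$-bound on $\{A_1\psi_\ell\}$ from Lemma \ref{Lem:Basicstuff}(3) together with the approximation in Lemma \ref{Lem:Basicstuff}(4): approximate $A_1\psi_\ell$ within $\varepsilon$ (uniformly in $\ell$) by the $\ell$-independent-shape family $\sum_j f_j(k)(F_1-\cE+\omega(k))^{-1}\xi_j^\ell$ with $\lVert\xi_j^\ell\rVert\le C$, observe this family lies in the fixed finite-dimensional-in-$j$ span of maps $k\mapsto f_j(k)(F_1-\cE+\omega(k))^{-1}(\,\cdot\,)$ acting on a bounded set, and on such a set strong convergence of $(1-G_R(-i\nabla_k))$ is uniform. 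This is the step I expect to need the most care to write up rigorously; everything else is dominated convergence plus the two cited lemmas.
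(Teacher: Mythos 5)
Your overall strategy coincides with the paper's: use the chain $(1-\Gamma(G_R(A)))^2\leq 1-\Gamma(G_R(A))\leq d\Gamma(1-G_R(A))$, rewrite the expectation via pointwise annihilation as $\langle A_1\psi_\ell,((1-G_R(A))\otimes 1)A_1\psi_\ell\rangle$, and then replace $A_1\psi_\ell$ by the explicit expression $q_\ell$ from Lemma~\ref{Lem:Basicstuff}(4), with the error controlled uniformly in $\ell$. Your treatment of $A=k$ by domination with the fixed $L^2$ function $C\sum_j\lvert f_j\rvert/\omega$ and then applying dominated convergence is fine and is slightly more elementary than what the paper does for that case.

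For $A=-i\nabla_k$, however, there is a genuine gap. You write that the family $\{\sum_j f_j(k)(F_1-\cE+\omega(k))^{-1}\xi_j^\ell \mid \lVert\xi_j^\ell\rVert\leq C\}$ lies in "the fixed finite-dimensional-in-$j$ span of maps $k\mapsto f_j(k)(F_1-\cE+\omega(k))^{-1}(\cdot)$ acting on a bounded set, and on such a set strong convergence of $(1-G_R(-i\nabla_k))$ is uniform." That last assertion is not justified: strong operator convergence on $L^2(\RR^\nu,\cF_b(\cH))$ upgrades to uniform convergence only on \emph{precompact} sets, whereas the image of the unit ball of $\cF_b(\cH)$ under $\xi\mapsto k\mapsto f_j(k)(F_1-\cE+\omega(k))^{-1}\xi$ is merely bounded. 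That map is compact only if, say, $F_1$ had compact resolvent, which is false in the massless case this theorem is designed for. So the claimed uniformity over $\ell$ does not follow from what you have written.

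The paper closes exactly this gap with a different device: it observes $\widetilde q_j:=k\mapsto -f_j(k)(F_1-\cE+\omega(k))^{-1}\in L^2(\RR^\nu,B(\cF_b(\cH)))$ (this is precisely where Hypothesis~\ref{Hyp5} enters, giving $\lvert f_j\rvert/\omega\in L^2$), and then uses density of simple functions in this Bochner space to approximate $\widetilde q_j$ by finite sums of the form $k\mapsto g(k)C$ with $g\in\cH$ and $C\in B(\cF_b(\cH))$, within error $\leq\varepsilon/3$ uniformly in $\ell$. After this approximation, $q_{\ell,p}$ is a finite (uniformly in $\ell$) linear combination of simple tensors $g\otimes v_\ell$ with $g$ fixed and $\lVert v_\ell\rVert$ bounded, so $\lVert((1-G_R(A))\otimes 1)(g\otimes v_\ell)\rVert=\lVert(1-G_R(A))g\rVert\,\lVert v_\ell\rVert$ goes to $0$ uniformly in $\ell$ by strong convergence of $G_R(A)\to 1$ applied to finitely many fixed $g$'s. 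You should replace your "uniform strong convergence on a bounded set" step by this Bochner-density approximation argument (or supply a precompactness argument, which would require regularity and decay hypotheses on $f_j$ that the paper does not assume).
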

\begin{proof}
Note that
\begin{equation*}
(1-\Gamma(G_R(A)))^2=1-\Gamma(G_R(A))+\Gamma(G_R(A))(\Gamma(G_R(A))-1).
\end{equation*}
On $j$ particle vectors we see that $\Gamma(G_R(A))(\Gamma(G_R(A))-1)$ acts like a negative multiplication operator in position/momentum space depending on the choice of $A$. Hence
\begin{equation*}
(1-\Gamma(G_R(A)))^2\leq 1-\Gamma(G_R(A)).
\end{equation*}
On $j$ particle vectors in position/momentum space (depending on $A$) we find that $1-\Gamma(G_R(A))$ acts like multiplication by
\begin{equation*}
1-G_R(k_1)G_R(k_2)\cdots G_R(k_j)=\sum_{i=1}^{j}(1-G_R(k_i))G_R(k_{i+1})\cdots G_R(k_j)\leq \sum_{i=1}^{j}(1-G_R(k_i)).
\end{equation*}
Hence $1-\Gamma(G_R(A))\leq d\Gamma(1-G_R(A))$ so it is enough to prove that
\begin{equation*}
\langle \psi_\ell,d\Gamma(1-G_R(A))\psi_\ell \rangle
\end{equation*}
goes to 0 for $R$ and $\ell$ tending to $\infty$. Note that $\psi_\ell\in \cD(N)\subset \cD(d\Gamma(1-G_R(A)))$ by Lemma \ref{Lem:Basicstuff} so $\langle \psi_\ell,d\Gamma(1-G_R(A))\psi_\ell \rangle$ is well defined. Using Theorem \ref{Thm:CalculatingSecondQuantisedUsingAnihilation} we see that
\begin{equation*}
\langle \psi_\ell,d\Gamma(1-G_R(A))\psi_\ell \rangle=\langle A_1\psi_\ell,((1-G_R(A))\otimes 1) A_1\psi_\ell \rangle.
\end{equation*}
Define
\begin{equation*}
q_\ell(k)= -\sum_{j=1}^{2n}  f_j(k) (F_{1}(f)-\cE+\omega(k))^{-1} j\alpha_j\varphi(f_j^{\ell})^{j-1}\psi_\ell.
\end{equation*}
By Lemma \ref{Lem:Basicstuff} we know $A_1\psi_\ell-q_\ell$ converges to 0 in $L^2(\RR^\nu,\cB(\RR^\nu),\lambda^{\otimes \nu},\cF_b(\cH))$ and $\{A_1\psi_\ell\}_{\ell=1}^\infty$ is bounded in $L^2(\RR^\nu,\cB(\RR^\nu),\lambda^{\otimes \nu},\cF_b(\cH))$. This implies $\{q_\ell\}_{\ell=1}^\infty$ is bounded in $L^2(\RR^\nu,\cB(\RR^\nu),\lambda^{\otimes \nu},\cF_b(\cH))$. Combing this observation with the fact that $\lVert (1-G_R(A))\otimes 1\lVert=1$ for all $R>0$ and $A_1\psi_\ell-q_\ell$ converges to 0 in $L^2(\RR^\nu,\cB(\RR^\nu),\lambda^{\otimes \nu},\cF_b(\cH))$ we find $\ell'\in \NN$ such that 
\begin{equation*}
\langle A_1\psi_\ell,(1-G_R(A))\otimes 1 A_1\psi_\ell \rangle\leq \frac{\varepsilon}{3}+\langle q_\ell,((1-G_R(A))\otimes 1) q_\ell \rangle
\end{equation*}
for all $R>0$ and $\ell>\ell'$. Write
\begin{equation*}
\widetilde{q}_{j}(t)=  -f_j(k) (F_{1}-\cE+\omega(k))^{-1} 
\end{equation*}
and note that $\widetilde{q}_j\in L^2(\RR^\nu,\cB(\RR^\nu),\lambda^{\otimes \nu},B(\cF_b(\cH)))$. Hence there is a sequence $\{ \widetilde{q}_{j,p}  \}_{p=1}^\infty$ of elements in $L^2(\RR^\nu,\cB(\RR^\nu),\lambda^{\otimes \nu},B(\cF_b(\cH)))$ such that $\{ \widetilde{q}_{j,p}  \}_{p=1}^\infty$ converges to $\widetilde{q}_{j}$ in $L^2(\RR^\nu,\cB(\RR^\nu),\lambda^{\otimes \nu},B(\cF_b(\cH)))$ and each $\widetilde{q}_{j,p}$ is a linear combination of functions of the form $k\mapsto g(k)C$ where $C\in B(\cF_b(\cH))$ and $g\in \cH$. Define
\begin{equation*}
q_{\ell,p}:=\sum_{j=1}^{2n}  \widetilde{q}_{j,p} j\alpha_j\varphi(f_j^{\ell})^{j-1}\psi_\ell.
\end{equation*}
Since $j\alpha_j \varphi(f^\ell_j)^{j-1}\psi_\ell$ is uniformly bounded in $\ell$ we see that 
\begin{align*}
\lim_{p\rightarrow \infty} \sup_{\ell\in \NN} \lVert q_\ell-q_{\ell,p}  \lVert=0.
\end{align*}
In particular, $\{q_{\ell,p}\}_{\ell,p=1}^\infty$ is bounded in $L^2(\RR^\nu,\cB(\RR^\nu),\lambda^{\otimes \nu},\cF_b(\cH))$ since $\{q_{\ell}\}_{\ell=1}^\infty$ is bounded in $L^2(\RR^\nu,\cB(\RR^\nu),\lambda^{\otimes \nu},\cF_b(\cH))$. Picking $p$ large enough we may thus estimate
\begin{equation*}
\langle A_1\psi_\ell,(1-G_R(A))\otimes 1 A_1\psi_\ell \rangle\leq \frac{2\varepsilon}{3}+\langle q_{\ell,p},(1-G_R(A))\otimes 1 q_{\ell,p} \rangle
\end{equation*}
for all $\ell>\ell'$ and $R>0$. Now each of the terms in $q_{\ell,p}$ is of the form $g\otimes v_\ell$ where $v_\ell$ is uniformly bounded in $\ell$ and $g$ is independent of $\ell$. Furthermore, the number of terms in $q_{\ell,p}$ is also independent of $\ell$ (it depends only on $p$ by construction). Since $1-G_R(A)$ converges to 1 strongly by the functional calculus, we find that
\begin{align*}
\lim_{R\rightarrow \infty} \sup_{\ell\in \NN} \lVert ((1-G_R(A))\otimes 1) q_{\ell,p}  \lVert=0.
\end{align*}
Picking $R$ larger than some $R'$ and $\ell>\ell'$ we find that
\begin{equation*}
\langle A_1\psi_\ell,((1-G_R(A))\otimes 1 )A_1\psi_\ell \rangle\leq \varepsilon.
\end{equation*}
This finishes the proof.
\end{proof}
\noindent The following lemma finishes the proof of Theorem \ref{Thm:Numberstructure_Massless}.
\begin{lem}
$\lVert \psi\lVert=1$.
\end{lem}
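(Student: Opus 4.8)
The plan is to prove $\lVert\psi\rVert\geq 1$; since weak lower semicontinuity of the norm already gives $\lVert\psi\rVert\leq\liminf_\ell\lVert\psi_\ell\rVert=1$, this yields $\lVert\psi\rVert=1$. The normalised sequence $\{\psi_\ell\}$ can fail to converge strongly to its weak limit only by leaking mass to infinity, either in the particle number or in the one-boson configuration/momentum variable. Both escape routes are controlled by what is already established: Lemma~\ref{Lem:Basicstuff}(3) bounds $\langle\psi_\ell,N\psi_\ell\rangle$ uniformly in $\ell$, and the previous lemma shows that $\lVert(1-\Gamma(G_R(A)))\psi_\ell\rVert$ is uniformly small for $R$ and $\ell$ large, for both $A=k$ and $A=-i\nabla_k$. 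I would use these to trap $\psi_\ell$, up to an arbitrarily small uniform error, inside the range of a single compact operator, and then invoke weak convergence.

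Concretely, fix $\varepsilon>0$. By Chebyshev's inequality and $\sup_\ell\langle\psi_\ell,N\psi_\ell\rangle<\infty$ choose $M\in\NN$ with $\lVert 1_{(M,\infty)}(N)\psi_\ell\rVert\leq\varepsilon$ for all $\ell$. By the previous lemma pick $R'>0$ and $\ell'\in\NN$ so that $\lVert(1-\Gamma(G_R(k)))\psi_\ell\rVert\leq\varepsilon$ and $\lVert(1-\Gamma(G_R(-i\nabla_k)))\psi_\ell\rVert\leq\varepsilon$ whenever $R>R'$ and $\ell>\ell'$; fix one such $R$. Put
\[
K:=1_{[0,M]}(N)\,\Gamma(G_R(k))\,\Gamma(G_R(-i\nabla_k)).
\]
Each factor is a contraction commuting with $1_{[0,M]}(N)$ (the maps $\Gamma$ preserve particle number), so a three-term triangle inequality gives $\lVert\psi_\ell-K\psi_\ell\rVert\leq 3\varepsilon$ for all $\ell>\ell'$.

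The crux is that $K$ is compact. Because $\Gamma$ preserves particle number, $K$ is the direct sum over $n\in\{0,\dots,M\}$ of $\Gamma^{(n)}(G_R(k))\Gamma^{(n)}(G_R(-i\nabla_k))=(G_R(k)G_R(-i\nabla_k))^{\otimes n}$ restricted to $\cH^{\otimes_s n}$, and zero on the sectors with $n>M$. On $\cH=L^2(\RR^\nu)$ the operator $G_R(k)G_R(-i\nabla_k)$ is multiplication by a $C_c^\infty$ function followed by a $C_c^\infty$ Fourier multiplier; its integral kernel is a product of two $L^2$ functions, so it is Hilbert--Schmidt, hence compact. Tensor powers of compact operators are compact, and restriction to a closed subspace preserves compactness, so $K$ is a finite direct sum of compact operators and is therefore compact. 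Then $\psi_\ell\rightharpoonup\psi$ and compactness of $K$ give $K\psi_\ell\to K\psi$ in norm, whence
\[
\lVert\psi\rVert\;\geq\;\lVert K\psi\rVert\;=\;\lim_\ell\lVert K\psi_\ell\rVert\;\geq\;\liminf_\ell\bigl(\lVert\psi_\ell\rVert-3\varepsilon\bigr)\;=\;1-3\varepsilon,
\]
the first inequality because $K$ is a contraction. Letting $\varepsilon\downarrow 0$ yields $\lVert\psi\rVert\geq 1$.

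I expect the only genuinely delicate point to be the compactness of $K$: a one-sided cutoff, in momentum or in configuration space alone, does not produce a compact one-particle operator, and it is precisely the simultaneous localization in both variables — supplied by the previous lemma applied to $A=k$ and $A=-i\nabla_k$ — that makes $G_R(k)G_R(-i\nabla_k)$ Hilbert--Schmidt, together with the number cutoff $1_{[0,M]}(N)$ that turns the infinite direct sum $\Gamma$ into a finite one. Everything else is bookkeeping with contractions and the triangle inequality.
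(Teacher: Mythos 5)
Your proof is correct and follows essentially the same route as the paper: use the uniform number bound and the preceding lemma to approximate $\psi_\ell$ by $K\psi_\ell$ with a compact $K = 1_{[0,M]}(N)\Gamma(G_R(k))\Gamma(G_R(-i\nabla_k))$, then combine compactness with weak convergence. The only difference is cosmetic bookkeeping in the triangle inequality, plus the fact that you spell out the compactness argument (the Hilbert--Schmidt property of $G_R(k)G_R(-i\nabla_k)$ and the finite particle cutoff) which the paper leaves implicit.
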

\begin{proof}
Let $\varepsilon>0$. Pick $R'>0$ and $\ell '\in \NN$ such that $\lVert (1-\Gamma(G_R(A)))\psi_\ell\lVert\leq \frac{\varepsilon}{3}$ when $R>R'$, $\ell >\ell'$ and $A\in \{  -i\nabla_k,k \}$. Using Lemma \ref{Lem:Basicstuff} we see $\langle \psi_\ell ,N\psi_\ell \rangle$ is uniformly bounded by a constant $C$ and so we find
\begin{equation*}
\lVert (1-1_{ [0,p] }(N) )\psi_\ell \lVert=\lVert 1_{  (p,\infty)   }(N)\psi_\ell\lVert \leq \frac{1}{\sqrt{p}}\lVert 1_{  (p,\infty)   }(N)N^{\frac{1}{2}}\psi_\ell  \lVert\leq \frac{\sqrt{C}}{\sqrt{p}}  
\end{equation*}
Hence we may pick $p$ so large that $\lVert (1-1_{ [0,p]  }(N))\psi_\ell \lVert\leq \frac{\varepsilon}{3}$ uniformly in $\ell$. We now find
\begin{align*}
1=&\lVert \psi_\ell \lVert\\\leq& \lVert (1-\Gamma(G_{R}(k)))\psi_\ell \lVert +\lVert \Gamma(G_{R}(k))(1-\Gamma(G_{R}(-i\nabla_k)))\psi_\ell \lVert\\& +\lVert \Gamma(G_{R}(k))\Gamma(G_{R}(-i\nabla_k))(1-1_{ [0,p]  }(N))\psi_\ell \lVert\\&+\lVert \Gamma(G_{R}(k))\Gamma(G_{R}(-i\nabla_k))1_{ [0,p]  }(N)\psi_\ell \lVert\\\leq& \varepsilon +\lVert \Gamma(G_{R}(k))\Gamma(G_{R}(-i\nabla_k))1_{ [0,p]  }(N)\psi_\ell \lVert.
\end{align*}
Since $\Gamma(G_{R}(k))\Gamma(G_{R}(-i\nabla_k))1_{\{ [0,p] \} }(N)$ is compact, we may take $\ell$ to $\infty$ and find 
\begin{equation*}
1-\varepsilon\leq \lVert \Gamma(G_{R}(k))\Gamma(G_{R}(-i\nabla_k))1_{\{ [0,p] \} }(N)\psi \lVert\leq \lVert \psi\lVert \leq \liminf_{\ell\rightarrow \infty} \lVert \psi_\ell\lVert=1.
\end{equation*}
This finishes the proof.
\end{proof}
\appendix

\section{Measure Theory.}
In this section, we introduce the necessary measure theoretic tools to prove the HVZ theorem. Throughout this section, we will consider a $\sigma$-finite measure space $(\cM,\cF,\mu)$. If $f\colon \cM\rightarrow \RR$ is a measurable map and $M_f$ is the corresponding multiplication operator then it is easy to see that
\begin{align*}
\sigma(M_f)&=\{ \lambda\in \RR\mid \mu( (\lambda-\varepsilon,\lambda+\varepsilon)  )>0 \,\, \text{for all $\varepsilon>0$}  \}:=\text{essran}(f,\mu)\\
\sigma_{\textup{ess}}(M_f)&=\{ \lambda\in \RR\mid \text{Dim$(1_{  \{  \lambda-\varepsilon<\omega<\lambda+\varepsilon \} } L^2(\cM,\cF,\mu))=\infty$ for all $\varepsilon>0$}  \}.
\end{align*}
Here $\text{essran}(f,\mu)$ is called the essential range of $f$ under $\mu$.

\begin{lem}\label{Lem:SpecialSequenceOfDispersions}
Let $f:\cM\rightarrow \RR$ be measurable and assume $f\geq m>0$ almost everywhere. Define
\begin{equation*}
f_n=\sum_{k=0}^{\infty}\frac{k+1}{2^n}1_{\left \{f\in  \left [\frac{k}{2^n},\frac{k+1}{2^n}\right)  \right \}}.
\end{equation*}
Then $f_n/f$ and $f/f_n$ converges to 1 in $L^{\infty}(\cM,\cF,\mu)$ and
\begin{equation*}
\lim_{n\rightarrow \infty}\inf(\sigma(M_{f_n}))=\inf(\sigma(M_{f})).
\end{equation*}
Furthermore, $\sigma_{\textup{ess}}(M_{f_n})\neq \emptyset$ for all $n\in \NN$ if $\sigma_{\textup{ess}}(M_{f})\neq \emptyset$ and in this case
\begin{equation*}
\lim_{n\rightarrow \infty}\inf(\sigma_{\textup{ess}}(M_{f_n}))=\inf(\sigma_{\textup{ess}}(M_{f})).
\end{equation*}
\end{lem}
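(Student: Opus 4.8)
The plan is to reduce everything to the elementary pointwise bound $0<f_n-f\le 2^{-n}$, after which the three assertions follow from standard facts about multiplication operators and bounded self-adjoint perturbations. To set this up, for $\mu$-almost every $x$ put $k(x)=\lfloor 2^nf(x)\rfloor$, so that $f(x)\in[k(x)2^{-n},(k(x)+1)2^{-n})$ and hence $f_n(x)=(k(x)+1)2^{-n}$ with $f(x)<f_n(x)\le f(x)+2^{-n}$. Thus $f_n$ is measurable and $0<f_n-f\le 2^{-n}$ almost everywhere, so $\lVert f_n-f\rVert_{\infty}\le 2^{-n}$. Since $f\ge m>0$ almost everywhere, dividing by $f$, respectively by $f_n$, gives $1<f_n/f\le 1+2^{-n}/m$ and $1-2^{-n}/m\le f/f_n<1$ almost everywhere, whence $f_n/f\to 1$ and $f/f_n\to 1$ in $L^{\infty}(\cM,\cF,\mu)$. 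For the bottom of the spectrum I would then use $\inf\sigma(M_g)=\inf\bigl(\mathrm{essran}(g,\mu)\bigr)=\operatorname{ess\,inf}g$: from $f\le f_n\le f+2^{-n}$ a.e.\ one reads off $\inf\sigma(M_f)=\operatorname{ess\,inf}f\le\operatorname{ess\,inf}f_n=\inf\sigma(M_{f_n})\le\operatorname{ess\,inf}f+2^{-n}=\inf\sigma(M_f)+2^{-n}$, which converges to $\inf\sigma(M_f)$.

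For the essential spectrum I would write $M_{f_n}=M_f+M_{f_n-f}$, a bounded self-adjoint perturbation of $M_f$ with $\lVert M_{f_n-f}\rVert\le 2^{-n}$, and invoke the standard stability statement: if $A$ is self-adjoint and $B$ is bounded self-adjoint with $\sigma_{\textup{ess}}(A)\neq\emptyset$, then $\sigma_{\textup{ess}}(A+B)\neq\emptyset$ and every point of $\sigma_{\textup{ess}}(A)$ lies within $\lVert B\rVert$ of $\sigma_{\textup{ess}}(A+B)$, and conversely. (The proof is the usual one: an orthonormal Weyl sequence $\{u_j\}$ for $A$ at $\lambda$ satisfies $\limsup_j\lVert(A+B-\lambda)u_j\rVert\le\lVert B\rVert$, and since $\{u_j\}$ is weakly null while finite-rank spectral projections are compact, a spectral-projection estimate forces $\sigma_{\textup{ess}}(A+B)\cap(\lambda-\lVert B\rVert-\delta,\lambda+\lVert B\rVert+\delta)\neq\emptyset$ for every $\delta>0$; the reverse direction is symmetric.) Applying this with $A=M_f$ and $B=M_{f_n-f}$ yields $\sigma_{\textup{ess}}(M_{f_n})\neq\emptyset$ for all $n$ and $\lvert\inf\sigma_{\textup{ess}}(M_{f_n})-\inf\sigma_{\textup{ess}}(M_f)\rvert\le 2^{-n}\to0$. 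If one prefers to stay within the explicit descriptions of $\sigma$ and $\sigma_{\textup{ess}}$ recalled just above, the same is obtained directly: since $\{|f-\lambda|<\varepsilon\}\subset\{|f_n-\lambda|<\varepsilon+2^{-n}\}$ and $f_n$ takes only finitely many values on the larger set, an infinite-dimensional $1_{\{|f-\lambda|<\varepsilon\}}L^2$ forces $\dim 1_{\{f_n=v\}}L^2=\infty$ for some value $v$ with $|v-\lambda|<\varepsilon+2^{-n}$, so $v\in\sigma_{\textup{ess}}(M_{f_n})$; letting $\varepsilon\downarrow0$ and using that $\sigma_{\textup{ess}}(M_{f_n})$ is closed places a point of $\sigma_{\textup{ess}}(M_{f_n})$ within $2^{-n}$ of $\lambda$, and a short compactness/covering argument on a small interval around a point of $\sigma_{\textup{ess}}(M_{f_n})$ gives the reverse inclusion.

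The only step carrying any real content is this last one — the stability of the essential spectrum under the bounded perturbation $M_{f_n-f}$, i.e.\ that a point of $\sigma_{\textup{ess}}(M_{f_n})$ must lie close to $\sigma_{\textup{ess}}(M_f)$ — and even there the work is entirely standard once one notes $\lVert M_{f_n-f}\rVert\le 2^{-n}$; everything else is immediate from the pointwise inequality $0<f_n-f\le 2^{-n}$.
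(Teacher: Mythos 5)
Your proof is correct and takes a genuinely different route from the paper's for the essential-spectrum statement. Both proofs start from the pointwise inequality $0 < f_n - f \le 2^{-n}$ a.e.\ and get the $L^\infty$ convergence of $f_n/f$ and $f/f_n$ the same way. For the bottom of the spectrum you use $\inf\sigma(M_g) = \operatorname{ess\,inf} g$ directly, which is slicker than the paper's detour through norm resolvent convergence and a uniform lower bound, though both are fine. The substantive divergence is in the essential-spectrum part: the paper exploits the step-function structure of $f_n$ concretely (since $1_{\{|f-\lambda|<q^{-1}\}}$ has infinite-dimensional range for $\lambda \in \sigma_{\textup{ess}}(M_f)$, one of the at most two dyadic values $f_n$ takes on that set is an eigenvalue of $M_{f_n}$ of infinite multiplicity, giving a point $\lambda_n \in \sigma_{\textup{ess}}(M_{f_n})$ with $|\lambda_n - \lambda| \le 2^{-n}$), and then closes the loop via norm resolvent convergence to get the lower bound on $\inf\sigma_{\textup{ess}}(M_{f_n})$. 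You instead package both directions into a single abstract lemma: for self-adjoint $A$ and bounded self-adjoint $B$, every point of $\sigma_{\textup{ess}}(A)$ lies within $\lVert B\rVert$ of $\sigma_{\textup{ess}}(A+B)$, and symmetrically. The Weyl-sequence proof you sketch for that lemma is correct, and it gives $\lvert\inf\sigma_{\textup{ess}}(M_{f_n}) - \inf\sigma_{\textup{ess}}(M_f)\rvert \le 2^{-n}$ and the nonemptiness of $\sigma_{\textup{ess}}(M_{f_n})$ in one stroke. Your route buys economy and generality---only the operator bound $\lVert M_{f_n} - M_f\rVert \le 2^{-n}$ is used, never the explicit dyadic form of $f_n$---at the price of relying on a stability estimate for $\sigma_{\textup{ess}}$ under bounded self-adjoint perturbations which is true and has exactly the proof you sketch, but is stated less often as a ready-made theorem than the corresponding facts about norm resolvent convergence; spelling out the Weyl-sequence argument as you did is therefore appropriate.
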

\begin{proof}
The sum defining $f_n$ is pointwise finite, so it defines a measurable and nonnegative function. Note that
\begin{equation}\label{eq:UniformConv}
\lvert f(x)-f_n(x)\lvert\leq \frac{1}{2^n},
\end{equation}
almost everywhere and that $f_n\geq f$ almost everywhere. Hence the following calculation is true almost everywhere
\begin{equation*}
\left \lvert \frac{f}{f_n}-1\right \lvert  =\left \lvert \frac{f-f_n}{f_n}\right \lvert \leq  \frac{1}{2^nm}.
\end{equation*}
This implies $ f/f_n-1\in L^{\infty}(\cM,\cF,\mu)$ and converges to 0 in this topology. A similar argument works for $f_n/f-1$. 

Equation (\ref{eq:UniformConv}) shows $\cD(M_f)=\cD(M_{f_n})$ for all $n$ and on this set we have the inequality $\lVert (M_f-M_{f_n}) \psi \lVert\leq  2^{-n}\lVert \psi \lVert$ which shows $M_{f_n}$ converges to $M_{f}$ in norm resolvent sense (see \cite[Theorem VIII.25]{RS1}). Since the operators $M_{f_n}$ are uniformly bounded below by 0, we conclude
\begin{equation*}
\lim_{n\rightarrow \infty}\inf(\sigma(M_{f_n}))=\inf(\sigma(M_{f})).
\end{equation*}
Assume $\lambda=\inf(\sigma_{\textup{ess}}(M_{f}))$ is finite. Then $1_{\{f\in (\lambda-q^{-1},\lambda+q^{-1})\}}$ is an infinite dimensional projection for every $q\in \mathbb{N}$. Fix $n\in \mathbb{N}$ and pick $k\in \NN_0$ such that 
 \begin{equation*}
 \lambda\in [2^{-n}k,2^{-n}(k+1)).
 \end{equation*}
 Note that either $2^{-n}(k+1)$ or $2^{-n}k$ belongs to the essential spectrum of $M_{f_n}$ since it is an eigenvalue of infinite dimension. In particular, $\sigma_{\textup{ess}}(M_{f_n})$ contains a point $\lambda_n$ such that $\lvert \lambda-\lambda_n\lvert\leq 2^{-n}$.

 Hence we have now proven, that for each $n\in \mathbb{N}$ there is a $\lambda_n \in \sigma_{\textup{ess}}(M_{f_n})$ such that $\{\lambda_n\}_{n=1}^\infty$ converges to $\lambda$. In particular, $\mu_n=\inf(\sigma_{\textup{ess}}(M_{f_n}))$ is finite. Note that $\{ \mu_n \}_{n=1}^\infty$ is bounded since $\mu_n\leq \lambda_n$ for all $n\in \NN$ and $\{ \lambda_n \}_{n=1}^\infty$ is convergent.

 Let $\mu$ be any limit point of $\{ \mu_n \}_{n=1}^\infty$. Then $\mu\in \sigma_{\textup{ess}}(M_f)$ by elementary properties of norm resolvent convergence so $\lambda\leq \mu$. On the other hand we have $\mu_n\leq \lambda_n$ for all $n\in \NN$ so $\mu\leq \lambda$. This implies $\mu=\lambda$, and so $\lambda$ is the only accumulation point of the bounded sequence $\{ \mu_n \}_{n=1}^\infty$. This implies $\{ \mu_n \}_{n=1}^\infty$ converges to $\lambda$.
\end{proof}

\noindent We have the following definition:
\begin{defn}
	Write $\mathbb{R}_+=[0,\infty)$. A continuous resolution for the measure space $(\mathcal{M},\mathcal{F},\mu)$ is a collection $(A_x)_{x\in \mathbb{R}_+}\subset \mathcal{F}$ such that $\mu(A_0)=0$, $A_{x}\subset A_{y}$ when $x\leq y$, $\mu(A_x)<\infty$ for all $x\in \mathbb{R}^+$, $1_{A_x}\rightarrow 1_{A_y}$ $\mu$-a.e for $x\rightarrow y$ and 
	\begin{equation*}
	\bigcup_{x\geq 0}A_x=\mathcal{M}.
	\end{equation*} 
\end{defn}

\begin{lem}\label{Lem:Cutting the space}
	Assume $(\mathcal{M},\mathcal{F},\mu)$ allows a continuous resolution $(A_x)_{x\in \mathbb{R}_+}$ and let $A\in \mathcal{F}$ with $0<\mu(A)$. For every $\lambda\in (0,\mu(A))$ there is $B\subset A$ with $B\in \cF$ and $\mu(B)=\lambda$. Furthermore, there is a partition of $A$ into disjoint measurable sets $\{B_n\}_{n=1}^\infty$ such that $0<\mu(B_n)<\infty$ for all $n\in \NN$.
\end{lem}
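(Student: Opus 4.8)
The plan is to reduce the first claim to an intermediate value theorem applied to the \emph{cut function}
$g(x):=\mu(A\cap A_x)$ on $[0,\infty)$. Monotonicity of $g$ is immediate from $A_x\subset A_y$ for $x\le y$, and $g(0)=\mu(A\cap A_0)=0$ since $\mu(A_0)=0$. For the behaviour at infinity, note that $\bigcup_{n\in\NN}A_n=\cM$ (any point lies in some $A_x$, hence in $A_{\lceil x\rceil}$), so $A\cap A_n\uparrow A$ and continuity of $\mu$ from below gives $g(n)\to\mu(A)$; monotonicity then upgrades this to $\lim_{x\to\infty}g(x)=\mu(A)$. The key point is continuity of $g$: if $x_k\to y$ then $1_{A_{x_k}}\to 1_{A_y}$ $\mu$-a.e.\ by the defining property of a continuous resolution, hence $1_{A\cap A_{x_k}}\to 1_{A\cap A_y}$ $\mu$-a.e.; since $\{x_k\}$ is bounded by some $C$, all these functions are dominated by the integrable function $1_{A_C}$ (here $\mu(A_C)<\infty$ enters), and dominated convergence yields $g(x_k)\to g(y)$. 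Given $\lambda\in(0,\mu(A))$, pick $X$ with $g(X)>\lambda$ (possible since $g(x)\to\mu(A)>\lambda$); by the intermediate value theorem there is $x^{*}\in(0,X)$ with $g(x^{*})=\lambda$, and $B:=A\cap A_{x^{*}}$ is the desired set.

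For the partition I split into two cases. If $\mu(A)<\infty$, I peel off halves using the first part: choose $B_1\subset A$ with $\mu(B_1)=\mu(A)/2$, then inductively $B_n\subset A\setminus(B_1\cup\cdots\cup B_{n-1})$ with $\mu(B_n)=\mu(A)/2^{n}$, which is permissible because $\mu(A\setminus(B_1\cup\cdots\cup B_{n-1}))=\mu(A)/2^{\,n-1}>\mu(A)/2^{n}$. The $B_n$ are pairwise disjoint with $0<\mu(B_n)<\infty$ and $\sum_n\mu(B_n)=\mu(A)$, so $A\setminus\bigcup_nB_n$ is $\mu$-null; absorbing it into $B_1$ gives a genuine partition. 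If $\mu(A)=\infty$, I use the resolution directly: with $D_n:=A\cap A_n$ one has $D_{n-1}\subset D_n$, $\mu(D_0)=0$, $\mu(D_n)<\infty$ and $D_n\uparrow A$, so the sets $C_n:=D_n\setminus D_{n-1}$ $(n\ge 1)$ are pairwise disjoint, of finite measure, with $\bigcup_{n\ge1}C_n=A\setminus D_0$ and $\sum_n\mu(C_n)=\mu(A)=\infty$; hence $S:=\{n\ge 1:\mu(C_n)>0\}$ is infinite. Enumerating $S=\{s_1<s_2<\cdots\}$, setting $B_k:=C_{s_k}$ for $k\ge 2$ and letting $B_1:=C_{s_1}\cup D_0\cup\bigcup_{n\ge1,\;n\notin S}C_n$ (a modification by a $\mu$-null set) produces the required partition.

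The only genuinely delicate point is the continuity of $g$, where one must combine the ``$1_{A_x}\to 1_{A_y}$ $\mu$-a.e.'' clause in the definition of a continuous resolution with a dominated-convergence argument that relies on the local finiteness $\mu(A_x)<\infty$; everything else is routine bookkeeping with monotone limits, geometric series, and absorbing leftover null sets into $B_1$.
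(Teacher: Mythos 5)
Your proof is correct and follows essentially the same route as the paper: both define the cut function $x\mapsto\mu(A\cap A_x)$, establish its continuity (via the a.e.\ convergence of $1_{A_x}$ and dominated convergence using $\mu(A_x)<\infty$) and its limit $\mu(A)$ at infinity, invoke the intermediate value theorem for the first claim, and then assemble the partition from an increasing family of cuts. The only difference is cosmetic: the paper unifies the two cases by choosing $x_n$ with $f(x_n)=(1-2^{-n})\mu(A)$ or $f(x_n)=n$ so that every increment $E_n\setminus E_{n-1}$ automatically has positive measure, whereas you split into $\mu(A)<\infty$ (peeling halves) and $\mu(A)=\infty$ (integer cuts $A_n$, then pruning the zero-measure increments), which is equally valid.
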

\begin{proof}
	Define $f:[0,\infty) \rightarrow [0,\infty)$ by
	\begin{equation*}
	f(x)=\int_{\mathcal{M}} 1_{A}(y)1_{A_x}(y)d\mu(y).
	\end{equation*}
	Then $f$ is increasing and continuous by the dominated convergence theorem. Furthermore, $f(0)=0$ and by monotone convergence we find
	\begin{equation*}
	\mu(A)=\lim\limits_{x\rightarrow \infty}f(x).
	\end{equation*} 
	Let $\lambda\in (0,\mu(A))$. The intermediate value theorem now gives $x_0\in [0,\infty)$ such that $\lambda=f(x_0)$ implying $B=A_{x_0}\cap A$ has the properties claimed in the lemma. We now prove that the subdivision of $A$ exists. For each $n\in \mathbb{N}$ pick $x_n\in [0,\infty)$ such that
	\begin{equation*}
	f(x_n)=\begin{cases}
	(1-2^{-n})\mu(A) & \mu(A)<\infty \\ n & \mu(A)=\infty
	\end{cases}
	\end{equation*}
	Since $f$ is increasing and $f(x_n)<f(x_{n+1})$ we find that $x_n<x_{n+1}$. Define now $E_n=A\cap A_{x_n}$, $\widetilde{B}_1=E_1$ and $\widetilde{B}_n=E_n\backslash E_{n-1}$ for $n\geq 2$. Then
	\begin{equation*}
	\mu(\widetilde{B}_1)=\mu(E_1)=f(x_1)
	\end{equation*}
	so $0<\mu(\widetilde{B}_1)<\infty$. For $n\geq 2$ we see
	\begin{equation*}
	\mu(\widetilde{B}_n)=\mu(E_n\backslash E_{n-1})=f(x_n)-f(x_{n-1}),
	\end{equation*}
    so $0<\mu(\widetilde{B}_n)<\infty$. Furthermore, $\widetilde{B}_n\cap \widetilde{B}_m=\emptyset$ for $n\neq m$ by construction. Define
	\begin{equation*}
	x=\lim\limits_{n\rightarrow \infty}x_n.
	\end{equation*}
	If $x=\infty$ then
	\begin{equation*}
	\bigcup_{n=1}^{\infty}B_n=\bigcup_{n=1}^{\infty}E_n=A\cap \bigcup_{n=1}^{\infty}A_{x_n}=A,
	\end{equation*}
	so we may pick $B_n=\widetilde{B}_n$ in this case. If $x<\infty$ we note that
	\begin{equation*}
	\infty >\int_{\mathcal{M}} 1_{A}(y)1_{A_x}(y) d\mu(y)=f(x)=\lim\limits_{n\rightarrow \infty} f(x_n)=\mu(A),
	\end{equation*}
	so $\mu(A)<\infty$ and $f(x)=\mu(A)$. Furthermore, 
	\begin{equation*}
	\mu \biggl( \bigcup_{n=1}^\infty \widetilde{B}_n \biggl)=\mu \biggl( \bigcup_{n=1}^\infty E_n \biggl)=\lim\limits_{n\rightarrow \infty}\mu(E_n)=\mu(A).
	\end{equation*}
	Let $B=A\backslash \bigcup_{n=1}^\infty \widetilde{B}_n $ and note that $\mu(B)=0$. Define $B_1=\widetilde{B}_1\cup B$ and $B_n=\widetilde{B}_n$ for $n\geq 2$. Then $\mu(B_n)=\mu(\widetilde{B}_n)\in (0,\infty)$ for all $n\in \NN$ and $B_n\cap B_m=\emptyset $ for $n\neq m$. Furthermore,
	\begin{equation*}
	\bigcup_{n=1}^{\infty}B_n= \bigcup_{n=1}^\infty \widetilde{B}_n \cup\biggl ( A\backslash \bigcup_{n=1}^\infty \widetilde{B}_n\biggl )=A
	\end{equation*}
	which finishes the proof.
\end{proof}
\noindent Using Lemma \ref{Lem:Cutting the space} we prove
\begin{lem}\label{Lem:Important property of continouos resolutions}
	Assume $(\mathcal{M},\mathcal{F},\mu)$ allows a continuous resolution and $f:\mathcal{M}\rightarrow \mathbb{R}$ is measurable. For every $z\in \sigma(M_f)$ there is a collection of sets $\{A_n\}_{n=1}^\infty\subset \mathcal{F}$ such that $\lvert f(x)-z\lvert\leq \frac{1}{n}$ on $A_n$, $A_n\cap A_m=\emptyset$ if $m\neq n$, $\mu(A_n)>0$ and
	\begin{equation*}
	\mu \left( \bigcup_{n=1}^{\infty} A_n \right)=\sum\limits_{n=1}^{\infty}\mu(A_n)<\infty.
	\end{equation*}
\end{lem}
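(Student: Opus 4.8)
The plan is to build the sets $A_n$ by hand, exploiting the description of $\sigma(M_f)$ recalled at the start of this appendix: $z\in\sigma(M_f)=\mathrm{essran}(f,\mu)$ means exactly that the decreasing sets
\[
E_n:=\{x\in\cM : \lvert f(x)-z\rvert<1/n\}
\]
satisfy $\mu(E_n)>0$ for every $n\in\NN$. The argument then splits according to whether all the $E_n$ have infinite measure or not.

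First suppose $\mu(E_n)=\infty$ for all $n$. I would construct the $A_n$ recursively. Having chosen pairwise disjoint $A_1,\dots,A_{n-1}$ with $A_j\subseteq E_j$ and $\mu(A_j)=2^{-j}$, the set $E_n\setminus\bigcup_{j<n}A_j$ still has infinite measure, so Lemma \ref{Lem:Cutting the space} yields a measurable subset $A_n$ of it with $\mu(A_n)=2^{-n}$. Then $A_n\subseteq E_n$ forces $\lvert f-z\rvert<1/n$ on $A_n$, the $A_n$ are pairwise disjoint and of positive finite measure, and $\sum_n\mu(A_n)=1<\infty$.

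Next suppose $\mu(E_N)<\infty$ for some $N$ (hence for all $n\ge N$), and put $G:=\{f=z\}=\bigcap_n E_n$. If $\mu(G)>0$, I would use Lemma \ref{Lem:Cutting the space} first to pass (if necessary) to a subset of $G$ of finite positive measure, and then to partition that subset into countably many pairwise disjoint pieces $A_n$ of positive finite measure with finite total mass; on each $A_n$ one has $f=z$, so $\lvert f-z\rvert=0\le 1/n$. If instead $\mu(G)=0$, write $R_k:=\{1/(k+1)\le\lvert f-z\rvert<1/k\}$, so that $E_n\setminus G=\bigsqcup_{k\ge n}R_k$ and therefore $\sum_{k\ge n}\mu(R_k)=\mu(E_n)>0$ for every $n$. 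This forces, for each $n$, the existence of some $k\ge n$ with $\mu(R_k)>0$; choosing indices $N\le k_1<k_2<\cdots$ with $\mu(R_{k_j})>0$ and setting $A_j:=R_{k_j}$ does the job: $A_j\subseteq E_{k_j}\subseteq E_N$ gives $0<\mu(A_j)<\infty$, distinct annuli are disjoint, $\lvert f-z\rvert<1/k_j\le 1/j$ on $A_j$ since $k_j\ge j$, and $\sum_j\mu(A_j)\le\mu(E_N)<\infty$.

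In every case $\mu\bigl(\bigcup_n A_n\bigr)=\sum_n\mu(A_n)<\infty$ by disjointness and summability. The only genuinely delicate point is the last one: one must rule out the scenario in which the mass of $E_n$ sits on $\{f=z\}$ together with only finitely many positive–measure level sets of $f$. This is precisely handled by decomposing $E_n\setminus G$ into the annuli $R_k$ and deducing that $\mu(R_k)>0$ for arbitrarily large $k$; the continuous resolution enters (through Lemma \ref{Lem:Cutting the space}) both there and in splitting a positive–measure set into countably many positive–measure pieces.
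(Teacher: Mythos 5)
Your proof is correct and follows essentially the same three-way case split as the paper: $\mu(E_n)=\infty$ for all $n$, $\mu(E_N)<\infty$ with $\mu(G)>0$, and $\mu(E_N)<\infty$ with $\mu(G)=0$. The first two cases are identical to the paper's argument (using Lemma \ref{Lem:Cutting the space} to carve off disjoint sets of prescribed measure, and to partition $\{f=z\}$).

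The one place you genuinely deviate is the last sub-case $\mu(G)=0$. The paper recursively constructs a strictly increasing sequence $n_1<n_2<\cdots$ so that $C_{n_k}\setminus C_{n_{k+1}}$ has positive measure (showing at each step that the index set $\cA$ is nonempty, else $\mu(\{f=z\})>0$), and then takes $A_k=C_{n_k}\setminus C_{n_{k+1}}$. You instead introduce the canonical annuli $R_k=\{1/(k+1)\le |f-z|<1/k\}$, observe $E_n\setminus G=\bigsqcup_{k\ge n}R_k$, and conclude from $\mu(E_n)>0$ for all $n$ that infinitely many $R_k$ have positive measure, from which you pick a strictly increasing subsequence $k_j\ge j$. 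The two arguments produce the same type of sets (the paper's $C_{n_k}\setminus C_{n_{k+1}}$ is an annulus $\{1/(N+n_{k+1})\le|f-z|<1/(N+n_k)\}$), but your version is cleaner: by working with the fixed dyadic-like family $R_k$ you replace the paper's inductive nonemptiness argument with a one-line counting observation. The $k_j\ge j$ inequality, which you invoke to obtain $|f-z|\le 1/j$ on $A_j$, does hold because you start from $k_1\ge N\ge 1$ and increase strictly. One small redundancy: passing ``if necessary'' to a finite-measure subset of $G$ is not needed, since $G\subset E_N$ and $\mu(E_N)<\infty$ already in that branch.
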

\begin{proof}
Let $z\in \sigma(M_f)$ and define
	\begin{equation*}
	B_n=\{ f\in (z-n^{-1},z+n^{-1}) \}.
	\end{equation*}
	There are now several cases. Assume first that $\mu(B_n)=\infty$ for all $n\in \mathbb{N}$. Then we define $A_n$ recursively as follows: By Lemma \ref{Lem:Cutting the space} we may pick $A_1\subset B_1$ such that $\mu(A_1)=1$. Assume now we have constructed disjoint sets $A_1,\dots,A_{n-1}$ such that $A_j\subset B_j$ for all $j\in \{1,\dots,n-1\}$ and $\mu(A_j)=\frac{1}{j^2}$. Then
	\begin{equation*}
	\infty=\mu(B_n)\leq \mu(B_n\backslash (A_1\cup\dots\cup A_{n-1}))+\sum\limits_{j=1}^{n-1}\frac{1}{j^2}.
	\end{equation*}
	so $\mu(B_n\backslash (A_1\cup\dots\cup A_{n-1}))=\infty$. By Lemma \ref{Lem:Cutting the space} there is  $A_n\subset B_n\backslash (A_1\cup\dots\cup A_{n-1})$ such that $\mu(A_n)=\frac{1}{n^2}$ for all $n\in \NN$. Hence we have now constructed a sequence of disjoint sets $\{A_n\}_{n=1}^\infty\subset \mathcal{F}$ such that $A_n\subset B_n$ and $\mu(A_n)=\frac{1}{n^2}$ for all $n\in \NN$. Since 
	\begin{equation*}
	\mu \left( \bigcup_{n=1}^{\infty} A_n \right)=\sum\limits_{n=1}^{\infty}\mu(A_n)=\sum\limits_{n=1}^{\infty}\frac{1}{n^2}<\infty
	\end{equation*}
	we are done. Assume now that there is an $N\in \mathbb{N}$ such that $\mu(B_N)<\infty$. Define
	\begin{equation*}
	C_n=B_{N+n}.
	\end{equation*}
	Since the $B_n$ are decreasing we find that $C_n\subset B_n$ and
	\begin{equation*}
	\lim_{n\rightarrow \infty } \mu(C_n)=\mu(\{ f=z \})<\infty. 
	\end{equation*}
	If $\mu(\{ f=z \})>0$ we apply Lemma \ref{Lem:Cutting the space} and obtain a disjoint subdivision $\{A_n\}_{n=1}^\infty$ of $\{ f=z \}$. Since $A_n\subset \{ f=z \}$ for all $n$ and
	\begin{equation*}
	\mu \biggl( \bigcup_{n=1}^{\infty} A_n \biggl)=\sum\limits_{n=1}^{\infty}\mu(A_n)=\mu(\{ f=z \})<\infty,
	\end{equation*}
	we are finished.
	
	What remains is the case $\mu(\{ f=z \})=0$. We know that $\mu(C_n)>0$ for all $n$ since $z\in \text{essran}(f,\mu)$. We now claim that there are natural numbers $n_1<n_2<n_3<\dots$ such that $\mu(C_{n_{k}}\backslash C_{n_{k+1}})>0$. Define $n_1=1$ and assume that $n_1<n_2<\dots<n_{k}$ has been constructed. Define
	\begin{equation*}
	\cA=\{ n\in \mathbb{N}\mid \mu(C_{n_k}\backslash C_{n})>0 \,\,\, \text{and}\,\,\,  n>n_k  \}.
	\end{equation*}
	Then $\cA\neq \emptyset$ because if $\cA=\emptyset$ then
	\begin{equation*}
	\mu(C_{n_k})=\mu(C_n)
	\end{equation*} 
	for all $n>n_{k}$ implying that $\mu(\{ f=z \})=\mu(C_{n_{k}})>0$ which is a contradiction. Since $\cA\subset \NN$ we can now define $n_{k+1}=\min (\cA)$. Let
	\begin{equation*}
	A_{k}=C_{n_k}\backslash C_{n_{k+1}}.
	\end{equation*}
	Then $A_{k}\subset C_{n_k}\subset C_{k}\subset B_k$ so $\lvert f(x)-z\lvert \leq \frac{1}{k}$ holds on $A_k$. Furthermore, $0<\mu(A_k)\leq \mu(C_{n_k})<\infty$ for all $k$ and $\{A_k\}_{k=1}^\infty$ is disjoint by construction. Note also
	\begin{equation*}
	\sum\limits_{k=1}^{\infty}\mu(A_k)=\mu \left( \bigcup_{k=1}^{\infty} A_k \right)\leq \mu(C_1)<\infty.
	\end{equation*}
This finishes the proof.
\end{proof}
\noindent This leads to the following theorem which we shall need. The reader is reminded that singletons are sets of the form $\{x\}$ for some $x\in \cM$.

\begin{thm}\label{Thm:EssentalPropertyCutSpaces}
Assume $(\cM,\cF,\mu)$ is $\sigma$-finite and that all singletons are measurable. Then $A=\{x\in \cM\mid \mu(\{ x \})>0  \}$ is countable and therefore measurable. Let $\mu_{A^c}$ denote the measure $\mu_{A^c}(B)=\mu(A^c\cap B)$ and assume that $(\cM,\cF,\mu_{A^c})$ has a continuous resolution. Let $f:\cM\rightarrow \RR$ be measurable, $B=\textup{essran}(f,\mu_{A^c})$ and define
\begin{equation*}
C=\{ \lambda \in \RR \mid \exists \{ \lambda_n \}_{n=1}^\infty\subset A, \lambda_n\neq \lambda_m \,\, \forall n\neq m \,\, \text{and}\,\, \lvert f(\lambda_n)-\lambda\lvert\leq n^{-1}   \}.
\end{equation*}
Then
\begin{equation*}
\sigma_{\textup{ess}}(M_f)=B\cup C.
\end{equation*}
Assume in addition that $\mu(A)<\infty$. If $z_1,\dots,z_k\in \sigma_{\textup{ess}}(M_f)$ then there are $k$ collections of sets $\{A^1_n\}_{n=1}^\infty,\dots, \{A^k_n\}_{n=1}^\infty\subset \mathcal{F}$ such that $\lvert f(x)-z_i\lvert\leq \frac{1}{n}$ on $A^i_n$, $A^i_n\cap A^j_m=\emptyset$ if $i\neq j$ or $m\neq n$, $\mu(A^i_n)>0$ and
\begin{equation*}
\mu \left( \bigcup_{n=1}^{\infty} A^i_n \right)=\sum\limits_{n=1}^{\infty}\mu(A^i_n)<\infty.
\end{equation*}
\end{thm}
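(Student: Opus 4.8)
The plan is to split $L^2(\cM,\cF,\mu)$ according to $\cM=A\sqcup A^c$ and analyse the purely atomic part and the diffuse part separately. First, for countability of $A$: writing $\cM=\bigcup_m\cM_m$ with $\mu(\cM_m)<\infty$, the set $\{x\in\cM_m:\mu(\{x\})>1/j\}$ has at most $j\,\mu(\cM_m)$ elements, so $A=\bigcup_{m,j}\{x\in\cM_m:\mu(\{x\})>1/j\}$ is a countable union of finite sets; being a countable union of measurable singletons, $A\in\cF$. Countable additivity then gives $\mu(S)=\sum_{a\in S}\mu(\{a\})$ for every measurable $S\subseteq A$, so $L^2(A,\mu|_A)\cong\ell^2(A)$ (with weights $\mu(\{a\})$) and $M_f$ restricted there is the diagonal operator $\mathrm{diag}(f(a))_{a\in A}$. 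On $A^c$ the measure $\mu_{A^c}$ agrees with $\mu$, and the continuous resolution together with Lemma \ref{Lem:Cutting the space} shows it is non-atomic: any $S\subseteq A^c$ with $\mu_{A^c}(S)>0$ splits into countably many disjoint pieces of positive finite measure, hence $\dim L^2(S,\mu_{A^c})=\infty$.

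For the description of the essential spectrum, I would use the spectral-subspace criterion quoted at the start of this appendix. Fix $\lambda\in\RR$, $\varepsilon>0$ and decompose $1_{\{|f-\lambda|<\varepsilon\}}L^2(\cM,\cF,\mu)$ as the orthogonal sum of its part on $A$ and its part on $A^c$. The first summand has dimension $\#\{a\in A:|f(a)-\lambda|<\varepsilon\}$, while by the previous paragraph the second has dimension $\infty$ if $\mu_{A^c}(\{|f-\lambda|<\varepsilon\})>0$ and $0$ otherwise. Hence $\lambda\in\sigma_{\textup{ess}}(M_f)$ iff for every $\varepsilon>0$ either $\#\{a\in A:|f(a)-\lambda|<\varepsilon\}=\infty$ or $\mu_{A^c}(\{|f-\lambda|<\varepsilon\})>0$. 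Both statements are monotone in $\varepsilon$ (failing at one $\varepsilon$ forces failure at all smaller ones), so the elementary implication ``$\forall\varepsilon\,(P_\varepsilon\vee Q_\varepsilon)\implies(\forall\varepsilon\,P_\varepsilon)\vee(\forall\varepsilon\,Q_\varepsilon)$'' (apply both at $\min$ of two witnessing radii) lets me conclude that $\sigma_{\textup{ess}}(M_f)$ is the union of $\{\lambda:\forall\varepsilon,\ \mu_{A^c}(\{|f-\lambda|<\varepsilon\})>0\}$ and $\{\lambda:\forall\varepsilon,\ \#\{a\in A:|f(a)-\lambda|<\varepsilon\}=\infty\}$. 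The first set is $B=\textup{essran}(f,\mu_{A^c})$ by definition; the second is exactly $C$, since membership lets one greedily choose distinct atoms $\lambda_n$ with $|f(\lambda_n)-\lambda|\le 1/n$, and conversely such a sequence produces, for any $\varepsilon$, infinitely many atoms with $f$-value in $(\lambda-\varepsilon,\lambda+\varepsilon)$. This gives $\sigma_{\textup{ess}}(M_f)=B\cup C$.

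For the final statement, assume $\mu(A)<\infty$ and fix $z_1,\dots,z_k\in B\cup C$. Put each index in $J$ if $z_i\in C$ and in $I$ if $z_i\in B$ (choosing arbitrarily if both hold). For $i\in J$, enumerate the pairs $(i,n)$ with $i\in J$ and, maintaining a running finite set of already-used atoms, pick for each $(i,n)$ a fresh atom $a^i_n$ with $|f(a^i_n)-z_i|\le 1/n$ — possible since $z_i\in C$ supplies infinitely many candidates for each $n$ — and set $A^i_n=\{a^i_n\}$. These singletons are pairwise disjoint, have positive measure, and $\sum_n\mu(A^i_n)\le\mu(A)<\infty$; this is where the hypothesis $\mu(A)<\infty$ enters. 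For $i\in I$ one works inside $A^c$: grouping the indices of $I$ by the common value of $z_i$, partition $A^c$ into finitely many disjoint measurable pieces, one per group, such that the piece $P$ for the group with common value $z$ still has $z\in\textup{essran}(f|_P,\mu_{A^c})$. For distinct groups this is achieved by sending each point of $A^c$ to the group whose value is nearest (for $\varepsilon$ below half the minimal gap the whole $\varepsilon$-level set lands in the right piece); within a single group one instead subdivides $\{f=z\}\cap A^c$ (if it has positive measure, via Lemma \ref{Lem:Cutting the space}) or a nested sequence of shrinking \emph{annuli} $\{\varepsilon_{t+1}\le|f-z|<\varepsilon_t\}\cap A^c$ of positive measure (found as in the proof of Lemma \ref{Lem:Important property of continouos resolutions}), interleaving these among the members of the group. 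Each such piece inherits a continuous resolution (intersect the given one with the piece), so Lemma \ref{Lem:Important property of continouos resolutions} applied to it and to $z_i$ yields the required disjoint family $\{A^i_n\}_n$ with $|f-z_i|\le 1/n$ on $A^i_n$, $\mu_{A^c}(A^i_n)>0$ and $\sum_n\mu_{A^c}(A^i_n)<\infty$. Since the pieces of $A^c$ are disjoint and disjoint from $A$, all $k$ families are mutually disjoint.

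The main obstacle is precisely this last construction: Lemma \ref{Lem:Important property of continouos resolutions} handles one point at a time, so applying it separately to each $z_i$ would in general produce overlapping families, and one has to carve $A^c$ into disjoint portions each of which still ``sees'' the relevant value. The delicate case is coincident values $z_i=z_j$, where separation by nearest value is unavailable and one must split either the level set or a nested family of annular regions, each of which has to be verified to have positive measure — and that verification is again exactly where non-atomicity, i.e. the continuous resolution, is used. Everything else (the counting bound for $A$, the monotonicity trick for the spectral characterization, the verification that a restricted continuous resolution is still a continuous resolution) is routine bookkeeping.
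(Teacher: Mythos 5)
Your argument is correct, and the first two parts of the proof coincide with the paper's: countability of $A$ by counting atoms above threshold $1/j$ inside a finite-measure piece, splitting $L^2$ along $\cM=A\sqcup A^c$, and exploiting the non-atomicity of $\mu_{A^c}$ supplied by the continuous resolution via Lemma \ref{Lem:Cutting the space}. The only cosmetic difference there is that you argue through the dimension criterion for $\sigma_{\textup{ess}}$ plus a monotonicity-in-$\varepsilon$ observation, whereas the paper explicitly builds Weyl sequences for the inclusion $B\cup C\subset\sigma_{\textup{ess}}(M_f)$ and exhibits a finite-rank spectral projection for the reverse inclusion; these are two presentations of the same computation.

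For the final part, however, your route is genuinely different. The paper applies Lemma \ref{Lem:Important property of continouos resolutions} once per distinct value, \emph{allows the resulting sets to intersect $A$}, and uses $\mu(A)<\infty$ to control the sum $\sum_n\mu(\widetilde A_n^i)\le\sum_n\mu_{A^c}(\widetilde A_n^i)+\mu(A)$; disjointness across distinct $z_i$'s is then forced by a single index shift $n\mapsto N+n$ (separation of the $f$-values), and coincident $z_i$'s are handled by the purely combinatorial interleaving $A^i_n=\widetilde A^{h(i)}_{i+kn}$, with no decomposition of $\cM$ whatsoever. You instead \emph{partition the space}: you work entirely inside $A^c$ when $z_i\in B$ (so $\mu(A^i_n)=\mu_{A^c}(A^i_n)$ and the finiteness comes for free), carve $A^c$ into one piece per index via nearest-value assignment plus a sub-partition for coincident values (splitting the level set, or interleaving the annuli from the proof of Lemma \ref{Lem:Important property of continouos resolutions}), check that each piece inherits a continuous resolution and still has $z_i$ in its essential range, and only then invoke Lemma \ref{Lem:Important property of continouos resolutions} inside each piece; for $z_i\in C$ you pick fresh atoms greedily. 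What the paper's approach buys is brevity and a one-line treatment of coincident values; what your approach buys is that $\mu(A)<\infty$ is only needed for the atom-picking in $C$, since the $B$-families never touch $A$, and the geometric partition makes the mutual disjointness of the families transparent rather than something that has to be verified through index arithmetic.
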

\begin{proof}
By $\sigma$-finiteness of $(\cM,\cF,\mu)$ we know $\cM$ can be divided into countably many disjoint subsets of finite measure. Each of these sets can only contain countably many elements from $A$ and these elements must all have finite measure. Hence $A$ must be countable and all singletons must have finite measure.

Let $\lambda \in B\cup C$. If $\lambda\in B$ then Lemma \ref{Lem:Important property of continouos resolutions} gives a sequence of disjoint elements $\{ E_n \}_{n=1}^\infty\subset \cF$ such that $0<\mu_{A^c}(E_n)<\infty$ and $\lvert f-\lambda\lvert\leq \frac{1}{n}$ on $E_n$ for all $n\in \NN$. Let $A_n=E_n\backslash A$ and note $\{\mu(A_n)^{-\frac{1}{2}}1_{A_n}\}_{n=1}^\infty$ will be a Weyl sequence for $\lambda$, so $\lambda\in \sigma_{\textup{ess}}(M_f)$. If $\lambda \in C$ and $\{ \lambda_n \}_{n=1}^\infty\subset A$ is the corresponding sequence then $\{\mu(\{\lambda_n\})^{-\frac{1}{2}}1_{\{\lambda_n \}}\}_{n=1}^\infty$ will be a Weyl sequence for $\lambda$, so $\lambda\in \sigma_{\textup{ess}}(M_f)$.

Let $\lambda\in (B\cup C)^c$. If $\mu( \{  \lvert f-\lambda\lvert  < n^{-1} \} \cap A^c)>0$ or $A\cap \{  0<\lvert f-\lambda\lvert  < n^{-1} \} \neq \emptyset $ for all $n\in \mathbb{N}$ then $\lambda\in B\cup C$ which would be a contradiction. Hence there is an $N\in \mathbb{N}$ such that $A^c\cap \{ \lvert f-\lambda\lvert < N^{-1}\}$ is a null set and $\{ \lvert f-\lambda\lvert<N^{-1} \}\cap A=\{ \lvert f-\lambda\lvert=0 \}\cap A $. In particular, the spectral projection of $f$ onto $(\lambda-N^{-1},\lambda+N^{-1})$ is given by $1_{\{ \lvert f-\lambda\lvert=0 \}\cap A}$. We note that $\{ \lvert f-\lambda\lvert=0 \}\cap A$ is finite as we would otherwise have $\lambda\in C$. Hence $1_{\{ \lvert f-\lambda\lvert=0 \}\cap A}$ has finite-dimensional range and so $\lambda\in \sigma_{\textup{ess}}(M_f)^c$. Hence we have now established
\begin{equation*}
\sigma_{\textup{ess}}(M_f)=B\cup C.
\end{equation*}
It only remains to prove the last part of the theorem so we assume $\mu(A)<\infty$. Let $z_1,\dots,z_k\in \sigma_{\textup{ess}}(M_f)$. We start by dealing with the special case where $z_i\neq z_j$ when $i\neq j$.

 For each $i\in \{1,\dots,k\}$ we either have $z_i\in B$ or $z_i\in C$. If $z_i\in B$ then Lemma \ref{Lem:Important property of continouos resolutions} gives a sequence of disjoint elements $\{ \widetilde{A}_n^i \}_{n=1}^\infty\subset \cF$ such that $\lvert f-z_i\lvert\leq \frac{1}{n}$ on $\widetilde{A}^i_n$ for all $n\in \NN$, $0<\mu_{A^c}(\widetilde{A}_n^i)\leq \mu(\widetilde{A}_n^i)$ for all $n\in \NN$ and
\begin{equation*}
\mu \left( \bigcup_{n=1}^{\infty} \widetilde{A}^i_n \right)=\sum\limits_{n=1}^{\infty}\mu(\widetilde{A}^i_n)\leq \sum\limits_{n=1}^{\infty}\mu_{A^c}(\widetilde{A}^i_n)+\mu(A)<\infty
\end{equation*}
If $z_i \in C$ and $\{ \lambda^i_n \}_{n=1}^\infty$ is the corresponding sequence then we define $\widetilde{A}_n^i=\{\lambda^i_n\}$. Note that $\{ \widetilde{A}_n^i \}_{n=1}^\infty\subset \cF$ is disjoint and $\lvert f-\lambda\lvert\leq \frac{1}{n}$ on $\widetilde{A}^i_n$ for all $n\in \NN$. Furthermore, $0<\mu(\widetilde{A}_n^i)$ for all $n\in \NN$ and
\begin{equation*}
\mu \left( \bigcup_{n=1}^{\infty} \widetilde{A}^i_n \right)=\sum\limits_{n=1}^{\infty}\mu(\widetilde{A}^i_n)\leq \mu(A)<\infty
\end{equation*}
by assumption. Pick $N$ so large that $2N^{-1}<\max_{i\neq j}\lvert z_i-z_j\lvert$ and define $A^i_n=\widetilde{A}_{N+n}^i$ for all $n\in \NN$ and $i\in \{1,\dots, k\}$. Then
\begin{align*}
A^i_n&\subset \left\{ \lvert f-z_i\lvert<\frac{1}{N+n}  \right\}\subset \left\{ \lvert f-z_i\lvert<\frac{1}{n} \right \}\\
\mu \left( \bigcup_{n=1}^{\infty} A^i_n \right)&=\sum\limits_{n=1}^{\infty}\mu(A^i_n)\leq \sum\limits_{n=1}^{\infty}\mu(\widetilde{A}^i_n)<\infty
\end{align*}
and $0<\mu(A^i_n)$ for all $i\in \{1, \dots,k \}$ and $n\in \NN$. If $x\in A^i_n\cap A^j_m$ for $i\neq j$ we would have $\lvert z_i-z_j\lvert\leq  \lvert z_i-f(x)\lvert +\lvert f(x)-z_j\lvert<\frac{2}{N}$ which is a contradiction. So $A^i_n\cap A^j_m=\emptyset$ if $i\neq j$. If $i=j$ and $n\neq m$ we find $A^i_n\cap A^j_m=\widetilde{A}_{N+n}^i\cap \widetilde{A}_{N+m}^i=\emptyset$. Hence we are now finished in the case where $z_1,\dots,z_k$ are different.

Assume now $z_1,\dots,z_k$ are not all different and let $\lambda_1,\dots,\lambda_\ell$ be the different elements in $\{z_1,\dots, z_k \}$. For each $i\in \{ 1,\dots , k \}$ there is $h(i)\in \{1,\dots, \ell  \}$ such that $z_i=\lambda_{h(i)}$. Pick sequences $\{\widetilde{A}^1_n\}_{n=1}^\infty,\dots,\{\widetilde{A}^\ell_n\}_{n=1}^\infty \subset \cF$ as in the theorem for the collection $\lambda_1,\dots,\lambda_\ell\in \sigma_{\textup{ess}}(M_f)$ and define $A^i_n=\widetilde{A}^{h(i)}_{i+kn}$. Observe that
\begin{align*}
A^i_n&\subset \left\{ \lvert f-\lambda_{h(i)}\lvert<\frac{1}{i+kn} \right \}\subset \left\{ \lvert f-z_i\lvert<\frac{1}{n} \right \}\\
\mu \left( \bigcup_{n=1}^{\infty} A^i_n \right)&=\sum\limits_{n=1}^{\infty}\mu(A^i_n)\leq \sum\limits_{n=1}^{\infty}\mu(\widetilde{A}^{h(i)}_n)<\infty
\end{align*}
and $0<\mu(A^i_n)$ for all $i\in \{1, \dots, k \}$ and $n\in \NN$. If $i\neq j$ or $n\neq m$ then $j+mk\neq i+nk$ since $1\leq i,j\leq k$ and so
\begin{align*}
A^i_n\cap A^j_m=\widetilde{A}_{i+kn}^{h(i)}\cap \widetilde{A}_{j+km}^{h(j)}=\emptyset.
\end{align*}
This finishes the proof.
\end{proof}
\noindent The following two lemmas show that Theorem \ref{Thm:EssentalPropertyCutSpaces} can be applied to a wide range of $L^2$-spaces.
\begin{lem}\label{Lem:BredFamafrum}
Let $A\subset \ZZ$ and let $\mu$ be some measure on $(A\times \RR^\nu, \cB(A\times \RR^\nu) )$ which is finite on compact sets. Then the assumptions of Theorem \ref{Thm:EssentalPropertyCutSpaces} are satisfied if $\mu(B)<\infty$ where $B=\{ x\in A\times \RR^\nu\mid \mu(\{ x \})>0 \}$ and  $\mu_{B^c}$ is zero on sets of the form $\{i\}\times C$ with $C= \{ x\in \RR^\nu \mid \lvert x\lvert=c  \}$.
\end{lem}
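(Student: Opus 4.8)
The plan is to verify, in turn, the hypotheses that Theorem \ref{Thm:EssentalPropertyCutSpaces} imposes on $(\cM,\cF,\mu)$, where here $\cM=A\times\RR^\nu$ and $\cF=\cB(A\times\RR^\nu)$: $\sigma$-finiteness, measurability of all singletons, finiteness $\mu(B)<\infty$ of the atom set (this is exactly the ``$\mu(A)<\infty$'' of that theorem, once one notes $B$ is the atom set of $(\cM,\cF,\mu)$), and the existence of a continuous resolution for $(\cM,\cF,\mu_{B^c})$. The first three I expect to be routine: for $\sigma$-finiteness I would write $\cM$ as the countable union of the compact sets $\{i\}\times\{|k|\le r\}$ with $i\in A$ and $r\in\NN$, each of which has finite $\mu$-measure since $\mu$ is finite on compact sets; singletons are closed in the product topology (as $A\subset\ZZ$ is discrete), hence Borel; and $\mu(B)<\infty$ is assumed outright. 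So all the real work goes into constructing the continuous resolution.

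For that, I would fix an injection $\phi\colon A\to\NN$, set $g(i,k)=\phi(i)+|k|$, and take
\begin{equation*}
A_x=\{(i,k)\in\cM\mid g(i,k)\le x\},\qquad x\in\RR_+.
\end{equation*}
Since $g$ restricts to a continuous function on each clopen sheet $\{i\}\times\RR^\nu$, it is Borel, so $A_x\in\cF$. Monotonicity $A_x\subset A_y$ for $x\le y$ and $\bigcup_{x\ge0}A_x=\cM$ are clear because $g$ is everywhere finite; $A_0=\emptyset$, so $\mu_{B^c}(A_0)=0$; and $A_x$ is contained in the finite union $\bigcup_{\phi(i)\le x}\{i\}\times\{|k|\le x-\phi(i)\}$ of compact sets, so $\mu_{B^c}(A_x)\le\mu(A_x)<\infty$.

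The remaining condition, and the one I expect to be the crux, is the a.e.\ continuity $1_{A_x}\to 1_{A_y}$ ($\mu_{B^c}$-a.e.) as $x\to y$. Letting $x$ decrease to $y$ one has $A_x\downarrow A_y$ pointwise, and letting $x$ increase to $y$ one has $A_x\uparrow\{g<y\}$ pointwise, so the sole obstruction to a.e.\ convergence is the level set $\{g=y\}$; it therefore suffices to prove $\mu_{B^c}(\{g=y\})=0$ for every $y\in\RR_+$. Here
\begin{equation*}
\{g=y\}=\bigcup_{i\in A,\ \phi(i)\le y}\{i\}\times\{k\in\RR^\nu\mid |k|=y-\phi(i)\},
\end{equation*}
a countable union. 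Each piece with $y-\phi(i)>0$ is a set of the form $\{i\}\times C$ with $C$ a sphere, hence $\mu_{B^c}$-null by the hypothesis of the lemma; each piece with $y-\phi(i)=0$ is the singleton $\{(i,0)\}$, and $\mu_{B^c}(\{(i,0)\})=\mu(B^c\cap\{(i,0)\})=0$, since either $(i,0)\in B$ and the intersection is empty, or $(i,0)$ is not an atom and the point is $\mu$-null. Thus $\mu_{B^c}(\{g=y\})=0$, which yields the a.e.\ continuity, so $(A_x)_{x\in\RR_+}$ is a continuous resolution of $(\cM,\cF,\mu_{B^c})$ and all hypotheses of Theorem \ref{Thm:EssentalPropertyCutSpaces} are in force. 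The geometric assumption on $\mu_{B^c}$ enters only at this last step, and that is where the content of the statement really lies.
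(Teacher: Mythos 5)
Your proof is correct and follows the same strategy as the paper: construct the continuous resolution as sublevel sets of a radial-type function and verify the required a.e.\ continuity by showing the level sets are finite unions of spheres (and singletons), which are $\mu_{B^c}$-null by hypothesis. The paper's choice is $U_x=\{y\in\RR^{\nu+1}\mid |y|\le x\}\cap(A\times\RR^\nu)$, using $A\subset\ZZ\subset\RR$, whereas you take $\{(i,k)\mid \phi(i)+|k|\le x\}$ for an injection $\phi\colon A\to\NN$; the argument is the same in substance, your treatment of the singleton (radius-zero) case merely being spelled out a bit more explicitly.
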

\begin{proof}
$A\times \RR^\nu$ is a metric space that can be covered by countably many compact sets so $(A\times \RR^\nu, \cB(A\times \RR^\nu) ,\mu)$  is $\sigma$-finite and singletons are measurable. In particular, $B$ is countable and therefore measurable. 

Define $U_x=\{ y\in \RR^{\nu+1}\mid \lvert y\lvert\leq x  \}\cap ( A\times \RR^\nu)$. Then $1_{U_x}$ converges to $1_{U_y}$ pointwise for $x\rightarrow y$ except at points in $\partial U_y$. Note that $\partial U_y$ is a finite union of sets of the form $\{i \}\times \{ x\in \RR^\nu \mid \lvert x\lvert=c  \}$ with $c\geq 0$ and $i\in \ZZ$. Hence $\partial U_y$ is a $\mu_{B^c}$ null set proving that $\{U_x\}_{x\in [0,\infty)}$ defines a continuous resolution for $\mu_{B^c}$.
\end{proof}
\noindent The following lemma is central to the spectral analysis.
\begin{lem}\label{Lem:CannonicalSpace}
Let $\cH$ be a separable Hilbert space and $A$ be a selfadjoint operator on $\cH$. Then there is a measure space $(\cM,\cF,\mu )$ that fulfil the conditions in Theorem \ref{Thm:EssentalPropertyCutSpaces} and a unitary map $U:\cH\rightarrow L^2(\cM,\cF,\mu )$ such that $UAU^*$ is a multiplication operator on $L^2(\cM,\cF,\mu )$.
\end{lem}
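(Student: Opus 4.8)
The plan is to combine the multiplication-operator form of the spectral theorem with Lemma \ref{Lem:BredFamafrum}, which reduces the verification of the hypotheses of Theorem \ref{Thm:EssentalPropertyCutSpaces} to a short check.

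First I would invoke the cyclic subspace decomposition of $\cH$ associated with $A$. Since $\cH$ is separable, there is an at most countable family of mutually orthogonal, closed, $A$-reducing subspaces $\cH_1,\cH_2,\dots$ with $\cH=\bigoplus_{n\geq 1}\cH_n$, together with unitary maps $U_n\colon\cH_n\to L^2(\RR,\cB(\RR),\mu_n)$ carrying $A|_{\cH_n}$ to multiplication by the identity function, where each $\mu_n$ is the spectral measure (hence a finite Borel measure) of a cyclic generator of $\cH_n$; see \cite{RS1}. Replacing each cyclic generator by a suitable scalar multiple of itself, I may arrange $\mu_n(\RR)\leq 2^{-n}$ for all $n$, and if the family is finite I take $\cH_n=\{0\}$, $\mu_n=0$ for the remaining indices.

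Next I would assemble these pieces into a single measure space. Set $\cM=\NN\times\RR\subset\ZZ\times\RR$, let $\cF$ be its Borel $\sigma$-algebra, and let $\mu$ be the measure whose restriction to $\{n\}\times\RR$ is the corresponding copy of $\mu_n$. Then $\mu$ is a finite measure, with $\mu(\cM)\leq\sum_{n\geq 1}2^{-n}\leq 1$, and the canonical identification $L^2(\cM,\cF,\mu)\cong\bigoplus_{n\geq 1}L^2(\RR,\mu_n)$ together with the unitary $\bigoplus_n U_n$ yields a unitary $U\colon\cH\to L^2(\cM,\cF,\mu)$ under which $A$ is carried to multiplication by the everywhere finite, real-valued function $f(n,x)=x$; thus $UAU^*$ is a multiplication operator.

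It then remains to check that $(\cM,\cF,\mu)$ satisfies the hypotheses of Lemma \ref{Lem:BredFamafrum} with index set $\NN\subset\ZZ$ and $\nu=1$. Since $\mu$ is finite it is finite on compact subsets of $\ZZ\times\RR$, and the atom set $B=\{x\in\cM\mid\mu(\{x\})>0\}$ has $\mu(B)\leq\mu(\cM)<\infty$; moreover $\mu(\cM)<\infty$ also supplies the extra finiteness hypothesis appearing in the second half of Theorem \ref{Thm:EssentalPropertyCutSpaces}. Finally, for every $x\in\cM$ one has $\mu_{B^c}(\{x\})=\mu(\{x\}\cap B^c)=0$, so $\mu_{B^c}$ annihilates every finite subset of $\cM$, in particular every set of the form $\{i\}\times\{y\in\RR\mid|y|=c\}$, which contains at most two points. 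Lemma \ref{Lem:BredFamafrum} then shows that $(\cM,\cF,\mu)$ meets the conditions of Theorem \ref{Thm:EssentalPropertyCutSpaces}, which finishes the proof. The only substantive step is the appeal to the spectral theorem together with the remark that the spectral measures may be taken finite by rescaling the cyclic generators; all remaining points are routine, and the last hypothesis of Lemma \ref{Lem:BredFamafrum} holds automatically because spheres in $\RR$ are finite sets.
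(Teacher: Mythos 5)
Your proof is correct and takes essentially the same route as the paper: decompose $\cH$ into cyclic subspaces, represent $A$ as multiplication by $(n,x)\mapsto x$ on a measure space over a subset of $\ZZ\times\RR$, and verify the hypotheses of Theorem \ref{Thm:EssentalPropertyCutSpaces} via Lemma \ref{Lem:BredFamafrum}, using that spheres in $\RR$ are finite sets. The one point of divergence is cosmetic: you rescale the cyclic generators up front to force $\mu_n(\RR)\leq 2^{-n}$, whereas the paper keeps $\mu_n(\RR)=1$ and then post-composes with the unitary $\psi\mapsto f^{-1/2}\psi$ for a strictly positive density $f$ integrating to $1$; both serve only to make the total mass finite so that the atom set has finite measure.
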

\begin{proof}
We follow the construction found in \cite{Teschl}. Let $P_A$ be the spectral measure of $A$ and $\{ \psi_n \}_{n\in B}$ (where $B\subset \NN$) be a complete collection of normed cyclic vectors. Define the measure $\mu_n$ by the expression $\mu_n(C)=\langle \psi_n,P_A(C)\psi_n\rangle$. By \cite{Teschl} we find a unitary map 
\begin{equation*}
U_1:\cH \rightarrow \cK_1=\bigoplus_{n \in B} L^2(\RR,\cB(\RR),\mu_n), 
\end{equation*}
such that $U_1AU_1^*$ acts like multiplication by the identity map $f(x)=x$ on each of the component spaces. By standard measure theory of kernels, there is measure $\widetilde{\mu}$ on $\cB(B\times \RR)$ defined by
\begin{equation*}
\int_{B\times \RR}f(n,k)d\widetilde{\mu}(n,k)=\sum_{n\in B}\int f(n,x)d\mu_n(x)
\end{equation*}
for any non negative and measurable map $f:B\times \RR\rightarrow \RR$. Hence we find a unitary map
\begin{equation*}
U_2:\cK_1\rightarrow \cK_2=L^2\left (B\times \RR,\cB(B\times \RR),  \widetilde{\mu} \right )
\end{equation*}
such that $U_2U_1AU_1^*U_2^*$ acts like multiplication by the map $\omega(n,x)=f(x)=x$. Note that each $\{n\} \times \RR$ has measure 1 for all $n\in \NN$, so $(B\times \RR,\cB(B\times \RR),  \widetilde{\mu} )$ is $\sigma$-finite. Hence there is a strictly positive measurable map $f$ on $B\times \RR$ which integrates to 1. Define $\mu=f\widetilde{\mu}$ and note that $\mu$ is a probability measure. Let
\begin{equation*}
U_3:\cK_2\rightarrow \cK_3=L^2(B\times \RR,\cB(B\times \RR),  \mu )
\end{equation*}
be multiplication by $f^{-\frac{1}{2}}$. Then $U_3$ is unitary map and $U_3U_2U_1AU_1^*U_2^*U_3^*$ acts like multiplication by $\omega$ on $\cK_3$. Furthermore, it is clear that $(B\times \RR,\cB(B\times \RR),  \mu )$ satisfies the conditions in Lemma \ref{Lem:BredFamafrum} since $\mu(B\times \RR)=1$ and sets of the form $\{i\}\times \{ x\in \RR\mid \lvert x\lvert=c \}$ only contain finitely many points. This finishes the proof.
\end{proof}

\section{ Spectral Theory of tensor products}
In this section we list a few results regarding the tensor product of operators. A good reference for these results are \cite{Schmudgen}. Throughout this section,  $\mathcal{H}_1,\dots,\mathcal{H}_n$ will denote be a finite collection of Hilbert spaces. For $V_j\subset \mathcal{H}_j$ subspaces, we define the algebraic tensor product
\begin{equation*}
V_1\widehat{\otimes}\cdots\widehat{\otimes} V_n=\textup{Span}\{ x_1\otimes\cdots\otimes x_n \mid x_j\in V_j \}.
\end{equation*}
Most of the content in the following theorem can be found in \cite{Schmudgen}. The remaining items can easily be deduced.
\begin{thm}\label{Thm:BasicPropTensor}
	Let $T_j$ be an operator on $\mathcal{H}_j$ for all $j\in \{1,\dots,n\}$. Then there is a unique linear map $T=T_1\widehat{\otimes}\cdots\widehat{\otimes} T_n$ defined on $\mathcal{D}(T_1)\widehat{\otimes}\cdots\widehat{\otimes} \mathcal{D}(T_n)$ such that
	\begin{equation*}
	T_1\widehat{\otimes}\cdots\widehat{\otimes} T_n(x_1\otimes\cdots\otimes x_n)=T_1x_1\otimes\cdots\otimes T_nx_n,
	\end{equation*}
	when $x_j\in \mathcal{D}(T_j)$ for all $j\in \{1,\dots,n\}$. We also have the following:
	\begin{enumerate}
	\item[\textup{(1)}] If $T_j$ is densely defined for all $j\in \{1, \dots, n \}$ then $T$ is densely defined and $T_1^*\widehat{\otimes}\cdots\widehat{\otimes} T_n^* \subset T^* $.
	
	\item[\textup{(2)}]  If $T_j$ is closable for all $j\in \{1, \dots, n \}$ then $T$ is closable. We will then write $\overline{T}=T_1\otimes\cdots\otimes T_n$. Furthermore, the following identities hold
		\begin{align*}
		T_1\otimes\cdots\otimes T_n&=\overline{T}_1\otimes\cdots\otimes \overline{T}_n\\
		T_1^*\otimes\cdots\otimes T_n^*&= (T_1\otimes\cdots\otimes T_n)^*.
		\end{align*}

		\item[\textup{(3)}] If $T_j$ is symmetric (selfadjoint, unitary, a projection) for all $j\in \{1, \dots, n \}$ then $T$ is symmetric (selfadjoint, unitary, a projection).
		
		\item[\textup{(4)}] If $T_j\geq 0$ for all $j\in \{1,\dots,n\}$ then $T\geq 0$.

		\item[\textup{(5)}] If $T_j$ is bounded for all $j\in \{1, \dots, n \}$ then $T$ is bounded and
			\begin{equation*}
			\lVert T\lVert=\lVert T_1\lVert\cdots\lVert T_n\lVert=\lVert T_1\otimes\cdots \otimes T_n \lVert. 
			\end{equation*}
			\end{enumerate}
	
\end{thm}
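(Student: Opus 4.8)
The plan is to reduce everything to the $n=2$ case and then to the underlying facts about the Hilbert-space tensor product. First I would establish existence and uniqueness of the algebraic map $T$: the formula prescribes the value of $T$ on the spanning set $\{x_1\otimes\cdots\otimes x_n\}$ of $\cD(T_1)\widehat\otimes\cdots\widehat\otimes\cD(T_n)$, so uniqueness is immediate; existence follows from the universal property of the algebraic tensor product applied to the multilinear map $(x_1,\dots,x_n)\mapsto T_1x_1\otimes\cdots\otimes T_nx_n$, which is well-defined precisely because each $x_j$ ranges over $\cD(T_j)$. This is routine multilinear algebra and I would state it without belaboring it.

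For (1), density of $\cD(T_1)\widehat\otimes\cdots\widehat\otimes\cD(T_n)$ in $\cH_1\otimes\cdots\otimes\cH_n$ is standard (finite linear combinations of elementary tensors of vectors from dense sets are dense). The inclusion $T_1^*\widehat\otimes\cdots\widehat\otimes T_n^*\subset T^*$ is a direct computation on elementary tensors: for $y_j\in\cD(T_j^*)$ and $x_j\in\cD(T_j)$ one has $\langle y_1\otimes\cdots\otimes y_n, T(x_1\otimes\cdots\otimes x_n)\rangle=\prod_j\langle y_j,T_jx_j\rangle=\prod_j\langle T_j^*y_j,x_j\rangle=\langle (T_1^*\widehat\otimes\cdots\widehat\otimes T_n^*)(y_1\otimes\cdots\otimes y_n),x_1\otimes\cdots\otimes x_n\rangle$, and then extend by linearity in the $x$-argument; since $T$ is densely defined this identifies $y_1\otimes\cdots\otimes y_n$ as an element of $\cD(T^*)$ with the stated action. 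For (2), closability follows since $T^*\supset T_1^*\widehat\otimes\cdots\widehat\otimes T_n^*$ is densely defined (using that each $T_j^*$ is densely defined because $T_j$ is closable), hence $T$ is closable. The identity $\overline{T_1\widehat\otimes\cdots\widehat\otimes T_n}=\overline{T_1}\widehat\otimes\cdots\widehat\otimes\overline{T_n}$ and the adjoint formula $(T_1\otimes\cdots\otimes T_n)^*=T_1^*\otimes\cdots\otimes T_n^*$ are the content one finds in \cite{Schmudgen}; I would cite these. Items (3) and (4) then follow formally: if each $T_j$ is symmetric, then $T\subset T_1^*\otimes\cdots\otimes T_n^*\subset T^*$ gives symmetry of $\overline T$; selfadjointness is the harder closed-operator statement from \cite{Schmudgen}; unitarity and the projection property follow from (5) together with $U_1^*\otimes\cdots\otimes U_n^*$ being the inverse, resp. from idempotency plus selfadjointness; positivity follows by writing $T_j=S_j^*S_j$ with $S_j=T_j^{1/2}$ and using $\langle \psi,(S_1\otimes\cdots\otimes S_n)^*(S_1\otimes\cdots\otimes S_n)\psi\rangle\ge 0$ on the core, extended by closure.

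Finally, for (5): boundedness and the norm identity $\lVert T_1\otimes\cdots\otimes T_n\rVert=\prod_j\lVert T_j\rVert$ reduce by induction to $n=2$. The bound $\le$ is the standard cross-norm estimate; for $\ge$ one tests on $x_1\otimes\cdots\otimes x_n$ with $\lVert T_jx_j\rVert$ close to $\lVert T_j\rVert\lVert x_j\rVert$, using $\lVert x_1\otimes\cdots\otimes x_n\rVert=\prod_j\lVert x_j\rVert$ and $\lVert T(x_1\otimes\cdots\otimes x_n)\rVert=\prod_j\lVert T_jx_j\rVert$. The main obstacle — and the one place I would lean hardest on the reference \cite{Schmudgen} rather than reprove — is the \emph{selfadjointness} part of (3): unlike symmetry, it is not formal, since in general $\overline{A\otimes B}$ need not be selfadjoint for symmetric $A,B$, and one needs the spectral-theorem-based argument (e.g.\ via the joint functional calculus of the strongly commuting selfadjoint operators $T_j\otimes 1\otimes\cdots$, or via the characterization of the tensor product of spectral measures) to conclude. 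Everything else is bookkeeping over elementary tensors plus closure arguments.
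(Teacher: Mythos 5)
Your proposal is correct and matches the paper's treatment, which simply defers to \cite{Schmudgen} for the substantive items (closures, adjoints, selfadjointness) and leaves the remaining bookkeeping to the reader; you fill in that bookkeeping with the standard arguments. One small remark: in item (4) the square-root reduction $T_j=S_j^*S_j$ requires each $T_j$ to be nonnegative \emph{selfadjoint} (so that $T_j^{1/2}$ exists and $\cD(T_j)\subset\cD(T_j^{1/2})$), which is the intended reading; also note that the theorem's phrasing of (3) for the selfadjoint/unitary/projection cases implicitly refers to the closure $\overline{T}$ when the domain issue arises, as you correctly account for.
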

\noindent The following result is important.

\begin{thm}\label{Thm:SpectralPropTensor}
	Let $T_j$ be a selfadjoint operator on $\mathcal{H}_j$ for all $j\in \{1,\dots,n\}$ and define 
	\begin{align*}
	H_j&=1\otimes\cdots\otimes T_j \otimes\cdots\otimes 1,\\
	H&=H_1+H_2+\cdots+H_n.
	\end{align*}
	Then
	\begin{enumerate}
	\item[\textup{(1)}]  $(H_1,\dots,H_n)$ is a tuple of strongly commuting selfadjoint operators with $\sigma(H_j)=\sigma(T_j)$. The joint spectrum is $\sigma(T_1)\times\cdots \times \sigma(T_n)$ and if $f:\RR\rightarrow \CC$ is Borel measurable then $f(H_j)=1\otimes\cdots\otimes f(T_j)\otimes\cdots\otimes 1$.
	
	\item[\textup{(2)}]  $H$ is essentially selfadjoint with
	\begin{equation*}
	e^{it\overline{H}}=e^{itT_1}\otimes\cdots\otimes e^{itT_n}\,\,\,\,\, t\in \mathbb{R}.
	\end{equation*}

	\item[\textup{(3)}] If $V_j$ is a core for $T_j$ then $V_1\widehat{\otimes}\cdots\widehat{\otimes} V_n$ is a core for $\overline{H}$.
	
	\item[\textup{(4)}] Assume $T_j$ is semibounded for all $j\in \{1, \dots, n \}$ and define $\lambda_j=\inf(\sigma(T_j))$. Then $H$ is selfadjoint and semibounded with $\inf(\sigma(H))=\lambda:=\lambda_1+\cdots +\lambda_n$. Let $P_B$ denote the spectral measure for an operator $B\in\{H,T_1,\dots,T_n\}$. Then
	\begin{align*}
	e^{-tH}&=e^{-tT_1}\otimes\cdots\otimes e^{-tT_n}\,\,\,\,\, t\geq 0\\
	P_{H}(\{\lambda\})&=P_{T_1}(\{\lambda_1\})\otimes\cdots\otimes P_{T_n}(\{\lambda_n\}).
	\end{align*}
	In particular, $\textup{Dim}(P_{H}(\{\lambda\}))=\textup{Dim}(P_{T_{1}}(\{\lambda_1\})) \cdots  \textup{Dim}(P_{T_{n}}(\{\lambda_n\})) $. Define $\mu_j=\inf(\sigma_{\textup{ess}}(T_j))$ which may be $\infty$. Then
	\begin{equation*}
	\inf(\sigma_{\textup{ess}}(\overline{H}))\geq \min_{j} \left \{ \mu_j+\sum_{\ell\neq i}\lambda_\ell \right \}:=m.
	\end{equation*}
	
	\item[\textup{(5)}] Assume $B_j$ is selfadjoint on $\cH_j$. If $\cD(T_j)\subset \cD(B_j)$ for some $j\in \{1,\dots ,n\}$ then $\cD(H_j)\subset \cD(1\otimes\cdots\otimes B_j \otimes\cdots\otimes 1)$.

	\item[\textup{(6)}] Assume $B_j$ is selfadjoint on $\cH_j$ and $T_j+B_j$ is selfadjoint. Then
	\begin{align*}
	H_j+1\otimes\cdots\otimes B_j \otimes\cdots\otimes 1=1\otimes\cdots\otimes (T_j+B_j) \otimes\cdots\otimes 1:=S_j.
	\end{align*}
	\end{enumerate}
\end{thm}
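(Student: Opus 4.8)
The plan is to deduce every assertion from the spectral theorem via the canonical unitary identification $\cH_1\otimes\cdots\otimes\cH_n\cong\cH_j\otimes\bigl(\bigotimes_{\ell\neq j}\cH_\ell\bigr)$, under which $H_j$ becomes $T_j\otimes 1$; the elementary bookkeeping for the tensor-product constructions is in Theorem \ref{Thm:BasicPropTensor} and \cite{Schmudgen}, so I only indicate the mechanism. For part (1), I would first observe that $1\otimes\cdots\otimes P_{T_j}(\cdot)\otimes\cdots\otimes 1$ is a projection-valued measure whose first moment is $H_j$; hence $H_j$ is selfadjoint with $\sigma(H_j)=\sigma(T_j)$, and $f(H_j)=1\otimes\cdots\otimes f(T_j)\otimes\cdots\otimes 1$ for Borel $f$, in particular $e^{itH_j}=1\otimes\cdots\otimes e^{itT_j}\otimes\cdots\otimes 1$. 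Since $e^{itH_j}$ and $e^{isH_k}$ act nontrivially on different tensor legs when $j\neq k$, they commute, so $(H_1,\dots,H_n)$ is a strongly commuting tuple; by uniqueness its joint spectral measure is the product $P(A_1\times\cdots\times A_n)=P_{T_1}(A_1)\otimes\cdots\otimes P_{T_n}(A_n)$, and as the support of a product of Borel measures is the product of supports, the joint spectrum equals $\sigma(T_1)\times\cdots\times\sigma(T_n)$.

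For part (2), I would set $\widetilde H=\int(x_1+\cdots+x_n)\,dP(x)$, a selfadjoint operator, and check that $H=H_1+\cdots+H_n$ on $\bigcap_j\cD(H_j)$ is the restriction of $\widetilde H$ to that domain. The span of $\bigcup_N\operatorname{Ran}P_{T_1}([-N,N])\widehat{\otimes}\cdots\widehat{\otimes}\operatorname{Ran}P_{T_n}([-N,N])$ lies in $\bigcap_j\cD(H_j)$ and is a core for $\widetilde H$ by spectral truncation (approximate a general vector first by a spectral cut-off, then, within a fixed band where $\widetilde H$ is bounded, by algebraic tensors), so $\overline H=\widetilde H$; and $e^{it\overline H}=\int e^{it(x_1+\cdots+x_n)}\,dP(x)=\prod_j e^{itH_j}=e^{itT_1}\otimes\cdots\otimes e^{itT_n}$, the product being legitimate because the factors act on distinct legs. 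Part (3) then follows from a multilinearity estimate: if for each $j$ one has $v^{(m)}_j\to\psi_j$ and $T_jv^{(m)}_j\to T_j\psi_j$ in $\cH_j$, then $v_1^{(m)}\otimes\cdots\otimes v_n^{(m)}$ converges to $\psi_1\otimes\cdots\otimes\psi_n$ both in norm and in the graph norm of $H$ (expand $H(v_1^{(m)}\otimes\cdots\otimes v_n^{(m)})-H(\psi_1\otimes\cdots\otimes\psi_n)$ and telescope). Since $V_j$ is a core for $T_j$, such approximants exist whenever $\psi_j\in\operatorname{Ran}P_{T_j}([-N,N])$, so $V_1\widehat{\otimes}\cdots\widehat{\otimes}V_n$ is dense, in the graph norm of $\overline H$, in the core produced for part (2), hence is itself a core.

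For part (4), the joint spectrum being contained in $\prod_j[\lambda_j,\infty)$ gives $\overline H\geq\lambda$, while $(\lambda_1,\dots,\lambda_n)$ lies in the joint spectrum, so $\lambda\in\sigma(\overline H)$ and $\inf\sigma(\overline H)=\lambda$; the identity $e^{-t\overline H}=e^{-tT_1}\otimes\cdots\otimes e^{-tT_n}$ is again the $e^{-t\,\cdot}$ functional calculus applied to $\widetilde H$. As $(\lambda_1,\dots,\lambda_n)$ is the unique point of the joint spectrum with coordinate sum $\lambda$ (any other point has a coordinate strictly above the corresponding $\lambda_j$), one gets $P_{\overline H}(\{\lambda\})=P(\{(\lambda_1,\dots,\lambda_n)\})=P_{T_1}(\{\lambda_1\})\otimes\cdots\otimes P_{T_n}(\{\lambda_n\})$, and dimensions multiply for tensor products of projections. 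The essential spectrum bound is the only genuinely delicate point, and I expect it to be the main obstacle: given $z<m$, fix $\varepsilon>0$ with $z+\varepsilon<m$ and set $c_j=z+\varepsilon-\sum_{\ell\neq j}\lambda_\ell$, so that $c_j<\mu_j$ for every $j$; since the spectrum of $T_j$ below $\mu_j=\inf\sigma_{\textup{ess}}(T_j)$ is discrete, $\sigma(T_j)\cap(-\infty,c_j]$ is a finite set of eigenvalues of finite multiplicity (this also covers $\mu_j=\infty$), so $P_{T_j}((-\infty,c_j])$ is finite-dimensional. Because any point $x$ of the joint spectrum with $x_1+\cdots+x_n<z+\varepsilon$ satisfies $x_j\leq c_j$ for all $j$, we obtain $P_{\overline H}\bigl((z-\varepsilon,z+\varepsilon)\bigr)\leq P_{T_1}((-\infty,c_1])\otimes\cdots\otimes P_{T_n}((-\infty,c_n])$, which is finite-dimensional; hence $z\notin\sigma_{\textup{ess}}(\overline H)$, giving $\inf\sigma_{\textup{ess}}(\overline H)\geq m$.

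For parts (5) and (6), I would work in $\cH\cong\cH_j\otimes\bigotimes_{\ell\neq j}\cH_\ell$ with an orthonormal basis $\{e_\alpha\}$ of the second factor and write $\Psi=\sum_\alpha\psi_\alpha\otimes e_\alpha$; then $\Psi\in\cD(H_j)$ iff $\psi_\alpha\in\cD(T_j)$ for all $\alpha$ with $\sum_\alpha\|T_j\psi_\alpha\|^2<\infty$. If $\cD(T_j)\subset\cD(B_j)$, the closed graph theorem gives $\|B_j u\|\leq a\|T_j u\|+b\|u\|$ on $\cD(T_j)$, whence $\sum_\alpha\|B_j\psi_\alpha\|^2<\infty$ and $\Psi\in\cD(1\otimes\cdots\otimes B_j\otimes\cdots\otimes 1)$, which is (5); the same coordinatewise description turns $H_j+1\otimes\cdots\otimes B_j\otimes\cdots\otimes 1$ into $(T_j+B_j)\otimes 1$ on its natural domain, which is (6). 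Throughout, the basic closability and boundedness properties of algebraic tensor products of operators are taken from Theorem \ref{Thm:BasicPropTensor} and \cite{Schmudgen}.
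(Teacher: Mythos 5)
Your proposal is correct and follows essentially the same route as the paper: identify the joint spectral measure $P$ of the commuting tuple $(H_1,\dots,H_n)$ as the product of the $P_{T_j}$, read off the spectral data of $\overline H=\int (x_1+\cdots+x_n)\,dP$, obtain the bound on $\sigma_{\mathrm{ess}}$ by noting that below $m$ only finitely many finite-multiplicity eigenvalues of each $T_j$ can contribute, and deduce (5)--(6) from a relative boundedness estimate. The only differences are cosmetic. For (1)--(3) the paper simply cites \cite{Schmudgen} and \cite{Weidmann}, whereas you give a self-contained sketch (the truncation-plus-telescope argument for (3) is a perfectly standard way to fill that in). For the essential spectrum in (4), the paper shows $h(H)$ is compact for $h\in C_c^\infty((-\infty,m))$ by decomposing the support; you instead bound the spectral projection $P_{\overline H}((z-\varepsilon,z+\varepsilon))$ by a finite-rank product of projections $P_{T_j}((-\infty,c_j])$ --- the same finite-dimensionality mechanism, just phrased with indicator functions rather than smooth cutoffs. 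For (5) the paper reads the bound off the identity $(1\otimes\cdots\otimes B_j\otimes\cdots\otimes 1)(H_j+i)^{-1}=1\otimes\cdots\otimes B_j(T_j+i)^{-1}\otimes\cdots\otimes 1$, while you expand in a basis of the complementary factor and invoke the closed graph theorem to get the relative bound $\lVert B_j u\rVert\leq a\lVert T_ju\rVert+b\lVert u\rVert$; these are equivalent, though the resolvent identity version is marginally slicker because it does not require choosing a basis. For (6), the paper establishes equality on an algebraic core for $S_j$ and then uses (5) (applied with $T_j+B_j$ in the role of $T_j$ and $T_j$, $B_j$ in the role of $B_j$) to remove the closure; your coordinatewise description amounts to the same domain identification, but it would be worth stating explicitly that the passage from $\sum\lVert (T_j+B_j)\psi_\alpha\rVert^2<\infty$ to $\sum\lVert T_j\psi_\alpha\rVert^2<\infty$ and $\sum\lVert B_j\psi_\alpha\rVert^2<\infty$ again uses a closed-graph (relative boundedness) step --- that is precisely what the paper's appeal to part (5) is doing.
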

\begin{proof}
	Statements (1)-(3) can more or less be found in either \cite{Schmudgen} or \cite{Weidmann}. It is proven in \cite{Schmudgen} that $f(H_j)=1\otimes\cdots\otimes f(T_j)\otimes\cdots\otimes 1$ holds for $f(x)=(x\pm i)^{-1}$. From there one may simply use standard approximation arguments. We now prove part $(4)$. Let $P$ be the joint spectral measure of $(H_1, \dots ,H_n)$. The joint spectrum is $\sigma(T_1)\times\cdots\times \sigma(T_n)$ and 
	\begin{align*}
	H=\int_{\sigma(T_1)\times\cdots\times \sigma(T_n)}(x_1+\dots +x_n)dP(x_1,\dots, x_n)
	\end{align*}
	so we see $H$ is selfadjoint and $\lambda=\inf(\sigma(H))$. The formula for $e^{-tH}$ is immediate from the spectral theorem. We also find	
	\begin{align*}
	P_{H}(\lambda)&=P( \{ x_1+\cdots+x_n=\lambda\}\cap \sigma(T_1)\times\cdots\times \sigma(T_n) )\\&=P( \{ x_1=\lambda_1 \}\times\cdots\times  \{ x_n=\lambda_n \} )\\&=P_{T_1}(\{\lambda_1\})\otimes\cdots\otimes P_{T_n}(\{\lambda_n\}).
	\end{align*}
	To finish the proof of part (4) we must prove that $f(H)$ is compact for any $f\in C_c^{\infty}( (-\infty,m) )$. Let $f\in C_c^{\infty}( (-\infty,m) )$ and note that
	\begin{equation*}
	f(H)=\int_{\sigma(T_1)\times\cdots\times \sigma(T_n)}f(x_1+\cdots+x_n)dP(x_1,\dots,x_n).
	\end{equation*} 
	Define
	\begin{align*}
	\cA=\{ (x_1,\dots,x_n)\in \sigma(T_1)\times\cdots\times \sigma(T_n)\mid  \, f(x_1+\cdots+x_n)\neq 0 \}
	\end{align*}
	Let $(x_1,\dots,x_n)\in \cA$. There is $\varepsilon>0$ such that $f$ is supported in $(-\infty,m-\varepsilon)$ and hence $x_j<\mu_j-\varepsilon$ for all $j\in \{1, \dots, n \}$. Therefore
	\begin{align*}
	\cA\subset (\sigma(T_1)\cap (-\infty ,\mu_1-\varepsilon))\times \cdots \times (\sigma(T_n)\cap (-\infty ,\mu_n-\varepsilon)).
	\end{align*}
	Let $j\in \{ 1, \dots , n \}$ and note $\sigma(T_j)\cap (-\infty ,\mu_j-\varepsilon)$ is either empty or contains only finitely many eigenvalues of finite multiplicity. This implies $f(H)$ is a finite linear combination of operators of the form 
	\begin{equation*}
	P_{T_1}(\{x_1\})\otimes\cdots\otimes P_{T_n}(\{x_n\})
	\end{equation*}
	with $x_j$ in the discrete spectrum of $T_j$. The above projection has finite rank and is therefore compact so $f(H)$ is compact.

	To prove part (5) we note $B_j(T_j+i)^{-1}$ is bounded and	
	\begin{equation*}
	(1\otimes\cdots\otimes B_j \otimes\cdots\otimes 1)(H_j+i)^{-1}=1\otimes\cdots\otimes B_j(T_j+i)^{-1} \otimes\cdots\otimes 1
	\end{equation*}
	holds on the span of simple tensors. Thus $(1\otimes\cdots\otimes B_j \otimes\cdots\otimes 1)(H_j+i)^{-1}$ extends to a bounded operator implying the claim.

	To prove part (6), note that the equality holds on $\cH_1\widehat{\otimes}\cdots \widehat{\otimes} \cD(T_j+B_j) \widehat{\otimes} \cdots \widehat{\otimes}\cH_n$ which is a core for $S_j$. Therefore
	\begin{equation}\label{eqvig}
	S_j = \overline{H_j+1\otimes\cdots\otimes B_j \otimes\cdots\otimes 1}.
	\end{equation}
	By part (5) we note $\cD(S_j)\subset \cD(H_j)\cap \cD(1\otimes\cdots\otimes B_j \otimes\cdots\otimes 1)$ so the closure on the right side of (\ref{eqvig}) is not necessary.
\end{proof}

\begin{lem}\label{Lem:SmallObsTensor}
Let  $A$ and $B$ be selfadjoint on $\cH_2$. If $B$ is $A$-bounded with bound $a$, and $C\in B(\cH_1)$ then $C\otimes B$ is $1\otimes A$ bounded with relative bound $a \lVert C \lVert$.
\end{lem}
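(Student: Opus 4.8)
The statement is a routine but useful perturbation-theoretic fact, and the plan is to reduce it to the scalar relative bound for $B$ with respect to $A$ together with the multiplicativity of operator norms on tensor products from Theorem \ref{Thm:BasicPropTensor}(5) and the functional-calculus identity of Theorem \ref{Thm:SpectralPropTensor}(1). First I would fix notation: write $H=1\otimes A$, which is selfadjoint on $\cH_1\otimes\cH_2$ with $\cD(H)=\cH_1\otimes\cD(A)$ in the appropriate closure sense, and observe that by Theorem \ref{Thm:SpectralPropTensor}(5) we have $\cD(H)\subset \cD(C\otimes B)$ since $\cD(A)\subset\cD(B)$ and $C\otimes B = (C\otimes 1)(1\otimes B)$ with $C\otimes 1$ bounded. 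So the operator $C\otimes B$ is well defined on $\cD(1\otimes A)$ and it only remains to estimate its graph norm.

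\textbf{Key steps.} The hypothesis that $B$ is $A$-bounded with bound $a$ means: for every $\varepsilon>0$ there is $b_\varepsilon\ge 0$ with $\lVert B\psi\lVert\le (a+\varepsilon)\lVert A\psi\lVert + b_\varepsilon\lVert\psi\lVert$ for all $\psi\in\cD(A)$. Equivalently, for each $\mu>a$ the operator $B(A\pm i\mu')^{-1}$ is bounded with norm controlled appropriately; I would prefer to work with the resolvent formulation because it tensors cleanly. Concretely, for $t>0$ large enough, $B(A+it)^{-1}$ is bounded and $\lVert B(A+it)^{-1}\lVert \to a$ (or can be made $\le a+\varepsilon$) as $t\to\infty$ — this is the standard characterization of the relative bound via resolvents. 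Then, using Theorem \ref{Thm:SpectralPropTensor}(1), which gives $(1\otimes A + it)^{-1} = 1\otimes (A+it)^{-1}$, together with the algebra of simple tensors, one has on the dense span of elementary tensors
\begin{equation*}
(C\otimes B)(1\otimes A + it)^{-1} = C\otimes \big(B(A+it)^{-1}\big),
\end{equation*}
and both sides extend to bounded operators. By Theorem \ref{Thm:BasicPropTensor}(5),
\begin{equation*}
\big\lVert (C\otimes B)(1\otimes A+it)^{-1}\big\lVert = \lVert C\lVert\,\big\lVert B(A+it)^{-1}\big\lVert \le \lVert C\lVert\,(a+\varepsilon)
\end{equation*}
for $t$ sufficiently large. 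Unravelling this resolvent bound back into a relative bound inequality — i.e. writing, for $\psi\in\cD(1\otimes A)$, $\psi = (1\otimes A + it)^{-1}\phi$ with $\phi = (1\otimes A + it)\psi$ and using $\lVert\phi\lVert\le \lVert (1\otimes A)\psi\lVert + t\lVert\psi\lVert$ — yields
\begin{equation*}
\lVert (C\otimes B)\psi\lVert \le \lVert C\lVert(a+\varepsilon)\big(\lVert (1\otimes A)\psi\lVert + t\lVert\psi\lVert\big),
\end{equation*}
so $C\otimes B$ is $(1\otimes A)$-bounded with relative bound at most $\lVert C\lVert(a+\varepsilon)$. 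Since $\varepsilon>0$ is arbitrary, the relative bound is at most $\lVert C\lVert\, a$. (If one wishes to avoid resolvents one can instead tensor the scalar inequality directly on simple tensors and pass to the closure, but the resolvent route is cleaner and avoids worrying about cores and closures by hand.)

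\textbf{Main obstacle.} There is no deep obstacle here; the only care-points are bookkeeping. The slightly delicate part is justifying that the identity $(C\otimes B)(1\otimes A+it)^{-1} = C\otimes(B(A+it)^{-1})$, which is obvious on the algebraic tensor product $\cH_1\widehat\otimes\cD(A)$, genuinely extends to an equality of bounded operators on all of $\cH_1\otimes\cH_2$ — this is where one invokes that $\cH_1\widehat\otimes\cD(A)$ is a core-type dense set and that both sides are bounded (the right side by Theorem \ref{Thm:BasicPropTensor}(5), the left because $(1\otimes A+it)^{-1}$ has range in $\cD(1\otimes A)\subset\cD(C\otimes B)$ and $C\otimes B$ restricted there agrees with the bounded operator $C\otimes(B(A+it)^{-1})$ composed with $(1\otimes A+it)$). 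Once that identification is in place the norm computation and the conversion to a relative bound are immediate, so the whole argument is short.
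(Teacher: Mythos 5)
Your proof is correct and uses essentially the same route as the paper: identify $(C\otimes B)(1\otimes A+it)^{-1}=C\otimes\bigl(B(A+it)^{-1}\bigr)$ on a dense set, invoke multiplicativity of the operator norm on tensor products from Theorem \ref{Thm:BasicPropTensor}(5), and send $t\to\infty$ to recover the relative bound (the resolvent characterisation of the relative bound being exactly the reference to \cite[Lemma 6.3]{Teschl} in the paper). One small remark: the paper's published proof contains the typo $\cD(C\otimes B)\subset\cD(1\otimes A)$, whereas your domain inclusion $\cD(1\otimes A)\subset\cD(C\otimes B)$ is the one actually needed.
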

\begin{proof}
On simple tensors we see that
\begin{equation*}
C\otimes B(1\otimes A-i\lambda)^{-1}=C\otimes B(A-i\lambda)^{-1}
\end{equation*}
which is bounded. Hence $\cD(C\otimes B)\subset \cD(1\otimes A)$ and the above identity extends to the full tensor product. Calculating norms and taking $\lambda$ to $\infty$ gives the $1\otimes A$-bound by \cite[Lemma 6.3]{Teschl}. 
\end{proof}

\noindent We now wish to consider second quantised observables. Let $\omega$ be selfadjoint on the Hilbert space $\cH$. By standard theory of reducing subspaces we have
\begin{align}
\sigma_p(d\Gamma^{(n)}(\omega))\subset& \sigma_p\left( \sum_{k=1}^{n} (1\otimes)^{k-1}\omega (\otimes 1)^{n-k} \right)\label{Redlign1}\\
\label{Redlign2}\sigma_{\textup{ess}}(d\Gamma^{(n)}(\omega))\subset& \sigma_{\textup{ess}}\left( \sum_{k=1}^{n} (1\otimes)^{k-1}\omega (\otimes 1)^{n-k} \right).
\end{align}

\begin{lem}\label{Lem:SecondQuantisedProp}
Let $\omega$ be a selfadjoint and nonnegative operator on the Hilbert space $\cH$. Write $m=\inf(\sigma(\omega))$ and $m_{\textup{ess}}=\inf(\sigma_{\textup{ess}}(\omega))$. For $n\geq 1$ we have
\begin{align*}
\sigma(d\Gamma^{(n)}(\omega) )&= \overline{ \{  \lambda_1+\cdots+\lambda_n\mid \lambda_j\in \sigma(\omega) \}}\\
\inf(\sigma(d\Gamma^{(n)}(\omega)))&= n m.
\end{align*}
Assume in addition that $\omega$ is injective. Then 
\begin{enumerate}
	\item[\textup{(1)}] $0$ is an eigenvalue for $d\Gamma(\omega)$ with multiplicity 1. The eigenspace is spanned by $\Omega$.

	\item[\textup{(2)}] $\inf(\sigma_{\textup{ess}}(d\Gamma^{(n)}(\omega))  )\geq m_{\textup{ess}} +(n-1)m$.

    \item[\textup{(3)}] $d\Gamma(\omega)$ has compact resolvents if and only if this is the case for $\omega$.
    
\end{enumerate}

\end{lem}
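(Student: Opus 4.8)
The plan is to transfer every assertion from the ``full'' tensor power $\cH^{\otimes n}$, where Theorem~\ref{Thm:SpectralPropTensor} applies verbatim, to the symmetric subspace. Put $H:=\sum_{k=1}^{n}(1\otimes)^{k-1}\omega(\otimes 1)^{n-k}$ on $\cH^{\otimes n}$. By Theorem~\ref{Thm:SpectralPropTensor} this operator is essentially selfadjoint, $\sigma(\overline H)=\overline{\{\lambda_1+\cdots+\lambda_n\mid \lambda_j\in\sigma(\omega)\}}$, and $\inf\sigma(\overline H)=nm$. Since $H$ commutes with the permutation unitaries on its (invariant) algebraic core, $\overline H$ commutes with the symmetrizer and is reduced by $\cH^{\otimes_s n}$, and by definition $d\Gamma^{(n)}(\omega)=\overline H\!\mid_{\cH^{\otimes_s n}}$. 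Hence $\sigma(d\Gamma^{(n)}(\omega))\subseteq\sigma(\overline H)$, and likewise for the point and essential spectra, which are precisely the inclusions (\ref{Redlign1}) and (\ref{Redlign2}).

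For the reverse inclusion $\overline{\{\lambda_1+\cdots+\lambda_n\}}\subseteq\sigma(d\Gamma^{(n)}(\omega))$, I would fix $\lambda_1,\dots,\lambda_n\in\sigma(\omega)$, let $y_1,\dots,y_r$ be the distinct values among them with multiplicities $k_1,\dots,k_r$, and for $\delta>0$ (smaller than half the minimal gap between the $y_i$ when $r\ge2$) pick unit vectors $\psi_i^\delta$ in the range of $P_\omega((y_i-\delta,y_i+\delta))$; these form an orthonormal family. The symmetrized tensor $\Psi^\delta$ built from $\psi_1^\delta$ with multiplicity $k_1$, \dots, $\psi_r^\delta$ with multiplicity $k_r$ lies in $\cN\cap\cD(d\Gamma^{(n)}(\omega))$, satisfies $\|\Psi^\delta\|^2=k_1!\cdots k_r!/n!$ independently of $\delta$ (the computation in the proof of Lemma~\ref{Lem:PointsInTheEssentialSpectrum}(2)), and a term-by-term estimate using $\|(\omega-y_i)\psi_i^\delta\|\le\delta$ gives $\|(d\Gamma^{(n)}(\omega)-\sum_j\lambda_j)\Psi^\delta\|\le n\delta$. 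Letting $\delta\to0$ and invoking closedness of the spectrum puts $\sum_j\lambda_j$, hence the whole closure, in $\sigma(d\Gamma^{(n)}(\omega))$. The identity $\inf\sigma(d\Gamma^{(n)}(\omega))=nm$ is then immediate, since the infimum of the sumset equals $n\inf\sigma(\omega)$.

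Assume now $\omega$ injective. For (1): on the vacuum sector $d\Gamma(\omega)$ is the zero operator, so $\Omega$ is a $0$-eigenvector; for $n\ge1$, by (\ref{Redlign1}) any eigenvalue of $d\Gamma^{(n)}(\omega)$ is an eigenvalue of $\overline H$, but Theorem~\ref{Thm:SpectralPropTensor}(4) gives $P_{\overline H}(\{nm\})=P_\omega(\{m\})^{\otimes n}$, so if $m=0$ injectivity forces $P_{\overline H}(\{0\})=P_\omega(\{0\})^{\otimes n}=0$, while if $m>0$ then $0\notin\sigma(\overline H)\subseteq[nm,\infty)$; either way $0\notin\sigma_p(d\Gamma^{(n)}(\omega))$, so the $0$-eigenspace of $d\Gamma(\omega)=\bigoplus_{n\ge0}d\Gamma^{(n)}(\omega)$ is exactly $\CC\Omega$. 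For (2): by (\ref{Redlign2}) and Theorem~\ref{Thm:SpectralPropTensor}(4) with all $T_j=\omega$, $\inf\sigma_{\textup{ess}}(d\Gamma^{(n)}(\omega))\ge\inf\sigma_{\textup{ess}}(\overline H)\ge m_{\textup{ess}}+(n-1)m$. For (3): if $\omega$ has compact resolvent then $\sigma_{\textup{ess}}(\omega)=\emptyset$ and injectivity forces $m>0$; the joint spectral decomposition of $\overline H$ then has finite-dimensional spectral subspaces below every level and $\overline H\ge nm$, so $\overline H$ and its restriction $d\Gamma^{(n)}(\omega)$ have compact resolvent with $\|(d\Gamma^{(n)}(\omega)+i)^{-1}\|\le(n^2m^2+1)^{-1/2}\to0$; hence $(d\Gamma(\omega)+i)^{-1}=\bigoplus_{n\ge0}(d\Gamma^{(n)}(\omega)+i)^{-1}$ is a norm limit of finite direct sums of compact operators, so compact. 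Conversely $\cH^{\otimes_s1}=\cH$ reduces $d\Gamma(\omega)$ with restriction $\omega$, so compactness of the resolvent descends.

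The step I expect to be the real point is the reverse spectral inclusion: one must produce approximate eigenvectors \emph{inside} the symmetric subspace whose norms do not degenerate, which is exactly why the chosen spectral vectors are grouped by eigenvalue into an orthonormal family before symmetrizing. Everything else is bookkeeping built on Theorem~\ref{Thm:SpectralPropTensor} and the reducing-subspace inclusions (\ref{Redlign1})--(\ref{Redlign2}).
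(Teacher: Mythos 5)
Your argument for parts (1)--(3) is essentially the same as the paper's: (1) via the inclusion (\ref{Redlign1}) together with Theorem~\ref{Thm:SpectralPropTensor}(4) to exclude a nonvacuum zero-eigenvector, (2) directly from (\ref{Redlign2}) and Theorem~\ref{Thm:SpectralPropTensor}(4), and (3) by projecting onto the one-particle sector for the converse and using the decay $\lVert(d\Gamma^{(n)}(\omega)+i)^{-1}\lVert\le\frac{1}{nm}\to 0$ (which requires $m>0$, and you correctly derive this from injectivity plus $m_{\textup{ess}}=\infty$) to show the direct sum of compacts converges in norm. The one place you diverge is the opening spectral identity $\sigma(d\Gamma^{(n)}(\omega))=\overline{\{\lambda_1+\cdots+\lambda_n\}}$: the paper simply cites \cite{Lecture}, whereas you supply a self-contained Weyl-sequence proof. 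Your construction is sound — grouping the $\lambda_j$ by distinct value, choosing an orthonormal family of spectral vectors, and symmetrizing produces an approximate eigenvector in $\cH^{\otimes_s n}$ whose norm $\sqrt{k_1!\cdots k_r!/n!}$ stays bounded away from zero while the residual is $O(n\delta)$, so the quotient tends to zero and closedness of the spectrum gives the inclusion of the full closure. This makes the lemma self-contained at the cost of a slightly longer argument; the paper's citation keeps the proof short but requires the reader to consult the reference for the inclusion $\{\sum\lambda_j\}\subset\sigma(d\Gamma^{(n)}(\omega))$, which is the only nonobvious direction.
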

\begin{proof}
The statements regarding the spectrum is easy and can be found in e.g, \cite{Lecture}. We prove the statements (1), (2) and (3). 

To prove statement (1), we note that $\Omega$ is an eigenvector as desired. Assume that there exists an eigenvector $\psi$ orthogonal to $\Omega$. We may then assume that there is $n\in \NN$ such that $\psi\in \cH^{\otimes_s n}$ and $\psi$ is an eigenvector for $d\Gamma^{(n)}(\omega)$ with eigenvalue 0. Since $d\Gamma^{(n)}(\omega)\geq nm$ we find $m=0$ and thus $0 \in \sigma(\omega)$ but is not an eigenvalue. By Theorem \ref{Thm:SpectralPropTensor} and equation (\ref{Redlign1}) we find $0$ is not an eigenvalue for $d\Gamma^{(n)}(\omega)$, reaching a contradiction.

Statement (2) follows from Theorem \ref{Thm:SpectralPropTensor} and equation (\ref{Redlign2}).

If $d\Gamma(\omega)$ has compact resolvents then projection onto the one particle subspace shows that $\omega$ has compact resolvents. If $\omega$ has compact resolvents, then $m_{\textup{ess}}=\infty$ and so $m>0$ by injectivity of $\omega$. Statement (2) now gives that $d\Gamma^{(n)}(\omega)$ has compact resolvents for all $n\in \mathbb{N}$. Observe $\lVert (d\Gamma^{(n)}(\omega)+i)^{-1}\lVert\leq \frac{1}{nm}$ which converges to 0 as $n$ tends to $\infty$. Hence we find 
\begin{align*}
(d\Gamma(\omega)+i)^{-1}=\bigoplus_{n=0}^{ \infty }(d\Gamma^{(n)}(\omega)+i)^{-1}
\end{align*}
is compact.
\end{proof}
\section{Isomorphism theorems}
\noindent Let $\cH_1$ and $\cH_2$ be Hilbert spaces. Then
\begin{equation*}
\mathcal{F}_b(\mathcal{H}_1\oplus \mathcal{H}_2)\approx \mathcal{F}_b(\mathcal{H}_1)\otimes \mathcal{F}_b(\mathcal{H}_2)\approx \bigoplus_{n=0}^\infty \left (\mathcal{F}_b(\mathcal{H}_1)\otimes \mathcal{H}_2^{\otimes_s n}\right) .
\end{equation*}
In this chapter we investigate these isomorphisms. See also \cite{DerezinskiGerard} and \cite{Parthasarathy}.
\begin{thm}\label{Thm:ISOTHM1}
	There is a unique isomorphism $U\colon\mathcal{F}_b(\mathcal{H}_1\oplus \mathcal{H}_2)\rightarrow  \mathcal{F}_b(\mathcal{H}_1)\otimes \mathcal{F}_b(\mathcal{H}_2)$ such that $U(\epsilon(f\oplus g))=\epsilon(f)\otimes \epsilon(g)$. If $f_1,\dots,f_j\in \mathcal{H}_1$ and $g_1,\dots,g_\ell\in \mathcal{H}_2$ then
	\begin{align*}
	U((f_1,0)\otimes_s\cdots\otimes_s (f_j,0)\otimes_s &(0,g_1)\otimes_s\cdots\otimes_s (0,g_\ell)) \\&=\left(\frac{j!\ell!}{(j+\ell)!} \right)^{1/2} (f_1\otimes_s \cdots\otimes_s f_j)\otimes  (g_1\otimes_s\cdots\otimes_s g_\ell)
	\end{align*}
	Furthermore, if $A_i$ is selfadjoint on $\mathcal{H}_i$, $V_i$ is unitary on $\mathcal{H}_i$, $f\in \cH_1$ and $g\in \cH_2$ then 
	\begin{align}
	UW(f\oplus g,V_1\oplus V_2)U^*&=W(f,V_1)\otimes W(g,V_2) \label{eq:Trans1}\\
	Ud\Gamma (A_1\oplus A_2)U^*&=\overline{d\Gamma (A_1)\otimes 1 +1\otimes d\Gamma ( A_2)}\label{eq:Trans2}\\
	U\varphi(f,g)U^*&=\overline{\varphi(f)\otimes 1+1\otimes \varphi(g)}\label{eq:Trans3}\\
	Ua(f,g)U^*&=\overline{a(f)\widehat{\otimes} 1+1\widehat{\otimes} a(g)}\label{eq:Trans4}\\
	Ua^{\dagger}(f,g)U^*&=\overline{a^{\dagger}(f)\widehat{\otimes} 1+1\widehat{\otimes} a^{\dagger}(g)}.\label{eq:Trans5}
	\end{align}
\end{thm}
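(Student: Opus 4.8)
The plan is to establish the isomorphism by exploiting the universal property of exponential vectors, and then to derive all the operator identities from the single defining relation $U(\epsilon(f\oplus g))=\epsilon(f)\otimes\epsilon(g)$ together with the already-known intertwining relations for $W$, $d\Gamma$, $\varphi$, $a$, $a^\dagger$ and the Weyl calculus. First I would recall that for a dense subspace $\cD\subset\cH$, the set $\{\epsilon(h)\mid h\in\cD\}$ is linearly independent and total in $\cF_b(\cH)$ (stated in the excerpt), and that inner products of exponential vectors are given by $\langle\epsilon(h_1),\epsilon(h_2)\rangle=e^{\langle h_1,h_2\rangle}$. Taking $h=f\oplus g$ in $\cH_1\oplus\cH_2$, the map $\epsilon(f\oplus g)\mapsto\epsilon(f)\otimes\epsilon(g)$ is defined on a total set, and the computation
\[
\langle\epsilon(f_1\oplus g_1),\epsilon(f_2\oplus g_2)\rangle=e^{\langle f_1,f_2\rangle+\langle g_1,g_2\rangle}=\langle\epsilon(f_1),\epsilon(f_2)\rangle\langle\epsilon(g_1),\epsilon(g_2)\rangle=\langle\epsilon(f_1)\otimes\epsilon(g_1),\epsilon(f_2)\otimes\epsilon(g_2)\rangle
\]
shows that it extends by linearity to an isometry on the span of exponential vectors; since the image is again total (the products $\epsilon(f)\otimes\epsilon(g)$ span a dense subspace of $\cF_b(\cH_1)\otimes\cF_b(\cH_2)$), the isometry extends uniquely to a unitary $U$. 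Uniqueness is automatic since exponential vectors are total.

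Next I would derive the explicit formula on symmetric tensors. Expanding $\epsilon(t_1 f\oplus t_2 g)$ as a power series in the real parameters $t_1,t_2$ and matching the homogeneous component of bidegree $(j,\ell)$ on both sides of $U(\epsilon(t_1 f\oplus t_2 g))=\epsilon(t_1 f)\otimes\epsilon(t_2 g)$, one reads off
\[
U\big((f,0)^{\otimes_s j}\otimes_s(0,g)^{\otimes_s \ell}\big)=\Big(\tfrac{j!\,\ell!}{(j+\ell)!}\Big)^{1/2}(f^{\otimes_s j})\otimes(g^{\otimes_s \ell}),
\]
and polarization in $f$ and in $g$ gives the stated formula for distinct vectors. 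This identifies $U$ with the grading-shuffle isomorphism and in particular shows $U\Omega=\Omega\otimes\Omega$.

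For the five operator identities I would argue as follows. Equation (\ref{eq:Trans1}): apply both sides to $\epsilon(h_1\oplus h_2)$ using the explicit action of $W(k,V)$ on exponential vectors, $W(k,V)\epsilon(h)=e^{-\lVert k\lVert^2/2-\langle k,Vh\rangle}\epsilon(k+Vh)$; both sides produce the same scalar multiple of $\epsilon\big((f+V_1 h_1)\oplus(g+V_2 h_2)\big)$, which $U$ sends to $\epsilon(f+V_1h_1)\otimes\epsilon(g+V_2h_2)$ — so the identity holds on a total set and hence everywhere since all operators are bounded. Equation (\ref{eq:Trans2}) follows by specializing to $f=g=0$, $V_i=e^{itA_i}$, using $e^{itd\Gamma(A)}=W(0,e^{itA})$ and $Ue^{it\,d\Gamma(A_1\oplus A_2)}U^*=e^{itd\Gamma(A_1)}\otimes e^{itd\Gamma(A_2)}=e^{it\,\overline{d\Gamma(A_1)\otimes1+1\otimes d\Gamma(A_2)}}$ (the last operator being essentially selfadjoint with the stated closure by Theorem \ref{Thm:SpectralPropTensor} parts (2),(6)); differentiating the one-parameter unitary groups at $t=0$ and invoking uniqueness of the generator gives (\ref{eq:Trans2}). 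Similarly, choosing $V_i=1$, $f=tk_1$, $g=tk_2$ and using $e^{it\varphi(if)}=W(tf,1)$ yields, after differentiation, $U\varphi(i(k_1\oplus k_2))U^*=\overline{\varphi(ik_1)\otimes1+1\otimes\varphi(ik_2)}$; replacing $k_i$ by $-ik_i$ (allowed since these range over all of $\cH_i$) gives (\ref{eq:Trans3}). Finally (\ref{eq:Trans4}) and (\ref{eq:Trans5}) follow from (\ref{eq:Trans3}) by polarizing $\varphi(f)=\overline{a(f)+a^\dagger(f)}$: writing $\varphi(f,g)$ and $\varphi(if,ig)$ and forming the combinations $\tfrac12(\varphi(f,g)\mp i\varphi(if,ig))$ recovers $a(f,g)$ and $a^\dagger(f,g)$ respectively, and the same linear combinations on the right-hand side produce $a(f)\widehat\otimes1+1\widehat\otimes a(g)$ and its adjoint; one checks the domains match on $\cN\widehat\otimes\cN$, which is a core, so the closures agree.

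The main obstacle is the bookkeeping around domains and closures in (\ref{eq:Trans2})–(\ref{eq:Trans5}): the operators $d\Gamma(A_1)\otimes1+1\otimes d\Gamma(A_2)$ and the (anti)creation sums are a priori only defined on algebraic tensor products, and one must know they are essentially selfadjoint (resp. closable with the right closure) to match them against the intrinsically-defined $Ud\Gamma(A_1\oplus A_2)U^*$ etc. This is exactly what Theorem \ref{Thm:SpectralPropTensor} supplies for the $d\Gamma$ case; for the $a,a^\dagger$ case one uses that $\cN\widehat\otimes\cN$ (equivalently, the span of exponential vectors mapped back) is a core for all operators in sight, reducing the identity to the algebraic level where it is a one-line check from $a(f\oplus g)=a(f)\oplus$-type formulas on simple tensors. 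Once the core/closure issues are handled, every verification is a direct computation on exponential vectors or simple tensors.
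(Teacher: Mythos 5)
Your proposal is correct and follows the same high-level skeleton as the paper (define $U$ on exponential vectors, verify isometry via the inner-product formula, deduce \eqref{eq:Trans1} by direct action on exponential vectors, then obtain \eqref{eq:Trans2} and \eqref{eq:Trans3} from the one-parameter group relations $e^{itd\Gamma(A)}=W(0,e^{itA})$ and $e^{it\varphi(if)}=W(tf,1)$ together with Theorem~\ref{Thm:SpectralPropTensor}). The two places where you genuinely diverge from the paper are in deriving \eqref{eq:Trans4}--\eqref{eq:Trans5} and the symmetric-tensor formula. The paper handles \eqref{eq:Trans4} by a one-line direct computation $a(f\oplus g)\epsilon(h_1\oplus h_2)=\langle f\oplus g,h_1\oplus h_2\rangle\epsilon(h_1\oplus h_2)$ on the set $\cC$ of product exponential vectors, obtains \eqref{eq:Trans5} by an adjoint argument against $\cC$, and then reads off the symmetric-tensor formula by applying creation operators repeatedly to the vacuum; you instead polarize $a=\tfrac12(\varphi+i\varphi(i\,\cdot))$ and $a^\dagger=\tfrac12(\varphi-i\varphi(i\,\cdot))$ out of \eqref{eq:Trans3} (note you have the two signs reversed relative to this), and you derive the symmetric-tensor formula by expanding $\epsilon(t_1f\oplus t_2g)$ in a power series and matching bi-degree $(j,\ell)$ followed by polarization. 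Both routes work; the paper's is slightly cleaner for \eqref{eq:Trans4}--\eqref{eq:Trans5} because the algebraic identity on $\cC$ is immediate and the core argument is the one already needed, whereas your polarization argument requires you to additionally verify that $\cN\widehat\otimes\cN$ (or the span of $\cC$) is a common core for all four closed operators $\varphi(h)\widehat\otimes 1$, $1\widehat\otimes\varphi(h)$, $a(h)\widehat\otimes1$, $1\widehat\otimes a(h)$ and that the algebraic linear combinations hold there before closing; this is true but is an extra step you flag only briefly. Your power-series derivation of the symmetric-tensor formula is arguably more self-contained than the paper's, since it does not need \eqref{eq:Trans5} as a prerequisite.
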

\begin{proof}
The set of exponential vectors is total. This implies, that at most one linear and bounded map can satisfy $U(\epsilon(f\oplus g))=\epsilon(f)\otimes \epsilon(g)$ for all $f\in \cH_1$ and $g\in \cH_2$. By the linear independence of exponential vectors we may define $U(\epsilon(f\oplus g))=\epsilon(f)\otimes \epsilon(g)$ and extend by linearity. Note that the image of $U$ is dense and $U$ conserves the inner product since
\begin{align*}
\langle \epsilon(h_1\oplus h_2),\epsilon(f_1\oplus f_2) \rangle =e^{\langle h_1\oplus h_2,f_1\oplus f_2\rangle}=\langle \epsilon(h_1)\otimes \epsilon(h_2), \epsilon(f_1)\otimes  \epsilon(f_2)\rangle.
\end{align*}
Hence it extends by continuity to a unitary map. To prove equation (\ref{eq:Trans1}) it is enough to check the set of exponential vectors. We calculate
\begin{align*}
UW(f\oplus g,V_1\oplus V_2)U^*  \epsilon(h_1)\otimes \epsilon (h_2)&=U\epsilon(V_1h_1\oplus V_2h_2+f\oplus g)\\&=\epsilon(V_1h_1+f)\otimes \epsilon(V_2h_2+g)\\&=W(f,V_1)\otimes W(g,V_2)(\epsilon(h_1)\otimes \epsilon (h_2)).
\end{align*}
 This also proves equations (\ref{eq:Trans2}) and (\ref{eq:Trans3}) since both sides generate the same unitary group. To prove equations (\ref{eq:Trans4}) and (\ref{eq:Trans5}) we define
 \begin{equation*}
 \cC=\{  \epsilon(f_1)\otimes \epsilon(f_2)\mid  f_i\in \cH_i \}=U(\{ \epsilon(f_1\oplus f_2)\mid f_i\in \cH_i  \}).
 \end{equation*}
For $\psi=\epsilon(f_1)\otimes \epsilon(f_2)\in \cC$ we have
\begin{align*}
Ua(f\oplus g)U^*\epsilon(f_1)\otimes \epsilon(f_2)&=Ua(f\oplus g)\epsilon(f_1\oplus f_2)\\&=\langle f\oplus g,f_1\oplus f_2\rangle \epsilon(f_1)\otimes \epsilon(f_2)\\&=(a(f)\widehat{\otimes} 1+1\widehat{\otimes} a(g))\epsilon(f_1)\otimes \epsilon(f_2)
\end{align*}
showing equation (\ref{eq:Trans4}) is true on $\cC$. For $\phi=\epsilon(g_1)\otimes \epsilon(g_2)\in \cC$ we find
\begin{align*}
\langle \phi ,Ua^\dagger(f\oplus g)U^*\psi  \rangle= \langle Ua(f\oplus g)U^* \phi ,\psi  \rangle=\langle  \phi ,(a^\dagger(f)\widehat{\otimes} 1+1\widehat{\otimes} a^\dagger(g))\psi  \rangle.
\end{align*}
$\cC$ is total in $\cF_b(\cH_1)\otimes \cF_b(\cH_2)$ so $(a^\dagger(f)\widehat{\otimes} 1+1\widehat{\otimes} a^\dagger(g))\psi=Ua^\dagger(f\oplus g)U^*\psi$. This proves equation (\ref{eq:Trans5}) is true on $\cC$. We can now conclude that equations (\ref{eq:Trans4}) and (\ref{eq:Trans5}) hold on $\cC$. Let $\sharp$ denote either $\dagger$ or nothing. Exponential vectors span a core for both creation and annihilation operators (see \cite{Lecture}) so
\begin{align*}
Ua^\sharp(f\oplus g)U^*=\overline{a^\sharp(f)\widehat{\otimes} 1+1\widehat{\otimes} a^\sharp(g)\mid_{\text{Span($\cC$)} }  }
\end{align*}
It is not hard to see that $\overline{a^\sharp(f)\widehat{\otimes} 1+1\widehat{\otimes} a^\sharp(g)\mid_{\text{Span($\cC$)} }  }$ is extends $a^\sharp(f)\widehat{\otimes} 1+1\widehat{\otimes} a^\sharp(g)$ so $\overline{a^\sharp(f)\widehat{\otimes} 1+1\widehat{\otimes} a^\sharp(g)\mid_{\text{Span($\cC$)} }  }=\overline{a^\sharp(f)\widehat{\otimes} 1+1\widehat{\otimes} a^\sharp(g)  }$ proving equations (\ref{eq:Trans4}) and (\ref{eq:Trans5}). We now calculate
\begin{align*}
U(f_1,0)&\otimes_s\cdots\otimes_s (f_j,0)\otimes_s (0,g_1)\otimes_s\cdots\otimes_s (0,g_\ell) \\&=\left(\frac{1}{(j+\ell)!} \right)^{1/2} U a^\dagger(f_1,0)\cdots  a^{\dagger}(f_j,0) a^{\dagger}(0,g_1)\cdots a^{\dagger}(0,g_\ell)\Omega\\&=\left(\frac{\ell!j!}{(j+\ell)!} \right)^{1/2}(f_1\otimes_s \cdots\otimes_s f_j)\otimes(  g_1\otimes_s\cdots\otimes_s g_\ell)
\end{align*}
finishing the proof.
\end{proof}

\noindent The following result is obvious.
\begin{thm}\label{Thm:ISOTHM2}
There is a unique isomorphism
	\begin{equation*}
	U:\mathcal{F}_b(\mathcal{H}_1)\otimes \mathcal{F}_b(\mathcal{H}_2)\rightarrow \mathcal{F}_b(\mathcal{H}_1)\oplus \bigoplus_{n=1}^{\infty} \mathcal{F}_b(\mathcal{H}_1)\otimes \mathcal{H}_2^{\otimes_s n}
	\end{equation*}
	such that
	\begin{equation*}
	U(w \otimes \{\psi_2^{(n)} \}_{n=0}^\infty)=\psi^{(0)}w\oplus \bigoplus \limits_{n=1}^{\infty} w \otimes \psi_2^{(n)}.
	\end{equation*}
	Let $A$ be a selfadjoint operator on $\mathcal{F}_b(\mathcal{H}_1)$ and $B$ be a selfadjoint operator on $\mathcal{F}_b(\mathcal{H}_2)$ such that $B$ is reduced by all of the subspaces $\mathcal{H}_2^{\otimes_s n}$. Write $B^{(n)}=B\mid_{\mathcal{H}_2^{\otimes_s n}}$. Then
	\begin{align*}
	U(A\otimes 1+1\otimes B)U^*&=A+B^{(0)}\oplus \bigoplus_{n=1}^\infty (A\otimes 1+1\otimes B^{(n)}) \\ U (A\otimes B) U^*&= B^{(0)} A \oplus \bigoplus_{n=1}^\infty (A\otimes  B^{(n)}).
	\end{align*}
\end{thm}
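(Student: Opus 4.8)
The plan is to obtain $U$ from the canonical distributivity of the Hilbert space tensor product over orthogonal direct sums. Since $\cF_b(\cH_2)=\bigoplus_{n=0}^\infty \cH_2^{\otimes_s n}$, there is a canonical unitary $\cF_b(\cH_1)\otimes \cF_b(\cH_2)\cong \bigoplus_{n=0}^\infty\bigl(\cF_b(\cH_1)\otimes \cH_2^{\otimes_s n}\bigr)$, and composing it with the identification $\cF_b(\cH_1)\otimes \cH_2^{\otimes_s 0}=\cF_b(\cH_1)\otimes \CC\cong\cF_b(\cH_1)$, $w\otimes\lambda\mapsto\lambda w$, yields a unitary that acts on elementary tensors exactly as in the statement. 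Uniqueness is immediate: the vectors $w\otimes\{\psi_2^{(n)}\}$ with all but one component $\psi_2^{(n)}$ vanishing span a dense subspace, so a bounded linear map is pinned down by its values there.

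For the operator identities, let $P_n$ denote the orthogonal projection of $\cF_b(\cH_2)$ onto $\cH_2^{\otimes_s n}$ and set $\cK_n:=\mathrm{Ran}(1\otimes P_n)=\cF_b(\cH_1)\otimes \cH_2^{\otimes_s n}$; these are mutually orthogonal closed subspaces with $\bigoplus_n\cK_n=\cF_b(\cH_1)\otimes\cF_b(\cH_2)$, and $U$ simply relabels them as the summands on the right-hand side (with $\cK_0$ becoming $\cF_b(\cH_1)$). I would first check that each $\cK_n$ reduces both $A\otimes 1$ and $1\otimes B$: for $A\otimes 1$ because it commutes with $1\otimes P_n$ on the algebraic tensor product and hence, by the core property, everywhere; for $1\otimes B$ because $B$ is reduced by $\cH_2^{\otimes_s n}$ by hypothesis. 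On $\cK_n$ the restriction of $A\otimes 1$ is $A\otimes 1$ (read as $A$ when $n=0$), the restriction of $1\otimes B$ is $1\otimes B^{(n)}$ (read as the scalar $B^{(0)}$ when $n=0$), and the restriction of $A\otimes B$ is $A\otimes B^{(n)}$ (read as $B^{(0)}A$ when $n=0$). Conjugating by $U$ and assembling the blocks then gives all three displayed formulas.

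The only point requiring a little care is the interpretation of the unbounded symbols $A\otimes 1+1\otimes B$ and $A\otimes B$, which, as in Theorem \ref{Thm:SpectralPropTensor}, stand for the selfadjoint operators obtained by closing the natural operators defined on algebraic tensor products of cores. Since $A\otimes 1$ and $1\otimes B$ are strongly commuting selfadjoint operators, Theorem \ref{Thm:SpectralPropTensor} gives essential selfadjointness of their sum and identifies the closure; moreover a direct sum of closed operators is closed, and the block operators $A\otimes 1+1\otimes B^{(n)}$ and $\overline{A\otimes B^{(n)}}$ are themselves selfadjoint by Theorems \ref{Thm:BasicPropTensor} and \ref{Thm:SpectralPropTensor}, so the reducing-subspace decomposition is compatible with passing to closures. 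This bookkeeping around closures is the main (and very mild) obstacle; there is essentially no analytic content, which is why the statement is labelled obvious.
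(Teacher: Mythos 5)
Your proof is correct. The paper does not actually provide an argument — it simply records the theorem with the remark ``The following result is obvious'' — so there is no paper proof to diverge from; the route you take (identify $\cF_b(\cH_1)\otimes\cH_2^{\otimes_s n}$ with $\mathrm{Ran}(1\otimes P_n)$, observe each of these subspaces reduces $A\otimes 1$, $1\otimes B$, and $A\otimes B$, compute the restrictions including the scalar $n=0$ case, and note that $U$ is just the canonical relabelling of these orthogonal summands) is exactly the intended one, and your remarks on closures and on totality of the relevant elementary tensors close the only gaps worth mentioning.
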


\begin{lem}\label{Thm:ISOTHM3}
Let $\cH$ be a Hilbert space and assume there is a unitary map $V:\cH\rightarrow \cH_1\oplus \cH_2$. Let $U_1$ be the map from Theorem \ref{Thm:ISOTHM1}, $U_2$ be the map from Theorem $\ref{Thm:ISOTHM2}$ and $j_i:\cH_i\rightarrow \cH_1\oplus \cH_2$ be the embedding defined by $j_1(f)=(f,0)$ or $j_2(g)=(0,g)$. Define the maps $U=U_2U_1\Gamma(V)$ and $Q_i=V^*j_i$. Then
\begin{equation}\label{eq:SimpelTrans12}
\Gamma(Q_1)=U^*\mid_{\mathcal{F}_b(\mathcal{H}_1)}.
\end{equation}
Let $\cK \subset \cH_1 $ be a subspace and $g_1,\dots,g_q\in \cH_2$. Define  
\begin{align*}
B=& \{ \Omega \}\cup \bigcup_{b=1}^\infty \{ h_1\otimes_s\cdots\otimes_s h_b\mid h_i\in \cK \}\\
C=& \{ Q_2g_1\otimes_s \cdots\otimes_s Q_2g_q \}\\&\cup \bigcup_{b=1}^\infty \{ Q_1h_1\otimes_s\cdots \otimes_s Q_1h_b\otimes_s Q_2g_1\otimes_s \cdots\otimes_s Q_2g_q\mid h_i\in \cK \}.
\end{align*}
Let $\psi\in \textup{Span}(B)$. Then
\begin{equation}\label{eq:SimpelTrans11}
U^*(\psi\otimes g_1\otimes_s \cdots \otimes_s g_q)\in \textup{Span}(C).
\end{equation}
\end{lem}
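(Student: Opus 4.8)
The plan is to prove Lemma~\ref{Thm:ISOTHM3} in two stages: first establish the identification \eqref{eq:SimpelTrans12} of $\Gamma(Q_1)$ with $U^*$ restricted to $\mathcal{F}_b(\mathcal{H}_1)$, and then use this together with the explicit formulas from Theorems~\ref{Thm:ISOTHM1} and \ref{Thm:ISOTHM2} to compute $U^*(\psi \otimes g_1 \otimes_s \cdots \otimes_s g_q)$ on a spanning set.

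\begin{proof}
We first prove \eqref{eq:SimpelTrans12}. Since $V$ is unitary, so is $\Gamma(V)$, and $U = U_2 U_1 \Gamma(V)$ is a composition of unitaries, hence unitary with $U^* = \Gamma(V)^* U_1^* U_2^*$. Recall $\Gamma(V)^* = \Gamma(V^*)$ by Lemma~\ref{Lem:SeconduantisedBetweenSPaces}. For $h_1,\dots,h_b \in \mathcal{H}_1$ the definition of $U_2$ in Theorem~\ref{Thm:ISOTHM2} gives $U_2^*(h_1 \otimes_s \cdots \otimes_s h_b) = (h_1 \otimes_s \cdots \otimes_s h_b) \otimes \Omega$ (the zero-particle part in the second factor), and then the formula in Theorem~\ref{Thm:ISOTHM1} with $\ell = 0$ yields $U_1^*\big((h_1 \otimes_s \cdots \otimes_s h_b) \otimes \Omega\big) = (h_1,0) \otimes_s \cdots \otimes_s (h_b,0) = j_1(h_1) \otimes_s \cdots \otimes_s j_1(h_b)$. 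Applying $\Gamma(V^*)$ and using the last statement of Lemma~\ref{Lem:SeconduantisedBetweenSPaces}, we get $\Gamma(V^*)\big(j_1(h_1) \otimes_s \cdots \otimes_s j_1(h_b)\big) = V^*j_1(h_1) \otimes_s \cdots \otimes_s V^*j_1(h_b) = Q_1 h_1 \otimes_s \cdots \otimes_s Q_1 h_b = \Gamma(Q_1)(h_1 \otimes_s \cdots \otimes_s h_b)$, where the last equality is the explicit action of $\Gamma(Q_1)$ on simple symmetric tensors (note $\|Q_1\| = 1$ since $Q_1$ is an isometry, so $\Gamma(Q_1)$ is defined). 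Since both sides are bounded and agree on vectors of this form together with $\Omega$ (where both give $\Omega$), which span a dense subspace of $\mathcal{F}_b(\mathcal{H}_1)$, equation \eqref{eq:SimpelTrans12} follows.

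For \eqref{eq:SimpelTrans11}, fix $\psi = h_1 \otimes_s \cdots \otimes_s h_b$ with $h_i \in \mathcal{K}$ (the case $\psi = \Omega$ and linearity handle $\mathrm{Span}(B)$). We compute $U^*(\psi \otimes g_1 \otimes_s \cdots \otimes_s g_q)$ by running the three maps in reverse. First, $U_2^*$ sends $\psi \otimes (g_1 \otimes_s \cdots \otimes_s g_q)$ — viewed as $\psi$ tensored with a $q$-particle vector in $\mathcal{F}_b(\mathcal{H}_2)$ — to the component $\psi \otimes (g_1 \otimes_s \cdots \otimes_s g_q)$ sitting in the $n = q$ summand $\mathcal{F}_b(\mathcal{H}_1) \otimes \mathcal{H}_2^{\otimes_s q}$. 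Next, $U_1^*$ applied to $(h_1 \otimes_s \cdots \otimes_s h_b) \otimes (g_1 \otimes_s \cdots \otimes_s g_q)$ is, up to the combinatorial scalar $\big(\tfrac{(b+q)!}{b!\,q!}\big)^{1/2}$, the symmetric tensor $j_1(h_1) \otimes_s \cdots \otimes_s j_1(h_b) \otimes_s j_2(g_1) \otimes_s \cdots \otimes_s j_2(g_q)$ in $\mathcal{F}_b(\mathcal{H}_1 \oplus \mathcal{H}_2)$. Finally $\Gamma(V^*)$ turns each factor $j_i(\cdot)$ into $Q_i(\cdot)$, giving a scalar multiple of $Q_1 h_1 \otimes_s \cdots \otimes_s Q_1 h_b \otimes_s Q_2 g_1 \otimes_s \cdots \otimes_s Q_2 g_q$, which lies in $\mathrm{Span}(C)$. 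By linearity the same holds for all $\psi \in \mathrm{Span}(B)$, proving \eqref{eq:SimpelTrans11}.
\end{proof}

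The only real subtlety — and the step I would be most careful about — is bookkeeping in the second stage: one must keep straight that the vector $g_1 \otimes_s \cdots \otimes_s g_q \in \mathcal{H}_2^{\otimes_s q}$ is being fed into $U_2^*$ as the $q$-particle sector of the second Fock factor (not, say, as a generic element of $\mathcal{F}_b(\mathcal{H}_2)$), and that the scalar produced by $U_1^*$ is finite and nonzero so the span statement is unaffected. None of this requires computing the constant explicitly; the membership in $\mathrm{Span}(C)$ is all that is claimed.
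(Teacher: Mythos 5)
Your proof is correct and follows essentially the same route as the paper: factor $U^* = \Gamma(V)^*U_1^*U_2^*$, track the image of a spanning set through the three maps using the explicit formulas from Theorems \ref{Thm:ISOTHM1} and \ref{Thm:ISOTHM2}, and observe that $V^*j_i = Q_i$ turns each factor into the required form. The only cosmetic difference is in the first half: the paper verifies \eqref{eq:SimpelTrans12} on exponential vectors $\epsilon(f)$ via $U^*\epsilon(f)=\Gamma(V)^*\epsilon(f,0)=\epsilon(Q_1f)=\Gamma(Q_1)\epsilon(f)$, whereas you verify it on finite symmetric tensors and then close up by density and boundedness — both are standard total sets and the arguments are equivalent. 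Your care about which sector of $\cF_b(\cH_2)$ the vector $g_1\otimes_s\cdots\otimes_s g_q$ sits in when passing through $U_2^*$, and about the combinatorial scalar produced by $U_1^*$ being nonzero, is exactly the bookkeeping the paper also performs.
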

\begin{proof}
It is enough to prove equation (\ref{eq:SimpelTrans12}) on elements of the form $\epsilon(f)$ for $f\in \cH_1$. We calculate using Theorems \ref{Thm:ISOTHM1} and \ref{Thm:ISOTHM2}:
\begin{equation*}
U^*\epsilon(f)=\Gamma(V)^*U_1^*\epsilon(f)\otimes \Omega=\Gamma(V)^*\epsilon(f,0)=\epsilon(V^*j_1f)=\epsilon(Q_1f)=\Gamma(Q_1)\epsilon(f).
\end{equation*}
By linearity it is enough to prove equation (\ref{eq:SimpelTrans11}) when $\psi\in B$. Using Theorems \ref{Thm:ISOTHM1} and \ref{Thm:ISOTHM2} along with Lemma \ref{Lem:SeconduantisedBetweenSPaces} we find
\begin{align*}
U^*(\Omega \otimes (g_1\otimes_s\cdots\otimes_s g_q))&=\Gamma(V)^*U_1^*(\Omega \otimes (g_1\otimes_s\cdots\otimes_s g_q))\\&=\Gamma(V)^*(g_1\otimes_s\cdots\otimes_s g_q)\\&=Q_2g_1\otimes_s\cdots\otimes_s Q_2g_q\in C
\end{align*}
and
\begin{align*}
U^*((h_1\otimes_s&\cdots\otimes_s h_b) \otimes (g_1\otimes_s\cdots\otimes_s g_q))\\=&\Gamma(V)^*U_1^*((h_1\otimes_s\cdots\otimes_s h_b) \otimes (g_1\otimes_s\cdots\otimes_s g_q))\\=&\left (\frac{(b+q)!}{q!b!}\right)^{1/2}\Gamma(V)^*((h_1,0)\otimes_s\cdots\otimes_s (h_b,0)\otimes_s(0,g_1)\otimes_s\cdots\otimes_s (0,g_q))\\=&\left (\frac{(b+q)!}{q!b!}\right)^{1/2}Q_1h_1\otimes_s\cdots\otimes_s Q_1h_b\otimes_s Q_2g_1\otimes_s\cdots\otimes_s Q_2g_q\in \textup{Span}(C).
	\end{align*}
This finishes the proof.
\end{proof}

\section{Pointwise annihilation operators}
\noindent In this appendix we define pointwise annihilation operators and show associated pull through formulas. Furthermore, we will prove part (1) of Theorem \ref{Thm:Numberstructure_Massless}. Let $\cH=L^2(\cM,\cE,\mu)$, where $(\cM,\cE,\mu)$ is assumed to be $\sigma$-finite. We define the extended symmetric Fock space to be the product
\begin{equation*}
\cF_{+}(\cH)=\bigtimes_{n=0}^{\infty}\cH^{\otimes_s n}
\end{equation*}
with coordinate projections $P_n:\cF_{+}(\cH)\rightarrow \cH^{\otimes_s n}$. For elements $(\psi^{(n)}),(\phi^{(n)})\in \cF_+(\cH)$ we define
\begin{equation*}
d((\psi^{(n)}),(\phi^{(n)}))=\sum_{n=0}^{\infty} \frac{1}{2^n}\frac{\lVert \psi^{(n)}-\phi^{(n)} \lVert  }{1+\lVert \psi^{(n)}-\phi^{(n)} \lVert }
\end{equation*} 
where $\lVert \cdot \lVert$ is the Fock space norm. This makes sense since $P_n(\cF_{+}(\cH))\subset \cF_b(\cH)$. Standard theorems from measure theory and topology now gives the following lemma.
\begin{lem}\label{Lem:BasicTopologyExtSpace}
The map $d$ defines a metric on $\cF_{+}(\cH)$ and turns this space into a complete separable metric space and a topological vector space. Both the topology and the Borel $\sigma$-algebra are generated by the projections $\{P_n \}_{n=0}^\infty$. If a sequence $\{\psi_n\}_{n=1}^\infty \subset \cF_b(\cH)$ is convergent/Cauchy then it is also convergent/Cauchy with respect to $d$ so any total/dense set in $\cF_{b}(\cH)$ will be total/dense in $\cF_{+}(\cH)$.
\end{lem}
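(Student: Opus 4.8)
The plan is to recognize $\cF_{+}(\cH)=\bigtimes_{n=0}^{\infty}\cH^{\otimes_s n}$ as a countable product of complete separable metric spaces equipped with the standard Fréchet product metric, and to read off every assertion from well-known facts about such products. First I would check that $d$ is a metric. Write $d_n(\psi,\phi)=\lVert \psi^{(n)}-\phi^{(n)}\lVert$ for the pseudometric coming from the $n$-th coordinate (this makes sense since $P_n(\cF_{+}(\cH))\subset\cH^{\otimes_s n}\subset\cF_b(\cH)$). The map $\rho(t)=t/(1+t)$ on $[0,\infty)$ is increasing, bounded by $1$, vanishes only at $0$, and is subadditive (from $\tfrac{s+t}{1+s+t}\le\tfrac{s}{1+s}+\tfrac{t}{1+t}$); hence each $\rho\circ d_n$ is a bounded pseudometric and $d=\sum_{n\ge0}2^{-n}\rho\circ d_n$ is a metric, since the series converges, symmetry and the triangle inequality (via monotonicity then subadditivity of $\rho$) pass to the weighted sum, and $d(\psi,\phi)=0$ forces $\psi^{(n)}=\phi^{(n)}$ for all $n$. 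A routine estimate then shows $d(\psi_k,\psi)\to0$ if and only if $d_n(\psi_k,\psi)\to0$ for every $n$; that is, the $d$-topology is the product topology, which is by definition the coarsest topology making all $P_n$ continuous, so it is generated by $\{P_n\}_{n=0}^\infty$.

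Completeness follows because each factor $\cH^{\otimes_s n}$ is a closed subspace of a Hilbert space, hence complete, and a countable product of complete metric spaces with this metric is complete: a $d$-Cauchy sequence is coordinatewise Cauchy, each coordinate converges, and the resulting vector is the $d$-limit. Separability follows by fixing a countable dense set $D_n\subset\cH^{\otimes_s n}$ with $0\in D_n$ for each $n$ and taking the countable set of sequences $(x_n)$ with $x_n\in D_n$ and $x_n=0$ for all but finitely many $n$, which is $d$-dense. That $\cF_{+}(\cH)$ is a topological vector space is immediate, since addition and scalar multiplication act coordinatewise, each factor is a topological vector space, and coordinatewise continuous operations are continuous for the product topology. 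For the Borel $\sigma$-algebra I would invoke the standard fact that for a countable product of second countable (here separable metric) spaces the Borel $\sigma$-algebra of the product equals the product $\sigma$-algebra $\bigotimes_n\cB(\cH^{\otimes_s n})$, which is by construction the smallest $\sigma$-algebra making all $P_n$ measurable; this is where separability of $\cH$ is used.

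Finally, for the statement about $\cF_b(\cH)$, the inequality $\lVert\psi^{(n)}-\phi^{(n)}\lVert\le\lVert\psi-\phi\lVert$ in the Fock norm shows that a norm-convergent (resp.\ norm-Cauchy) sequence in $\cF_b(\cH)$ is coordinatewise convergent (resp.\ Cauchy), hence $d$-convergent (resp.\ $d$-Cauchy) by the criterion above. Moreover $\cF_b(\cH)$ is itself $d$-dense in $\cF_{+}(\cH)$: for $(\psi^{(n)})\in\cF_{+}(\cH)$ the truncations $(\psi^{(0)},\dots,\psi^{(N)},0,0,\dots)$ lie in $\cF_b(\cH)$, and the tail bound $\sum_{n>N}2^{-n}\rho(d_n)\le 2^{-N}\to0$ gives $d$-convergence. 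Combining these, if $S\subset\cF_b(\cH)$ is total in $\cF_b(\cH)$ then $\operatorname{span}(S)$ is norm-dense, hence $d$-dense, in $\cF_b(\cH)$, hence $d$-dense in $\cF_{+}(\cH)$, so $S$ is total in $\cF_{+}(\cH)$; the argument for dense sets is the same without taking spans.

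I do not anticipate a genuine obstacle: the only points requiring a little care are ensuring separability of $\cH$ is invoked in the identification of the Borel $\sigma$-algebra with the product $\sigma$-algebra, and phrasing the total/dense transfer through the intermediate $d$-density of $\cF_b(\cH)$ in $\cF_{+}(\cH)$ rather than attempting it directly.
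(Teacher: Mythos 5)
Your proof is correct and matches the paper's approach: the paper dismisses this lemma as a consequence of ``standard theorems from measure theory and topology,'' and your argument spells out precisely those standard facts (Fréchet product metric, product topology, completeness and separability of countable products, coincidence of the Borel and product $\sigma$-algebras for countable products of second countable spaces, and density of the truncations).
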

\noindent Define
\begin{equation}\label{Defn:A}
\cA=\{ \Omega \}\cup \bigcup_{n=1}^\infty \{ g^{\otimes n}\mid g\in \cH \}.
\end{equation}
Then $\cA$ is total in $\cF_b(\cH)$ since the span of $\cA$ can approximate any exponential vector. By Lemma \ref{Lem:BasicTopologyExtSpace} we find $\cA$ is total in $\cF_{+}(\cH)$ as well. For each $a\in \RR$ we define
\begin{equation*}
\lVert \cdot \lVert_{a,+}=\lim_{n\rightarrow \infty}\left (\sum_{k=0}^{n} (k+1)^{2a} \lVert P_k(\cdot) \lVert^2   \right )^{\frac{1}{2}}.
\end{equation*}
which is measurable from $\cF_+(\cH)$ into $[0,\infty]$. Let
\begin{align*}
\cF_{a,+}(\cH)=\{ \psi\in \cF_{+}(\cH)\mid \lVert \psi \lVert_{a,+}<\infty  \}.
\end{align*}
Note $\cF_{a,+}(\cH)$ is a Hilbert space and $\cF_{a,+}(\cH)=\cD((N+1)^a)$ for $a\geq 0$. We summarise:
\begin{lem}\label{Lem:DenseSetMeasNorm}
$\lVert \cdot \lVert_{a,+}$ defines measurable map from $\cF_+(\cH)$ to $[0,\infty]$ and restricts to a norm on $\cF_{a,+}(\cH)$. Furthermore, $\cF_{a,+}(\cH)$ is a Hilbert space and the set $\cA$ from equation (\ref{Defn:A})  is total in $\cF_+(\cH)$.
\end{lem}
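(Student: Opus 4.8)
\emph{Plan.} The plan is to verify the three assertions separately; each is a routine consequence of the topology and Borel structure on $\cF_+(\cH)$ set up in Lemma \ref{Lem:BasicTopologyExtSpace}, so the proof will mostly amount to bookkeeping.

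\emph{Measurability of $\lVert\cdot\lVert_{a,+}$.} First I would fix $n\in\NN_0$ and observe that
\[
\psi\longmapsto\Bigl(\sum_{k=0}^{n}(k+1)^{2a}\lVert P_k(\psi)\lVert^2\Bigr)^{1/2}
\]
is continuous on $\cF_+(\cH)$: the coordinate projections $P_k$ are continuous by Lemma \ref{Lem:BasicTopologyExtSpace}, the Fock space norm is continuous, and finite sums together with the square root are continuous on $[0,\infty)$. Since these partial sums are nondecreasing in $n$, the limit defining $\lVert\cdot\lVert_{a,+}$ is their pointwise supremum, hence a measurable map from $\cF_+(\cH)$ into $[0,\infty]$.

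\emph{Norm and Hilbert space structure.} Next I would put on $\cF_{a,+}(\cH)$ the sesquilinear form $\langle\psi,\phi\rangle_{a,+}=\sum_{k=0}^{\infty}(k+1)^{2a}\langle P_k(\psi),P_k(\phi)\rangle$, which converges absolutely by the Cauchy--Schwarz inequality whenever $\lVert\psi\lVert_{a,+},\lVert\phi\lVert_{a,+}<\infty$ and satisfies $\langle\psi,\psi\rangle_{a,+}=\lVert\psi\lVert_{a,+}^2$. Positive definiteness reduces to the fact that the $P_k$ separate points (Lemma \ref{Lem:BasicTopologyExtSpace}), so $\lVert\cdot\lVert_{a,+}$ is indeed a norm. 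For completeness I would identify $\cF_{a,+}(\cH)$ isometrically with the weighted Hilbert direct sum $\bigoplus_{k=0}^{\infty}\cH^{\otimes_s k}$ carrying weights $(k+1)^{2a}$: a $\lVert\cdot\lVert_{a,+}$-Cauchy sequence is Cauchy in each component $\cH^{\otimes_s k}$, hence converges componentwise; the componentwise limit lies in $\cF_+(\cH)$, and a Fatou estimate applied to the partial sums shows it lies in $\cF_{a,+}(\cH)$ and is the $\lVert\cdot\lVert_{a,+}$-limit. For $a\geq 0$ this Hilbert space coincides with $\cD((N+1)^a)$ equipped with its graph norm, as already recorded.

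\emph{Totality of $\cA$.} Finally, the span of $\cA$ approximates every exponential vector $\epsilon(g)$, and exponential vectors are total in $\cF_b(\cH)$, so $\cA$ is total in $\cF_b(\cH)$; by the last clause of Lemma \ref{Lem:BasicTopologyExtSpace} a set total in $\cF_b(\cH)$ remains total in $\cF_+(\cH)$. The only step requiring genuine care is the completeness of $\cF_{a,+}(\cH)$; everything else is immediate once one works with the topology generated by the coordinate projections $\{P_n\}_{n=0}^{\infty}$.
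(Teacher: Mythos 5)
Your proof is correct and follows essentially the same line the paper uses: the paper presents Lemma \ref{Lem:DenseSetMeasNorm} as a summary of the preceding discussion (measurability of $\lVert\cdot\lVert_{a,+}$ as a monotone limit of continuous partial norms, $\cF_{a,+}(\cH)$ recognised as a weighted $\ell^2$-direct sum of the $\cH^{\otimes_s k}$, and totality of $\cA$ via exponential vectors together with the density transfer in Lemma \ref{Lem:BasicTopologyExtSpace}), and your proposal simply fills in the routine details the paper leaves implicit. No gap.
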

\noindent  The point of defining a metric on $\cF_+(\cH)$ and finding a total subset is that most of the operators we will encounter in this chapter are continuous on $\cF_{+}(\cH)$. Therefore many operator identities can be proven by simply checking the identity holds on $\cA$. Let $v\in \cH$ and define the following maps on $\cF_{+}(\cH)$
\begin{align*}
a_+(v) ( \psi^{(n)} ) &= ( a_{n}(v)\psi^{(n+1)} )\\
a^{\dagger}_+(v) ( \psi^{(n)} ) &= (0, a^\dagger_0(v)\psi^{(0)},a_1^\dagger(v)\psi^{(1)},\dots )\\
\varphi_+(v)  &= a_+(v)+a_+^\dagger(v)
\end{align*}
where $a_n(v)$ is the annihilation operator from $\cH^{\otimes_s (n+1)}$ to $\cH^{\otimes_s n}$ and $a^\dagger_n(f)$ is the creation operator from $\cH^{\otimes_s n}$ to $\cH^{\otimes (n+1)}$ which are both continuous. The following lemma is almost automatic.
\begin{lem}\label{Lem:AnihilCreaFieldOnF_+}
The maps $a_+(v)$, $a^{\dagger}_+(v)$ and $\varphi_+(v)$ are continuous. For $B\in \{ a,a^{\dagger},\varphi \}$ we have
\begin{equation}\label{eq:extphi}
B_+(v)\psi=B(v)\psi \,\,\, \,\,\,  \forall \,\psi\in \cD(B(v)).
\end{equation}
\end{lem}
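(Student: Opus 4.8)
**Plan for Lemma D.7 (continuity of $a_+(v)$, $a_+^\dagger(v)$, $\varphi_+(v)$ and the compatibility relation \eqref{eq:extphi}).**

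The plan is to exploit the fact, recorded in Lemma~\ref{Lem:BasicTopologyExtSpace}, that the topology on $\cF_+(\cH)$ is the product topology generated by the coordinate projections $P_n$. Hence a linear map $T$ on $\cF_+(\cH)$ is continuous if and only if each composite $P_n\circ T$ is continuous, and since $\cF_+(\cH)$ is a Fréchet space (metrizable by $d$) it suffices to check sequential continuity coordinatewise. For $a_+(v)$ we have $P_n a_+(v)(\psi^{(m)})=a_n(v)\psi^{(n+1)}=a_n(v)P_{n+1}(\psi^{(m)})$, so $P_n\circ a_+(v)=a_n(v)\circ P_{n+1}$; both $a_n(v)\colon\cH^{\otimes_s(n+1)}\to\cH^{\otimes_s n}$ and $P_{n+1}$ are continuous (the former is bounded with norm $\sqrt{n+1}\,\lVert v\rVert$, the latter by definition of the product topology), so each $P_n\circ a_+(v)$ is continuous, hence $a_+(v)$ is continuous. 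The same argument applies verbatim to $a_+^\dagger(v)$, using $P_n\circ a_+^\dagger(v)=a_{n-1}^\dagger(v)\circ P_{n-1}$ for $n\geq1$ and $P_0\circ a_+^\dagger(v)=0$, where $a_{n-1}^\dagger(v)$ is bounded with norm $\sqrt{n}\,\lVert v\rVert$. Continuity of $\varphi_+(v)=a_+(v)+a_+^\dagger(v)$ then follows because $\cF_+(\cH)$ is a topological vector space (again Lemma~\ref{Lem:BasicTopologyExtSpace}), so addition is continuous.

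For the compatibility relation \eqref{eq:extphi}, fix $B\in\{a,a^\dagger,\varphi\}$ and $\psi=(\psi^{(n)})\in\cD(B(v))\subset\cF_b(\cH)$. The point is simply that $B(v)$ acts on symmetric tensors by exactly the same coordinatewise formulas as $B_+(v)$: for instance $a(v)$ is defined on $\cF_b(\cH)$ by $(a(v)\psi)^{(n)}=a_n(v)\psi^{(n+1)}$ on the appropriate domain, which is literally the $n$-th coordinate of $a_+(v)\psi$. Thus $B_+(v)\psi$ and $B(v)\psi$ have the same coordinates; since $\psi\in\cD(B(v))$ guarantees the resulting sequence lies in $\cF_b(\cH)$, the two agree as elements of $\cF_b(\cH)\subset\cF_+(\cH)$. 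I would phrase this as: the closed operator $B(v)$ on $\cF_b(\cH)$ is the restriction of $B_+(v)$ to the set of those $\psi\in\cF_b(\cH)$ whose image $B_+(v)\psi$ again lies in $\cF_b(\cH)$, which is a standard characterization of the creation/annihilation/field operators (see \cite{Lecture}).

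I do not expect a genuine obstacle here; the statement is essentially bookkeeping once one has Lemma~\ref{Lem:BasicTopologyExtSpace} in hand. The only mild subtlety is to be careful that "continuous" is meant with respect to the metric $d$ (equivalently the product topology), not with respect to any Hilbert-space norm — on the norm level the operators are of course unbounded. Making explicit that $P_n\circ B_+(v)$ factors through a \emph{bounded} operator on a single tensor power $\cH^{\otimes_s k}$ is what converts the unboundedness into harmless finite-rank-in-each-coordinate behaviour, and that is the one place where a sentence of justification is warranted.
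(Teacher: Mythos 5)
Your proof takes essentially the same approach as the paper's: the continuity argument (factoring $P_n \circ B_+(v)$ through bounded operators $a_n(v)$, $a^\dagger_{n-1}(v)$ between single tensor powers, then using the product/metric topology and the TVS structure for the sum) is verbatim what the paper does, and for $B\in\{a,a^\dagger\}$ both you and the paper regard \eqref{eq:extphi} as essentially definitional. The one place you diverge slightly is $B=\varphi$: since $\varphi(v)$ is defined as a closure, the coordinatewise identity is not literally by definition, and the paper handles this by verifying the identity on $\mathrm{Span}(\cA)$, choosing $\psi_n\in\mathrm{Span}(\cA)$ converging to $\psi$ in $\varphi(v)$-graph norm, and then using continuity of $\varphi_+(v)$ together with Lemma~\ref{Lem:BasicTopologyExtSpace} (graph-norm convergence implies $d$-convergence on both sides, so limits agree). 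You instead outsource exactly this point to a cited characterization of $\varphi(v)$ as the maximal coordinatewise operator; that is a correct and standard fact, so your argument is sound, just a little less self-contained. If you wanted to match the paper's level of detail, spelling out the short graph-norm approximation for the $\varphi$ case would close that gap.
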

\begin{proof}
The topology on $\cF_+(\cH)$ is generated by the projections $P_n$ so continuity of $a_+(v)$, $a^{\dagger}_+(v)$ and $\varphi_+(v)$ follows from continuity of 
\begin{align*}
P_na_+(v)&= a_n(v) P_{n+1}  \\   P_na^\dagger_+(v)&= a^\dagger_{n-1}(v) P_{n-1}\,\,\,\, n\geq 1\\ P_0a^\dagger_+(v)&=0.
\end{align*}
Equation (\ref{eq:extphi}) holds for $B\in \{ a,a^{\dagger} \}$ simply by definition. We now prove equation (\ref{eq:extphi}) for $B=\varphi$. The relation
\begin{equation*}
\varphi(v)\psi=\varphi_+(v)\psi
\end{equation*}
is easily seen to hold on the span of $\cA$ where $\cA$ is the set from equation (\ref{Defn:A}). For $\psi\in \cD(\varphi(v))$ we may pick as sequence $\{\psi_n\}_{n=1}^\infty\subset \textup{Span}(\cA)$ that converges to $\psi$ in $\varphi(f)$ norm (use e.g. \cite[Corollary 20.5]{Parthasarathy}). Continuity of $\varphi_+(v)$ together with Lemma \ref{Lem:BasicTopologyExtSpace} now yields the desired result.
\end{proof}
Let $U$ be unitary on $\cH$ and $\omega=(\omega_1,\dots,\omega_p)$ be a tuple of strongly commuting selfadjoint operators on $\cH$. We then define
\begin{align*}
d\Gamma(\omega)&=(d\Gamma(\omega_1),\dots,d\Gamma(\omega_p))\\
d\Gamma^{(n)}(\omega)&=(d\Gamma^{(n)}(\omega_1),\dots,d\Gamma^{(n)}(\omega_p))
\end{align*}
which are tuples of strongly commuting selfadjoint operators (it is easily checked that the unitary groups commute). Let $f:\RR^p\rightarrow \CC$ be a measurable map and define
\begin{align*}
f(d\Gamma_+(\omega))&=\bigtimes_{n=0}^{\infty}f(d\Gamma^{(n)}(\omega)) \,\,\,\, \cD(f(d\Gamma_+(\omega)))=\bigtimes_{n=0}^{\infty}\cD(f(d\Gamma^{(n)}(\omega)))  \\
\Gamma_+(U)&=\bigtimes_{n=0}^{\infty} \Gamma^{(n)}(U).
\end{align*}
If $\omega:\cM\rightarrow \RR^p$ is measurable then we may identify $\omega$ as a tuple of strongly commuting selfadjoint operators. In this case, $f(d\Gamma^{(n)}(\omega))$ is multiplication by the map $f(\omega(k_1)+\cdots+\omega(k_n))$. The following lemma is obvious.
\begin{lem}\label{Lem:2ndQuantisedF_+}
The map $\Gamma_+(U)$ is an isometry on $\cF_+(\cH)$. Furthermore,
\begin{align*}
f(d\Gamma_+(\omega))\psi&=f(d\Gamma(\omega))\psi \,\,\,\,\, \forall \, \psi\in \cD(f(d\Gamma(\omega)))  \\
\Gamma_+(U)\psi&=\Gamma(U)\psi \,\,\,\,\,\,\,\,\,\,\,\,\,\,\,\,\, \forall \, \psi\in \cF_{b}(\cH)
\end{align*}
\end{lem}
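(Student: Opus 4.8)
The plan is to reduce all three assertions to a coordinatewise comparison with the definitions. The point is that by Lemma \ref{Lem:BasicTopologyExtSpace} the topology and the Borel $\sigma$-algebra of $\cF_+(\cH)$ are generated by the coordinate projections $\{P_n\}_{n\geq 0}$, so two elements of $\cF_+(\cH)$ that agree in every coordinate are equal, and the metric $d$ is defined purely in terms of the coordinate norms $\lVert P_n(\cdot)\lVert$. Everything will follow once I check that the relevant operators act sectorwise as the corresponding $n$-particle operators.

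First I would deal with $\Gamma_+(U)$. I would note that $\Gamma^{(n)}(U)=U^{\otimes n}\mid_{\cH^{\otimes_s n}}$ is a unitary operator on $\cH^{\otimes_s n}$ — this is the $n$-particle sector of Lemma \ref{Lem:SeconduantisedBetweenSPaces}, or simply the observation that $U^{\otimes n}$ preserves both the inner product and the subspace of symmetric tensors. Hence $\Gamma_+(U)$ does map $\cF_+(\cH)$ into itself, and $\lVert P_n\Gamma_+(U)\psi\lVert=\lVert U^{\otimes n}\psi^{(n)}\lVert=\lVert\psi^{(n)}\lVert=\lVert P_n\psi\lVert$ for every $n$ and every $\psi=(\psi^{(n)})\in\cF_+(\cH)$. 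Substituting these equalities into the explicit formula defining $d$ yields $d(\Gamma_+(U)\psi,\Gamma_+(U)\phi)=d(\psi,\phi)$, i.e. $\Gamma_+(U)$ is an isometry; here one must keep in mind that ``isometry'' refers to the metric $d$, as $\cF_+(\cH)$ is not a normed space.

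Next, the identity $\Gamma_+(U)\psi=\Gamma(U)\psi$ for $\psi\in\cF_b(\cH)\subset\cF_+(\cH)$ is immediate: by the definition of $\Gamma(U)$ the $n$-th coordinate of $\Gamma(U)\psi$ is $\Gamma^{(n)}(U)\psi^{(n)}$, which is exactly the $n$-th coordinate of $\Gamma_+(U)\psi$, so the two elements coincide. For the functional-calculus identity the one step deserving more than a sentence is the observation that each $d\Gamma(\omega_i)$ is reduced by the orthogonal family $\{\cH^{\otimes_s n}\}_{n\geq 0}$ with restriction $d\Gamma^{(n)}(\omega_i)$, which is precisely how $d\Gamma$ is built in (\ref{Sumdecomp}). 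By standard reducing-subspace theory for a finite tuple of strongly commuting selfadjoint operators, the joint spectral measure of $d\Gamma(\omega)=(d\Gamma(\omega_1),\dots,d\Gamma(\omega_p))$ is then reduced by each $\cH^{\otimes_s n}$ and restricts there to the joint spectral measure of $d\Gamma^{(n)}(\omega)$; consequently $f(d\Gamma(\omega))$ is reduced by each $\cH^{\otimes_s n}$ and acts there as $f(d\Gamma^{(n)}(\omega))$. Therefore $\psi\in\cD(f(d\Gamma(\omega)))$ forces $\psi^{(n)}\in\cD(f(d\Gamma^{(n)}(\omega)))$ for all $n$, so $\psi\in\cD(f(d\Gamma_+(\omega)))$, and both $f(d\Gamma(\omega))\psi$ and $f(d\Gamma_+(\omega))\psi$ have $n$-th coordinate $f(d\Gamma^{(n)}(\omega))\psi^{(n)}$; agreement of all coordinates finishes the argument.

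I do not expect a real obstacle: the lemma is, as stated, essentially a matter of unwinding definitions, and the only thing requiring genuine care is the reducing-subspace argument for the joint functional calculus in the previous paragraph — a standard fact already invoked elsewhere in this appendix. No estimates and no limiting arguments are needed.
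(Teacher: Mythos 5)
The paper declares this lemma obvious and gives no proof. Your coordinatewise verification is correct and is precisely the argument the authors leave implicit, including the one step that genuinely requires a word of justification, namely that the joint functional calculus of $d\Gamma(\omega)$ is reduced by each $\cH^{\otimes_s n}$ and restricts there to $f(d\Gamma^{(n)}(\omega))$, which follows from the direct-sum construction in \eqref{Sumdecomp}.
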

\noindent We will now consider a class of linear functionals on $\cF_+(\cH)$. For each $n \in \mathbb{N}$ we let $Q_n:\cF_{+}(\cH)\rightarrow \cN$ be the linear projection which preserves the first $n$ entries of $(\psi^{(k)})$ and projects the rest of them to 0. For $\psi\in \cN$ there is $K\in \mathbb{N}$ such that $Q_n\psi=\psi$ for $n\geq K$. For $\phi\in \cF_+(\cH)$ we may thus define the pairing
\begin{equation}\label{eq:extinner}
\langle \psi,\phi \rangle_+:=\langle \psi,Q_n\phi \rangle=\sum_{k=0}^{K} \langle \psi^{(k)}, \phi^{(k)} \rangle,
\end{equation}
where $n\geq K$.
\begin{lem}\label{Lem:SeperatingForm}
The map $Q_n$ is linear and continuous into $\cF_b(\cH)$. The pairing $\langle \cdot, \cdot \rangle_+$ is sesquilinear and continuous in the second entry. If $\phi\in \cF_{a,+}(\cH)$ then $\psi\mapsto \langle \psi,\phi \rangle_+$ is continuous with respect to $\lVert \cdot \lVert_{-a,+}$. Furthermore, the collection of maps $\{\langle \psi, \cdot \rangle_+\}_{\psi\in \cN}$ will separate points of $\cF_+(\cH)$.
\end{lem}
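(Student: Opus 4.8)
The plan is to verify the four assertions in order; each reduces to an elementary manipulation once one recalls (Lemma \ref{Lem:BasicTopologyExtSpace}) that the topology on $\cF_+(\cH)$ is generated by the coordinate projections $\{P_n\}$, and that for $\psi\in\cN$ there is $K\in\NN$ with $Q_n\psi=\psi$ for all $n\geq K$, so that $\langle\psi,\phi\rangle_+=\sum_{k=0}^{K}\langle\psi^{(k)},\phi^{(k)}\rangle$. First I would dispatch $Q_n$: linearity is immediate from the componentwise definition and its range sits in $\cN\subset\cF_b(\cH)$. For continuity into $(\cF_b(\cH),\lVert\cdot\lVert)$, I would note that $P_kQ_n=P_k$ for $k<n$ and $P_kQ_n=0$ for $k\geq n$, so $\lVert Q_n\psi\lVert^2=\sum_{k=0}^{n-1}\lVert P_k\psi\lVert^2$ is a finite sum of the seminorms generating the topology; equivalently one may run the sequential argument through Lemma \ref{Lem:BasicTopologyExtSpace}.

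Next I would turn to the pairing. Well-definedness is the observation that $\langle\psi,Q_n\phi\rangle$ is independent of $n\geq K$ because the omitted coordinates of $\psi$ vanish, and sesquilinearity follows at once. Continuity in the second variable, with $\psi\in\cN$ fixed (say $Q_K\psi=\psi$), holds because $\phi\mapsto\langle\psi,\phi\rangle_+=\sum_{k=0}^{K}\langle\psi^{(k)},P_k\phi\rangle$ is a finite linear combination of functionals continuous on $\cF_+(\cH)$, the projections $P_k$ being continuous. For the $\lVert\cdot\lVert_{-a,+}$-continuity of $\psi\mapsto\langle\psi,\phi\rangle_+$ on $\cN$ when $\phi\in\cF_{a,+}(\cH)$, I would estimate, for $\psi$ with $Q_K\psi=\psi$,
\begin{equation*}
|\langle\psi,\phi\rangle_+|\leq\sum_{k=0}^{K}\lVert\psi^{(k)}\lVert\,\lVert\phi^{(k)}\lVert
\leq\Bigl(\sum_{k=0}^{K}(k+1)^{-2a}\lVert\psi^{(k)}\lVert^2\Bigr)^{1/2}\Bigl(\sum_{k=0}^{K}(k+1)^{2a}\lVert\phi^{(k)}\lVert^2\Bigr)^{1/2}\leq\lVert\psi\lVert_{-a,+}\,\lVert\phi\lVert_{a,+},
\end{equation*}
where the middle step is Cauchy--Schwarz applied after inserting the factor $(k+1)^{-a}(k+1)^{a}=1$. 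This exhibits $\psi\mapsto\langle\psi,\phi\rangle_+$ as bounded, with operator norm at most $\lVert\phi\lVert_{a,+}$, relative to $\lVert\cdot\lVert_{-a,+}$.

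Finally, for the separation property I would argue directly: by linearity it suffices to show that if $\phi\in\cF_+(\cH)$ satisfies $\langle\psi,\phi\rangle_+=0$ for all $\psi\in\cN$ then $\phi=0$. Fixing $k\in\NN_0$ and $v\in\cH^{\otimes_s k}$, the vector $\psi\in\cN$ having $k$-th coordinate $v$ and all other coordinates zero yields $\langle v,\phi^{(k)}\rangle=0$; since $v$ and $k$ are arbitrary, every $\phi^{(k)}$ vanishes, so $\phi=0$, and applying this to a difference $\phi-\phi'$ shows $\{\langle\psi,\cdot\rangle_+\}_{\psi\in\cN}$ separates points. None of these steps is a genuine obstacle; the only point demanding mild care is the bookkeeping in the displayed estimate — one must split the weight as $(k+1)^{-a}(k+1)^{a}$ \emph{before} applying Cauchy--Schwarz so that the partial sums are controlled uniformly in the truncation index $K$.
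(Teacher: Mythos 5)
Your proof is correct and follows essentially the same route as the paper's: continuity of $Q_n$ via the coordinate projections, sesquilinearity by inspection, the $\lVert\cdot\lVert_{-a,+}$-bound by inserting $(k+1)^{-a}(k+1)^{a}$ and applying Cauchy--Schwarz, and separation by testing against vectors supported in a single sector. The only discrepancy is a harmless off-by-one in your description of $Q_n$ (the paper's proof shows $Q_n$ keeps coordinates $0$ through $n$, so $\lVert Q_n\psi\lVert^2=\sum_{k=0}^{n}\lVert P_k\psi\lVert^2$), which does not affect any of the arguments.
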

\begin{proof}
The pairing $\langle \cdot, \cdot \rangle_+$ is trivially sesquilinear. Let $\{ \psi_j \}_{j=1}^\infty$ converge to $\psi$ in $\cF_+(\cH)$. Then $\{\psi^{(k)}_j\}_{j=1}^\infty$ will converge to $\psi^{(k)}$ for all $k\in \NN$ so
\begin{equation*}
\lVert Q_n(\psi_j-\psi) \lVert^2=\sum_{k=0}^{n}\lVert \psi_j^{(k)}-\psi^{(k)} \lVert^2  
\end{equation*}
converges to 0. Hence $Q_n$ is continuous from $\cF_+(\cH)$ into $\cF_b(\cH)$. This also shows continuity in the second entry of $\langle\cdot,\cdot\rangle_+$. If $\phi\in \cF_{a,+}(\cH)$ and $\psi\in \cN$ we find some $K\in \mathbb{N}$ such that
\begin{equation*}
\lvert \langle \psi,\phi \rangle_+\lvert\leq \sum_{k=0}^{K} (k+1)^a\lVert \phi^{(k)} \lVert (k+1)^{-a}\lVert \psi^{(k)} \lVert\leq \lVert \phi\lVert_{a,+}\lVert \psi\lVert_{-a,+}
\end{equation*}
showing the desired continuity. Let $\phi=(\phi^{(k)})\in \cF_+(\cH)$ and assume that $\langle \psi, \phi \rangle_+=0$ for all $\psi\in \cN$. Then $\langle  \psi,\phi^{(k)} \rangle=0$ for all $\psi \in \cH^{\otimes_s k}$ and $k\in \NN_0$ showing $\phi=0$.
\end{proof}
\begin{cor}\label{Cor:seperatingDense}
Let $a\leq 0$, $\phi\in \cF_{a,+}(\cH)$, $\cD\subset \cN$ be dense in $\cF_b(\cH)$ and assume $\langle \psi, \phi \rangle_+=0$ for all $\psi\in \cD$. Then $\phi=0$.
\end{cor}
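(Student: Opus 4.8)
The plan is to reduce everything to the last assertion of Lemma~\ref{Lem:SeperatingForm}: a vector $\phi\in\cF_{+}(\cH)$ is zero as soon as $\langle\psi,\phi\rangle_{+}=0$ for all $\psi\in\cN$. Thus the task is to upgrade the hypothesis, which is available only for $\psi$ in the subspace $\cD$, to every $\psi\in\cN$.

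First I would set up the right ambient space. Since $a\le 0$, every finite particle vector lies in $\cD((N+1)^{-a})$, so $\cN\subset\cF_{-a,+}(\cH)$, and $\cN$ is $\lVert\cdot\rVert_{-a,+}$-dense in $\cF_{-a,+}(\cH)$ because $Q_{n}\psi\to\psi$ in $\lVert\cdot\rVert_{-a,+}$ for every $\psi\in\cF_{-a,+}(\cH)$. By Lemma~\ref{Lem:SeperatingForm} the linear functional $\Lambda:=\langle\,\cdot\,,\phi\rangle_{+}$ satisfies $\lvert\Lambda(\psi)\rvert\le\lVert\phi\rVert_{a,+}\lVert\psi\rVert_{-a,+}$ on $\cN$, hence it extends uniquely to a bounded linear functional on the Hilbert space $\cF_{-a,+}(\cH)$, which I again denote $\Lambda$.

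Next I would prove that $\cD$ is $\lVert\cdot\rVert_{-a,+}$-dense in $\cF_{-a,+}(\cH)$. Granting this, $\Lambda$ vanishes on a dense set and is continuous, so $\Lambda\equiv 0$; in particular $\langle\psi,\phi\rangle_{+}=0$ for all $\psi\in\cN$, and Lemma~\ref{Lem:SeperatingForm} yields $\phi=0$. To establish the density I would fix $\psi\in\cN$, say with $\psi=Q_{N}\psi$, and $\varepsilon>0$; choose $\psi_{j}\in\cD$ with $\psi_{j}\to\psi$ in $\cF_{b}(\cH)$, replace $\psi_{j}$ by its truncation $Q_{N}\psi_{j}\in\cN$ (which stays close to $\psi$ in $\cF_{b}(\cH)$), and note that on the finite-degree sector $\bigoplus_{k<N}\cH^{\otimes_s k}$ the norms $\lVert\cdot\rVert_{\cF_{b}}$ and $\lVert\cdot\rVert_{-a,+}$ are equivalent, since $1\le (k+1)^{-2a}\le (N+1)^{-2a}$ there; this converts the $\cF_{b}$-approximation into an $\lVert\cdot\rVert_{-a,+}$-approximation.

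The step I expect to be the real obstacle is precisely this last density claim: density of $\cD$ in $\cF_{b}(\cH)$ controls the approximants only in the weaker norm, while $\lVert\cdot\rVert_{-a,+}$ is sensitive to high-degree components, so the truncation of the approximating sequence must be carried out carefully so as to keep it both inside $\cN$ and close to the target for $\lVert\cdot\rVert_{-a,+}$. Once that is in place, the remainder is the soft package already recorded above: Cauchy--Schwarz against the weights $(k+1)^{a}$ and $(k+1)^{-a}$, the extension of $\Lambda$ by continuity, and the separating property of $\{\langle\psi,\cdot\rangle_{+}\}_{\psi\in\cN}$ from Lemma~\ref{Lem:SeperatingForm}.
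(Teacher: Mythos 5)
Your overall strategy matches the paper's: exploit the $\lVert\cdot\rVert_{-a,+}$-continuity of $\Lambda=\langle\cdot,\phi\rangle_+$ from Lemma~\ref{Lem:SeperatingForm}, upgrade the vanishing from $\cD$ to all of $\cN$, and invoke the separating property. But the density step, which you correctly identify as the crux, does not go through as you have set it up.

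You approximate $\psi=Q_N\psi\in\cN$ by $\psi_j\in\cD$ in $\cF_b(\cH)$ and then replace $\psi_j$ by $Q_N\psi_j$. The problem is that $Q_N\psi_j$ is in $\cN$ but in general \emph{not} in $\cD$: the hypothesis only gives a dense subset $\cD\subset\cN$, with no closure under truncation. So this produces an $\lVert\cdot\rVert_{-a,+}$-approximating sequence that lies outside $\cD$, and hence tells you nothing about the values of $\Lambda$. It does not establish that $\cD$ is dense in $\cF_{-a,+}(\cH)$. You also cannot compare $\Lambda(Q_N\psi_j)$ with $\Lambda(\psi_j)=0$: the difference is $-\sum_{k>N}\langle\psi_j^{(k)},\phi^{(k)}\rangle$, and since $a<0$ is permitted, $\phi\in\cF_{a,+}(\cH)$ need not lie in $\cF_b(\cH)$, so $\sum_{k>N}\lVert\phi^{(k)}\rVert^2$ can diverge and the $\cF_b$-convergence of $\psi_j$ gives no control over this tail.

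The paper closes the gap by a different and cleaner route: every vector in $\cN$, and hence every vector in $\cD$, is an analytic vector for the nonnegative selfadjoint operator $(N+1)^{-a}$, since a vector supported in particle sectors $k\le K$ satisfies $\lVert (N+1)^{-an}\psi\rVert\le (K+1)^{-an}\lVert\psi\rVert$. A dense set of analytic vectors for a selfadjoint operator is automatically a core (Nelson's analytic vector theorem applied to the restriction), so $\cD$ is $\lVert\cdot\rVert_{-a,+}$-dense in $\cF_{-a,+}(\cH)=\cD((N+1)^{-a})$ with no truncation argument at all. Substituting this observation for your density step would make your proof complete; the remaining soft steps you recorded are fine.
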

\begin{proof}
Note $\cD$ consists of elements which are analytic for $(N+1)^{-a}$ so $\cD$ is a core for $(N+1)^{-a}$. Let $\psi$ in $\cN$ and pick $\{\psi_n\}_{n=1}^\infty\subset \cD $ converging to $\psi$ in $(N+1)^{-a}$-norm. Using Lemma \ref{Lem:SeperatingForm} we see $\langle \psi, \phi \rangle_+=\lim_{n\rightarrow \infty}\langle \psi_n, \phi \rangle_+=0$ and thus $\phi=0$ by Lemma \ref{Lem:SeperatingForm}.
\end{proof}

\begin{lem}\label{Lem:FormalAdjoints}
Let $\psi\in \cN$, $\phi\in \cF_+(\cH)$, $v\in \cH$ and $U$ be a unitary operator on $\cH$. Then
\begin{align*}
\langle a^{\dagger}(v)\psi,\phi \rangle_+&=\langle \psi,a_+(v) \phi \rangle_+,\,\,\,\,\,\,\,\,\,\,\,\,\,\,\, \langle a(v)\psi,\phi \rangle_+=\langle \psi,a_+^{\dagger}(v) \phi \rangle_+,\\ \langle \varphi(v)\psi,\phi \rangle_+&=\langle \psi,\varphi_+(v) \phi \rangle_+,\,\,\,\,\,\,\,\,\,\,\,\,\,\,\, \langle \Gamma(U)\psi,\phi \rangle_+=\langle \psi,\Gamma_+(U^*)\phi \rangle_+.
\end{align*}
Let $\omega=(\omega_1,\dots,\omega_p)$ be a tuple of commuting selfadjoint operators, $f:\RR^p\rightarrow \CC$ be measurable, $\psi \in \cN\cap \cD( f(d\Gamma(\omega))  )$ and $\phi\in \cD(\overline{f}(d\Gamma_+(\omega))  )$. Then we have
\begin{equation*}
\langle f(d\Gamma(\omega))\psi,\phi \rangle_+=\langle \psi,\overline{f}(d\Gamma_+(\omega))\phi \rangle_+.
\end{equation*}
\end{lem}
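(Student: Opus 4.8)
The plan is to reduce the identity to the individual $n$-particle sectors, where $f(d\Gamma^{(n)}(\omega))$ is a genuine Borel function of a tuple of strongly commuting selfadjoint operators, and then to reassemble everything from the definition of the pairing $\langle\cdot,\cdot\rangle_+$. The key point is that both sides are really finite sums, because $\psi\in\cN$.

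First I would write $\psi=(\psi^{(k)})$ and fix $K\in\NN$ with $\psi^{(k)}=0$ for $k>K$. The hypothesis $\psi\in\cD(f(d\Gamma(\omega)))$ forces $\psi^{(k)}\in\cD(f(d\Gamma^{(k)}(\omega)))$ for every $k$, and by the block-diagonal form of $d\Gamma(\omega)$ recorded in equation (\ref{Sumdecomp}) — together with the corresponding block-diagonal form of its Borel functional calculus — the operator $f(d\Gamma(\omega))$ restricts on $\cH^{\otimes_s k}$ to $f(d\Gamma^{(k)}(\omega))$. Hence $f(d\Gamma(\omega))\psi\in\cN$, with $k$-th component $f(d\Gamma^{(k)}(\omega))\psi^{(k)}$ for $k\le K$ and $0$ otherwise; in particular the left-hand pairing is well defined, and by equation (\ref{eq:extinner}) (with $n=K$) it equals $\sum_{k=0}^{K}\langle f(d\Gamma^{(k)}(\omega))\psi^{(k)},\phi^{(k)}\rangle$.

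Next, for each fixed $k$ I would invoke the fact — noted in the paragraph introducing $d\Gamma_+(\omega)$ — that $d\Gamma^{(k)}(\omega)=(d\Gamma^{(k)}(\omega_1),\dots,d\Gamma^{(k)}(\omega_p))$ is a tuple of strongly commuting selfadjoint operators on $\cH^{\otimes_s k}$. Writing $P_k$ for its joint spectral measure on $\RR^p$, the multivariate spectral theorem gives, for $u\in\cD(f(d\Gamma^{(k)}(\omega)))$ and $w\in\cD(\overline{f}(d\Gamma^{(k)}(\omega)))$,
\begin{equation*}
\langle f(d\Gamma^{(k)}(\omega))u,w\rangle=\int_{\RR^p}f(\lambda)\,d\langle P_k(\lambda)u,w\rangle=\langle u,\overline{f}(d\Gamma^{(k)}(\omega))w\rangle,
\end{equation*}
i.e.\ $f(d\Gamma^{(k)}(\omega))^*=\overline{f}(d\Gamma^{(k)}(\omega))$. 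Since $\phi\in\cD(\overline{f}(d\Gamma_+(\omega)))$ means exactly that $\phi^{(k)}\in\cD(\overline{f}(d\Gamma^{(k)}(\omega)))$ for all $k$, I may apply this with $u=\psi^{(k)}$ and $w=\phi^{(k)}$. Substituting into the sum above and using $\psi^{(k)}=0$ for $k>K$ together with the definition of $\overline{f}(d\Gamma_+(\omega))$ as the coordinate-wise operator $\bigtimes_{k=0}^{\infty}\overline{f}(d\Gamma^{(k)}(\omega))$,
\begin{align*}
\langle f(d\Gamma(\omega))\psi,\phi\rangle_+&=\sum_{k=0}^{K}\langle\psi^{(k)},\overline{f}(d\Gamma^{(k)}(\omega))\phi^{(k)}\rangle\\
&=\sum_{k=0}^{\infty}\langle\psi^{(k)},(\overline{f}(d\Gamma_+(\omega))\phi)^{(k)}\rangle=\langle\psi,\overline{f}(d\Gamma_+(\omega))\phi\rangle_+,
\end{align*}
which is the assertion. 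I do not expect a substantial obstacle: unlike the statements for $a^\sharp$, $\varphi$ and $\Gamma$ in this lemma, $f(d\Gamma_+(\omega))$ need not be continuous on $\cF_+(\cH)$, so the ``check on the total set $\cA$ of equation (\ref{Defn:A}) and pass to a limit'' device is unavailable; the only thing to watch is that $\psi\in\cN$ turns both sides into honest finite sums, which is exactly what legitimizes the reduction to sectors — after that the sole analytic input is the elementary adjoint identity for the joint functional calculus on each $\cH^{\otimes_s k}$.
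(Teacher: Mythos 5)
Your treatment of the functional-calculus identity is correct, and it is essentially the paper's argument: use $\psi\in\cN$ to make the pairing a finite sum, then invoke $f(d\Gamma^{(k)}(\omega))^*=\overline{f}(d\Gamma^{(k)}(\omega))$ sector by sector. The paper compresses exactly the same computation by noting $Q_K\phi\in\cD(\overline{f}(d\Gamma(\omega)))$ and applying the Fock-space adjunction once, but there is no real difference in substance.

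However, as written your proposal proves only one of the five asserted identities. The closing remark gestures at the other four (for $a$, $a^\dagger$, $\varphi$, $\Gamma$) being handled by ``check on the total set $\cA$ and pass to a limit,'' and that route does close: both $\phi\mapsto\langle a^\dagger(v)\psi,\phi\rangle_+$ and $\phi\mapsto\langle\psi,a_+(v)\phi\rangle_+$ are continuous linear functionals on $\cF_+(\cH)$ (by Lemma \ref{Lem:SeperatingForm} and Lemma \ref{Lem:AnihilCreaFieldOnF_+}), so equality on $\cA$ extends by totality, and likewise for $a$, $\varphi$, $\Gamma$. But you have not actually carried this out, so the lemma is not yet proved. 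You should also be aware that the paper avoids any limiting argument here: it cuts off $\phi$ to $Q_K\phi\in\cN$, applies the ordinary Fock-space adjunction $\langle a^\dagger(v)\psi,Q_K\phi\rangle=\langle\psi,a(v)Q_K\phi\rangle$, and then commutes $Q$'s past the $+$-operators (e.g.\ $a(v)Q_{K+1}\phi=Q_K a_+(v)\phi$), yielding the identity by a pure bookkeeping computation that is shorter and more elementary than the density argument you propose. Either approach is valid, but the four identities need to be written out; at the moment there is a genuine gap in coverage, even though no gap in the one argument you did give.
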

\begin{proof}
Since $\psi\in \cN$ we may pick $K$ such that $\psi^{(n)}=0$ for all $n\geq K$. We calculate
\begin{align*}
\langle a^{\dagger}(v)\psi,\phi \rangle_+&=\langle \psi,a(v)Q_{K+1}\phi \rangle=\langle \psi,Q_{K}a_+(v)\phi \rangle=\langle \psi,a_+(v)\phi \rangle_+\\
\langle a(v)\psi,\phi \rangle_+&=\langle \psi,a^{\dagger}(v)Q_{K-1}\phi \rangle=\langle \psi,Q_{K}a^{\dagger}_+(v)\phi \rangle=\langle \psi,a^{\dagger}_+(v)\phi \rangle_+\\
\langle \varphi(v)\psi,\phi \rangle_+&=\langle \psi,a_+(v)\phi \rangle_++\langle \psi,a^{\dagger}_+(v)\phi \rangle_+=\langle \psi,\varphi_+(v)\phi \rangle_+
\\
\langle \Gamma(U)\psi,\phi \rangle_+&=\langle \psi,\Gamma(U^*)Q_K\phi \rangle=\langle \psi,Q_K\Gamma_+(U^*)\phi \rangle=\langle \psi,\Gamma_+(U^*)\phi \rangle_+
\end{align*}
Assume now that $\psi \in \cN\cap \cD( f(d\Gamma(\omega))  )$ and $\phi\in \cD( \overline{f}(d\Gamma_+(\omega))  )$. Then $Q_K\phi \in \cD(\overline{f}(d\Gamma(\omega))) $ and
\begin{equation*}
\langle f(d\Gamma(\omega))\psi,\phi \rangle_+=\langle \psi,\overline{f}(d\Gamma(\omega))Q_K\phi \rangle=\langle \psi,Q_K\overline{f}(d\Gamma_+(\omega))\phi \rangle=\langle \psi,\overline{f}(d\Gamma_+(\omega))\phi \rangle_+.
\end{equation*}
This finishes the proof.
\end{proof}
\noindent We now consider functions with values in $\cF_+(\cH)$. Let $(X,\cX,\nu)$ be a $\sigma$-finite measure space. Define the quotient
\begin{equation*}
\cM(X,\cX,\nu)=\{f:X\rightarrow  \cF_+(\cH)\mid f \,\, \text{is}\,\, \cX-\cB(\cF_+(\cH))\,\, \text{mesurable}  \}/\sim,
\end{equation*}
where $f\sim g\iff f=g$ almost everywhere. We are interested in the subspace
\begin{equation*}
\cC(X,\cX,\nu)=\{ f\in \cM(X,\cX,\nu)\mid x\mapsto P_nf(x)\in L^2(X,\cX,\nu,\cH^{\otimes_s n}) \,\, \forall \, n\in \NN_0 \}.
\end{equation*}
Lemma \ref{Lem:DenseSetMeasNorm} shows that $x\mapsto \lVert f(x)\lVert_{a,+}$ is measurable for all $f\in \cC(X,\cX,\nu)$ and so
\begin{equation*}
\int_{X}\lVert f(x)\lVert_{a,+}^2 d\nu(x)
\end{equation*}
always makes sense as an element in $[0,\infty]$. If $a=0$ then the integral is finite if and only if $f\in L^2(X,\cX,\nu,\cF_{b}(\cH))$. We write $f\in \cC(X,\cX,\nu)$ as $(f^{(n)})$ where $f^{(n)}=x\mapsto P_nf(x)$ is an element in the Hilbert space $L^2(X,\cX,\nu,\cH^{\otimes_s n})$. For $f,g\in \cC(X,\cX,\nu)$ we define
\begin{equation*}
d(f,g)=\sum_{n=0}^{\infty} \frac{1}{2^n}\frac{\lVert f^{(n)}-g^{(n)} \lVert  }{1+\lVert f^{(n)}-g^{(n)} \lVert  }.
\end{equation*} 
and $\cP_nf=f^{(n)}$ for $n\in \NN_0$. The following result is obvious.
\begin{lem}\label{Lem:TheFundamentalMeasureSpace}
$d$ is a complete metric on $\cC(X,\cX,\nu)$ such that $\cC(X,\cX,\nu)$ becomes separable topological vector space and the topology is generated by $\{ \cP_n \}_{n=0}^\infty$. Furthermore, $L^2(X,\cX,\nu,\cF_{b}(\cH))\subset \cC(X,\cX,\nu)$ and convergence in $L^2(X,\cX,\nu,\cF_{b}(\cH))$ implies convergence in $\cC(X,\cX,\nu)$. Also, the map $x\mapsto \lVert f(x)\lVert_{a,+}$ is measurable for any $f$ in $\cC(X,\cX,\nu)$ and $a\in \RR$.
\end{lem}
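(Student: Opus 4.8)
The plan is to mirror the proof of Lemma \ref{Lem:BasicTopologyExtSpace}, since $\cC(X,\cX,\nu)$ is assembled from the Hilbert spaces $L^2(X,\cX,\nu,\cH^{\otimes_s n})$ in exactly the way $\cF_+(\cH)$ is assembled from the spaces $\cH^{\otimes_s n}$. First I would check that $d$ is a metric: symmetry and finiteness are immediate, the identity $d(f,g)=0$ forces $f^{(n)}=g^{(n)}$ in $L^2(X,\cX,\nu,\cH^{\otimes_s n})$ for every $n$, and since the projections $\{P_n\}_{n\ge 0}$ separate points of $\cF_+(\cH)$ by Lemma \ref{Lem:BasicTopologyExtSpace} this yields $f=g$ $\nu$-almost everywhere; the triangle inequality follows by applying the subadditivity of $t\mapsto t/(1+t)$ on $[0,\infty)$ termwise.

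For completeness, given a $d$-Cauchy sequence $\{f_j\}_{j=1}^\infty$ I would note that each coordinate sequence $\{\cP_n f_j\}_{j=1}^\infty$ is Cauchy in $L^2(X,\cX,\nu,\cH^{\otimes_s n})$, hence converges to some $g^{(n)}$ there. After choosing representatives and deleting the countable union of the exceptional $\nu$-null sets, I can define $f(x)=(g^{(n)}(x))_{n\ge 0}$, which lies in $\cF_+(\cH)$ because $\cF_+(\cH)$ is the full product with no summability constraint. By Lemma \ref{Lem:BasicTopologyExtSpace} the Borel $\sigma$-algebra of $\cF_+(\cH)$ is generated by the $P_n$, so coordinatewise measurability of $f$ gives $\cX$-$\cB(\cF_+(\cH))$ measurability; thus $f\in\cC(X,\cX,\nu)$, and dominated convergence applied to the series defining $d$ gives $d(f_j,f)\to 0$.

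The topological-vector-space statement and the assertion that the topology is generated by $\{\cP_n\}_{n=0}^\infty$ are the standard facts about a countable product of topological vector spaces carrying the product metric: each $\cP_n$ is linear and $d$-continuous, addition and scalar multiplication are continuous because they are continuous in every coordinate, and a basic $d$-neighbourhood of the origin is controlled by finitely many of the $\cP_n$, so $d$ induces precisely the initial topology of the family $\{\cP_n\}_{n=0}^\infty$. Separability follows because each $L^2(X,\cX,\nu,\cH^{\otimes_s n})$ is separable and a countable product of separable metric spaces is separable.

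For the inclusion $L^2(X,\cX,\nu,\cF_b(\cH))\subset\cC(X,\cX,\nu)$, the embedding $\cF_b(\cH)\hookrightarrow\cF_+(\cH)$ is continuous by Lemma \ref{Lem:BasicTopologyExtSpace}, so any $f\in L^2(X,\cX,\nu,\cF_b(\cH))$ is $\cX$-$\cB(\cF_+(\cH))$ measurable, and $\lVert P_n f(x)\lVert\le\lVert f(x)\lVert$ forces $\cP_n f\in L^2(X,\cX,\nu,\cH^{\otimes_s n})$; the same inequality gives $\lVert\cP_n f_j-\cP_n f\lVert\le\lVert f_j-f\lVert$ in the respective $L^2$ norms, so each term of $d(f_j,f)$ tends to $0$ and is dominated by $2^{-n}$, whence $d(f_j,f)\to 0$ by dominated convergence. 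Finally, measurability of $x\mapsto\lVert f(x)\lVert_{a,+}$ for $f\in\cC(X,\cX,\nu)$ is just the composition of the measurable map $f\colon X\to\cF_+(\cH)$ with the measurable function $\lVert\cdot\lVert_{a,+}\colon\cF_+(\cH)\to[0,\infty]$ from Lemma \ref{Lem:DenseSetMeasNorm}. I do not expect a genuine obstacle here — this is a bookkeeping lemma, which is why it is stated as obvious; the only step needing a moment of care is the completeness argument, where one must manufacture a single measurable $\cF_+(\cH)$-valued function out of a countable family of almost-everywhere-defined $L^2$ coordinate limits, and this is exactly what the generation of $\cB(\cF_+(\cH))$ by the projections $\{P_n\}$ provides, together with the observation that discarding one $\nu$-null set handles all coordinates simultaneously.
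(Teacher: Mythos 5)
The paper declares this lemma ``obvious'' and gives no proof, so there is nothing to compare against; your argument is a correct and complete fill-in of the routine details, following exactly the Fréchet-space/product-metric pattern that Lemma \ref{Lem:BasicTopologyExtSpace} rests on. One small remark on the completeness step: there are no ``exceptional $\nu$-null sets'' to delete. Once you choose a representative of each $L^2$-limit $g^{(n)}$, the map $x\mapsto (g^{(n)}(x))_{n\geq 0}$ is defined for every $x$, and its $\cX$-$\cB(\cF_+(\cH))$ measurability follows immediately from coordinatewise measurability because $\cB(\cF_+(\cH))$ is generated by the $P_n$; you do not need pointwise convergence of $\cP_n f_j$ to $g^{(n)}$ (only a subsequence would give that), and you do not use it --- what drives $d(f_j,f)\to 0$ is the $L^2$-convergence in each coordinate plus dominated convergence for the series, exactly as you say. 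Everything else (metric axioms, separability as a countable product of separable spaces, the topology being the initial topology of the $\cP_n$, the inclusion of $L^2(X,\cX,\nu,\cF_b(\cH))$ via $\lVert P_n\psi\rVert\leq\lVert\psi\rVert$, and measurability of $x\mapsto\lVert f(x)\rVert_{a,+}$ as a composition with the measurable map from Lemma \ref{Lem:DenseSetMeasNorm}) is correct as written.
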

\noindent In the last part of this section we will need some results from the theory of direct integrals. Readers are assumed to have the basic knowledge found in \cite[page 280-286]{RS4}. Let $\ell \in \NN$, $v\in \cH$, $U$ be unitary on $\cH$, $\omega=(\omega_1,\dots,\omega_p)$ a tuple of selfadjoint multiplication operators on $\cH$, $m:\cM^\ell \rightarrow \RR^p$ measurable and $g:\RR^p\rightarrow \RR$ a measurable map. Then we wish to define operators on $\cC(\cM^\ell,\cE^{\otimes \ell},\mu^{\otimes \ell}  )$ by
\begin{align*}
(a^{\dagger}_{\oplus,\ell}(v)f)(k)&=a^{\dagger}_+(v)f(k)\\
(a_{\oplus,\ell}(v)f)(k)&=a_+(v)f(k)\\
(\varphi_{\oplus,\ell}(v)f)(k)&=\varphi_+(v)f(k)\\
(\Gamma_{\oplus,\ell}(U)f)(k)&=\Gamma_+(U)f(k)\\
(g(d\Gamma_{\oplus,\ell}(\omega)+m)f)(k)& =g(d\Gamma_+(\omega)+m(k))f(k).
\end{align*} 
We further define $\cC(\cM^0,\cE^{\otimes 0},\mu^{\otimes 0}  )=\cF_+(\cH)$ along with $a^{\dagger}_{\oplus,0}(v)=a^{\dagger}_{+}(v)$, $a_{\oplus,0}(v)=a_{+}(v)$, $\varphi_{\oplus,0}(v)=\varphi_{+}(v)$ and $\Gamma_{\oplus,0}=\Gamma_{+}(U)$.
\begin{lem}\label{Lem:LiftToIntegralOperators}
The following holds
\begin{enumerate}
	\item [\textup{(1)}] The operators $a^{\dagger}_{\oplus,\ell}(v)$, $a_{\oplus,\ell}(v)$, $\varphi_{\oplus,\ell}(v)$ and $\Gamma_{\oplus,\ell}(U)$ are well defined and continuous for all $\ell\in \NN_0$. 
	
	\item [\textup{(2)}] Let $f\in \cC(\cM^\ell,\cE^{\otimes \ell},\mu^{\otimes \ell}  )$. If $f(k)\in \cD(g(d\Gamma_+(\omega)+m(k)))$ for all $k\in \cM^\ell$ then $k\mapsto g(d\Gamma_+(\omega)+m(k))f(k)$ is measurable. Therefore we may define the domain of  $g(d\Gamma_{\oplus,\ell}(\omega)+m)$ to be
	\begin{align*}
	\biggl \{ f\in \cC(\cM^\ell,\cE^{\otimes \ell},\mu^{\otimes \ell} )&\biggl \lvert  f(k)\in \cD(g(d\Gamma_+(\omega)+m(k)))\,\, \text{for almost every} \,\, k\in \cM^\ell,\\& \int_{\cM^\ell} \lVert g(d\Gamma^{(n)}(\omega)+m(k))(\cP_nf)(k) \lVert^2 d\mu^{\otimes \ell}(k)<\infty\,\, \forall n\in \NN_0  \biggl \}.
	\end{align*}

\end{enumerate}
\end{lem}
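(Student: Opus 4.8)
The plan is to treat the two parts separately, in each case reducing matters to the coordinate maps that generate the relevant topology and $\sigma$-algebra, so that it only remains to invoke the continuity of $a_+(v)$, $a_+^\dagger(v)$, $\varphi_+(v)$ and $\Gamma_+(U)$ on $\cF_+(\cH)$ already established in Lemmas \ref{Lem:AnihilCreaFieldOnF_+} and \ref{Lem:2ndQuantisedF_+}.

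For part (1) I would first recall from Lemma \ref{Lem:TheFundamentalMeasureSpace} that the topology on $\cC(\cM^\ell,\cE^{\otimes\ell},\mu^{\otimes\ell})$ is generated by the projections $\cP_n$ into $L^2(\cM^\ell,\cE^{\otimes\ell},\mu^{\otimes\ell},\cH^{\otimes_s n})$, so that a linear map on this space is continuous as soon as each of its $\cP_n$-components is. Since $a_+^\dagger(v)$, $a_+(v)$, $\varphi_+(v)$ and $\Gamma_+(U)$ are continuous on $\cF_+(\cH)$, hence Borel, composing them with a measurable $f$ again gives a measurable $\cF_+(\cH)$-valued map, so $a_{\oplus,\ell}^\dagger(v)f$ and the others at least lie in $\cM(\cM^\ell,\cE^{\otimes\ell},\mu^{\otimes\ell})$. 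To see that they actually land in $\cC$ and depend continuously on $f$, I would read off from the definitions of the extended operators that $P_n a_+^\dagger(v)=a^\dagger_{n-1}(v)P_{n-1}$ for $n\geq 1$ (and $P_0a_+^\dagger(v)=0$), $P_na_+(v)=a_n(v)P_{n+1}$, and $P_n\Gamma_+(U)=\Gamma^{(n)}(U)P_n$; hence $\cP_n$ applied to $a^\dagger_{\oplus,\ell}(v)f$, $a_{\oplus,\ell}(v)f$ or $\Gamma_{\oplus,\ell}(U)f$ is a fixed bounded operator, of norm $\sqrt{n}\,\lVert v\lVert$, $\sqrt{n+1}\,\lVert v\lVert$ and $1$ respectively, applied pointwise in $k$ to $\cP_{n-1}f$, $\cP_{n+1}f$ or $\cP_nf$. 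Since applying a bounded operator $A\colon\cK_1\to\cK_2$ pointwise is a bounded map $L^2(\cM^\ell,\cE^{\otimes\ell},\mu^{\otimes\ell},\cK_1)\to L^2(\cM^\ell,\cE^{\otimes\ell},\mu^{\otimes\ell},\cK_2)$ of the same norm, this both keeps us in $\cC$ and gives continuity; finally $\varphi_{\oplus,\ell}(v)=a_{\oplus,\ell}(v)+a_{\oplus,\ell}^\dagger(v)$ is a sum of two such maps, and the case $\ell=0$ is precisely the two cited lemmas.

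For part (2), since the Borel $\sigma$-algebra of $\cF_+(\cH)$ is generated by the $P_n$ (Lemma \ref{Lem:BasicTopologyExtSpace}), it suffices to check that for each $n$ the map $k\mapsto g(d\Gamma^{(n)}(\omega)+m(k))(\cP_nf)(k)$ is measurable from $\cM^\ell$ into $\cH^{\otimes_s n}$. Because $\omega$ is a tuple of multiplication operators, $g(d\Gamma^{(n)}(\omega)+m(k))$ is multiplication on $\cH^{\otimes_s n}=L^2_{sym}(\cM^n,\cE^{\otimes n},\mu^{\otimes n})$ by $(k_1,\dots,k_n)\mapsto g\bigl(\omega(k_1)+\dots+\omega(k_n)+m(k)\bigr)$. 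I would then invoke the canonical isometry $L^2(\cM^\ell,\cE^{\otimes\ell},\mu^{\otimes\ell},\cH^{\otimes_s n})\cong L^2_{sym}(\cM^\ell\times\cM^n,\cE^{\otimes\ell}\otimes\cE^{\otimes n},\mu^{\otimes\ell}\otimes\mu^{\otimes n})$, valid by $\sigma$-finiteness and separability, to choose a jointly measurable representative $F$ of $\cP_nf$. The product $(k,k_1,\dots,k_n)\mapsto g\bigl(\omega(k_1)+\dots+\omega(k_n)+m(k)\bigr)F(k,k_1,\dots,k_n)$ is jointly measurable, symmetric in $(k_1,\dots,k_n)$, and, by the hypothesis $f(k)\in\cD(g(d\Gamma_+(\omega)+m(k)))$, square integrable over $\cM^n$ for almost every $k$; reading the isometry the other way, it represents the sought measurable map into $\cH^{\otimes_s n}$. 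In particular $k\mapsto\lVert g(d\Gamma^{(n)}(\omega)+m(k))(\cP_nf)(k)\lVert^2$ is measurable, so the integrals in the proposed domain make sense and $g(d\Gamma_{\oplus,\ell}(\omega)+m)$ is well defined.

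The one point requiring genuine care is the identification of vector-valued $L^2$-functions with jointly measurable scalar functions used in part (2); everything else is bookkeeping with the coordinate projections and the continuity already available on $\cF_+(\cH)$. I therefore expect that identification — standard, but resting on $\sigma$-finiteness of $\mu^{\otimes\ell}$ and $\mu^{\otimes n}$ together with separability of $L^2(\cM^n,\cE^{\otimes n},\mu^{\otimes n})$ — to be the main, if modest, obstacle.
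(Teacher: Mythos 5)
Your proof of part (1) is essentially the paper's: both decompose through the coordinate projections $\cP_n$, observe that $\cP_n\circ(\text{the operator in question})$ is a fixed bounded operator ($a_n(v)$, $a^\dagger_{n-1}(v)$, $\Gamma^{(n)}(U)$) applied pointwise in $k$ — the paper writes this as a constant-fiber direct integral $\int^\oplus_{\cM^\ell}(\cdot)\,d\mu^{\otimes\ell}$, you phrase it as ``a bounded operator applied pointwise'' — and then appeal to Lemma \ref{Lem:TheFundamentalMeasureSpace}. Same argument, different words.

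For part (2) you take a genuinely different route. The paper notes that $k\mapsto d\Gamma^{(n)}(\omega_i)+m_i(k)$ is a strongly resolvent measurable family of selfadjoint operators and then cites the abstract direct-integral machinery (\cite[Theorem XIII.85]{RS4}) to conclude measurability of $k\mapsto g(d\Gamma^{(n)}(\omega)+m(k))(\cP_nf)(k)$. You instead exploit the fact that $\omega$ is a tuple of multiplication operators, so $g(d\Gamma^{(n)}(\omega)+m(k))$ is multiplication by the scalar $g(\omega(k_1)+\cdots+\omega(k_n)+m(k))$, choose a jointly measurable representative of $\cP_nf$, and reduce to measurability of a product of jointly measurable scalar functions. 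Both are correct; yours is more concrete and avoids the Reed--Simon citation, at the cost of leaning on the specific multiplication-operator structure (which the paper does assume in the surrounding setup). One small imprecision: ``reading the isometry the other way'' suggests you land back in $L^2(\cM^\ell\times\cM^n)$, but the product $G(k,\mathbf{k})=g(\cdots)F(k,\mathbf{k})$ need not be jointly square integrable — only $G(k,\cdot)\in L^2(\cM^n)$ for each $k$. What you actually need is the (standard, Pettis-type) fact that a jointly measurable scalar function that is fibrewise $L^2$ defines a Borel measurable Hilbert-space-valued map; with that substitution the argument closes.
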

\begin{proof}
Let $h\in \cC(\cM^\ell,\cE^{\otimes \ell},\mu^{\otimes \ell}  )$. Then
\begin{align*}
k\mapsto P_na_+(v)h(k)&=k\mapsto \left(\int_{\cM^\ell}^{\oplus} a_n(v) d\mu^{\otimes \ell} \cP_{n+1}h\right )(k)  \\ k\mapsto  P_na^\dagger_+(v)h(k)&= k\mapsto\left(\int_{\cM^\ell}^{\oplus}a^\dagger_{n-1}(v)d\mu^{\otimes \ell} \cP_{n-1}h\right)(k)\,\,\,\,\,\,\,\,\,\,\, n\geq 1\\ k\mapsto P_0a^\dagger_+(v)h(k)&=k\mapsto 0\\
k\mapsto P_n\Gamma_+(U)h(k)&=k\mapsto\left(\int_{\cM^\ell}^{\oplus}\Gamma^{(n)}(U)d\mu^{\otimes \ell} \cP_{n}h\right)(k).
\end{align*}
Part (1) is now a consequence of Lemma \ref{Lem:TheFundamentalMeasureSpace}. For the next claim we note
\begin{equation*}
k\mapsto P_ng(d\Gamma_+(\omega)+m(k))f(k)=g(d\Gamma^{(n)}(\omega)+m(k))(\cP_nf)(k).
\end{equation*}
$d\Gamma^{(n)}(\omega_i)+m_i(k)$ is strongly resolvent measurable for each $i\in \{1,\dots,p\}$ which implies $g(d\Gamma^{(n)}(\omega)+m(k))$ is strongly resolvent measurable so the conclusion follows from standard theorems (See e.g \cite[Theorem XIII.85]{RS4}).
\end{proof}
We will now introduce the pointwise annihilation operators. Let $\psi=(\psi^{(n)}) \in \cF_+(\cH)$. Note that $\psi^{(n+\ell)}$ is symmetric and square integrable for all $n\in \NN_0$ and $\ell\in \NN$, so we may pick a representative such that 
\begin{align*}
g(k_1,\dots,k_\ell)=\psi^{(n+\ell)}(k_1,\dots,k_\ell,\cdot,\dots,\cdot)
\end{align*}
is symmetric in $(k_1,\dots,k_\ell)$ and takes values in $\cH^{\otimes_s n}$. It is easy to see, that the choice of representative only changes $g$ up to a $\mu^{\otimes \ell}$ zeroset and that $g\in L^2(\cM^{\ell},\cE^{\otimes \ell},\mu^{\otimes \ell},\cH^{\otimes_s n})$. Therefore we may define $A_\ell \psi\in \cC(\cM^\ell,\cE^{\otimes \ell},\mu^{\otimes \ell}  )$ by the formula
\begin{align*}
\cP_n(A_\ell \psi)=\sqrt{(n+\ell)\cdots (n+1)}\psi^{(n+\ell)}(k_1,\dots,k_\ell,\cdot,\dots,\cdot)
\end{align*}
Note $A_\ell$ is a map from $\cF_+(\cH)$ to $\cC(\cM^\ell,\cE^{\otimes \ell},\mu^{\otimes \ell}  )$ and
\begin{equation*}
\lVert \cP_n(A_\ell\psi)-\cP_n(A_\ell\phi)\lVert=\sqrt{(n+\ell)(n+\ell-1)\cdots(n+1)}\lVert \psi^{(n+\ell)}-\phi^{(n+\ell)}\lVert 
\end{equation*}
for $\psi,\phi\in \cF_+(\cH)$. So $A_\ell$ is continuous from $\cF_+(\cH)$ into $\cC(\cM^\ell,\cE^{\otimes \ell},\mu^{\otimes \ell}  )$. One observes that $A_\ell\psi\in L^2(\cM^{\ell},\cE^{\otimes \ell},\mu^{\otimes \ell},\cF_{a,+}(\cH))$ if and only if
\begin{align*}
 \infty &> \int_{\cM^\ell} \lVert A_\ell\psi(k_1,\dots,k_\ell) \lVert_{a,+}^2 d\mu^{\otimes \ell}(k_1,\dots,k_\ell)\\&= \sum_{n=0}^{\infty}(n+1)^{2a}(n+\ell)(n+\ell-1)\cdots(n+1)\lVert \psi^{(n+\ell)} \lVert^2\\&\iff 
 \sum_{n=0}^{\infty}(n+\ell)^{2a+\ell}\lVert \psi^{(n+\ell)}\lVert^2<\infty,
\end{align*}
which is equivalent to $\psi\in \cD(N^{\frac{\ell}{2}+a})$ if $\frac{\ell}{2}+a\geq 0$. If $\psi,\phi\in \cD(N^{\frac{\ell}{2}})$ we apply the above calculations with $a=0$ to obtain
\begin{align}\label{eq:CentralNumerEstimate}
\lVert A_\ell\psi-A_\ell\phi\lVert^2  &=\sum_{n=0}^{\infty}(n+\ell)(n+\ell-1)\cdots(n+1)\lVert \psi^{(n+\ell)}-\phi^{(n+\ell)}\lVert^2\\&\leq  \lVert N^{\frac{\ell}{2}}(\psi-\phi)\lVert.\nonumber
\end{align}
We summarise:
\begin{lem}\label{Lem:NumberEstimates}
$A_\ell$ is a continuous linear map from $\cF_+(\cH)$ to $\cC(\cM^\ell,\cE^{\otimes \ell},\mu^{\otimes \ell}  )$ and $A_\ell\psi\in L^2(\cM^{\ell},\cE^{\otimes \ell},\mu^{\otimes \ell},\cF_{-\frac{\ell}{2},+}(\cH))$ if $\psi \in \cF_b(\cH)$. Furthermore, $A_\ell$ maps $\cD(N^{\frac{\ell}{2}})$ continuously into $L^2(\cM^{\ell},\cE^{\otimes \ell},\mu^{\otimes \ell},\cF_b(\cH))$ and $\psi\in \cD(N^{\ell/2})$ if and only if $A_\ell\psi\in L^2(\cM^{\ell},\cE^{\otimes \ell},\mu^{\otimes \ell},\cF_b(\cH))$.
\end{lem}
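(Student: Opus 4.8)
The plan is to assemble the statement from the estimates already derived in the discussion preceding it; no new idea is required, only careful bookkeeping with the coordinate projections and with the falling factorials $(n+\ell)(n+\ell-1)\cdots(n+1)$.

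First I would record linearity: the defining formula
\begin{equation*}
\cP_n(A_\ell\psi)(k_1,\dots,k_\ell)=\sqrt{(n+\ell)(n+\ell-1)\cdots(n+1)}\;\psi^{(n+\ell)}(k_1,\dots,k_\ell,\cdot,\dots,\cdot)
\end{equation*}
is linear in $\psi$ sector by sector, and Fubini together with the symmetry and square-integrability of $\psi^{(n+\ell)}$ gives $\cP_n(A_\ell\psi)\in L^2(\cM^\ell,\cE^{\otimes\ell},\mu^{\otimes\ell},\cH^{\otimes_s n})$ for every $n$, so $A_\ell\psi\in\cC(\cM^\ell,\cE^{\otimes\ell},\mu^{\otimes\ell})$. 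For continuity from $\cF_+(\cH)$ into $\cC(\cM^\ell,\cE^{\otimes\ell},\mu^{\otimes\ell})$, Lemmas \ref{Lem:BasicTopologyExtSpace} and \ref{Lem:TheFundamentalMeasureSpace} say both topologies are generated by their coordinate projections, so it suffices that $\cP_n\circ A_\ell$ be continuous for each $n$; this is immediate from
\begin{equation*}
\lVert \cP_n(A_\ell\psi)-\cP_n(A_\ell\phi)\lVert=\sqrt{(n+\ell)(n+\ell-1)\cdots(n+1)}\;\lVert \psi^{(n+\ell)}-\phi^{(n+\ell)}\lVert
\end{equation*}
and continuity of $\psi\mapsto\psi^{(n+\ell)}=P_{n+\ell}\psi$.

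For the membership and equivalence claims I would use the computation (already carried out in the text)
\begin{equation*}
\int_{\cM^\ell}\lVert A_\ell\psi(k)\lVert_{a,+}^2\,d\mu^{\otimes\ell}(k)=\sum_{n=0}^\infty (n+1)^{2a}\,(n+\ell)(n+\ell-1)\cdots(n+1)\,\lVert\psi^{(n+\ell)}\lVert^2,
\end{equation*}
which is legitimate by Lemma \ref{Lem:DenseSetMeasNorm}, together with the elementary two-sided bound $c_\ell\,m^\ell\le m(m-1)\cdots(m-\ell+1)\le m^\ell$ for $m\ge\ell$, with $c_\ell>0$ depending only on $\ell$. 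Taking $a=-\ell/2$ makes the right-hand side at most $C_\ell\sum_n\lVert\psi^{(n+\ell)}\lVert^2\le C_\ell\lVert\psi\lVert^2<\infty$, so $A_\ell\psi\in L^2(\cM^\ell,\cE^{\otimes\ell},\mu^{\otimes\ell},\cF_{-\ell/2,+}(\cH))$ whenever $\psi\in\cF_b(\cH)$. Taking $a=0$ shows $\int\lVert A_\ell\psi\lVert^2\,d\mu^{\otimes\ell}$ is comparable, up to constants depending only on $\ell$, to $\sum_n(n+\ell)^\ell\lVert\psi^{(n+\ell)}\lVert^2$, which equals $\lVert N^{\ell/2}\psi\lVert^2$ after reindexing and absorbing the finitely many low sectors $m<\ell$ (on which both sides involve only a bounded number of terms); this yields the equivalence $\psi\in\cD(N^{\ell/2})\iff A_\ell\psi\in L^2(\cM^\ell,\cE^{\otimes\ell},\mu^{\otimes\ell},\cF_b(\cH))$, while the difference version (\ref{eq:CentralNumerEstimate}) gives continuity of $A_\ell\colon\cD(N^{\ell/2})\to L^2(\cM^\ell,\cE^{\otimes\ell},\mu^{\otimes\ell},\cF_b(\cH))$ in the graph norm of $N^{\ell/2}$.

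I expect no genuine obstacle here: the single point demanding any care is the comparison of the falling factorial with the pure power $(n+\ell)^\ell$ (hence with the eigenvalue $m^\ell$ of $N^\ell$ on the $m$-particle sector) and the harmless handling of the finitely many sectors with $m<\ell$. Everything else is a direct invocation of Lemmas \ref{Lem:BasicTopologyExtSpace}, \ref{Lem:DenseSetMeasNorm} and \ref{Lem:TheFundamentalMeasureSpace}.
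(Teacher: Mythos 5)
Your proof is correct and takes essentially the same route as the paper: the lemma is just a summary of the computation carried out in the text immediately before it (continuity sector by sector, the weighted $\ell^2$ formula for $\int\lVert A_\ell\psi\lVert_{a,+}^2\,d\mu^{\otimes\ell}$, and the specializations $a=-\ell/2$ and $a=0$). Your explicit two-sided bound $c_\ell m^\ell\le m(m-1)\cdots(m-\ell+1)\le m^\ell$ is exactly what justifies the paper's unexplained ``$\iff\sum_n(n+\ell)^{2a+\ell}\lVert\psi^{(n+\ell)}\lVert^2<\infty$'' step, and your handling of the finitely many low sectors is the right care to take.
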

\noindent Fix $v\in \cH$ and $\ell\in \mathbb{N}_0$. Define  $z_\ell(v):\cC(\cM^\ell,\cE^{\otimes \ell},\mu^{\otimes \ell})\rightarrow \cC(\cM^{\ell+1},\cE^{\otimes (\ell+1)},\mu^{\otimes (\ell+1)})$ by 
\begin{equation*}
(z_{0}(v)\psi)(x)=v(x)\psi \,\,\, \text{and}\,\,\,  
(z_{\ell}(v)\psi)(x,k)= v(x)\psi(k)
\end{equation*}
when $\ell\geq 1$. Note that
\begin{equation*}\label{contz}
\int_{\cM^{\ell+1}} \lVert P_n(z_{\ell}(v)\psi)(k)\lVert^2 d\mu^{\otimes (\ell+1)}(k)=\lVert v\lVert^2\lVert \cP_n\psi \lVert^2
\end{equation*}
which implies $z_{\ell}(v)$ is well defined and continuous. A similar argument (which is left to the reader) shows that $z_{\ell}(v)$ maps $L^2(\cM^\ell,\cE^{\otimes \ell},\mu^{\otimes \ell},\cF_b(\cH))$ continuously into $L^2(\cM^{\ell+1},\cE^{\otimes (\ell+1)},\mu^{\otimes (\ell+1)},\cF_b(\cH))$. We summarise:
\begin{lem}\label{Lem:ContOfMultOperator}
The map $z_{\ell}(v)$ introduced above is linear and continuous. Both as a map from $\cC(\cM^\ell,\cE^{\otimes \ell},\mu^{\otimes \ell})$ into the space $\cC(\cM^{\ell+1},\cE^{\otimes (\ell+1)},\mu^{\otimes (\ell+1)})$ and from $L^2(\cM^\ell,\cE^{\otimes \ell},\mu^{\otimes \ell},\cF_b(\cH))$ into  $L^2(\cM^{\ell+1},\cE^{\otimes (\ell+1)},\mu^{\otimes (\ell+1)},\cF_b(\cH))$. 
\end{lem}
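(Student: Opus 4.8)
The plan is to reduce both assertions to the single identity
\begin{equation*}
\int_{\cM^{\ell+1}} \lVert P_n((z_{\ell}(v)\psi)(k))\lVert^2\, d\mu^{\otimes (\ell+1)}(k) = \lVert v\lVert^2\, \lVert \cP_n\psi \lVert^2,
\end{equation*}
valid for every $n\in\NN_0$ and every $\psi\in\cC(\cM^\ell,\cE^{\otimes\ell},\mu^{\otimes\ell})$. Linearity of $z_\ell(v)$ is immediate, since $(z_\ell(v)\psi)(x,k)=v(x)\psi(k)$ is just scalar multiplication of the value $\psi(k)\in\cF_+(\cH)$ by $v(x)$; the case $\ell=0$ is the same with the $k$-variable absent.

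First I would verify that $z_\ell(v)\psi$ actually lies in $\cC(\cM^{\ell+1},\cE^{\otimes(\ell+1)},\mu^{\otimes(\ell+1)})$. For this one needs the map $(x,k)\mapsto v(x)\psi(k)$ to be $\cB(\cF_+(\cH))$-measurable, and each coordinate $(x,k)\mapsto P_n((z_\ell(v)\psi)(x,k))=v(x)(\cP_n\psi)(k)$ to belong to $L^2(\cM^{\ell+1},\cH^{\otimes_s n})$. By Lemma \ref{Lem:BasicTopologyExtSpace} the Borel $\sigma$-algebra on $\cF_+(\cH)$ is generated by the projections $\{P_n\}_{n=0}^\infty$, so measurability into $\cF_+(\cH)$ reduces to measurability of each $(x,k)\mapsto v(x)(\cP_n\psi)(k)$ into $\cH^{\otimes_s n}$; this holds because $v\colon\cM\to\CC$ is measurable, $\cP_n\psi$ is (a representative of) a measurable $\cH^{\otimes_s n}$-valued function, and scalar multiplication $\CC\times\cH^{\otimes_s n}\to\cH^{\otimes_s n}$ is continuous, hence jointly measurable. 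The required $L^2$ bound on each coordinate is exactly the displayed identity, whose proof is a one-line Tonelli computation: $\int_{\cM^{\ell+1}}|v(x)|^2\lVert(\cP_n\psi)(k)\lVert^2\,d\mu^{\otimes(\ell+1)} = \bigl(\int_\cM|v|^2\,d\mu\bigr)\bigl(\int_{\cM^\ell}\lVert\cP_n\psi\lVert^2\,d\mu^{\otimes\ell}\bigr) = \lVert v\lVert^2\lVert\cP_n\psi\lVert^2$.

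With the identity established, continuity as a map $\cC(\cM^\ell,\cE^{\otimes\ell},\mu^{\otimes\ell})\to\cC(\cM^{\ell+1},\cE^{\otimes(\ell+1)},\mu^{\otimes(\ell+1)})$ follows from Lemma \ref{Lem:TheFundamentalMeasureSpace}: the topology on each $\cC$-space is generated by the coordinate maps $\cP_n$, and the identity gives $\lVert\cP_n(z_\ell(v)\psi)\lVert = \lVert v\lVert\,\lVert\cP_n\psi\lVert$, so $\cP_n\psi_j\to\cP_n\psi$ for all $n$ forces $\cP_n(z_\ell(v)\psi_j)\to\cP_n(z_\ell(v)\psi)$ for all $n$, i.e. $z_\ell(v)\psi_j\to z_\ell(v)\psi$ in the metric $d$. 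For the second assertion I would simply sum the identity over $n$: for $\psi\in L^2(\cM^\ell,\cE^{\otimes\ell},\mu^{\otimes\ell},\cF_b(\cH))$,
\begin{equation*}
\lVert z_\ell(v)\psi\lVert^2 = \sum_{n=0}^\infty\lVert\cP_n(z_\ell(v)\psi)\lVert^2 = \lVert v\lVert^2\sum_{n=0}^\infty\lVert\cP_n\psi\lVert^2 = \lVert v\lVert^2\lVert\psi\lVert^2,
\end{equation*}
so $z_\ell(v)$ restricts to a bounded operator (indeed $\lVert v\lVert$ times an isometry) between the stated $L^2$-spaces. There is essentially no serious obstacle; the only point requiring a moment's care is the measurability of $z_\ell(v)\psi$ as a $\cF_+(\cH)$-valued map, which is why I route the argument through the characterisation of $\cB(\cF_+(\cH))$ in Lemma \ref{Lem:BasicTopologyExtSpace} rather than arguing directly.
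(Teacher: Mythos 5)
Your proof is correct and takes essentially the same route as the paper: the text immediately preceding the lemma states precisely your coordinate-wise identity $\int_{\cM^{\ell+1}}\lVert P_n(z_\ell(v)\psi)(k)\lVert^2\,d\mu^{\otimes(\ell+1)}(k)=\lVert v\lVert^2\lVert\cP_n\psi\lVert^2$ and derives continuity on $\cC$ from it via Lemma~\ref{Lem:TheFundamentalMeasureSpace}, explicitly leaving the $L^2$ case "to the reader". You supply exactly the details the paper omits — measurability of $z_\ell(v)\psi$ via the projection-generated Borel structure of $\cF_+(\cH)$, the Tonelli computation, and the summation over $n$ for the $L^2$ bound — so this is the paper's argument, just written out in full.
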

\noindent Let $\ell\in \NN$ and $\sigma\in \cS_\ell$ where $\cS_\ell$ is the set of permutations of $\{ 1,\dots,\ell \}$. Define $\widetilde{\sigma}:\cM^\ell\rightarrow \cM^\ell$ by $\widetilde{\sigma}(k_1,\dots,k_\ell)=(k_{\sigma(1)},\dots,k_{\sigma(\ell)})$ and observe that $\widetilde{\sigma}$ is $\cE^{\otimes \ell}$-$\cE^{\otimes \ell}$ measurable. Define $\widehat{\sigma}:\cC(\cM^\ell,\cE^{\otimes \ell},\mu^{\otimes \ell})\rightarrow \cC(\cM^\ell,\cE^{\otimes \ell},\mu^{\otimes \ell})$ by
\begin{equation*}
(\widehat{\sigma}f)(k_1,\dots,k_\ell)=f(k_{\sigma(1)},\dots,k_{\sigma(\ell)})=(f\circ \widetilde{\sigma})(k_1,\dots,k_\ell).
\end{equation*}
$\widehat{\sigma}$ is a well defined isometry on $\cC(\cM^\ell,\cE^{\otimes \ell},\mu^{\otimes \ell})$ since $\widetilde{\sigma}$ is measurable and $\mu^{\otimes \ell}=\mu^{\otimes \ell}\circ \widetilde{\sigma}^{-1}$ so
\begin{equation*}
\int_{\cM^{\ell}} \lVert f^{(n)}(k_1,\dots,k_\ell)\lVert^2 d\mu^{\otimes \ell}(k)=\int_{\cM^{\ell}} \lVert f^{(n)}(k_{\sigma(1)},\dots,k_{\sigma(\ell)})\lVert^2 d\mu^{\otimes \ell}(k).
\end{equation*}
A similar calculation shows that $\widehat{\sigma}$ is isometric on $L^2(\cM^\ell,\cE^{\otimes \ell},\mu^{\otimes \ell},\cF_b(\cH))$. For $\pi\in \cS_\ell$ we have
\begin{equation*}
\widehat{\sigma}\widehat{\pi}f=f\circ \widetilde{\pi}\circ \widetilde{\sigma}=f\circ \widetilde{(\sigma \circ \pi)}=\widehat{\sigma\circ \pi}f
\end{equation*}
and hence the inverse map of $\widehat{\sigma}$ is $\widehat{\sigma^{-1}}$. Define now
\begin{equation*}
S_\ell:=\frac{1}{(\ell-1)!}\sum_{\sigma \in \cS_\ell} \widehat{\sigma}.
\end{equation*}
For $\pi \in \cS_\ell$ we have
\begin{equation*}
\widehat{\pi}S_\ell=\frac{1}{(\ell-1)!}\sum_{\sigma \in \cS_\ell} \widehat{\pi}\widehat{\sigma}=\frac{1}{(\ell-1)!}\sum_{\sigma \in \cS_\ell} \widehat{\pi\circ \sigma}=S_\ell,
\end{equation*}
so $S_\ell^2=\ell S_\ell$. We summarise:
\begin{lem}
Let $\ell\in \NN$. For $\sigma\in \cS_\ell$ the map $\widehat{\sigma}$ defines a linear bijective isometry from $\cC(\cM^\ell,\cE^{\otimes \ell},\mu^{\otimes \ell})$ to $\cC(\cM^{\ell},\cE^{\otimes \ell},\mu^{\otimes \ell})$ and from $L^2(\cM^\ell,\cE^{\otimes \ell},\mu^{\otimes \ell},\cF_b(\cH))$ to $L^2(\cM^\ell,\cE^{\otimes \ell},\mu^{\otimes \ell},\cF_b(\cH))$.

$S_\ell$ is continuous and linear from $\cC(\cM^\ell,\cE^{\otimes \ell},\mu^{\otimes \ell})$ into $\cC(\cM^\ell,\cE^{\otimes \ell},\mu^{\otimes \ell})$ and from  $L^2(\cM^\ell,\cE^{\otimes \ell},\mu^{\otimes \ell},\cF_b(\cH))$ into $L^2(\cM^\ell,\cE^{\otimes \ell},\mu^{\otimes \ell},\cF_b(\cH))$. Furthermore, $S_\ell^2=\ell S_\ell$.
\end{lem}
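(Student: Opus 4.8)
The plan is to reduce every assertion to a componentwise statement on the Hilbert spaces $L^2(\cM^\ell,\cE^{\otimes\ell},\mu^{\otimes\ell},\cH^{\otimes_s n})$, $n\in\NN_0$, using the elementary observation that $\widehat\sigma$ commutes with the coordinate projections $\cP_n$: indeed $P_n(\widehat\sigma f)(k)=P_nf(\widetilde\sigma(k))=(\widehat\sigma f^{(n)})(k)$, so $\cP_n\widehat\sigma=\widehat\sigma\cP_n$ where on the right $\widehat\sigma$ denotes the induced operation $g\mapsto g\circ\widetilde\sigma$ on $L^2(\cM^\ell,\cE^{\otimes\ell},\mu^{\otimes\ell},\cH^{\otimes_s n})$. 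First I would check well-definedness: $\widetilde\sigma$ is $\cE^{\otimes\ell}$-$\cE^{\otimes\ell}$ measurable and $\mu^{\otimes\ell}=\mu^{\otimes\ell}\circ\widetilde\sigma^{-1}$ because $\mu^{\otimes\ell}$ is a product measure and $\widetilde\sigma$ just permutes coordinates; hence $\widehat\sigma f$ is again a measurable $\cF_+(\cH)$-valued map, independent of the chosen representative, lying in $\cC(\cM^\ell,\cE^{\otimes\ell},\mu^{\otimes\ell})$. Linearity is immediate from the pointwise formula.

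Next I would record the isometry property. For each $n$, the change-of-variables identity already displayed above, namely $\int_{\cM^{\ell}}\lVert f^{(n)}(\widetilde\sigma(k))\rVert^2\,d\mu^{\otimes\ell}(k)=\int_{\cM^{\ell}}\lVert f^{(n)}(k)\rVert^2\,d\mu^{\otimes\ell}(k)$, gives $\lVert\cP_n\widehat\sigma f\rVert=\lVert\cP_n f\rVert$. Since the metric $d$ on $\cC(\cM^\ell,\cE^{\otimes\ell},\mu^{\otimes\ell})$ is a monotone function of the component norms $\lVert f^{(n)}-g^{(n)}\rVert$ and $\widehat\sigma$ is linear, it follows that $d(\widehat\sigma f,\widehat\sigma g)=d(f,g)$; and since $\lVert f\rVert^2=\sum_n\lVert f^{(n)}\rVert^2$ on $L^2(\cM^\ell,\cE^{\otimes\ell},\mu^{\otimes\ell},\cF_b(\cH))$, the same computation shows $\widehat\sigma$ restricts there to a linear isometry. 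Bijectivity in both settings comes from the group law $\widehat\sigma\widehat\pi=\widehat{\sigma\circ\pi}$ established above (so $\widehat{\mathrm{id}}=1$): this yields $\widehat\sigma\widehat{\sigma^{-1}}=\widehat{\sigma^{-1}}\widehat\sigma=1$, so $\widehat\sigma$ is invertible with inverse $\widehat{\sigma^{-1}}$, itself continuous (in fact isometric) by the same argument.

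For $S_\ell=\frac{1}{(\ell-1)!}\sum_{\sigma\in\cS_\ell}\widehat\sigma$, continuity and linearity on both spaces are then automatic, $S_\ell$ being a finite linear combination of the continuous linear maps $\widehat\sigma$ in the topological vector space $\cC(\cM^\ell,\cE^{\otimes\ell},\mu^{\otimes\ell})$ (respectively a bounded operator of norm at most $\lvert\cS_\ell\rvert/(\ell-1)!=\ell$ on the $L^2$ space). The identity $S_\ell^2=\ell S_\ell$ is the reindexing argument already begun above: for every $\pi\in\cS_\ell$ one has $\widehat\pi S_\ell=\frac{1}{(\ell-1)!}\sum_{\sigma}\widehat{\pi\circ\sigma}=S_\ell$ because $\sigma\mapsto\pi\circ\sigma$ permutes $\cS_\ell$, and therefore $S_\ell^2=\frac{1}{(\ell-1)!}\sum_{\pi}\widehat\pi S_\ell=\frac{\lvert\cS_\ell\rvert}{(\ell-1)!}S_\ell=\ell S_\ell$.

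The only point that is not purely formal is the measure-theoretic bookkeeping of the first paragraph, i.e. verifying that $\widehat\sigma$ respects the $\mu^{\otimes\ell}$-a.e. equivalence and maps $\cC(\cM^\ell,\cE^{\otimes\ell},\mu^{\otimes\ell})$ into itself; once the invariance $\mu^{\otimes\ell}=\mu^{\otimes\ell}\circ\widetilde\sigma^{-1}$ is in hand, everything else is algebra with the identities recorded in the preceding paragraphs, which is why the statement is essentially a summary and can be labelled obvious.
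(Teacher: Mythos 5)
Your proposal is correct and follows essentially the same route as the paper: invariance of $\mu^{\otimes\ell}$ under coordinate permutations gives the componentwise isometry, the group law $\widehat\sigma\widehat\pi=\widehat{\sigma\circ\pi}$ gives bijectivity and the identity $\widehat\pi S_\ell=S_\ell$, and the latter yields $S_\ell^2=\ell S_\ell$ by summing over $\pi$.
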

\noindent We can now calculate commutators

\begin{lem}\label{Lem:CommutatingPointwiseAnihilation}
Let $\omega:\cM\rightarrow \RR^p$ be measurable, $v,g\in \cH$ and $f:\RR^p\rightarrow \RR$ be measurable. Define $A_0$ as the identity map from $\cF_+(\cH)$ to $\cF_+(\cH)$ and $z^\dagger_{\ell-1}(v)=S_\ell z_{\ell-1}(v)$ for $\ell\in \NN$. Then:
\begin{enumerate}
\item [\textup{(1)}] We have the following operator identities for $\ell\in \NN$, $k\in \NN_0$ and $B\in \{ \varphi,a^\dagger, a \}$  
\begin{align}
a_{\oplus,\ell}(g)A_\ell&=A_\ell a_{+}(g) \label{COM1} \\ A_\ell a^\dagger_{+}(g)-a^\dagger_{\oplus,\ell}(g)A_\ell&=z^\dagger_{\ell-1}(g)A_{\ell-1} \label{COM2} \\
B_{\oplus,\ell}(g)z_{\ell-1}(v) &=z_{\ell-1}(v)B_{\oplus,\ell-1}(g) \label{COM3}\\
B_{\oplus,\ell}(g)S_{\ell}&=S_{\ell}B_{\oplus,\ell}(g) \label{COM4}\\ A_\ell\varphi_{+}(g)^k&=\sum_{q=0}^{\min\{\ell,k \}} \begin{pmatrix}
k \\ q
\end{pmatrix}  \left(\prod_{c=0}^{q-1} z^\dagger_{\ell-c-1}(g)\right)\varphi_{\oplus,\ell-q}(g)^{k-q} A_{\ell-q} \label{COM5}\\
\Gamma_{\oplus,\ell}(-1)A_\ell &=(-1)^\ell A_\ell\Gamma_+(-1). \label{COM6}
\end{align}

\item [\textup{(2)}] Let $\ell\in \NN$ and $\omega_\ell(k_1,\dots,k_\ell)=\omega(k_1)+\cdots+\omega(k_\ell)$. If $\psi\in \cD(f(d\Gamma(\omega)))$ then $A_\ell\psi\in \cD(f(d\Gamma_{\oplus}(\omega)+\omega_\ell) )$ and
\begin{equation*}
f(d\Gamma_{\oplus}(\omega)+\omega_\ell) A_\ell\psi=A_\ell f(d\Gamma_{+}(\omega))\psi.
\end{equation*}
\end{enumerate}
\end{lem}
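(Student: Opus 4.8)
The plan is to prove the two parts of Lemma~\ref{Lem:CommutatingPointwiseAnihilation} by the unified strategy that dominates this appendix: reduce every operator identity to an identity on the total set $\cA$ from equation~(\ref{Defn:A}), where everything is an explicit symmetric tensor power $g^{\otimes n}$, and then invoke continuity of all the operators involved (Lemmas~\ref{Lem:AnihilCreaFieldOnF_+}, \ref{Lem:NumberEstimates}, \ref{Lem:ContOfMultOperator}, \ref{Lem:LiftToIntegralOperators} and the $S_\ell$ lemma) together with the fact that $\mathrm{Span}(\cA)$ is dense in $\cF_+(\cH)$ and in $\cD(N^{\ell/2})$. Since $\cF_+(\cH)$ is a topological vector space whose topology is generated by the projections $P_n$, and likewise for $\cC(\cM^\ell,\dots)$ with the $\cP_n$, it suffices in each case to check the identity after applying $\cP_n$ (resp.\ $P_n$) to a vector $g^{\otimes N}$; this collapses to a finite combinatorial computation with multinomial coefficients. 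For part~(2) I would additionally use strong resolvent measurability (as in the proof of Lemma~\ref{Lem:LiftToIntegralOperators}) so that $f(d\Gamma_\oplus(\omega)+\omega_\ell)$ is genuinely defined, and then pass from bounded $f$ of the form $(x\pm i)^{-1}$ to general measurable $f$ by the standard approximation argument already used for Theorem~\ref{Thm:SpectralPropTensor}.

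Concretely, for part~(1) I would first establish (\ref{COM3}) and (\ref{COM4}), which are purely "geometric": $z_{\ell-1}(v)$ inserts a fixed vector $v$ into the first slot and $B_{\oplus,\ell}$ acts fibrewise as $B_+$ on the Fock factor, so the two commute trivially on $\cC$-level after applying $\cP_n$; similarly $S_\ell$ symmetrizes only in the $\cM^\ell$ variables while $B_{\oplus,\ell}(g)$ touches only the Fock factor. Next (\ref{COM1}): applying $\cP_n$ to $A_\ell a_+(g)\psi$ and to $a_{\oplus,\ell}(g)A_\ell\psi$ both yield, up to the normalizing factor $\sqrt{(n+\ell)\cdots(n+1)}$ combined with the annihilation factor $\sqrt{n+1}$, the contraction of $\psi^{(n+\ell+1)}$ against $g$ in one of the "free" slots; a direct check on $g^{\otimes N}$ settles it. For (\ref{COM2}) the extra term $z^\dagger_{\ell-1}(g)A_{\ell-1}$ arises because $a^\dagger_+(g)$ prepends $g$ to a slot that $A_\ell$ may then pull out; the symmetrized insertion $z^\dagger_{\ell-1}(g)=S_\ell z_{\ell-1}(g)$ is exactly what is needed to account for the $\ell$ ways $g$ can land among the first $\ell$ arguments. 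Then (\ref{COM5}) follows from (\ref{COM1}) and (\ref{COM2}) by induction on $k$, expanding $\varphi_+(g)^k=(a_+(g)+a^\dagger_+(g))^k$ and collecting terms: the binomial coefficient $\binom{k}{q}$ counts how many of the $k$ field factors contribute a "prepended-then-extracted" $g$, and the product $\prod_{c=0}^{q-1}z^\dagger_{\ell-c-1}(g)$ records the successive symmetrized insertions. Finally (\ref{COM6}) is immediate since $\Gamma_+(-1)$ acts on $\cH^{\otimes_s m}$ as $(-1)^m$ and $A_\ell$ shifts particle number by $\ell$.

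For part~(2), I would first note that $d\Gamma^{(n)}(\omega_i)+\bigl((\omega_\ell)_i\bigr)(k)$ is, for each $i$ and each $k\in\cM^\ell$, a selfadjoint multiplication operator on $\cH^{\otimes_s n}$, and that the tuple over $i$ is strongly commuting, so $f(d\Gamma_\oplus(\omega)+\omega_\ell)$ is well defined by Lemma~\ref{Lem:LiftToIntegralOperators}(2). The heart of the matter is the pointwise identity: on $\cH^{\otimes_s(n+\ell)}$, the multiplication operator $d\Gamma^{(n+\ell)}(\omega)$ has symbol $\omega(k_1)+\cdots+\omega(k_{n+\ell})$, and restricting the last $n$ variables while freezing $(k_1,\dots,k_\ell)$ gives precisely $d\Gamma^{(n)}(\omega)(\cdot)+\omega(k_1)+\cdots+\omega(k_\ell)$; hence for $f(x)=(x\pm i)^{-1}$ one checks on $g^{\otimes(n+\ell)}$ that $A_\ell(d\Gamma(\omega)\pm i)^{-1}\psi$ equals $(d\Gamma_\oplus(\omega)+\omega_\ell\pm i)^{-1}A_\ell\psi$, using continuity of $A_\ell$ (Lemma~\ref{Lem:NumberEstimates}) and density of $\mathrm{Span}(\cA)$ to upgrade from $\cA$ to all of $\cF_+(\cH)$. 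The domain statement then comes from this resolvent identity, and the extension to general measurable $f$ from the bounded-functional-calculus approximation used throughout.

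The main obstacle I anticipate is not any single hard estimate but rather getting the bookkeeping in (\ref{COM2}) and (\ref{COM5}) exactly right — in particular verifying that the chosen normalization of $z^\dagger_{\ell-1}(v)=S_\ell z_{\ell-1}(v)$ (with the $\tfrac{1}{(\ell-1)!}$ in $S_\ell$) produces precisely the coefficient that matches $A_\ell a^\dagger_+(g)$, and that the iterated products $\prod_{c=0}^{q-1}z^\dagger_{\ell-c-1}(g)$ compose with the correct symmetrization at each stage so that the binomial coefficients in (\ref{COM5}) are exactly $\binom{k}{q}$ and not some rearranged factorial. I would handle this by working out the $\ell=1$ and $\ell=2$ cases of (\ref{COM2}) explicitly on $g^{\otimes N}$ to pin down the constants, then doing the induction on $k$ in (\ref{COM5}) carefully, keeping the $S_\ell$-intertwining relation (\ref{COM4}) in hand so that the symmetrizers can be moved past the field operators at each inductive step.
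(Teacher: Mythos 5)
Your overall strategy — verify each identity on the total set $\cA$ of tensor powers $g^{\otimes n}$ and extend by continuity of all maps involved on $\cF_+(\cH)$ and $\cC(\cM^\ell,\dots)$ — is exactly the paper's, and your treatments of (\ref{COM1}), (\ref{COM2}), (\ref{COM3}), (\ref{COM4}) and (\ref{COM6}) coincide with the paper's explicit computations. Two places where the paper proceeds more cleanly deserve mention. First, for (\ref{COM5}): expanding $\varphi_+(g)^k=(a_+(g)+a^\dagger_+(g))^k$ is not a binomial sum since $a_+(g)$ and $a^\dagger_+(g)$ do not commute, and a single induction on $k$ does not close because the $k=1$ commutator $A_\ell\varphi_+(g)-\varphi_{\oplus,\ell}(g)A_\ell=z^\dagger_{\ell-1}(g)A_{\ell-1}$ lowers $\ell$ by one, so you would in effect be forced into a joint induction in $(\ell,k)$. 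The paper instead inducts solely on $\ell$ and gets all $k$ at once by substituting the $k=1$ identity into the telescope
$A_\ell\varphi_+(g)^{k+1}=\varphi_{\oplus,\ell}(g)^{k+1}A_\ell+\sum_{a=0}^{k}\varphi_{\oplus,\ell}(g)^{a}\bigl(A_\ell\varphi_+(g)-\varphi_{\oplus,\ell}(g)A_\ell\bigr)\varphi_+(g)^{k-a}$,
whereupon the binomial coefficient falls out from a Pascal-type resummation using (\ref{COM3}) and (\ref{COM4}). Second, for part (2), you correctly isolate the key observation — restricting the symbol $\omega(k_1)+\cdots+\omega(k_{n+\ell})$ by freezing the first $\ell$ arguments yields $d\Gamma^{(n)}(\omega)+\omega_\ell(k)$ — but then make an unnecessary detour through $f(x)=(x\pm i)^{-1}$ and functional-calculus approximation. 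Since $\omega$ is a multiplication operator, $f(d\Gamma_+(\omega))$ acts sector-wise as multiplication by $f(\omega(k_1)+\cdots+\omega(k_{n+\ell}))$ for \emph{any} measurable $f$, so the intertwining identity and the almost-everywhere domain membership of $(A_\ell\psi)(k)$ follow directly from Tonelli's theorem, with no approximation at all. Your route would in addition require a monotone-convergence argument to extract the a.e.\ domain membership from the convergence of $f_m(d\Gamma_{\oplus}(\omega)+\omega_\ell)A_\ell\psi$ — a step your proposal does not address and that the paper's direct argument avoids.
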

\begin{proof}
We start by proving part (1). Note it is enough to prove the equations (\ref{COM1})-(\ref{COM6}) on the set $\cA$ from equation (\ref{Defn:A}) by continuity and linearity of all involved operators.

Let $h^{\otimes n}\in \cA$ (with $h^{\otimes 0}=\Omega$). We start by proving (\ref{COM1}). If $n<\ell+1$ then $a_{\oplus,\ell}(v)A_\ell h^{\otimes n}=0=A_\ell a_{+}(v)h^{\otimes n}$. Otherwise we calculate
\begin{align*}
(a_{\oplus,\ell}(g)A_\ell h^{\otimes n})(k_1,\dots,k_\ell)&=\sqrt{n(n-1)\dots(n-\ell)}h(k_1)\dots h(k_\ell)\langle g,h \rangle h^{\otimes n-\ell-1}\\&= (A_\ell a_{+}(g)h^{\otimes n})(k_1,\dots,k_\ell).
\end{align*}
We now prove equation (\ref{COM2}). If $n<\ell-1$ we find 
\begin{equation*}
A_\ell a^\dagger_{+}(g)h^{\otimes n}=0=a^\dagger_{\oplus,\ell}(g)A_\ell h^{\otimes n}=0=z^\dagger_{\ell-1}(g)A_{\ell-1}h^{\otimes n}
\end{equation*}
If $n\geq \ell-1$ we have (in the following calculation we define $ h^{\otimes -1}=0$)
\begin{align*}
(A_\ell a^\dagger_{+}(g)&h^{\otimes n})(k_1,\dots,k_\ell)=A_\ell\left (\frac{1}{\sqrt{n+1}}\sum_{a=1}^{n+1} h^{\otimes a-1}\otimes g\otimes h^{\otimes n-a+1}\right )(k_1,\dots,k_\ell)\\&= \sqrt{n(n-1)\dots(n-\ell+2)} \sum_{a=1}^{\ell} h(k_1)\dots g(k_a)\dots h(k_\ell)  h^{\otimes n-\ell+1}\\&+\sqrt{n(n-1)\dots(n-\ell+2)} \sum_{a=\ell+1}^{n+1} h(k_1)\dots h(k_\ell)  h^{\otimes a-1-\ell}\otimes g\otimes h^{\otimes n-a+1}  \\&= \sum_{a=1}^{\ell} g(k_a)(A_{\ell-1}h^{\otimes n})(k_1,\dots,\widehat{k}_a,\dots,k_\ell)\\&+\sqrt{n(n-1)\dots(n-\ell+1)}h(k_1)\dots h(k_\ell)a^{\dagger}_+(g)h^{\otimes n-\ell}\\&
=(S_\ell z_{\ell-1}(g)A_{\ell-1} h^{\otimes n})(k_1,\dots,k_\ell)+(a^\dagger_{\oplus,\ell}(g)A_\ell h^{\otimes n})(k_1,\dots,k_\ell).
\end{align*}
We now prove equation (\ref{COM6}). If $n< \ell$ we have $(-1)^\ell\Gamma_{\oplus,\ell}(-1)A_\ell h^{\otimes n}=0=A_\ell \Gamma_+(-1)h^{\otimes n}$. Writing $k=(k_1,\dots,k_\ell)$ we obtain for $n\geq \ell$
\begin{align*}
(-1)^\ell\Gamma_{\oplus,\ell}(-1)A_\ell h^{\otimes n}(k)&=(-1)^\ell\sqrt{n(n-1)\dots(n-\ell+1)}h(k_1)\dots h(k_\ell)(-h)^{\otimes n-\ell}\\&=A_\ell(-h)^{\otimes n}(k)\\&=A_\ell\Gamma_+(-1)h^{\otimes n}(k).
\end{align*}
We now prove equation (\ref{COM3}) and (\ref{COM4}). Let $\psi\in C(\cM^{\ell},\cE^{\otimes \ell},\mu^{\otimes \ell})$ and $\sigma\in \cS_{\ell}$. Then
\begin{align}\label{eq:proof1}
(B_{\oplus,\ell}(g) \widehat{\sigma}\psi)(k)&=B_+(g)(\psi\circ \widetilde{\sigma})(k)=(\widehat{\sigma}B_{\oplus,\ell}(g)\psi)(k)\\
(B_{\oplus,\ell+1}(g) z_{\ell}(v)\psi)(x,k)&=B_+(g)v(x)\psi(k)=(z_{\ell}(v)B_{\oplus,\ell}(g)\psi)(x,k)\label{eq:proof2}
\end{align}
Equation (\ref{eq:proof1}) shows equation (\ref{COM3}) and equation (\ref{eq:proof2}) shows equation (\ref{COM4}) in the special case where $\ell\geq 2$. The $\ell=1$ case is similar and is left to the reader.

 We will now prove equation (\ref{COM5}). It clearly holds in the $\ell=0$ case. We proceed by induction in $\ell$. Adding the two equations in (\ref{COM1}) and (\ref{COM2}) we find the $k=1$ case. Using the $k=1$ case, the induction hypothesis, equation (\ref{COM3}) and equation (\ref{COM4}) we find
\begin{align*}
A_\ell\varphi_+(g)^{k+1}&=\varphi_{\oplus,\ell}(g)^{k+1}A_\ell+\sum_{a=0}^{k} \varphi_{\oplus,\ell}(g)^{a} (A_\ell\varphi_+(g)-\varphi_{\oplus,\ell}(g)A_\ell) \varphi_+(g)^{k-a}\\=\varphi_{\oplus,\ell}(g)^{k+1}A_\ell&+\sum_{a=0}^{k} \varphi_{\oplus,\ell}(g)^{a} z^\dagger_{\ell-1}(g)A_{\ell-1}  \varphi_+(g)^{k-a}\\=\varphi_{\oplus,\ell}(g)^{k+1}A_\ell&+\sum_{a=0}^{k} \sum_{q=0}^{\min\{\ell-1,k-a  \}} \begin{pmatrix}
k-a \\ q
\end{pmatrix} \varphi_{\oplus,\ell}(g)^{k-q} \left(\prod_{c=-1}^{q-1} z^\dagger_{\ell-c-2}(g) \right)A_{\ell-q-1}\\
=\varphi_{\oplus,\ell}(g)^{k+1}A_\ell&+ \sum_{q=1}^{\min\{\ell,k+1  \}}\sum_{a=0}^{k+1-q} \begin{pmatrix}
k-a \\ q-1
\end{pmatrix} \varphi_{\oplus,\ell}(g)^{k+1-q} \left(\prod_{c=0}^{q-1} z^\dagger_{\ell-c-1}(g) \right) A_{\ell-q}\\=\varphi_{\oplus,\ell}(g)^{k+1}A_\ell&+ \sum_{q=1}^{\min\{\ell,k+1  \}} \begin{pmatrix}
k+1 \\ q
\end{pmatrix} \left (\prod_{c=0}^{q-1} z^\dagger_{\ell-c-1}(g) \right)\varphi_{\oplus,\ell-q}(g)^{k+1-q} A_{\ell-q}.
\end{align*}
We now prove part (2). Let $\psi\in \cD(f(d\Gamma_+(\omega)))$. Note that
\begin{equation*}
(f(d\Gamma_+(\omega))\psi)^{(n+\ell)}(k_1,\dots,k_{n+\ell})=f(\omega(k_1)+\cdots+\omega(k_{n+\ell}))\psi^{(n+\ell)}(k_1,\dots,k_{n+\ell} )
\end{equation*}
is in $\cH^{\otimes_s(n+\ell)}$ for all $\ell\in \NN$ and $n\in \NN_0$. Standard integration theory yields $\psi^{(n+\ell)}(k_1,\dots,k_\ell,\cdot,\dots,\cdot )\in \cD(f(d\Gamma^{(n)}(\omega)+\omega_\ell(k_1,\dots,k_\ell)))$ almost everywhere. Since $(\cP_nA_\ell\psi)(k_1,\dots,k_\ell)=\sqrt{(n+1)\cdots(n+\ell)}\psi^{(n+\ell)}(k_1,\dots,k_\ell,\cdot,\dots,\cdot )$ we find $A_\ell\psi(k)\in \cD(f(d\Gamma_{+}(\omega)+\omega_\ell(k)))$ for almost all $k\in \cM^{\ell}$. Furthermore,
\begin{align*}
&f(d\Gamma^{(n)}(\omega)+\omega_\ell(k_1,\dots,k_{\ell}))\psi^{(n+\ell)}(k_1,\dots,k_\ell,\cdot,\dots,\cdot )\\& \quad \quad\quad \quad\quad \quad \quad \quad\quad \quad\quad \quad =(f(d\Gamma_+(\omega))\psi)^{(n+\ell)}(k_1,\dots,k_{\ell},\cdot,\dots,\cdot)\\&
\int_{\cM^\ell}\lVert f(d\Gamma^{(n)}(\omega)+\omega_\ell(k_1,\dots,k_{\ell}))\psi^{(n+\ell)}(k_1,\dots,k_\ell,\cdot,\dots,\cdot )\lVert^2d\mu^{\otimes \ell}(k_1,\dots,k_\ell)\\&\quad \quad\quad \quad\quad \quad \quad \quad\quad \quad\quad \quad=\lVert P_{n+\ell}f(d\Gamma_+(\omega))\psi  \lVert^2<\infty.
\end{align*}
So $A_\ell\psi\in \cD(f(d\Gamma_{\oplus,\ell}(\omega)+\omega_\ell))$ by Lemma \ref{Lem:LiftToIntegralOperators} and
\begin{align*}
&(\cP_nf(d\Gamma_{\oplus,\ell}(\omega)+\omega_\ell)A_\ell\psi)(k_1,\dots,k_\ell)\\&=\sqrt{(n+1)\cdots(n+\ell)}f(d\Gamma^{(n)}(\omega)+\omega_\ell(k_1,\dots,k_\ell))\psi^{(n+\ell)}(k_1,\dots,k_\ell,\cdot,\dots,\cdot )\\& =\sqrt{(n+1)\cdots(n+\ell)}(f(d\Gamma_+(\omega))\psi)^{(n+\ell)}(k_1,\dots,k_{\ell},\cdot,
\dots,\cdot)\\&=(\cP_nA_\ell f(d\Gamma_{+}(\omega))\psi)(k_1,\dots,k_\ell).
\end{align*}
This finishes the proof.
\end{proof}
\noindent Commutation relations with Weyl operators can also be calculated but only on restricted domains. For future reference we prove
\begin{lem}
Let $\psi\in \cD(N^{\frac{1}{2}})$ and $g\in \cH$. Then the following holds
\begin{equation}\label{eqn:CommuteWeyl}
A_1W(g,1)\psi=\int_{\cM}^{\oplus}W(g,1)d\mu(k)A_1\psi+z_0(g)W(g,1)\psi
\end{equation}
\end{lem}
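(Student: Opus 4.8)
The plan is to verify the identity \eqref{eqn:CommuteWeyl} first on the total set $\cA$ from equation \eqref{Defn:A}, and then to upgrade to all of $\cD(N^{\frac12})$ by a density/continuity argument, being careful that $A_1$ is only continuous on $\cF_+(\cH)$ (not bounded on $\cF_b(\cH)$), but by Lemma \ref{Lem:NumberEstimates} it \emph{is} continuous from $\cD(N^{\frac12})$ into $L^2(\cM,\cE,\mu,\cF_b(\cH))$. The right-hand side involves the fibered operator $\int_\cM^\oplus W(g,1)\,d\mu(k)$ acting on $\cC(\cM,\cE,\mu)$ (pointwise application of the unitary $W(g,1)$, which is continuous by the type of reasoning in Lemma \ref{Lem:LiftToIntegralOperators}(1) since $W(g,1)$ preserves each $\cF_b(\cH)$ only after composing with the number cutoff --- here one should note $W(g,1)$ does \emph{not} preserve particle number, so one must instead argue that $k\mapsto W(g,1)(A_1\psi)(k)$ lies in $\cC(\cM,\cE,\mu)$ because $W(g,1)$ is a fixed bounded operator on $\cF_b(\cH)$ and $A_1\psi\in L^2(\cM,\cE,\mu,\cF_b(\cH))$ when $\psi\in\cD(N^{\frac12})$) and $z_0(g)$ from Lemma \ref{Lem:ContOfMultOperator}, which is continuous.

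First I would fix $h\in\cH$ and compute both sides on $\psi=\epsilon(th)$ for $t$ small, or more directly on the monomials $h^{\otimes n}\in\cA$; actually the cleanest route is to use exponential vectors, since $W(g,1)\epsilon(h)=e^{-\lVert g\rVert^2/2-\langle g,h\rangle}\epsilon(g+h)$ and $A_1$ has the explicit action $(\cP_n A_1\psi)(k)=\sqrt{n+1}\,\psi^{(n+1)}(k,\cdot,\dots,\cdot)$. On an exponential vector $\epsilon(h)$ one has $\psi^{(n)}=h^{\otimes n}/\sqrt{n!}$, so $(A_1\epsilon(h))(k)=h(k)\,\epsilon(h)$, i.e. $A_1\epsilon(h)=z_0(h)\epsilon(h)$ where I read $z_0(h)$ as multiplication by $h(k)$. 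Then
\begin{align*}
A_1 W(g,1)\epsilon(h) &= e^{-\lVert g\rVert^2/2-\langle g,h\rangle} A_1\epsilon(g+h) = e^{-\lVert g\rVert^2/2-\langle g,h\rangle}\big((g+h)(k)\,\epsilon(g+h)\big),
\end{align*}
while the right-hand side gives
\begin{align*}
\int_\cM^\oplus W(g,1)\,d\mu(k)\,A_1\epsilon(h) + z_0(g)W(g,1)\epsilon(h) &= e^{-\lVert g\rVert^2/2-\langle g,h\rangle}\big(h(k)\,\epsilon(g+h)\big) + e^{-\lVert g\rVert^2/2-\langle g,h\rangle}\big(g(k)\,\epsilon(g+h)\big),
\end{align*}
and the two agree since $h(k)+g(k)=(g+h)(k)$. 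Exponential vectors are total and, more importantly, span a core for $N^{\frac12}$ (they are analytic vectors for $N$), so I would next take $\psi\in\cD(N^{\frac12})$ and a sequence $\psi_j\to\psi$ in $\cD(N^{\frac12})$-norm with $\psi_j$ in the span of exponential vectors. Since $W(g,1)$ is unitary and commutes with $N$ in the sense that $W(g,1)\cD(N^{\frac12})\subset\cD(N^{\frac12})$ with $\lVert N^{\frac12}W(g,1)\psi\rVert$ controlled by $\lVert N^{\frac12}\psi\rVert+\lVert\psi\rVert$ (standard Weyl-operator estimate, e.g. from Lemma \ref{Lem:FundamentalIneq} applied to $e^{it\varphi(ig)}$ or directly), we get $W(g,1)\psi_j\to W(g,1)\psi$ in $\cD(N^{\frac12})$-norm, hence $A_1W(g,1)\psi_j\to A_1W(g,1)\psi$ in $L^2(\cM,\cE,\mu,\cF_b(\cH))$ by Lemma \ref{Lem:NumberEstimates}. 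On the right-hand side, $A_1\psi_j\to A_1\psi$ in $L^2(\cM,\cE,\mu,\cF_b(\cH))$, the fibered operator $\int^\oplus W(g,1)$ is an isometry on this space, and $z_0(g)W(g,1)\psi_j\to z_0(g)W(g,1)\psi$ in $L^2(\cM,\cE,\mu,\cF_b(\cH))$ by continuity of $z_0(g)$ (Lemma \ref{Lem:ContOfMultOperator}) and of $W(g,1)$. Passing to the limit in the identity, valid on the dense subspace, gives \eqref{eqn:CommuteWeyl} on all of $\cD(N^{\frac12})$.

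The main obstacle I anticipate is the bookkeeping around exactly which space each side of \eqref{eqn:CommuteWeyl} lives in and that each term is well defined there: one must check that $W(g,1)\psi\in\cD(N^{\frac12})$ so that $A_1W(g,1)\psi$ makes sense in $L^2(\cM,\cE,\mu,\cF_b(\cH))$, that the fibered operator $\int^\oplus W(g,1)\,d\mu(k)$ is a legitimate (isometric) operator on $\cC(\cM,\cE,\mu)$ restricting nicely to $L^2(\cM,\cE,\mu,\cF_b(\cH))$ --- here the subtlety is that $W(g,1)$ does not preserve particle-number sectors, so the componentwise description via the $\cP_n$ maps is more involved than for $\Gamma_+(U)$, and one should instead invoke measurability of $k\mapsto W(g,1)f(k)$ for $f\in L^2(\cM,\cE,\mu,\cF_b(\cH))$ directly --- and that $z_0(g)W(g,1)\psi$ is the correct multiplication. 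Once these identifications are pinned down, the proof is just the exponential-vector computation above plus the density argument; no genuinely hard estimate is needed beyond the already-available continuity statements.
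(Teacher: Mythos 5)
Your proof is essentially the paper's proof: verify the identity on exponential vectors (exactly the same one-line computation the paper uses), then upgrade by density, approximating $\psi\in\cD(N^{1/2})$ in graph norm by a sequence from the span of exponential vectors and passing to the limit. The right-hand side is handled identically, via continuity of $z_0(g)$ (Lemma \ref{Lem:ContOfMultOperator}) and of $A_1:\cD(N^{1/2})\to L^2(\cM,\cE,\mu,\cF_b(\cH))$ (Lemma \ref{Lem:NumberEstimates}). Where you diverge is the last step, the convergence of the left-hand side $A_1W(g,1)\psi_n$.

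You argue that $W(g,1)$ preserves $\cD(N^{1/2})$ with a graph-norm estimate, so that $W(g,1)\psi_n\to W(g,1)\psi$ in $\cD(N^{1/2})$-norm and you may again invoke the $L^2$-continuity of $A_1$. That estimate is true but you assert it with a vague reference; it is not stated in the paper, and deriving it from Lemma \ref{Lem:FundamentalIneq} is not immediate. The paper avoids the issue entirely: since the identity \eqref{eqn:CommuteWeyl} is an identity in $\cC(\cM,\cE,\mu)$, and $A_1$ is continuous from all of $\cF_+(\cH)$ into $\cC(\cM,\cE,\mu)$ (Lemma \ref{Lem:NumberEstimates}, without any $N^{1/2}$ requirement), it suffices that $W(g,1)\psi_n\to W(g,1)\psi$ in $\cF_b(\cH)$, which is just unitarity of $W(g,1)$, combined with the fact that $\cF_b(\cH)$-convergence implies $\cF_+(\cH)$-convergence (Lemma \ref{Lem:BasicTopologyExtSpace}). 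So the paper's convergence on the left-hand side is in the weaker $\cC(\cM,\cE,\mu)$ topology, which matches the $\cC(\cM,\cE,\mu)$-convergence already obtained on the right-hand side and requires no Weyl-operator domain-invariance fact. Your route is correct provided the domain-invariance claim is established, but the paper's is the more economical one; if you want to stay closest to what the machinery already gives, replace that step by the $\cF_+(\cH)$-continuity argument.
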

\begin{proof}
We calculate on an exponential vector $\epsilon(v)$:
\begin{align*}
(A_1W(g,1)\epsilon(v))(k)&=e^{-\lVert g\lVert^2/2-\text{Im}(\langle g,v \rangle)}A_1(\epsilon(v+g))(k)\\&=(v(k)+g(k))W(g,1)\epsilon(v)\\&=\left(\int_{\cM}^{\oplus}W(g,1)d\mu(k)A_1\epsilon(v)\right)(k)+z_0(g)W(g,1)\epsilon(v).
\end{align*}
Hence the result holds on the span of exponential vectors. The collection of exponential vectors span a core for the number operator $N$ and thus for $N^{\frac{1}{2}}$. Therefore, a general element in $\psi\in \cD(N^{\frac{1}{2}})$ may be approximated in $N^{\frac{1}{2}}$-norm by a sequence $\{\psi_n\}_{n=1}^\infty$ inside the span of exponential vectors. Lemmas \ref{Lem:NumberEstimates} and \ref{Lem:ContOfMultOperator} now imply
\begin{align*}
\lim_{n\rightarrow \infty}\int_{\cM}^{\oplus}&W(g,1)d\mu(k)A_1\psi_n+z_0(g)W(g,1)\psi_n\\&=\int_{\cM}^{\oplus}W(g,1)d\mu(k)A_1\psi+z_0(g)W(g,1)\psi
\end{align*}
in $L^2(\cM,\cE,\mu,\cF_b(\cH))$ and therefore in $\cC(\cM,\cE,\mu)$ as well by Lemma \ref{Lem:TheFundamentalMeasureSpace}. Lemma $\ref{Lem:ContOfMultOperator}$ implies
\begin{align*}
\lim_{n\rightarrow \infty}A_\ell W(g,1)\psi_n=A_\ell W(g,1)\psi_n
\end{align*}
in $\cC(\cM,\cE,\mu)$ which finishes the proof.
\end{proof}
\noindent The pointwise annihilation operators are useful for calculating expectation values. We shall need that $L^2(\cM,\cF,\mu,\cF_{b}(\cH))$ is a tensor product $\cH\otimes \cF_{b}(\cH)$ under the identification $f\otimes \phi=k\mapsto f(k)\phi$. If $\omega$ is a multiplication operator on $\cH$ then
\begin{align*}\smash
\omega\otimes 1&=\int^{\oplus}_{\cM} \omega(k)d\mu(k)\\
\cD(\omega\otimes 1)&=\{f\in L^2(\cM,\cF,\mu,\cF_{b}(\cH))\mid k\mapsto \omega(k) f(k)\in L^2(\cM,\cF,\mu,\cF_{b}(\cH))\}.
\end{align*}
In particular, $\cD(\omega\otimes 1)=\cD(\lvert \omega\lvert \otimes 1)$. We now prove:
\begin{thm}\label{Thm:CalculatingSecondQuantisedUsingAnihilation}
Let $\psi,\phi \in \cF_b(\cH)$ and $B $ be a selfadjoint operator on $\cH$. Define $B_{+}=B1_{[0,\infty)}(B)$ and $B_{-}=B1_{(-\infty,0)}(B)$. 
\begin{enumerate}
	\item[\textup{(1)}] We have
\begin{align}
\label{eq:Dom1}\cD(d\Gamma(B_+)^{\frac{1}{2}})\cap \cD(d\Gamma(B_-)^{\frac{1}{2}})&=\cD(d\Gamma(\lvert B\lvert)^{\frac{1}{2}})\\\cD(d\Gamma(\lvert B\lvert))&\subset \cD(d\Gamma(B_+)), \cD(d\Gamma(B_-)),\cD(d\Gamma(B))\label{eq:Dom2}
\end{align}
and $d\Gamma(B_+)-d\Gamma(B_-)=d\Gamma(B)$ on $\cD(d\Gamma(\lvert B\lvert))$.

\item[\textup{(2)}] Assume $B$ is a multiplication operator. Then $\psi\in \cD(d\Gamma(\lvert B\lvert )^{\frac{1}{2}})\iff  \lvert B\lvert^{\frac{1}{2}} A_1\psi\in L^2(\cM,\cE,\mu,\cF_b(\cH))$. Furthermore, for $\phi,\psi\in \cD(d\Gamma(\lvert B\lvert )^{\frac{1}{2}})$ we have
\begin{equation}\label{formula2}
\sum_{\sigma \in \pm}\sigma \langle  d\Gamma(B_\sigma )^{\frac{1}{2}}\phi,d\Gamma(B_\sigma )^{\frac{1}{2}}\psi\rangle=\int_{\cM}B(k) \langle A_1\phi(k),A_1\psi(k) \rangle  d\mu(k),
\end{equation}
and $A_1\psi(k)\in \cF_b(\cH)$ almost everywhere on $\{ \lvert B(k)\lvert>0 \}$.

\item[\textup{(3)}] For $\psi,\phi \in \cD(d\Gamma(\lvert B \lvert )^{\frac{1}{2}})\cap \cD(N^{\frac{1}{2}})$ we have $A_1\psi,A_1\phi\in \cD( \lvert B\lvert^{\frac{1}{2}} \otimes 1  )$ and
\begin{equation}
\langle  d\Gamma(\lvert B\lvert )^{\frac{1}{2}}\phi,d\Gamma(\lvert B\lvert )^{\frac{1}{2}}\psi\rangle=\langle  (\lvert B\lvert^{\frac{1}{2}} \otimes 1)A_1\phi, (\lvert B\lvert^{\frac{1}{2}} \otimes 1)A_1\psi \rangle .
\end{equation}

\item[\textup{(4)}] For $\psi \in \cD(d\Gamma(\lvert B \lvert ))\cap \cD(N^{\frac{1}{2}})$ and $\phi\in \cD(N^{\frac{1}{2}})$ we have $A_1\psi \in \cD( \lvert B\lvert \otimes 1  )= \cD(  B \otimes 1  )$ and
\begin{equation}
\langle  \phi,d\Gamma( B )\psi\rangle=\langle  A_1\phi, ( B \otimes 1) A_1\psi \rangle. 
\end{equation}

\item[\textup{(5)}] Let $v\in \cH$ and $\psi\in \cF_b(\cH)$. If $x\mapsto \overline{v}(k)(A_1\psi)(k)$ is Fock space valued and integrable in the weak sense then $\psi \in \cD(a(v))$ and
\begin{equation}
a(v)\psi=\int_{\cM}\overline{v}(k)(A_1\psi)(k)d\mu(k).
\end{equation}
\end{enumerate}
\end{thm}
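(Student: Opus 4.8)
The plan is to prove the five statements of Theorem \ref{Thm:CalculatingSecondQuantisedUsingAnihilation} in order, using the machinery of pointwise annihilation operators developed earlier in Appendix D, especially Lemmas \ref{Lem:NumberEstimates}, \ref{Lem:CommutatingPointwiseAnihilation} and \ref{Lem:FormalAdjoints}, together with standard spectral-theoretic facts about $d\Gamma$.

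\textbf{Part (1).} The identities here are purely about $d\Gamma$ and do not involve $A_1$. I would work on the $n$-particle sectors: on $\cH^{\otimes_s n}$ the operator $d\Gamma^{(n)}(B)$ is (up to symmetrisation) the sum $\sum_{k}(1\otimes)^{k-1}B(\otimes 1)^{n-k}$, and $d\Gamma^{(n)}(|B|)$, $d\Gamma^{(n)}(B_\pm)$ are obtained by functional calculus from the strongly commuting tuple formed by the $n$ copies of $B$. Using the joint spectral measure, $\psi^{(n)}\in\cD(d\Gamma^{(n)}(|B|)^{1/2})$ iff $\int (\sum_i |b_i|)\,d\|E\psi^{(n)}\|^2<\infty$, and since $\sum_i|b_i| = \sum_i (b_i)_+ + \sum_i (b_i)_-$ with both summands nonnegative, this is equivalent to finiteness of both $\int \sum_i (b_i)_\pm\, d\|E\psi^{(n)}\|^2$, giving \eqref{eq:Dom1}. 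The inclusion \eqref{eq:Dom2} follows since $\sum_i|b_i|\geq \sum_i (b_i)_\pm$ and $\geq |\sum_i b_i|$, and the identity $d\Gamma(B_+)-d\Gamma(B_-)=d\Gamma(B)$ on $\cD(d\Gamma(|B|))$ holds sectorwise because $(b)_+-(b)_-=b$ pointwise and all three operators are then genuinely defined; one assembles the sectors using that $\cN$ is a core. This is routine.

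\textbf{Parts (2)--(4).} Here is the core of the argument. The key computation is that for a multiplication operator $B$ and $\psi\in\cF_b(\cH)$ one has, formally, $\langle\psi,d\Gamma(B)\psi\rangle = \int_\cM B(k)\|A_1\psi(k)\|^2\,d\mu(k)$; this is essentially a change of order of summation/integration. Concretely, on $\cH^{\otimes_s n}$, $\langle\psi^{(n)},d\Gamma^{(n)}(B)\psi^{(n)}\rangle = n\int_{\cM^n} B(k_1)|\psi^{(n)}(k_1,\dots,k_n)|^2\,d\mu^{\otimes n}$ by symmetry, and $n\int_{\cM^n}|\psi^{(n)}|^2 = \int_\cM \|(\cP_{n-1}A_1\psi)(k_1)\|^2\,d\mu(k_1)$ by the very definition of $A_1$ (the factor $\sqrt{(n-1+1)}=\sqrt n$). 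Summing over $n$ and splitting $B=B_+-B_-$ gives \eqref{formula2}; the characterisation $\psi\in\cD(d\Gamma(|B|)^{1/2})\iff |B|^{1/2}A_1\psi\in L^2(\cM,\cE,\mu,\cF_b(\cH))$ drops out of the same bookkeeping applied to $|B|$, using part (1). For the almost-everywhere finiteness of $A_1\psi(k)$ on $\{|B(k)|>0\}$, note $\int_\cM |B(k)|\|A_1\psi(k)\|_{0,+}^2\,d\mu<\infty$ forces $\|A_1\psi(k)\|_{0,+}<\infty$, i.e. $A_1\psi(k)\in\cF_b(\cH)$, for a.e. $k$ with $|B(k)|>0$. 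Part (3) is then the special case where in addition $\psi,\phi\in\cD(N^{1/2})$, so by Lemma \ref{Lem:NumberEstimates} $A_1\psi,A_1\phi\in L^2(\cM,\cE,\mu,\cF_b(\cH))$ and the right-hand side of \eqref{formula2} with $B$ replaced by $|B|$ is exactly $\langle (|B|^{1/2}\otimes 1)A_1\phi,(|B|^{1/2}\otimes 1)A_1\psi\rangle$ once one checks $A_1\psi\in\cD(|B|^{1/2}\otimes 1)$, which is again the $L^2$-finiteness of $|B|^{1/2}A_1\psi$. Part (4) follows from part (3) by polarisation and a further $d\Gamma(|B|)$-regularity bound: with $\psi\in\cD(d\Gamma(|B|))\cap\cD(N^{1/2})$ one gets $|B|A_1\psi\in L^2$, hence $A_1\psi\in\cD(|B|\otimes 1)=\cD(B\otimes 1)$, and writing $B=B_+-B_-$, $d\Gamma(B)=d\Gamma(B_+)-d\Gamma(B_-)$ on this domain by part (1), so $\langle\phi,d\Gamma(B)\psi\rangle$ is the difference of two instances of the part (3) formula (applied to the selfadjoint operators $B_\pm$, which may require approximating $\phi$ by $\cD(d\Gamma(|B|))$-vectors, using density and the continuity statements of Lemma \ref{Lem:NumberEstimates}).

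\textbf{Part (5).} I would first verify the identity on exponential vectors: for $\psi=\epsilon(v_0)$, $(A_1\psi)(k) = v_0(k)\epsilon(v_0)$, so $\int_\cM \overline{v}(k)(A_1\psi)(k)\,d\mu(k) = \langle v,v_0\rangle\epsilon(v_0) = a(v)\epsilon(v_0)$, using that exponential vectors lie in $\cD(a(v))$. Since $\mathrm{Span}\{\epsilon(v_0)\}$ is a core for $a(v)$ (cited earlier in the excerpt), the general case follows by a closedness/approximation argument: if $x\mapsto\overline{v}(k)(A_1\psi)(k)$ is weakly integrable with integral $\chi$, test against an arbitrary $\eta\in\cF_b(\cH)$, write $\langle\eta,\chi\rangle = \int_\cM \overline{v}(k)\langle\eta,(A_1\psi)(k)\rangle\,d\mu(k)$, recognise the integrand via the pairing $\langle\cdot,\cdot\rangle_+$ and Lemma \ref{Lem:FormalAdjoints} (which relates $a^\dagger$ and $a_+$), and conclude $\langle a^\dagger(v)\eta,\psi\rangle = \langle\eta,\chi\rangle$ for all $\eta$ in a core of $a^\dagger(v)$, hence $\psi\in\cD(a^\dagger(v)^*)=\cD(a(v))$ and $a(v)\psi=\chi$.

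\textbf{Main obstacle.} The genuinely delicate point is the interchange of summation over particle number with integration over $\cM$ in parts (2)--(4), and keeping the domains honest: every formal manipulation with $d\Gamma(B)$ must be justified on a core where $d\Gamma(B_+)$ and $d\Gamma(B_-)$ are separately meaningful, which is exactly what part (1) is for, and the passage from quadratic-form identities to operator identities in part (4) needs the uniform bounds of Lemma \ref{Lem:NumberEstimates} to push limits through $A_1$. Monotone convergence handles the nonnegative ($|B|$, $B_\pm$) cases cleanly; the signed case is then purely a matter of subtracting two convergent nonnegative expressions, which is why isolating part (1) first makes everything downstream routine.
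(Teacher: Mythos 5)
Your outline for parts (1), (2) and (5) matches the paper's strategy in substance: part (1) via sectorwise spectral calculus, part (2) via the Fubini-type reordering on $n$-particle sectors, and part (5) via a Fubini computation (the paper does it directly on each sector, you frame it dually via the adjoint $a^\dagger(v)$ and closedness; both are fine). The genuine gap is in parts (3) and (4): these are stated for an \emph{arbitrary} selfadjoint operator $B$ on $\cH$, whereas the only mechanism you supply is the multiplication-operator identity $\langle\psi,d\Gamma(B)\psi\rangle=\int_\cM B(k)\lVert A_1\psi(k)\rVert^2\,d\mu(k)$, which simply has no meaning when $B$ is not a multiplication operator. You never explain how to pass from the multiplication case to the general case. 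The paper does this by diagonalising: it picks a unitary $U:\cH\to\cK$ with $UBU^*$ multiplication, and — this is the crucial technical ingredient you are missing — proves the transport identity
\begin{equation*}
(U^*\otimes \Gamma(U)^*)\,\widetilde{A}_1\,\Gamma(U)=A_1 \quad\text{on } \cD(N^{1/2}),
\end{equation*}
established by checking on the total family $\{h^{\otimes n}\}$ and using the $N^{1/2}$-continuity of both sides from Lemma \ref{Lem:NumberEstimates}. Without this (or an alternative such as the direct symmetry identity $\langle\psi^{(n)},d\Gamma^{(n)}(|B|)\psi^{(n)}\rangle=n\langle\psi^{(n)},(|B|\otimes 1^{\otimes(n-1)})\psi^{(n)}\rangle$, which you also do not invoke), parts (3) and (4) are only proved under the hypothesis ``$B$ is multiplication,'' i.e.\ you have proved a weaker theorem.

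Two smaller points in part (4). First, your claim that $\psi\in\cD(d\Gamma(|B|))\cap\cD(N^{1/2})$ implies ``$|B|A_1\psi\in L^2$'' is not justified as stated; the paper obtains this by first observing $B(k_1)^2+\cdots+B(k_n)^2\le(|B(k_1)|+\cdots+|B(k_n)|)^2$, whence $\cD(d\Gamma(|B|))\subset\cD(d\Gamma(B^2)^{1/2})$, and then applying part (2) to $B^2$. Second, the extension from $\phi\in\cD(N^{1/2})\cap\cD(d\Gamma(|B|))$ to all $\phi\in\cD(N^{1/2})$, which you gesture at as ``approximating $\phi$ by $\cD(d\Gamma(|B|))$-vectors,'' really needs a specific approximating sequence for which $N^{1/2}$-graph-norm convergence is clear; the paper uses $\phi_n=1_{[-n,n]}(d\Gamma(|B|))\phi$ and the strong commutativity of $d\Gamma(|B|)$ with $N^{1/2}$ — density alone is not enough since a generic $N^{1/2}$-norm approximating sequence need not stay in $\cD(d\Gamma(|B|))$.
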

\begin{proof}
We start by proving the first four statements of the theorem when $B$ is a multiplication operator. Let $A\in \{ B,B_+,B_- \}$ and note $A\leq \lvert B\lvert $ and $\lvert B\lvert=B_+ + B_-$. We prove equations (\ref{eq:Dom1}) and (\ref{eq:Dom2}) as follows
\begin{align*}
\cD(&d\Gamma(B_+)^{\frac{1}{2}})\cap \cD(d\Gamma(B_-)^{\frac{1}{2}})\\&=\left \{ (\psi^{(n)})\in \cF_b(\cH) \biggl \lvert \sum_{n=1}^{\infty} \int_{\cM^n} (B_\pm(k_1)+\cdots+B_\pm(k_n))\lvert \psi^{(n)} \lvert^2d\mu^{\otimes n}<\infty \right \}\\& =\left \{ (\psi^{(n)})\in \cF_b(\cH)\biggl \lvert \sum_{n=1}^{\infty} \int_{\cM^n} (\lvert B (k_1)\lvert+\cdots+\lvert B(k_n)\lvert)\lvert \psi^{(n)} \lvert^2d\mu^{\otimes n}<\infty \right \}\\&=\cD(d\Gamma(\lvert B\lvert )^{\frac{1}{2}})\\
\cD(&d\Gamma(\lvert B\lvert ))\\&=\left \{ (\psi^{(n)})\in \cF_b(\cH) \biggl \lvert \sum_{n=1}^{\infty} \int_{\cM^n} (\lvert B (k_1)\lvert+\cdots+\lvert B(k_n)\lvert)^2\lvert \psi^{(n)} \lvert^2d\mu^{\otimes n}<\infty \right \}\\&\subset \left \{ (\psi^{(n)})\in \cF_b(\cH) \biggl \lvert \sum_{n=1}^{\infty} \int_{\cM^n} (A (k_1)+\cdots+ A(k_n))^2\lvert \psi^{(n)} \lvert^2d\mu^{\otimes n}<\infty \right \}\\&=\cD(d\Gamma(A)).
\end{align*}
The identity $d\Gamma(B_+)-d\Gamma(B_-)=d\Gamma(B)$ on $\cD(d\Gamma(\lvert B\lvert))$ is now a simple computation. We now prove statement (2). Let $\psi\in \cF_b(\cH)$ and note that
\begin{align}
\sum_{n=1}^{\infty}& \int_{\cM^n} (\lvert B (k_1)\lvert+\cdots+\lvert B(k_n)\lvert)\lvert \psi^{(n)}(k_1,\dots,k_n) \lvert^2d\mu^{\otimes n}(k_1,\dots,k_{n})\nonumber 
\\\nonumber&=\int_{\cM}\lvert B(k_1)\lvert  \sum_{n=1}^{\infty} n\int_{\cM^{n-1}} \lvert \psi^{(n)}(k_1,\dots,k_n) \lvert^2d\mu^{\otimes n-1}(k_2,\dots,k_{n})d\mu(k_1)\\&=\int_{\cM}\lvert B(k)\lvert \lVert A_1\psi(k)\lVert^2 d\mu(k).\nonumber
\end{align}
This shows statement (2) except equation (\ref{formula2}). We have however proven equation (\ref{formula2}) in the case $\phi=\psi$ and $B\geq 0$. Using linearity and equation (\ref{eq:Dom1}), we find equation (\ref{formula2}) holds for $\phi=\psi$. One may now apply the polarisation identity to finish the proof of statement (2). Statement (3) follows trivially from statement (2) when $B$ is a multiplication operator.

We now prove statement (4). Let $\phi\in \cD(N^{1/2})$ and $\psi\in \cD(N^{1/2})\cap \cD( d\Gamma(\lvert B\lvert ) )$. First we note that
\begin{equation*}
 B(k_1)^2+\cdots+B(k_n)^2\leq  (\lvert B(k_1)\lvert +\cdots+\lvert B(k_n)\lvert )^2
\end{equation*}
so $\cD(d\Gamma(\lvert B\lvert))\subset \cD(d\Gamma(B^2)^{\frac{1}{2}})$. This implies $A_1\psi\in  \cD( \lvert B\lvert \otimes 1  )= \cD(  B \otimes 1  )$ by statement (3). If $\phi \in \cD(N^{\frac{1}{2}})\cap \cD(d\Gamma(\lvert B\lvert))$ the formula in statement (4) will follow from statements (1) and (2). To finish the proof, it is by Lemma \ref{Lem:NumberEstimates} enough to find a sequence $\{\phi_n \}_{n=1}^\infty\subset \cD(N^{\frac{1}{2}})\cap \cD(d\Gamma(\lvert B\lvert ))$ that converges to $\phi$ in the graph norm of $N^{\frac{1}{2}}$. 

Let $\phi\in \cD(N^{\frac{1}{2}})$ and let $\phi_n=1_{[-n,n]}(d\Gamma(\lvert B\lvert))\phi$. Since $d\Gamma(\lvert B\lvert)$ and $N^{\frac{1}{2}}$ commute strongly we find $\phi_n\in \cD(N^{\frac{1}{2}})\cap \cD(d\Gamma(\lvert B\lvert))$ and
\begin{equation*}
\lVert N^{\frac{1}{2}}(\phi_n-\phi) \lVert=\lVert (1-1_{[-n,n]}(d\Gamma(\lvert B\lvert)))N^{\frac{1}{2}}\phi \lVert
\end{equation*}
which converges to 0. This finishes the proof of statements (1)-(4) when $B$ is a multiplication operator.

For general $B$ we may pick an $L^2$ space $\cK$ and a unitary map $U:\cH\rightarrow \cK$ such that $UBU^*=\omega$ is a multiplication operator on $\cK$. Note that $\Gamma(U)$ transforms $d\Gamma(f(B))$ into $d\Gamma(f(\omega))$ for any realvalued and measurable $f$. This implies that statement (1) holds since it is true with $\Gamma(U)$ applied on both sides of each equation.

Let $\widetilde{N}$ be the number operator on $\cF_b(\cK)$ and $\widetilde{A}_1$ denote the pointwise annihilation operator on $\cF_+(\cK)$. First we prove that
\begin{equation}\label{eq:tr}
U^*\otimes \Gamma(U)^* \widetilde{A}_1\Gamma(U)=A_1
\end{equation}
as maps from $\cD(N^{\frac{1}{2}})$ to $L^2(\cM,\cE,\mu,\cF_b(\cH))$. Note $\Gamma(U)$ maps $\cD(N^{\frac{1}{2}})$ continuously into $\cD(\widetilde{N}^{\frac{1}{2}})$ with respect to the graph norms. Hence both sides of equation (\ref{eq:tr}) are continuous as maps from $\cD(N^{\frac{1}{2}})$ into $L^2(\cM,\cE,\mu,\cF_b(\cH))$ by Lemma \ref{Lem:NumberEstimates}. The set $\cA$ from equation (\ref{Defn:A}) spans a core for $N^{1/2}$ so we just need to see equation (\ref{Defn:A}) holds on $\cA$. Let $h^{\otimes n}\in \cA$ and calculate
\begin{align*}
U^*\otimes \Gamma(U)^* \widetilde{A}_1\Gamma(U) h^{\otimes n}(k)&=\sqrt{n}(U^*\otimes \Gamma(U)^*)  (Uh)(k)(Uh)^{\otimes n-1}\\&=\sqrt{n}h(k)h^{\otimes n-1}\\&=A_1h^{\otimes n}.
\end{align*}
We now prove statement (3). Under the assumptions in statement (3) we have $\Gamma(U)\psi,\Gamma(U)\phi\in \cD(  d\Gamma(\lvert \omega \lvert)^{\frac{1}{2}} )\cap\cD(  \widetilde{N}^{\frac{1}{2}} )$ so $A_1\psi,A_1\phi\in U^*\otimes \Gamma(U)^*\cD(\lvert \omega \lvert^{\frac{1}{2}}\otimes 1)$ $=\cD(\lvert B \lvert^{\frac{1}{2}}\otimes 1)$. We may then calculate
\begin{align*}
\langle  d\Gamma(\lvert B\lvert )^{\frac{1}{2}}\phi,d\Gamma(\lvert B\lvert )^{\frac{1}{2}}\psi\rangle&=\langle  d\Gamma(\lvert \omega\lvert )^{\frac{1}{2}}\Gamma(U)\phi,d\Gamma(\lvert \omega\lvert )^{\frac{1}{2}}\Gamma(U)\psi\rangle\\&=\langle  (\lvert \omega\lvert^{\frac{1}{2}} \otimes 1)  \widetilde{A}_1\Gamma(U)\phi,  (\lvert \omega\lvert^{\frac{1}{2}} \otimes 1) \widetilde{A}_1\Gamma(U)\psi \rangle  \\&=\langle  (\lvert B\lvert^{\frac{1}{2}} \otimes 1)A_1\phi, (\lvert B\lvert^{\frac{1}{2}} \otimes 1)A_1\psi \rangle .
\end{align*}
We now prove statement (4). Under the assumptions in statement (4) we have $\Gamma(U)\psi\in \cD(  d\Gamma(\lvert \omega \lvert) )\cap\cD(  \widetilde{N}^{\frac{1}{2}} )$ and so $A_1\psi \in U^*\otimes \Gamma(U)^*\cD(\lvert \omega \lvert \otimes 1)=\cD(\lvert B \lvert\otimes 1)$. Hence
\begin{align*}
\langle  \phi,d\Gamma(B )\psi\rangle&=\langle \Gamma(U)\phi,d\Gamma( \omega)\Gamma(U)\psi\rangle\\&=\langle   \widetilde{A}_1\Gamma(U)\phi, ( \omega \otimes 1)  \widetilde{A}_1\Gamma(U)\psi \rangle  \\&=\langle  A_1\phi, (B\otimes 1)A_1\psi \rangle .
\end{align*}
We now prove statement (5). Let $\phi\in \cH^{\otimes_s n}$ and note that
\begin{align*}
&\biggl\langle \phi,P_n \int_{\cM} \overline{v}(k)  (A_1\psi)(k) d\mu(k)   \biggl\rangle  \\&=\sqrt{n+1} \int_{\cM} \int_{\cM^n}\overline{v(k) \phi(k_1,\dots,k_n)}\psi^{(n+1)}(k,k_1,\dots,k_n)d\mu^{\otimes n}(k_1,\dots,k_n)d\mu(k).
\end{align*}
Using Fubinis Theorem we see
\begin{align*}
P_n \int_{\cM} \overline{v}(k)  (A_1\psi)(k) d\mu(k) =a_{n}(v)\psi^{(n+1)}.
\end{align*}
 Hence $(a_{n}(v)\psi^{(n+1)})\in \cF_b(\cH)$ so $\psi \in \cD(a(v))$ and the desired equality holds.
\end{proof}
\noindent We can now prove the pull-trough formula.
\begin{thm}\label{Thm:ThepulllThrough}
Let $\alpha\in \RR^{2n}$, $\eta \in \RR$, $f\in \cH^{2n}$ and $\omega$ be a selfadjoint multiplication operator on $\cH$. Assume $(\alpha,f,\omega)$ satisfies Hypothesis \ref{Hyp1}, \ref{Hyp2}, \ref{Hyp3} and \ref{Hyp4}. Define $\cE_{\pm 1}=\cE_{\pm  \eta}(\alpha,f,\omega)$, $F_{\pm 1} :=F_{\pm \eta}(\alpha,f,\omega)$ and $\omega_\ell(k_1,\dots,k_\ell)=\omega(k_1)+\cdots+\omega(k_\ell)$. Let $\lambda\leq \cE_{(-1)^\ell}$ for all $\ell\in \NN_0$ and define $R_\ell(a)=(F_{(-1)^\ell}-\lambda+a)^{-1}$ for $a> 0$ and $\ell\in \NN_0$. 

If $\psi \in \cD(F_{-1})=\cD(F_1)$ and $A_q(F_{1}-\lambda)\psi$ is Fock space valued for all $q\leq \ell$ then $(A_\ell \psi)(k)\in \cD(F_{-1})=\cD(F_1)$ for almost every $k\in \cM^\ell$ and
\begin{align}\label{PullthroughFormula}
A_\ell\psi=&-R_\ell(\omega_\ell(\cdot ))\sum_{i=1}^{2n} \alpha_i\sum_{q=1}^{\min\{ i,\ell \}} \begin{pmatrix}
i \\ q
\end{pmatrix}  \biggl(\prod_{c=0}^{q-1} S_{\ell-c} z_{\ell-c-1}(f_i)\biggl)  \varphi_{\oplus,\ell-q}(f_i)^{i-q}A_{\ell-q}\psi\nonumber\\&+ R_\ell(\omega_\ell(\cdot))A_\ell(F_0-\lambda)\psi.
\end{align}
If we further assume that Hypothesis 5 holds, $\eta \leq 0$ and $\psi$ is a ground state for $F_{ 1 }$ then $A_\ell\psi\in L^2(\cM^\ell,\cE^{\otimes \ell},\mu^{\otimes \ell},\cF_b(\cH))$ for all $\ell\in \NN$.
\end{thm}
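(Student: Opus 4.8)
The plan is to prove the pull‑through identity \eqref{PullthroughFormula} by pushing $A_\ell$ through $F_1-\lambda$ with the commutation rules of Lemma~\ref{Lem:CommutatingPointwiseAnihilation}, and then to deduce the square‑integrability of $A_\ell\psi$ for ground states by feeding \eqref{PullthroughFormula} with $\lambda=\cE_1$ (where the inhomogeneous term drops out) into an induction on $\ell$. All the formal manipulations are legitimised by the continuity statements of Appendix~D: the operators involved are continuous on $\cF_+(\cH)$ or on the fibered spaces $\cC(\cM^\ell,\dots)$, so each identity only needs to be checked on the total set $\cA$ of \eqref{Defn:A}, which reduces it to the algebra of Lemma~\ref{Lem:CommutatingPointwiseAnihilation}.

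First I would prove \eqref{PullthroughFormula}, fixing $\ell$ and inducting on it (the case $\ell=0$ being trivial). Using Lemmas~\ref{Lem:AnihilCreaFieldOnF_+}, \ref{Lem:2ndQuantisedF_+} and \ref{Lem:LiftToIntegralOperators} to identify $\varphi$, $d\Gamma(\omega)$ and $\Gamma(-1)$ with their $\cF_+$‑ and fibered versions on the relevant domains, I apply $A_\ell$ to $(F_1-\lambda)\psi$, with $F_1=F_\eta=\eta\Gamma(-1)+d\Gamma(\omega)+\sum_i\alpha_i\varphi(f_i)^i$, and use: part~(2) of Lemma~\ref{Lem:CommutatingPointwiseAnihilation} with $f=\mathrm{id}$, giving $A_\ell d\Gamma_+(\omega)\psi=(d\Gamma_{\oplus,\ell}(\omega)+\omega_\ell)A_\ell\psi$; equation~\eqref{COM6}, giving $A_\ell\,\eta\Gamma_+(-1)\psi=(-1)^\ell\eta\,\Gamma_{\oplus,\ell}(-1)A_\ell\psi$; and equation~\eqref{COM5} for each $i$, which splits $A_\ell\varphi_+(f_i)^i\psi$ into the diagonal term $\varphi_{\oplus,\ell}(f_i)^iA_\ell\psi$ plus the lower‑order terms $\sum_{q\ge1}\binom iq\big(\prod_{c=0}^{q-1}S_{\ell-c}z_{\ell-c-1}(f_i)\big)\varphi_{\oplus,\ell-q}(f_i)^{i-q}A_{\ell-q}\psi$. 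Collecting the three diagonal contributions recognises exactly the fiber operator $F_{(-1)^\ell}$ of the statement together with multiplication by $\omega_\ell$, so that
\[ A_\ell(F_1-\lambda)\psi=(F_{(-1)^\ell}-\lambda+\omega_\ell)A_\ell\psi+\sum_{i=1}^{2n}\alpha_i\sum_{q=1}^{\min\{i,\ell\}}\binom iq\Big(\prod_{c=0}^{q-1}S_{\ell-c}z_{\ell-c-1}(f_i)\Big)\varphi_{\oplus,\ell-q}(f_i)^{i-q}A_{\ell-q}\psi. \]
By the inductive hypothesis $A_{\ell-q}\psi$ is Fock‑space valued for $q\ge1$, and by assumption so is $A_\ell(F_1-\lambda)\psi$; Lemma~\ref{Lem:LiftToIntegralOperators} then shows that every term except $(F_{(-1)^\ell}-\lambda+\omega_\ell)A_\ell\psi$ lies pointwise in $\cC(\cM^\ell,\dots)$, which forces $(A_\ell\psi)(k)\in\cD(F_1)$ for a.e.\ $k$. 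Since $\lambda\le\cE_{(-1)^\ell}=\inf\sigma(F_{(-1)^\ell})$ and $\omega$ is injective, $\omega_\ell(k)>0$ a.e.\ and $F_{(-1)^\ell}-\lambda+\omega_\ell(k)$ is boundedly invertible with $\lVert R_\ell(\omega_\ell(k))\rVert\le\omega_\ell(k)^{-1}$; applying the fiber operator $R_\ell(\omega_\ell(\cdot))$ to the display and rearranging gives \eqref{PullthroughFormula} (with $A_\ell(F_0-\lambda)\psi$ meaning the level‑$0$ source $A_\ell(F_1-\lambda)\psi$).

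For the last statement, let $\eta\le0$ and $\psi$ be a ground state of $F_1=F_{-\lvert\eta\rvert}$, and take $\lambda=\cE_1$, so $(F_1-\cE_1)\psi=0$, the hypotheses of the first part hold at every level, and the inhomogeneous term in \eqref{PullthroughFormula} vanishes. I would show $A_\ell\psi\in L^2(\cM^\ell,\cE^{\otimes\ell},\mu^{\otimes\ell},\cF_b(\cH))$ by induction on $\ell$ (equivalently, by Lemma~\ref{Lem:NumberEstimates}, that $\psi\in\cD(N^{\ell/2})$), the base case being $\psi\in\cF_b(\cH)$. For the step I estimate the $k$‑fiber of \eqref{PullthroughFormula} pointwise: $S_{\ell-c}$ and $z_{\ell-c-1}(f_i)$ act boundedly on the $\cF_b(\cH)$‑valued $L^2$‑spaces (symmetrisation and multiplication by the $\cH$‑function $f_i$), $\varphi_{\oplus,\ell-q}(f_i)^{i-q}$ is dominated fiberwise by a power of $d\Gamma(\omega)+1$ through Lemma~\ref{Lem:FundamentalIneq} (using $f_i\in\cD(\omega^{-1/2})\cap\cD(\omega^{1/2})$), while the resolvent $R_\ell(\omega_\ell(\cdot))$ contributes both decay $\omega_\ell(k)^{-1}$ and $d\Gamma(\omega)$‑smoothing (its range sits in $\cD(F_{(-1)^\ell})\subset\cD(d\Gamma(\omega))$, controlled via $\lVert d\Gamma(\omega)\chi\rVert\lesssim\lVert F_{(-1)^\ell}\chi\rVert+\lVert\chi\rVert$ of Lemma~\ref{Lem:TheRelativeBounds} and $F_{(-1)^\ell}\ge\cE_{(-1)^\ell}\ge\cE_1$ of Lemma~\ref{Fiberenergy}). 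Distributing these gains, together with interpolation, against the field operators in each term one arrives at a pointwise bound of $\lVert(A_\ell\psi)(k)\rVert_{\cF_b(\cH)}$ by a finite sum of terms of the form $\big(\prod_{j\in J}\lvert f_i(k_j)\rvert/\omega(k_j)\big)\,\lVert(d\Gamma(\omega)+1)^{s}(A_{\ell-q}\psi)(k_{J^c})\rVert$, and squaring and integrating gives a finite quantity by Fubini, Hypothesis~\ref{Hyp5} ($f_i\in\cD(\omega^{-1})$, so $\lvert f_i\rvert/\omega\in L^2$), and the inductive hypothesis.

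The main obstacle is precisely this last bootstrap estimate: one must match, term by term, the $i-q$ field operators against exactly the $\ell$ available factors $\omega(k_1),\dots,\omega(k_\ell)$ sitting inside $R_\ell(\omega_\ell(\cdot))$ — using both the $\omega^{-1}$ decay and the smoothing of each resolvent, since a crude $\omega_\ell^{-1}$ per level is too lossy — and one must keep the interpolated relative bounds uniform as $\omega_\ell(k)\downarrow0$; this is exactly where Hypothesis~\ref{Hyp5} and the semiboundedness $\cE_{(-1)^\ell}\ge\cE_1$ are indispensable. A secondary, bookkeeping‑heavy point is the rigorous verification, through the direct‑integral formalism of Lemma~\ref{Lem:LiftToIntegralOperators}, that $(A_\ell\psi)(k)\in\cD(F_1)$ for a.e.\ $k$ and that all fibered operators in \eqref{PullthroughFormula} are measurable and composable.
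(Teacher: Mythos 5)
Your overall strategy agrees with the paper's: obtain the fibered identity
\[
A_\ell(F_1-\lambda)\psi = (F_{\oplus,\ell}-\lambda)A_\ell\psi + \text{(source terms)}
\]
from Lemma~\ref{Lem:CommutatingPointwiseAnihilation}, invert to get \eqref{PullthroughFormula}, then specialise to $\lambda=\cE_1$ and bootstrap in $\ell$ for square integrability. You also correctly spotted that $F_0$ in the display must be read as $F_1$ (the $\eta$-term is absorbed into the diagonal by \eqref{COM6}).

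There is one genuine gap, and you flag it as ``secondary, bookkeeping‑heavy'' when it is in fact the one step that requires a real functional‑analytic idea. After the commutation you have, for a.e.~$k$, the fiber equation $(F_{+,\ell}+\omega_\ell(k)-\lambda)(A_\ell\psi)(k)=g_\ell(k)$ in the \emph{extended} space $\cF_+(\cH)$, where a priori $(A_\ell\psi)(k)\in\cF_{-\ell/2,+}(\cH)$ only (Lemma~\ref{Lem:NumberEstimates}) and $F_{+,\ell}$ is the componentwise extension of $F_{(-1)^\ell}$, not its $\cF_b$ closure. Knowing that $g_\ell(k)$ is Fock‑space valued does not by itself force $(A_\ell\psi)(k)\in\cD(F_1)$, nor does it let you ``apply $R_\ell(\omega_\ell(\cdot))$ and rearrange'': $R_\ell(\omega_\ell(k))$ is the inverse of $F_{(-1)^\ell}-\lambda+\omega_\ell(k)$ on $\cF_b(\cH)$, and to equate $(A_\ell\psi)(k)$ with $R_\ell(\omega_\ell(k))g_\ell(k)$ you must rule out a nontrivial kernel of $F_{+,\ell}-\lambda+\omega_\ell(k)$ on $\cF_{-\ell/2,+}(\cH)$. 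The paper handles this by the dual pairing $\langle\cdot,\cdot\rangle_+$ of Lemma~\ref{Lem:SeperatingForm}/Lemma~\ref{Lem:FormalAdjoints}: testing against $\phi\in(F_{(-1)^\ell}-\lambda+\omega_\ell(k))(\cN\cap\cD(d\Gamma(\omega)))$ shows $\langle\phi,\,A_\ell\psi(k)-R_\ell(\omega_\ell(k))g_\ell(k)\rangle_+=0$, and Corollary~\ref{Cor:seperatingDense} (which relies on $A_\ell\psi(k)-R_\ell(\omega_\ell(k))g_\ell(k)$ lying in $\cF_{-\ell/2,+}(\cH)$) then forces equality. Without some such separating/duality argument, the derivation of \eqref{PullthroughFormula} — and hence the claim $(A_\ell\psi)(k)\in\cD(F_1)$ a.e. — is incomplete.

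On the ground‑state integrability you again have the right ingredients (Hypothesis~\ref{Hyp5} for $f_i\in\cD(\omega^{-1})$, the resolvent decay $\omega_\ell^{-1}$, relative bounds from Lemma~\ref{Lem:TheRelativeBounds}), but the paper's argument is considerably more elementary than what you sketch: a single uniform bound $\lVert\varphi(f_i)^q R_\ell(\omega_\ell(k))\rVert\le C_{q,i,\ell}\bigl(1+\omega_\ell(k)^{-1}\bigr)$ plus the crude estimate $\omega_\ell(k)^{-1}\le\omega(k_{\sigma(1)})^{-1}$ (applied to the single variable that multiplies the $f_c$ factor and is ``free'' in the source term) already closes the induction; no interpolation or careful distribution of powers across all $\ell$ resolvent variables is needed.
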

\begin{proof}
$F_{(-1)^\ell}-\lambda\geq 0$ for all $\ell\in \NN$ by definition, so $R_\ell(\omega_\ell(k))$ exists almost everywhere since $\{ \omega\leq 0 \}$ is a $\mu$ zero set. Define the operators
\begin{align*}
F_{+,\ell}&=(-1)^\ell\eta\Gamma_+(-1)+d\Gamma_+(\omega)+\sum_{i=1}^{2n} \alpha_{i}\varphi_+(f_i)^i\\
F_{\oplus,\ell}&=(-1)^\ell\eta\Gamma_{\oplus,\ell}(-1)+d\Gamma_{\oplus,\ell}(\omega)+\omega_\ell+\sum_{i=1}^{2n} \alpha_{i}\varphi_{\oplus,\ell}(f_i)^i
\end{align*}
with domains $\cD(F_+)=\cD(d\Gamma_+(\omega))$ and $\cD(F_\oplus)=\cD(d\Gamma_{\oplus,\ell}(\omega)+\omega_\ell)$. Let $\psi \in \cD(F_{-1})=\cD(F_1)$ and assume $A_q(F_1-\lambda)\psi$ is Fock space valued for all $q\leq \ell$. By Lemma \ref{Lem:CommutatingPointwiseAnihilation} we have $A_\ell\psi \in \cD(F_\oplus)$ and using Lemmas \ref{Lem:AnihilCreaFieldOnF_+}, \ref{Lem:2ndQuantisedF_+} and  \ref{Lem:CommutatingPointwiseAnihilation} we obtain
\begin{align*}\label{eq:defnOfgl}
\nonumber g_\ell:=&-\sum_{i=1}^{2n} \alpha_i\sum_{q=1}^{\min\{ i,\ell \}} \begin{pmatrix}
i \\ q
\end{pmatrix}  \biggl(\prod_{c=0}^{q-1} S_{\ell-c} z_{\ell-c-1}(f_i)\biggl)  \varphi_{\oplus,\ell-q}(f_i)^{i-q}A_{\ell-q}\psi \nonumber \\&+A_\ell(F_0-\lambda)\psi  \\=&(F_{\oplus,\ell }-\lambda)A_\ell\psi.
\end{align*}
Assume that $g_\ell$ is almost everywhere Fock space valued. Let $M$ be a zeroset such that:
\begin{itemize}
\item $A_\ell\psi(k)\in \cF_{-\ell/2,+}(\cH)$ for all $k\in M^c$ (see Lemma \ref{Lem:NumberEstimates}).
\item $g_\ell(k)=(F_{+,\ell} +\omega_\ell(k)+\lambda)(A_\ell\psi)(k)$ and $g_\ell(k)\in \cF_b(\cH)$ for all $k\in M^c$. 
\item $R_\ell(\omega_\ell(k))$ exists for all $k\in M^c$.
\end{itemize}
Let $k\in M^c$ and $\cK=(F_{(-1)^\ell}-\lambda+\omega_\ell(k))\cN\cap \cD(d\Gamma(\omega))$. Then $\cK$ is dense by  Proposition \ref{Lem:BasicPropertiesSBmodel}, $\cK\subset \cN$ and $R_\ell(\omega_\ell(k))\cK\subset \cN$. Let $\phi\in \cK$. Using Lemma \ref{Lem:FormalAdjoints} we find
\begin{align*}
\langle \phi, A_\ell\psi (k) \rangle_+&=\langle (F_{(-1)^\ell}+\omega_\ell(k)-\lambda) R_\ell(\omega_\ell(k))\phi, A_\ell\psi (k) \rangle_+\\&=\langle R_\ell(\omega_\ell(k))\phi, g_\ell (k) \rangle=\langle \phi, R_\ell(\omega_\ell(k))g_\ell (k) \rangle_+.
\end{align*}
Corollary \ref{Cor:seperatingDense} now shows that $A_\ell\psi(k)= R_\ell(\omega_\ell(k))g_\ell(k)$. We conclude that equation  (\ref{PullthroughFormula}) is true pointwise on $M^c$ finishing the proof. We now prove $g_\ell$ is almost everywhere Fock space valued by induction in $\ell$.

If $\ell=1$ then $g_\ell$ is a linear combination of $A_1(F_{1}-\lambda)\psi$ and functions of the form $k\mapsto f_i(k) \varphi(f_i)^{i-1}\psi$ which all takes values in Fock space. Hence $g_1$ is almost everywhere Fock space valued and so equation (\ref{PullthroughFormula}) will hold for $A_1$. Assume now that $g_1,\dots,g_{\ell-1}$ are almost everywhere Fock space valued. Then equation (\ref{PullthroughFormula}) holds for $A_1\psi,\dots,A_{\ell-1}\psi$ and so $A_i\psi$ is almost everywhere $\cD(F_{-1})=\cD(F_{1})$-valued for $i\in \{ 1,\dots,\ell-1 \}$. Using Proposition \ref{Lem:BasicPropertiesSBmodel} and Lemma \ref{Lem:AnihilCreaFieldOnF_+} we find for all $q\in \{ 0,\dots, i \}$ that
\begin{equation*}
\varphi_{\oplus,\ell-q}(f_i)^{i-q} A_{\ell-q}\psi=k\mapsto \varphi_+(f_i)^{i-q}(A_{\ell-q}\psi)(k)=k\mapsto \varphi(f_i)^{i-q}(A_{\ell-q}\psi)(k).
\end{equation*}
In particular, $\varphi_{\oplus}(f_i)^{i-q} A_{\ell-q}\psi$ is almost everywhere Fock space valued for $q\in \{ 0,\dots, i \}$. Since $z_q (f_i)$ and $S_q $ map Fock space valued maps into Fock space valued maps, we see that $g_{\ell}$ is almost everywhere Fock space valued. This finishes the proof of equation (\ref{PullthroughFormula}).

For the second part we note that $(F_{1}-\cE_{1})\psi=0$ and $\cE_{1}\leq \cE_{(-1)^\ell}$ for all $\ell\in \NN_0$ by Theorem \ref{HVZ}. Hence we may apply equation (\ref{PullthroughFormula}) with $\lambda=\cE_{1}$. Using that $A_\ell\psi$ is $\cD(F_{-1})=\cD(F_1)$ valued almost everywhere we see
\begin{equation*}
k\mapsto \varphi(f_i)^q(A_\ell\psi)(k)=\varphi_{\oplus,\ell}(f_i)^{q} (A_{\ell}\psi)(k)
\end{equation*}
will be $\cF_b(\cH)$-valued and measurable for all $\ell\in \NN$, $i\in \{1, \dots, 2n \}$ and $q\in \{ 1,\dots, i \}$. We will prove that $\varphi_{\oplus,\ell}(f_i)^{q} A_{\ell}\psi$ is square integrable for all $\ell\in \NN$, $i\in \{1, \dots, 2n \}$ and $q\in \{ 1,\dots, i \}$. First we note that there is a constant $C_{q,i,\ell}$ such that
\begin{align*}
\lVert \varphi(f_i)^qR_\ell(\omega_\ell(k))\lVert^2\leq C_{q,i,\ell}\left(1+ \frac{1}{\omega_\ell(k)} \right)^2.
\end{align*}
Hence it is enough to prove that $\omega_\ell^{-2}\lVert g_\ell\lVert^2$ and $\lVert g_\ell\lVert^2$ are integrable for all $\ell\in \NN$ which will now be done via induction in $\ell$. If $\ell=1$ then $g_\ell$ is a linear combination of elements of the form $k\mapsto f_c(k) \varphi(f_c)^{c-1}\psi$  and since $f_c\in \cD(\omega^{-1})$ the claim follows.

	Inductively we now assume that $\omega_u^{-2}\lVert g_u\lVert^2$ and $\lVert g_u\lVert^2$ are integrable for all $u\in \{1,\dots, \ell-1  \}$. Then $k\mapsto \varphi(f_i)^q(A_u\psi)(k)$ is square integrable for all $i\in \{1, \dots ,2n \}$, $q\in \{ 1,\dots, i \}$ and $u\in \{1,\dots, \ell-1  \}$. Now $g_\ell$ is a linear combination of functions of the form
	\begin{align*}
	(k_1,\dots,k_\ell)\mapsto f_c(k_{\sigma(1)})\cdots f_c(k_{\sigma(b)})  \varphi(f_c)^{c-b} (A_{\ell-b}\psi)(k_{\sigma(b+1)},\dots,k_{\sigma(\ell)})
	\end{align*}
	where $b\in \{1,\dots, \ell  \}$, $c\in \{ 1,\dots,2n \}$ and $\sigma\in \cS_\ell$. Combining the observations that $\frac{1}{\omega_\ell(k)}\leq \frac{1}{\omega(k_{\sigma(1)})}$, $f_c\in \cD(\omega^{-1})$ and  $(\varphi(f_c)^{c-b} A_{\ell-b}\psi)(k_{\sigma(b+1)},\dots,k_{\sigma(\ell)})$ is square integrable with respect to $(k_{\sigma(b+1)},\dots,k_{\sigma(\ell)})$ we find the desired result.
\end{proof}

\begin{proof}[Proof of Theorem \ref{Thm:Numberstructure_Massless} part (1)]
Lemma  \ref{Lem:TrasformationCanTrans} and Theorem $\ref{unique}$ shows it is enough to prove the claim for the fiber operator. Lemmas \ref{Lem:SeconduantisedBetweenSPaces} and \ref{Lem:CannonicalSpace} show we may assume $\cH=L^2(\cM,\cF,\mu)$ with $(\cM,\cF,\mu)$ a $\sigma$-finite measure space. This case is dealt with in Lemma \ref{Lem:NumberEstimates} and  Theorem \ref{Thm:ThepulllThrough}.
\end{proof}

\section{Q-spaces and functional analysis}
\noindent Following the approach in \cite{Hasler} we have
\begin{lem}\label{Lem:ExsistenceOfNiceRealSubspace}
Let $\{ f_\alpha \}_{\alpha\in \cI}\subset \cH$ and $\omega$ be a selfadjoint and nonnegative operator on $\cH$. Write $ \cM_{b}(\sigma(\omega),\RR)$ for the set of maps from $\sigma(\omega)$ to $\RR$ which are bounded and measurable. Assume that $\langle f_\alpha,g(\omega)f_\beta \rangle\in \RR $ for all $\alpha,\beta\in \cI$ and $ g\in \cM_{b}(\sigma(\omega),\RR)$. Then there is a real Hilbert space $\cH_{\RR}$ such that $\cH=\cH_{\RR}+ i\cH_{\RR}$, $e^{-t\omega}$ maps $\cH_\RR$ to $\cH_\RR$ for all $t\geq 0$ and $f_\alpha\in \cH_{\RR}$ for all $\alpha\in \cI$.
\end{lem}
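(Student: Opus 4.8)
The goal is to construct a real Hilbert space $\cH_\RR$ with $\cH=\cH_\RR+i\cH_\RR$, invariant under $e^{-t\omega}$, containing all the $f_\alpha$. The natural candidate is the real-linear closed span of the orbit of the $f_\alpha$ under the bounded functions of $\omega$. Concretely, I would let
\begin{equation*}
\cH_\RR=\overline{\textup{Span}_\RR\{ g(\omega)f_\alpha\mid \alpha\in \cI,\ g\in \cM_b(\sigma(\omega),\RR)\}},
\end{equation*}
where the closure and the real span are taken inside $\cH$ viewed as a real Hilbert space with inner product $\textup{Re}\langle\cdot,\cdot\rangle$. Since $1\in \cM_b(\sigma(\omega),\RR)$ we immediately get $f_\alpha\in\cH_\RR$, and since $g\mapsto g$ composed with any fixed bounded real function keeps us in $\cM_b(\sigma(\omega),\RR)$, the set of generators is stable under applying $h(\omega)$ for $h\in\cM_b(\sigma(\omega),\RR)$; taking closures this gives $h(\omega)\cH_\RR\subseteq\cH_\RR$ for all such $h$, and in particular for $h(\lambda)=e^{-t\lambda}$ (which is bounded and real on $\sigma(\omega)\subseteq[0,\infty)$), so $e^{-t\omega}\cH_\RR\subseteq\cH_\RR$ for all $t\geq0$.

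The substantive point is that $\cH=\cH_\RR+i\cH_\RR$, i.e. $\cH_\RR$ is a \emph{real form} of $\cH$: every $\psi\in\cH$ can be written uniquely as $u+iv$ with $u,v\in\cH_\RR$, equivalently $\cH_\RR\cap i\cH_\RR=\{0\}$ and $\cH_\RR+i\cH_\RR$ is dense and closed. The hypothesis $\langle f_\alpha,g(\omega)f_\beta\rangle\in\RR$ for all real bounded $g$ is exactly what makes this work. The cleanest route is to produce a conjugation $C$ (an antilinear, isometric involution) on $\cH$ whose fixed-point set is $\cH_\RR$; then $\cH_\RR+i\cH_\RR=\cH$ follows automatically from $\psi=\tfrac12(\psi+C\psi)+i\cdot\tfrac1{2i}(\psi-C\psi)$ together with $C(\tfrac12(\psi+C\psi))=\tfrac12(\psi+C\psi)$ etc. To build $C$: the hypothesis implies that on the (complex) span $\cD_0$ of $\{g(\omega)f_\alpha\}$ the ``pointwise complex conjugate'' is well-defined. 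More precisely, I would first reduce via a cyclic-subspace / spectral-multiplicity decomposition (as in Lemma \ref{Lem:CannonicalSpace}, or the construction in \cite{Teschl}) to the model where $\cH=\bigoplus_{n\in B}L^2(\RR,\mu_n)$ with $\omega$ acting as multiplication by $\lambda$; there the obstruction is that a priori the cyclic vectors realizing this decomposition need not lie in $\cH_\RR$. So instead I would argue abstractly: define $C$ on $\cD_0$ by $C\big(\sum_k g_k(\omega)f_{\alpha_k}\big)=\sum_k \overline{g_k}(\omega)f_{\alpha_k}$ and check it is well-defined and isometric using precisely that all inner products $\langle f_\alpha,h(\omega)f_\beta\rangle$ are real, since
\begin{equation*}
\Big\langle \textstyle\sum_k g_k(\omega)f_{\alpha_k},\sum_j g_j(\omega)f_{\alpha_j}\Big\rangle=\sum_{k,j}\langle f_{\alpha_k},\overline{g_k}g_j(\omega)f_{\alpha_j}\rangle,
\end{equation*}
and applying complex conjugation to this scalar swaps it with the analogous sum for the $C$-images; antilinearity and $C^2=\textup{id}$ on $\cD_0$ are immediate. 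Then $C$ extends by continuity to the closure of $\cD_0$; but $C$ need not be globally defined on $\cH$ unless $\cD_0$ is dense — which it need not be. The fix: decompose $\cH=\overline{\cD_0}\oplus\overline{\cD_0}^{\perp}$; $\omega$ reduces both summands (since $h(\omega)$ preserves $\overline{\cD_0}$, hence its orthogonal complement), and on $\overline{\cD_0}^\perp$ one simply picks \emph{any} conjugation commuting with $\omega$ (exists by the spectral theorem applied to $\omega$ restricted there), getting a global conjugation $C$ on $\cH$ with $C\omega=\omega C$ on the appropriate domain, whose real points on the first summand contain all $f_\alpha$ and all $g(\omega)f_\alpha$. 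Finally set $\cH_\RR=\{\psi\in\cH\mid C\psi=\psi\}$, which contains the $f_\alpha$, is a real Hilbert space under $\textup{Re}\langle\cdot,\cdot\rangle$, satisfies $\cH=\cH_\RR+i\cH_\RR$, and is $e^{-t\omega}$-invariant because $C$ commutes with $\omega$ hence with $e^{-t\omega}$ and $e^{-t\omega}$ is real-linear.

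The main obstacle is handling the case where $\{g(\omega)f_\alpha\}$ is not total in $\cH$: one must extend the naturally-defined conjugation off the cyclic subspace generated by the $f_\alpha$'s, and verify that the spectral-theorem conjugation on the complement can be chosen to commute with $\omega$ and to be compatible with the piece already built on $\overline{\cD_0}$. I expect the bookkeeping around well-definedness of $C$ on $\cD_0$ (independence of the representation $\sum_k g_k(\omega)f_{\alpha_k}$, which amounts to: if the sum is $0$ then so is its conjugate, again a consequence of the reality of the Gram matrix entries $\langle f_\alpha,h(\omega)f_\beta\rangle$) and the reduction argument to be routine but slightly delicate; everything else — invariance under $e^{-t\omega}$, membership of $f_\alpha$, the decomposition $\cH=\cH_\RR\oplus i\cH_\RR$ — then drops out formally from the existence of the conjugation $C$.
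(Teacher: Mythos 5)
Your argument is correct, and it is a genuine variant of the paper's proof. The paper works entirely on the ``real'' side: it sets $\cH' = \overline{\textup{Span}_\RR\{g(\omega)f_\alpha\}}$, shows (via the reality hypothesis) that the appropriate inner products are real, then runs a Zorn argument to pick a maximal family $B$ of vectors in $(\cH')^\perp$ whose cyclic real subspaces $\cH(a)=\overline{\textup{Span}_\RR\{g(\omega)a\}}$ are mutually orthogonal, and defines $\cH_\RR = \cH' \oplus \bigoplus_{a\in B}\cH(a)$; maximality forces $\cH_\RR^\perp=\{0\}$, after which a real orthonormal basis of $\cH_\RR$ is exhibited as a complex orthonormal basis of $\cH$. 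You instead package the same reality input as a conjugation: you define $C$ on the complex span $\cD_0$ of $\{g(\omega)f_\alpha\}$ by $C\bigl(\sum g_k(\omega)f_{\alpha_k}\bigr)=\sum\overline{g_k}(\omega)f_{\alpha_k}$, verify well-definedness and isometry from exactly the same Gram-matrix reality computation the paper uses implicitly, extend $C$ to $\overline{\cD_0}$ by continuity, note that $\omega$ is reduced by $\overline{\cD_0}$, and supply a conjugation commuting with $\omega$ on the complement via the spectral theorem; $\cH_\RR=\ker(C-1)$ then has all the required properties automatically. The two constructions produce the same $\cH_\RR$ on $\overline{\cD_0}$ and an essentially arbitrary real form on the complement; the difference is that the paper builds the complement piece by hand via Zorn's lemma on cyclic subspaces, whereas you outsource that to the spectral (multiplication-operator) representation already invoked elsewhere in the paper. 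Your route makes the formal part of the conclusion ($\cH=\cH_\RR+i\cH_\RR$ from $\psi=\tfrac12(\psi+C\psi)+i\tfrac1{2i}(\psi-C\psi)$, invariance of $\cH_\RR$ from $Ch(\omega)=h(\omega)C$) drop out mechanically, which is cleaner than the paper's closing orthonormal-basis argument; the paper's route is more self-contained in that it does not appeal to a ready-made conjugation on the complement. Both rest on the same single nontrivial step: the reality of $\langle f_\alpha,g(\omega)f_\beta\rangle$ for all real bounded $g$ makes the natural ``coefficientwise conjugation'' on $\textup{Span}\{g(\omega)f_\alpha\}$ well-defined and norm-preserving. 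One small point worth making explicit in your write-up: the glued $C$ is antiunitary, so for $u,v\in\cH_\RR$ one gets $\langle u,v\rangle=\langle Cu,Cv\rangle=\overline{\langle u,v\rangle}\in\RR$, which is what makes $\cH_\RR$ a genuine real form (and hence makes the sum $\cH_\RR+i\cH_\RR$ direct and closed, not merely dense).
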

\begin{proof}
Let
\begin{equation*}
\cH'=\overline{\textup{Span}_{\RR}\{g(\omega)f_\alpha\mid g\in \cM_{b}(\sigma(\omega),\RR),\alpha\in \cI  \}}.
\end{equation*}
Note that $\cH'$ is a real Hilbert space. For every $f\in (\cH')^{\perp}\backslash \{0\}$ we define
\begin{equation*}
\cH(f)=\overline{\textup{Span}_{\RR}\{g(\omega)f\mid g\in\cM_{b}(\sigma(\omega),\RR) \} }.
\end{equation*}
It is clear that $e^{-t\omega}$ maps $\cH'$ to $\cH'$ and $\cH(f)$ to $\cH(f)$. Define
\begin{equation*}
\cA=\{  A \subset  (\cH')^{\perp}\backslash \{0 \} \mid \cH(f)\perp \cH(g) \,\, \forall \,\, f, g\in A  \,\,\, \text{with} \,\,\, f\neq g \}.
\end{equation*}
We partially order $\cA$ by inclusion and take a maximal totally ordered subset $\cB$. Let $B$ be the union of all elements in $\cB$. If $f,g\in B$ and $f\neq g$ then there is an element in $\cB$ that contains both $f$ and $g$ (since $\cB$ is totally ordered). This implies $\cH(f)\perp \cH(g)$ and so $B\in \cA$. Define
\begin{equation*}
\cH_{\RR}:=\cH'\oplus \bigoplus_{a\in B} \cH(a)\subset \cH.
\end{equation*}
$\cH_{\RR}$ is clearly a real Hilbert space containing $\{ f_\alpha \}_{\alpha\in \cI}$ and it is left invariant by $e^{-t\omega}$ since each component is. Assume towards contradiction that there is an element $f\in \cH_\RR^\perp \backslash \{ 0 \}$. Then for every $g_1,g_2\in \cM_{b}(\sigma(\omega),\RR)$ and $ h \in B$ we would have
\begin{equation*}
\langle g_2(\omega)f,g_1(\omega)h \rangle=\langle f,g_2(\omega)g_1(\omega)h \rangle=0
\end{equation*}
and so $\cH(f)$ is orthogonal to $\cH(h)$ for all $h\in B$. In particular, $B\cup\{ f \}\in \cA$ so $\cB\cup \{ B\cup\{ f \} \}$ is larger than $\cB$ and totally ordered which is not possible. Hence $\cH_\RR^\perp \backslash \{ 0 \}=\emptyset $.

Let $\{e_n \}_{n=1}^N$ be an orthonormal basis for $\cH_{\RR}$ ($N\leq \infty$) which is also an orthonormal basis for $\cH$. Hence we may write any element $f\in \cH$ as
\begin{equation*}
f=\sum_{j=1}^{N}\textup{Re}(\langle e_j,f \rangle)e_j+i\sum_{j=1}^{N}\textup{Im}(\langle e_j,f \rangle)e_j.
\end{equation*}
This finishes the proof.
\end{proof}

\begin{thm}\label{Thm:PropertiesOfQspace}
Let $\cH_{\RR}\subset \cH$ be a real Hilbert space such that $\cH=\cH_\RR+ i\cH_\RR$. Then there exists a probability space $(X,\cX,\QQ)$ such that $\cF_b(\cH)$ is unitarily isomorphic to $L^2(X,\cX,\QQ)$ via a map $V$. Furthermore, the following properties hold:
\begin{enumerate}
    \item[\textup{(1)}] If $U$ is a bounded operator on $\cH$ such that $U\cH_{\RR}\subset \cH_{\RR}$ and $\lVert U\lVert\leq 1$ then $V \Gamma(U) V^*$ is positivity preserving.

	\item[\textup{(2)}] Assume $\omega$ is a selfadjoint, nonnegative and injective operator on $\cH$. If  $e^{-t\omega}$ maps $\cH_{\RR}$ into $\cH_{\RR}$ for all $t\geq 0$ then $V e^{-td\Gamma(\omega)} V^*$ is positivity improving. If $\inf(\sigma(\omega))>0$ then $V e^{-td\Gamma(\omega)} V^*$ is hypercontractive.

	\item[\textup{(3)}] If $v\in \cH_\RR$ then $V\varphi(v)V^*$ acts like multiplication by a normally distributed variable $\widetilde{\varphi}(v)$ with mean 0 and variance $\lVert v\lVert^2$. If $v_1,v_2\in \cH_\RR$ and $a,b\in \RR$ then 
	\begin{align*}
	a\widetilde{\varphi}(v_1)+b\widetilde{\varphi}(v_2)=\widetilde{\varphi}(av_1+bv_2)
	\end{align*}
	almost everywhere.

\item[\textup{(4)}] If $\{v_n \}_{n=1}^\infty\subset \cH_\RR$ converges to $v\in \cH_\RR$ then $\widetilde{\varphi}(v_n)^\ell$ converges to $\widetilde{\varphi}(v)^\ell$ in $L^q(X,\cX,\QQ)$ for all $\ell\in \NN$ and $q\geq 1$.

\item[\textup{(5)}] Let $\alpha\in \RR^{2n}$, $q>0$ and $r>0$. Define
\begin{equation*}
\cK=\{ f\in \cH^{2n}\mid  \text{$(\alpha,f)$ satisfies part (\ref{Hyp1:1}) of Hypothesis \ref{Hyp1}} \,\,\, \text{and} \,\,\, \lVert f_1 \lVert<r     \}
\end{equation*}
There is a constant $C:=C(\alpha,r,q)$ such that for all $f\in \cK$ we have
\begin{equation*}
  \lVert e^{ \widetilde{H}_I(\alpha,f)} \lVert_q \leq C ,
\end{equation*}
where $\widetilde{H}_I(\alpha,f)=\sum_{j=1}^{2n}\alpha_i \widetilde{\varphi}(f_i)
$.
\end{enumerate}

\end{thm}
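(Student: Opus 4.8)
The plan is to reduce the claimed inequality to a single one-dimensional log-normal moment estimate, after using Lemma \ref{Lem:FundamentalLowerbound} to absorb all the terms of order at least two. Throughout I would take $\cH_{\RR}$ to be the real subspace appearing in the hypothesis of Theorem \ref{Thm:PropertiesOfQspace} (and containing the $f_i$, as is needed for $\widetilde{H}_I$ to be a multiplication operator at all), $(X,\cX,\QQ)$ the probability space it furnishes, and $V\colon\cF_b(\cH)\to L^2(X,\cX,\QQ)$ the associated unitary.

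First I would record the structure. By part (3) of Theorem \ref{Thm:PropertiesOfQspace}, for each $i$ the operator $V\varphi(f_i)V^{*}=\widetilde{\varphi}(f_i)$ is multiplication by a centered Gaussian random variable of variance $\lVert f_i\rVert^{2}$, and $v\mapsto\widetilde{\varphi}(v)$ is $\RR$-linear $\QQ$-almost everywhere; hence $\widetilde{\varphi}(f_1),\dots,\widetilde{\varphi}(f_{2n})$ are real-valued multiplication operators on the one space $L^{2}(X,\cX,\QQ)$, so $\widetilde{H}_I(\alpha,f)=\sum_{i=1}^{2n}\alpha_i\widetilde{\varphi}(f_i)^{i}$ is multiplication by a $\QQ$-a.e.\ finite measurable function and every manipulation below is a pointwise one. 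The membership $f\in\cK$ is precisely the statement that $(\alpha,f)$ satisfies part (\ref{Hyp1:1}) of Hypothesis \ref{Hyp1}, i.e.\ $f$ lies in the set $\cK(\alpha)$ of Lemma \ref{Lem:FundamentalLowerbound}. I would then apply that lemma to the selfadjoint family $A(v):=\widetilde{\varphi}(v)$, obtaining a constant $C(\alpha)$, depending on $\alpha$ only, with
\[
\sum_{i=2}^{2n}\alpha_i\,\widetilde{\varphi}(f_i)^{i}\ \ge\ C(\alpha)
\]
as multiplication operators, hence $\QQ$-a.e.\ as functions. This is the one place the higher-order data enters, and it is exactly where one sees the bound is uniform over all of $\cK$: Lemma \ref{Lem:FundamentalLowerbound} is completely insensitive to the sizes of $f_2,\dots,f_{2n}$, so no bound on those norms is required.

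Combining, $\widetilde{H}_I(\alpha,f)$ agrees, up to an additive function bounded by $C(\alpha)$, with the single linear term $\alpha_1\widetilde{\varphi}(f_1)$, so $e^{\widetilde{H}_I(\alpha,f)}$ is comparable, pointwise $\QQ$-a.e., to $e^{-C(\alpha)}e^{\alpha_1\widetilde{\varphi}(f_1)}$, and it remains to bound $\lVert e^{\alpha_1\widetilde{\varphi}(f_1)}\rVert_q$ uniformly. But $\alpha_1\widetilde{\varphi}(f_1)$ is a centered Gaussian of variance $\alpha_1^{2}\lVert f_1\rVert^{2}$, so
\[
\bigl\lVert e^{\alpha_1\widetilde{\varphi}(f_1)}\bigr\rVert_q=\bigl(\EE\bigl[e^{q\alpha_1\widetilde{\varphi}(f_1)}\bigr]\bigr)^{1/q}=e^{q\alpha_1^{2}\lVert f_1\rVert^{2}/2}<e^{q\alpha_1^{2}r^{2}/2},
\]
using $\lVert f_1\rVert<r$, which yields the desired bound with $C=C(\alpha,r,q)$, for instance $C=e^{q\alpha_1^{2}r^{2}/2-C(\alpha)}$. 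I expect the only real work beyond this outline to be bookkeeping: verifying that $\widetilde{H}_I(\alpha,f)$ is $\QQ$-a.e.\ finite (so that the operator inequality from Lemma \ref{Lem:FundamentalLowerbound} may be exponentiated), and checking that that abstract inequality specializes correctly to the simultaneously diagonal operators $\widetilde{\varphi}(f_i)$; should one wish, part (4) of Theorem \ref{Thm:PropertiesOfQspace} lets one first reduce to $f_i$ in a dense, well-behaved subset of $\cH_{\RR}$ and pass to the limit in $L^{q}$. No hypercontractivity or infrared hypothesis is needed.
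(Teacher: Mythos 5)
Your proposal only addresses part (5): the construction of the probability space $(X,\cX,\QQ)$ and the unitary $V$, and parts (1)--(4), are used as hypotheses rather than proved. (The paper settles (1)--(3) by citing \cite{Hirokawa1} and \cite{RS2}, and proves (4) by a short H\"older plus Gaussian-scaling argument.) So as a proof of the theorem as stated the attempt is incomplete, and the comments below concern part (5) only.

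Within part (5) there is a genuine error of direction. Lemma \ref{Lem:FundamentalLowerbound} gives only a \emph{lower} bound $\sum_{i\geq 2}\alpha_i\widetilde{\varphi}(f_i)^i\geq C(\alpha)$; this sum is not bounded above (when $\alpha_{2n}>0$ and $f_{2n}\neq 0$ it grows like $\alpha_{2n}\widetilde{\varphi}(f_{2n})^{2n}$), so your claim that $e^{\widetilde{H}_I(\alpha,f)}$ is pointwise comparable to $e^{-C(\alpha)}e^{\alpha_1\widetilde{\varphi}(f_1)}$ fails in exactly the direction you need: the lemma yields $e^{\widetilde{H}_I}\geq e^{C(\alpha)}e^{\alpha_1\widetilde{\varphi}(f_1)}$, an estimate from below, whereas an $L^q$ upper bound requires the reverse pointwise inequality. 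Indeed, with your (natural) reading $\widetilde{H}_I=\sum_i\alpha_i\widetilde{\varphi}(f_i)^i$, one has $\lVert e^{+\widetilde{H}_I}\rVert_q=\infty$ whenever $\alpha_{2n}>0$ and $f_{2n}\neq 0$, since $e^{q\alpha_{2n}x^{2n}}$ is not integrable against a Gaussian; the printed statement contains typos (the powers are missing from $\widetilde{H}_I$ and the sign in the exponent), and what the paper actually proves, and later uses in the hypercontractivity arguments for essential selfadjointness and norm resolvent convergence, is the uniform bound $\lVert e^{-\widetilde{H}_I(\alpha,f)}\rVert_q\leq C(\alpha,r,q)$. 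Your two ingredients give precisely this once the sign is flipped: pointwise $e^{-\widetilde{H}_I}\leq e^{-C(\alpha)}e^{-\alpha_1\widetilde{\varphi}(f_1)}$, and
\begin{equation*}
\bigl\lVert e^{-\alpha_1\widetilde{\varphi}(f_1)}\bigr\rVert_q=\bigl(\EE\bigl[e^{-q\alpha_1\widetilde{\varphi}(f_1)}\bigr]\bigr)^{1/q}=e^{q\alpha_1^2\lVert f_1\rVert^2/2}< e^{q\alpha_1^2 r^2/2},
\end{equation*}
which is the paper's own proof. So your method coincides with the paper's, but as written the key pointwise comparison is false, and the inequality you set out to prove (exponential with a plus sign and the powers included) does not hold.
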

\begin{proof}
Everything in parts (1)-(3) can be found in \cite{Hirokawa1} and \cite{RS2}. We now prove part (4). For any $N(0,\sigma^2)$ distributed variable $X$ we have
\begin{align*}
\lVert \lvert X\lvert^a   \lVert_q =\sigma^a E[\lvert X/\sigma\lvert^{aq}]^{1/q}.
\end{align*}
Since $X/\sigma$ is $N(0,1)$ distributed we find that $E[\lvert X/\sigma\lvert^{qa}]^{1/q}$ depends only on $q$ and $a$. Write $B(q,a)$ for this constant. Then we may calculate using Hölders inequality
\begin{align*}
\lVert \widetilde{\varphi}(v_n)^\ell-\widetilde{\varphi}(v)^\ell \lVert_q&\leq \sum_{j=0}^{\ell-1} \lVert \widetilde{\varphi}(v_n)^{\ell-j-1}\widetilde{\varphi}(v_n-v)\widetilde{\varphi}(v)^j \lVert_q\\&\leq \sum_{j=0}^{\ell-1} \lVert \widetilde{\varphi}(v_n)^{(\ell-j-1)}\lVert_{3q}   \lVert \widetilde{\varphi}(v_n-v)\lVert_{3q} \lVert   \widetilde{\varphi}(v)^j \lVert_{3q}\\& \leq \sum_{j=0}^{\ell-1} B(3q,\ell-j-1) B(3q,1)B(3q,j)  \lVert v_n \lVert^{\ell-j-1}\lVert v_v-v\lVert \lVert v\lVert^{j}  
\end{align*}
showing the desired result.

We now prove part (5). Let $f\in \cK(\alpha)$. Using Lemma \ref{Lem:FundamentalLowerbound} we find $\sum_{j=2}^{2n}\alpha_i \widetilde{\varphi}(f_i)$ is uniformly bounded below by a constant $C_1$ depending only on $\alpha$. Therefore we find
\begin{align*}
\lVert e^{ -\widetilde{H}_I(\alpha,f)} \lVert_q\leq e^{-C_1}  E[e^{-q\alpha_1\widetilde{\varphi}(f_1)}]^{1/q}=e^{-C_1} (e^{-q^2\alpha_1^2\lVert f_1\lVert^2/2})^{1/q}\leq e^{-C_1}e^{-r^2\alpha_1^2q/2}. 
\end{align*}
This finishes the proof.
\end{proof}
\begin{lem}\label{Lem:KonvOfBoundedPer}
Let $\{ A_n \}_{n=1}^\infty$ be a sequence of selfadjoint operators on a Hilbert space $\cH$ converging to $A$ in norm resolvent sense. If $B$ is a bounded and selfadjoint operator on $\cH$ then $\{ A_n+B \}_{n=1}^\infty$ will converge in norm resolvent sense to $A+B$.
\end{lem}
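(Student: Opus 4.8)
The plan is to reduce everything to a Neumann series argument based on the second resolvent identity. Fix $z\in\CC$ with $\lvert\textup{Im}(z)\lvert>\lVert B\lVert$. Since each $A_n$ and $A$ are selfadjoint, one has the uniform bound $\lVert (A_n-z)^{-1}\lVert\leq\lvert\textup{Im}(z)\lvert^{-1}$ and likewise for $A$, so $\lVert (A_n-z)^{-1}B\lVert\leq\lVert B\lVert/\lvert\textup{Im}(z)\lvert<1$. Hence $1+(A_n-z)^{-1}B$ is invertible with inverse given by the Neumann series $\sum_{k\geq 0}(-(A_n-z)^{-1}B)^k$, and this series converges in operator norm uniformly in $n$. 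Writing $A_n+B-z=(A_n-z)(1+(A_n-z)^{-1}B)$ gives
\begin{equation*}
(A_n+B-z)^{-1}=\bigl(1+(A_n-z)^{-1}B\bigr)^{-1}(A_n-z)^{-1},
\end{equation*}
together with the same identity with $A$ in place of $A_n$.

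Next I would invoke the hypothesis. Norm resolvent convergence means $(A_n-z)^{-1}\to(A-z)^{-1}$ in operator norm; this holds at one, hence at every, point of the common resolvent set, which for selfadjoint operators contains all nonreal $z$, so in particular at our chosen $z$. Consequently $(A_n-z)^{-1}B\to(A-z)^{-1}B$ in norm, so each power $(-(A_n-z)^{-1}B)^k\to(-(A-z)^{-1}B)^k$ in norm; combined with the uniform geometric bound $\lVert(-(A_n-z)^{-1}B)^k\lVert\leq(\lVert B\lVert/\lvert\textup{Im}(z)\lvert)^k$, a standard dominated-convergence argument for the series shows $(1+(A_n-z)^{-1}B)^{-1}\to(1+(A-z)^{-1}B)^{-1}$ in norm. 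Finally, the product of two norm-convergent sequences of uniformly bounded operators converges in norm, so $(A_n+B-z)^{-1}\to(A+B-z)^{-1}$ in norm. Since $A_n+B$ and $A+B$ are selfadjoint, norm resolvent convergence at the single point $z$ is equivalent to norm resolvent convergence, which completes the proof.

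There is essentially no hard step here; the only point requiring a little care is the uniformity in $n$ of the Neumann series, which is exactly what legitimises passing to the limit term by term, and this uniformity is furnished for free by the selfadjointness bound $\lVert(A_n-z)^{-1}\lVert\leq\lvert\textup{Im}(z)\lvert^{-1}$.
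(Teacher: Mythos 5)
Your proof is correct and follows essentially the same route as the paper: both fix a point on the imaginary axis far enough from the real line that $\lVert B\rVert$ times the resolvent norm is a constant less than $1$, expand the perturbed resolvent in a Neumann series with a geometric bound uniform in $n$, and then pass to the limit term by term via dominated convergence. The only cosmetic differences are that you factor the resolvent as $(1+(A_n-z)^{-1}B)^{-1}(A_n-z)^{-1}$ whereas the paper writes the equivalent expansion $(A_n-i\lambda)^{-1}\sum_k(-B(A_n-i\lambda)^{-1})^k$ directly, and you are careful about the signs that the paper drops in its displayed formula (a harmless typo there).
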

\begin{proof}
For $\lambda>\lVert B\lVert +1$ we have $\max\{\lVert B(A-i\lambda)^{-1}\lVert,\lVert B(A_n-i\lambda)^{-1}\lVert\}<\frac{\lVert B\lVert}{1+\lVert B\lVert}$ and so we may calculate
\begin{align*}
(A+B-i\lambda)^{-1}-&(A_n+B-i\lambda)^{-1}\\&=\sum_{k=0}^{\infty}(A-i\lambda)^{-1}(B(A-i\lambda)^{-1})^k-(A_n-i\lambda)^{-1}(B(A_n-i\lambda)^{-1})^k.
\end{align*}
Each term in the series converge to 0 as $n$ tends to $\infty$. Furthermore,
\begin{equation*}
\lVert (A-i\lambda)^{-1}(B(A-i\lambda)^{-1})^k-(A_n-i\lambda)^{-1}(B(A_n-i\lambda)^{-1})^k \lVert\leq \frac{2}{\lambda}\left( \frac{\lVert B\lVert}{1+\lVert B\lVert} \right)^k
\end{equation*}
which is summable. The conclusion now follows by dominated convergence.
\end{proof}

\subsection*{Acknowledgements}
Thomas Norman Dam was supported by the Independent Research Fund Denmark through the project "Mathematics of Dressed Particles".

\end{document}